\documentclass{article}
\usepackage{amsmath,amssymb,theorem}
\usepackage{pstricks}
\usepackage{graphicx}
\usepackage{verbatim}
\graphicspath{{semfinal/}}

\usepackage{color}
\usepackage{makeidx}

\theorembodyfont{\upshape}

\setlength{\textwidth}{27pc}
\setlength{\textheight}{43pc}




 \makeatletter
 \@addtoreset{equation}{section}
 \makeatother

 \newtheorem{ittheorem}{Theorem}
 \newtheorem{itlemma}{Lemma}
 \newtheorem{itproposition}{Proposition}
 \newtheorem{itdefinition}{Definition}

 \newtheorem{itremark}{Remark}
 \newtheorem{itclaim}{Claim}
 \newtheorem{itcorollary}{\bf Corollary}

 \newenvironment{theorem}{\addtocounter{equation}{1}
 \begin{ittheorem}}{\end{ittheorem}}

 \newenvironment{lemma}{\addtocounter{equation}{1}
 \begin{itlemma}}{\end{itlemma}}

 \newenvironment{proposition}{\addtocounter{equation}{1}
 \begin{itproposition}}{\end{itproposition}}

 \newenvironment{definition}{\addtocounter{equation}{1}
 \begin{itdefinition}}{\end{itdefinition}}

 \newenvironment{remark}{\addtocounter{equation}{1}
 \begin{itremark}}{\end{itremark}}

 \newenvironment{claim}{\addtocounter{equation}{1}
 \begin{itclaim}}{\end{itclaim}}

 \newenvironment{proof}{\noindent {\bf Proof.\,}
 }{\hspace*{\fill}$\qed$\medskip}

 \newenvironment{corollary}{\addtocounter{equation}{1}
 \begin{itcorollary}}{\end{itcorollary}}

 \newcommand{\be}[1]{\begin{eqnarray*}\label{#1}}
 \newcommand{\ee}{\end{eqnarray*}}

 \newcommand{\bl}[1]{\begin{lemma}\label{#1}}
 \newcommand{\el}{\end{lemma}}

 \newcommand{\br}[1]{\begin{remark}\label{#1}}
 \newcommand{\er}{\end{remark}}

 \newcommand{\bt}[1]{\begin{theorem}\label{#1}}
 \newcommand{\et}{\end{theorem}}

 \newcommand{\bd}[1]{\begin{definition}\label{#1}}
 \newcommand{\ed}{\end{definition}}

 \newcommand{\bcl}[1]{\begin{claim}\label{#1}}
 \newcommand{\ecl}{\end{claim}}

 \newcommand{\bp}[1]{\begin{proposition}\label{#1}}
 \newcommand{\ep}{\end{proposition}}

 \newcommand{\bc}[1]{\begin{corollary}\label{#1}}
 \newcommand{\ec}{\end{corollary}}

 \newcommand{\bpr}{\begin{proof}}
 \newcommand{\epr}{\end{proof}}

 \newcommand{\bi}{\begin{itemize}}
 \newcommand{\ei}{\end{itemize}}

 \newcommand{\ben}{\begin{enumerate}}
 \newcommand{\een}{\end{enumerate}}


\def\Al{{\mathcal A}^\ell}
\def\Alm{\big({\mathcal A}^\ell\big)^m}
\def\alm{({\mathcal A}^\ell)^m}
\def\zl{\{\,0,\dots,\ell\,\}}
\def\ul{\{\,1,\dots,\ell\,\}}
\def\zm{\{\,0,\dots,m\,\}}
\def\zlm{\{\,0,\dots,\ell\,\}^m}

\def\wtau{\widetilde{\tau}}
\def\oti{o^\theta_{\text{enter}}}
\def\oto{o^\theta_{\text{exit}}}
\def\ota{o^\ell_{\text{exit}}}
\def\otb{o^1_{\text{exit}}}
\def\lk{\ell(1-1/\kappa)}
\def\lK{\ell(1-\frac{1}{\kappa})}
\def\lk{\ell_\kappa}
\def\lK{\ell_\kappa}

\def\um{\{\,1,\dots,m\,\}}
\def\umu{\{\,1,\dots,m-1\,\}}
\def\pml{{\mathcal P}(m,\ell +1)}
\def\pml{{\mathcal P}^m_{\ell +1}}
\def\var{\mathop{\rm Var}} 
\def\ltq{\ell^{3/4}} 
\def\lq{\ell^{2}} 

\def\lkep{\lk(1-\ve')}

\def\bp{{\overline{p}}}

\def\bp{{\overline{p}}}
\def\exa{\exp(-a)}
\def\exa{e^{-a}}

 \def \ba {\begin{array}}
 \def \ea {\end{array}}

 \def \qed {{\heartsuit\hfill}}
 
 \def \R {{\mathbb R}}

 \def \cN {{\cal N}}
 \def \cNm {{\cal N}}
 \def \cS {{\cal S}}
 \def \cE {{\cal E}}

 \def \cR {{\cal R}}
\def\ve{\varepsilon}

 \def \cB {{\cal B}}
 \def \cM {{\cal M}}
 \def \cW {{\cal W}}

\def \qed {{\square\hfill}}


\let\F=E     
\let\O=\Omega

%
 \def\cB{{\cal B}}  
\def\cE{{\cal E}}   
\def\cI{{\cal I}}   
\def\cM{{\cal M}} \def\cN{{\cal N}} \def\cO{{\cal O}} 
 \def\cR{{\cal R}} \def\cS{{\cal S}} \def\cT{{\cal T}}

  \def\cW{{\cal W}} 
 
 \def \cW {{\cal W}^*}
 \def \cMH {{\cal M}_H}
 \def \cSH {{\cal S}_H}


\def \qed {{\square\hfill}}

\def\R{{\mathbb R}}


\def\eqref#1{(\ref{#1})}

\def\card{\text{card}\,}
\def\anc{\text{ancestor}}

\def\wild{\text{Master}}
\def\vwild{\text{Variance}}



\makeindex

\begin{document}

\title{Critical population and error threshold \\ on the
sharp peak landscape \\ for a Moran model}

 \author{
Rapha\"el Cerf
\\
Universit\'e Paris Sud and IUF
}

\maketitle



\begin{abstract}
\noindent
The goal of this work is to propose
a finite population counterpart to
Eigen's model, which incorporates stochastic effects. 
We consider a Moran model describing the evolution
of a population of size~$m$ of chromosomes of length~$\ell$
over an alphabet of cardinality $\kappa$.
The mutation probability per locus is $q$.
We deal only with the sharp peak landscape: the replication rate is
$\sigma>1$ for the master sequence and $1$ for the other sequences.
We study the equilibrium distribution of the process in the regime where
$$\displaylines{
\ell\to +\infty\,,\qquad m\to +\infty\,,\qquad q\to 0\,,\cr
{\ell q} \to a\in ]0,+\infty[\,,
\qquad\frac{m}{\ell}\to\alpha\in [0,+\infty]\,.}$$
We obtain an equation 
$\alpha\,\phi(a)=\ln\kappa$ 
in the parameter space $(a,\alpha)$
separating the regime where the equilibrium population is totally random from the
regime where a quasispecies is formed. We observe the existence of a critical
population size necessary for a quasispecies to emerge and
we recover the finite population counterpart of the error threshold.
Moreover, in the limit of very small mutations, we obtain a lower bound on
the population size allowing
the emergence
of a quasispecies:
if $\alpha< \ln\kappa/\ln\sigma$ then
the equilibrium population is totally random, and
a quasispecies can be formed
only when 
$\alpha\geq \ln\kappa/\ln\sigma$. 
Finally, in the limit of very large populations, we recover
an error catastrophe reminiscent of
Eigen's model:
if $\sigma\exa\leq 1$
then the equilibrium population is totally random, and
a quasispecies can be formed
only when $\sigma\exa>1$.
These results are supported by computer simulations. 
\end{abstract}



\section{Introduction.}
In his famous paper \cite{EI1}, Eigen introduced a model for the evolution
of a population of macromolecules. In this model, the macromolecules
replicate themselves, yet the replication mechanism is subject to errors
caused by mutations. These two basic mechanisms are described by a family
of chemical reactions. The replication rate of a macromolecule is governed
by its fitness.
A fundamental discovery of Eigen is the existence of an error threshold on
the sharp peak landscape.
If the mutation rate exceeds a critical value, called the error threshold,
then, at equilibrium, the population is completely random.
If the mutation rate is below the error threshold, 
then, at equilibrium, the population contains a positive fraction of the
master sequence (the most fit macromolecule) and a cloud of mutants which
are quite close to the master sequence.
This specific distribution of individuals is called a quasispecies. This notion
has been further investigated by
Eigen, McCaskill and Schuster \cite{ECS1} and it had a profound impact
on the understanding of molecular evolution \cite{ESTE}.
It has been argued that, at the population level, evolutionary processes 
select quasispecies rather than single individuals. Even more importantly, 
this theory is supported by experimental studies \cite{DBMJ}.
Specifically, it seems
that some RNA viruses evolve with a rather high mutation rate, which is
adjusted to be close to an error threshold.
It has been suggested that
this is the case for the HIV virus \cite{TBVD}. 
Some promising antiviral strategies consist in using mutagenic drugs
that induce an error catastrophe \cite{ADL,CCA}.
A similar error catastrophe
could also play
a role in the development of some cancers \cite{SODE}.

Eigen's model was initially designed to understand a population of
macromolecules governed by a family of chemical reactions. In this setting,
the number of molecules is huge, and there is a finite number of types
of molecules. From the start, this model is formulated for an infinite
population and the evolution is deterministic (mathematically, it is a
family of differential equations governing the time 
evolution of the densities of each type of macromolecule). The error
threshold appears when the number of types goes to $\infty$.
This creates a major obstacle if one wishes to extend the notions of 
quasispecies
and error threshold to genetics. Biological populations are finite,
and even if they are large so that they might be considered infinite
in some approximate scheme, it is not coherent to consider situations
where the size of the population is much larger than the number of possible
genotypes. Moreover, it has long been recognized that random effects play
a major role in the genetic evolution of populations \cite{Kimu}, 
yet they are ruled
out from the start in a deterministic infinite population model.
Therefore, it is crucial to develop a finite population counterpart to
Eigen's model, which incorporates stochastic effects. 
This problem is already discussed by
Eigen, McCaskill and Schuster \cite{ECS1} and more recently by Wilke \cite{Wilke}.
Numerous works have attacked this issue:
Demetrius, Schuster and Sigmund \cite{Deme},
McCaskill \cite{Cas1},
Gillespie \cite {GI},
Weinberger \cite{WE}.
Nowak and Schuster \cite{NS} constructed a birth and death model to approximate
Eigen's model. This birth and death model plays a key role in our analysis,
as we shall see later.
Alves and Fontanari \cite{AF} study how the error threshold depends on the
population in a simplified model.
More recently, Musso \cite{MUS} and
Dixit, Srivastava, Vishnoi \cite{DSV}
considered finite population models which approximate Eigen's
model when the population size goes to $\infty$. These models are
variants of the classical Wright--Fisher model of population genetics.
Although this is an interesting approach, it is already a delicate matter
to prove the convergence of these models towards Eigen's model.
We adopt here a different strategy. Instead of trying to prove that
some finite population model converges in some sense to Eigen's model, 
we try to prove directly in the finite model an error threshold 
phenomenon. To this end, we look for the simplest possible model, and we
end up with a Moran model.
The model we choose here is not particularly original, the contribution
of this work is rather to show a way to analyze this kind of finite population
models.

We consider a population of size~$m$ of chromosomes of length~$\ell$
over the alphabet
$\{\,A,T,G,C\,\}$. The evolution of the population is governed by two 
antagonistic effects, namely mutation and replication.
Mutations occur randomly and independently at each locus with 
probability~$q$. The replication rate of a chromosome is given by its fitness.
We consider only the sharp peak landscape: there is one specific sequence,
called the master sequence, whose fitness is $\sigma>1$, and all the
other sequences have fitness equal to~$1$.
The mutations drive the population towards a totally random 
state, while the replication favors the master sequence.
These two effects interact in a complicated way in the dynamics
and it is extremely difficult to analyze precisely the time evolution
of such a model.
Let us focus on the equilibrium distribution of the process.
A fundamental problem is to determine the law of the number of
copies of the master sequence present in the population at equilibrium.
If we keep the parameters $m,\ell,q$ fixed, there is little hope to
get useful results.
In order to simplify the picture, we consider an adequate asymptotic regime.
In Eigen's model, the population size is infinite from the start.
The error threshold appears when $\ell$ goes to $\infty$ and $q$ goes
to $0$ in a regime where $\ell q=a$ is kept constant.
We wish to understand the influence of the population size~$m$, thus we
use a different approach and we consider the following regime.
We send simultaneously 
$m,\ell$ to $\infty$ and $q$ to $0$
and we try to understand the respective influence of each parameter on the
equilibrium law of the master sequence.
By the ergodic theorem, the average number 
of copies of the master sequence at equilibrium is equal to the limit,
as the time goes to $\infty$, of the time average
of the number 
of copies of the master sequence present through the whole evolution
of the process.
In the finite population model, the number of copies of the master sequence
fluctuates with time.
Our analysis of these fluctuations relies on the following heuristics.
Suppose that the process starts with a population of size $m$
containing exactly one master sequence.
The master sequence is likely to invade the whole population and
become dominant. Then the master sequence will be present in the population
for a very long
time without interruption.
We call this time the {\bf persistence} time of the master sequence.
The destruction of all the master sequences of the population is quite unlikely,
nevertheless it will happen and the process will eventually
land in the neutral region
consisting of the populations devoid of master sequences.
The process will wander randomly throughout this region for a very long time.
We call this time the
{\bf discovery} time of the master sequence.
Because the cardinality of the possible
genotypes is enormous, the master sequence
is difficult to discover, nevertheless the mutations will eventually 
succeed 
and the process will start again with
a population 
containing exactly one master sequence.
If, on average, the discovery time is much larger than the persistence time, then the 
equilibrium state will be totally random, while a quasispecies will be formed if
the persistence time is much larger than the discovery time.
Let us illustrate this idea in a very simple model.
\begin{figure}[h]
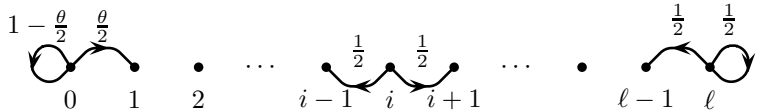

\centering
{
\psset{unit=0.85}
\pspicture(-1,-1)(11,1)
\psdots[dotscale=1](0,0)(1,0)(2,0)(4,0)(5,0)(6,0)(8,0)(9,0)(10,0)
\rput(0,-0.5){$0$}
\rput(1,-0.5){$1$}
\rput(2,-0.5){$2$}
\rput(3,0){$\cdots$}
\rput(4,-0.5){$i-1$}
\rput(5,-0.5){$i$}
\rput(6,-0.5){$i+1$}
\rput(7,0){$\cdots$}
\rput(9,-0.5){$\ell-1$}
\rput(10,-0.5){$\ell$}
\psset{arrowscale=1.5}
\psline[linewidth=1pt,linearc=.25]{->}(0,0)(0.3,0.3)(0.5,0.3)(0.6,0.3)
\psline[linewidth=1pt,linearc=.25]{-}(0.5,0.3)(0.7,0.3)(1,0)
\rput(0.5,0.65){$\frac{\theta}{2}$}
\rput(-0.5,0.65){$1-\frac{\theta}{2}$}
\psline[linewidth=1pt,linearc=.25]{->}(5,0)(4.7,-0.3)(4.5,-0.3)(4.4,-0.3)
\psline[linewidth=1pt,linearc=.25]{-}(4.5,-0.3)(4.3,-0.3)(4,0)
\psline[linewidth=1pt,linearc=.25]{->}(5,0)(5.3,-0.3)(5.5,-0.3)(5.6,-0.3)
\psline[linewidth=1pt,linearc=.25]{-}(5.5,-0.3)(5.7,-0.3)(6,0)
\rput(4.5,0.25){$\frac{1}{2}$}
\rput(5.5,0.25){$\frac{1}{2}$}
\psline[linewidth=1pt,linearc=0.25]{->}(0,0)(-.25,.4)(-0.55,0.275)(-0.61,-0.06)
\psline[linewidth=1pt,linearc=0.25]{-}(-0.605,-0.)(-0.6,-0.1)(-.3,-.3)(0,0)
\psline[linewidth=1pt,linearc=0.25]{->}(10,0)(10.2,.3)(10.35,.35)(10.6,0.28)(10.6,-0.08)
\psline[linewidth=1pt,linearc=0.25]{-}(10.6,-0.)(10.6,-0.1)(10.3,-.3)(10,0)
\psline[linewidth=1pt,linearc=.25]{->}(10,0)(9.7,0.3)(9.5,0.3)(9.4,0.3)
\psline[linewidth=1pt,linearc=.25]{-}(9.5,0.3)(9.3,0.3)(9,0)
\rput(10.3,0.75){$\frac{1}{2}$}
\rput(9.5,0.75){$\frac{1}{2}$}
\endpspicture
}
\caption{Random walk example}
\end{figure} 

\noindent
We consider the random walk on $\zl$ with the transition 
probabilities depending on a parameter $\theta$ given by:
$$\displaylines{p(0,1)=\frac{\theta}{2}\,,\quad
p(0,0)=1-\frac{\theta}{2}\,,\quad
p(\ell,\ell-1)= p(\ell,\ell)=\frac{1}{2}\,,\cr
p(i,i-1)= p(i,i+1)=\frac{1}{2}\,,\quad 1\leq i\leq \ell-1\,.
}$$
The integer $\ell$ is large and the parameter $\theta$ is small. Hence
the walker spends its time either wandering in $\ul$
or being trapped in $0$. The state $0$ plays the role of the quasispecies
while the set $\ul$ plays the role of the neutral region.
With this analogy in mind, the persistence time is the expected time of exit from $0$,
it is equal to $2/\theta$. The discovery time is the expected time needed to
discover $0$ starting for instance from $1$, it is equal to $2\ell$. 
The equilibrium law of the walker is the probability measure $\mu$ given by
$$\mu(0)\,=\,\frac{1}{1+\theta \ell}\,,\qquad
\mu(1)\,=\,\cdots\,=\,\mu(\ell)\,=\,\frac{\theta}{1+\theta \ell}\,.
$$
We send
$\ell$ to $\infty$ and $\theta$ to $0$ simultaneously.
If $\theta\ell$ goes to $\infty$, the entropy factors wins and $\mu$
becomes totally random.
If $\theta\ell$ goes to $0$, the selection drift wins and $\mu$
converges to the Dirac mass at $0$.

In order to implement the previous heuristics, we have
to estimate the persistence time and the discovery time of the master sequence
in the Moran model.
For the persistence time, we rely on a classical computation from mathematical
genetics.
Suppose we start with a population containing $m-1$ copies of the master sequence
and another non master sequence. The non master sequence is very unlikely
to invade the whole population, yet it has a small probability to do so, called
the fixation probability. If we neglect the mutations, standard computations yield that, in a population of
size~$m$, if the master sequence has a selective advantage of $\sigma$, the 
fixation probability of the non master sequence is roughly of order
$1/{\sigma}^m$ (see for instance \cite{NO1}, section 6.3).
Now the persistence time can be viewed as 
the time needed for non master sequences to invade the population.
This time is approximately equal to the inverse of the 
fixation probability of the non master sequence, that is of order
$\sigma^m$.
For the discovery time, there is no miracle: before discovering the
master sequence, the process is likely 
to explore a significant portion of the
genotype space, hence the discovery time should be of order
$$\card\{\,A,T,G,C\,\}^\ell\,=\,4^\ell\,.$$
These simple heuristics indicate that the persistence time depends
on the selection drift, while the discovery time depends on the
spatial entropy.
Suppose that we send 
$m,\ell$ to $\infty$
simultaneously.
If the discovery time is much larger than the persistence time,
then the population will be neutral most of the time
and the fraction of the master sequence at equilibrium will be null.
If the persistence time is much larger than the discovery time,
then the population will be invaded by the master sequence
most of the time
and the fraction of the master sequence at equilibrium will be positive.
Thus the master sequence vanishes in the regime
$$m,\ell\to+\infty\,,\qquad \frac{m}{\ell}\to 0\,,$$
while a quasispecies might be formed in the regime
$$m,\ell\to+\infty\,,\qquad \frac{m}{\ell}\to +\infty\,.$$
This leads to an interesting feature, namely the existence of a critical
population size for the emergence of a quasispecies.
For chromosomes of length~$\ell$, a quasispecies can be formed only if the
population size $m$ is such that
ratio $m/\ell$ is large enough.
In order to go further, we must put the heuristics on a firmer ground and we
should take the mutations into account when estimating the persistence time.
The main problem is to obtain
finer estimates on the persistence and discovery times.
We cannot compute explicitly the laws of these random times, so we will
compare the Moran model with simpler processes.
\begin{figure}[h]
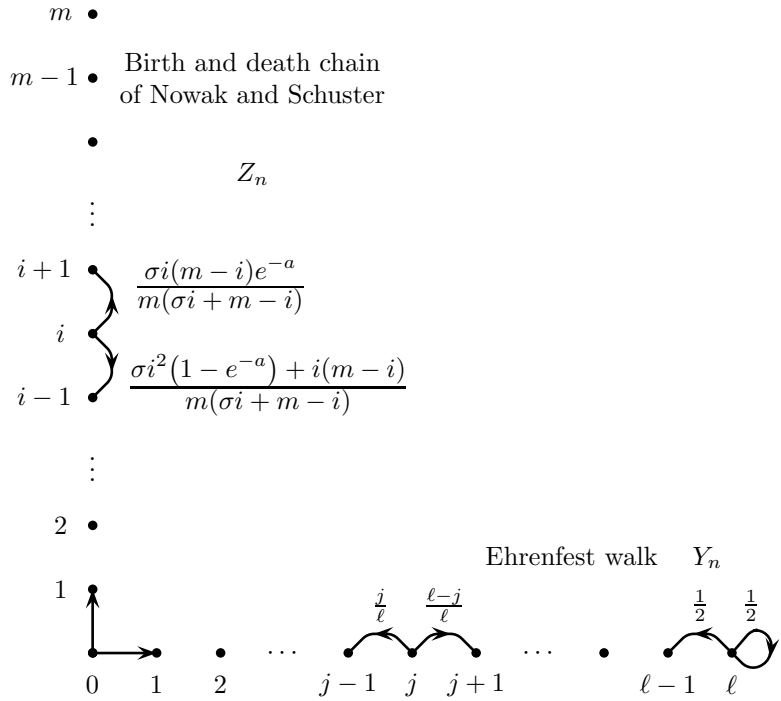

\centering
\psset{unit=0.85}
\psset{arrowscale=1.5}
\pspicture(-1,-1)(11,11)
\psdots[dotscale=1](0,0)(1,0)(2,0)(4,0)(5,0)(6,0)(8,0)(9,0)(10,0)
\psdots[dotscale=1](0,0)(0,1)(0,2)(0,4)(0,5)(0,6)(0,8)(0,9)(0,10)
\rput(0,-0.5){$0$}
\rput(1,-0.5){$1$}
\rput(2,-0.5){$2$}
\rput(3,0){$\cdots$}
\rput(4,-0.5){$j-1$}
\rput(5,-0.5){$j$}
\rput(6,-0.5){$j+1$}
\rput(7,0){$\cdots$}
\rput(9,-0.5){$\ell-1$}
\rput(10,-0.5){$\ell$}
\psline[linewidth=1pt,linearc=.25]{->}(5,0)(4.7,0.3)(4.5,0.3)(4.4,0.3)
\psline[linewidth=1pt,linearc=.25]{-}(4.5,0.3)(4.3,0.3)(4,0)
\psline[linewidth=1pt,linearc=.25]{->}(5,0)(5.3,0.3)(5.5,0.3)(5.6,0.3)
\psline[linewidth=1pt,linearc=.25]{-}(5.5,0.3)(5.7,0.3)(6,0)
\rput(5.5,0.75){$\frac{\ell -j}{\ell}$}
\rput(4.5,0.75){$\frac{j}{\ell}$}
\rput(8,1.5){Ehrenfest walk $\quad Y_n$}
\rput(2.5,9.250){Birth and death chain}
\rput(2.5,8.75){of Nowak and Schuster}
\rput(2.5,7.5){$Z_n$}
\psline[linewidth=1pt,linearc=0.25]{->}(10,0)(10.3,.35)(10.65,0.39)(10.6,-0.07)
\psline[linewidth=1pt,linearc=0.25]{-}(10.6,-0.)(10.6,-0.1)(10.3,-.3)(10,0)
\psline[linewidth=1pt,linearc=.25]{->}(10,0)(9.7,0.3)(9.5,0.3)(9.4,0.3)
\psline[linewidth=1pt,linearc=.25]{-}(9.5,0.3)(9.3,0.3)(9,0)
\rput(10.3,0.75){$\frac{1}{2}$}
\rput(9.5,0.75){$\frac{1}{2}$}
\rput(-.5,1){$1$}
\rput(-.5,2){$2$}
\rput(0,3){$\vdots$}
\rput(0,7){$\vdots$}
\rput(-.75,4){$i-1$}
\rput(-.5,5){$i$}
\rput(-.75,6){$i+1$}
\rput(-.75,9){$m-1$}
\rput(-.5,10){$m$}
\psline[linewidth=1pt,linearc=.25]{->}(0,5)(0.3,4.7)(0.3,4.5)(0.3,4.4)
\psline[linewidth=1pt,linearc=.25]{-}(0.3,4.5)(0.3,4.3)(0,4)
\psline[linewidth=1pt,linearc=.25]{->}(0,5)(0.3,5.3)(0.3,5.5)(0.3,5.6)
\psline[linewidth=1pt,linearc=.25]{-}(0.3,5.5)(0.3,5.7)(0,6)
\rput(2.,5.75){$
\frac{\displaystyle\sigma i(m-i) e^{-a} }
{\displaystyle m(\sigma i + m -i)}$}
\rput(2.75,4.2){$
\frac{\displaystyle \sigma i^2 
\big(1-e^{-a} \big)+
i(m-i)
}{\displaystyle m(\sigma i + m -i)}
$}
\psline[linewidth=1pt,linearc=0.25]{->}(0,0)(0,1)
\psline[linewidth=1pt,linearc=0.25]{->}(0,0)(1,0)
\endpspicture
\caption{Approximating process}
\end{figure} 

\noindent
In the non neutral populations, we shall compare the process with a
birth and death process 
$(Z_n)_{n\geq 0}$ on $\zm$, which is precisely the one 
introduced by Nowak and Schuster \cite{NS}.
The value $Z_n$ approximates the number of copies of the master sequence
present in the population.
For birth and death processes, explicit formula are available and we obtain
that, if 
$\ell,m\to +\infty,\,q\to 0,\,
{\ell q} \to a\in]0,+\infty[$, then
$$\text{\bf persistence time}\,
\,\mathop{\sim}\,
\exp\big(m\,\phi(a)
\big)\,,$$
where 
$$\phi(a)\,=\,
\frac
{ \displaystyle \sigma(1-e^{-a})
\ln\frac{\displaystyle\sigma(1-e^{-a})}{\displaystyle\sigma-1}
+\ln(\sigma e^{-a})}
{ \displaystyle (1-\sigma(1-e^{-a})) }
\,.
$$
In the neutral populations, we shall replace the process by a
random walk on
$\{\,A,T,G,C\,\}^\ell\,=\,4^\ell$. The lumped version of this random
walk behaves like
an
Ehrenfest process
$(Y_n)_{n\geq 0}$ on $\zl$ (see \cite{BI} for a nice review).
The value $Y_n$ represents the distance of the walker to the
master sequence.
A celebrated theorem of Kac from 1947 \cite{Kac}, which helped to resolve
a famous paradox of statistical mechanics, yields that, 
when $\ell\to\infty$,
$$\text{\bf discovery time}\,\sim\, 4^\ell
\,.$$
Thus the Moran process is approximated by the process on
$$\Big(\zl\times \{\,0\,\}\Big)\,\cup
\Big(\{\,0\,\} \times \zm\Big)$$ 
described loosely as follows.
On 
$\zl\times \{\,0\,\}$, the process follows the dynamics of the Ehrenfest urn.
On 
$\{\,0\,\} \times \zm$, the process follows the dynamics of the 
birth and death process of Nowak and Schuster \cite{NS}.
When in $(0,0)$, the process can jump to either axis.
With this simple heuristic
picture, we recover all the features of our main result.
We suppose that 
$$\ell\to +\infty\,,\qquad m\to +\infty\,,\qquad q\to 0\,,$$
in such a way that
$${\ell q} \to a\in ]0,+\infty[\,,
\qquad\frac{m}{\ell}\to\alpha\in [0,+\infty]\,.$$
The critical curve is then defined by the equation
$$
\text{\bf discovery time}\,\sim\,
\text{\bf persistence time}$$
which can be rewritten as
$$
\alpha\,\phi(a)
\,=\,
\ln 4
\,.
$$
This way we obtain an equation 
in the parameter space $(a,\alpha)$
separating the regime where the equilibrium population is totally random from the
regime where a quasispecies is formed. We observe the existence of a critical
population size necessary for a quasispecies to emerge and
we recover the finite population counterpart of the error threshold.
Moreover, in the regime of very small mutations, we obtain a lower bound on
the population size allowing
the emergence
of a quasispecies:
if $\alpha< \ln 4/\ln\sigma$ then
the equilibrium population is totally random, and
a quasispecies can be formed
only when 
$\alpha\geq \ln 4/\ln\sigma$. 
Finally, in the limit of very large populations, we recover
an error catastrophe reminiscent of
Eigen's model:
if $\sigma\exa\leq 1$
then the equilibrium population is totally random, and
a quasispecies can be formed
only when $\sigma\exa>1$.
These results are supported by computer simulations. 
The good news is that,
already for small values of $\ell$, the simulations are very conclusive.
\begin{figure}[!h]
\centering
\hbox{
\kern-33pt
\includegraphics[scale=1]{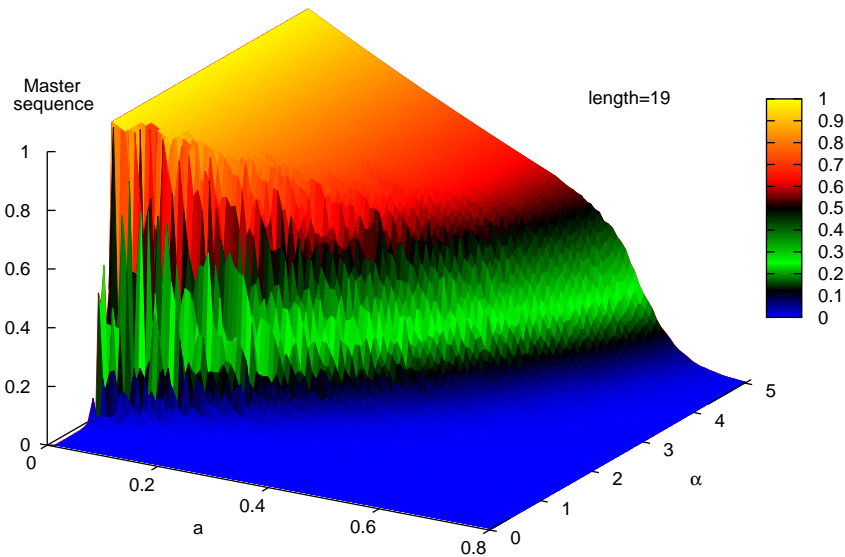} 
}
\vskip-20pt
\caption{Simulation of the equilibrium density of the
Master sequence}
\end{figure} 
\bigskip

\noindent
It is certainly well known that the population dynamics depends on the population
size (see the discussion of Wilke \cite{Wilke}). 
In a theoretical study \cite{NCH},
Van Nimwegen, Crutchfield and Huynen developed a model for the evolution of populations 
on neutral networks and they show that an important parameter
is the product of the population size and the mutation rate. 
The nature
of the dynamics changes radically depending on whether this product is 
small or large. 
Sumedha, Martin and Peliti \cite{SMP}
analyze further the influence of this parameter.
In \cite{NC}, 
Van Nimwegen and Crutchfield derived analytical expressions for the waiting times needed
to increase the fitness, starting from a local optimum. Their scaling relations 
involve the population size and show the existence of two different barriers,
a fitness barrier and an entropy barrier. Although they pursue a different goal 
than ours, most of the heuristic ingredients explained previously
are present in their work,
and much more; they observe and discuss also
the transition from the quasispecies regime for large populations
to the disordered regime for small populations.
The dependence on the population size and genome length has been
investigated numerically by
Elena, Wilke, Ofria and Lenski \cite{EWOL}.
Here we show rigorously the existence of a critical population size for the sharp
peak landscape in a specific asymptotic regime.
The existence of a critical population size for the emergence of a quasispecies is a
pleasing result: it shows that, even under the action of selection forces, a form
of cooperation is necessary to create a quasispecies. Moreover the critical
population size is much smaller than the cardinality of the possible genotypes.
In conclusion, even in the very simple framework of the Moran model on
the sharp peak landscape, 
cooperation is necessary to achieve the survival of the master sequence.

As emphasized by Eigen in \cite{EI2}, 
the error threshold phenomenon is similar to
a phase transition in statistical mechanics. 
Leuth{\"a}usser established a formal correspondence between
Eigen's model and an anisotropic Ising model \cite{LEU}.
Several researchers have employed tools from statistical mechanics
to analyze models of biological evolution, and more specifically
the error threshold: see the nice review written by Baake and Gabriel
\cite{BG}.
Baake investigated the so--called Onsager landscape in \cite{BAA1}. This way 
she could transfer to a biological model the famous computation
of Onsager for the two dimensional Ising model.
Saakian, Deem and Hu \cite{SAA1} 
compute the variance of the mean fitness in a finite population model
in order to control how it approximates the infinite population model.
Deem, Mu\~noz and Park \cite{PEM}
use a field theoretic representation in order to derive analytical
results.

We were also very much inspired by ideas from statistical mechanics, but with
a different flavor. We do not use exact computations, rather we rely on
softer tools, namely coupling techniques and correlation inequalities. These
are the basic tools to prove the existence of a phase transition in
classical models, like the Ising model or percolation.
We seek large deviation estimates rather than precise scaling relations
in our asymptotic regime.
Of course the outcome of these techniques is very rough compared to exact
computations, yet they are much more robust and their range of applicability
is much wider. 
The model is presented in the next section and the main results in 
section~\ref{mainres}. The remaining sections are devoted to the proofs.
In the appendix we recall several classical results of the theory
of finite Markov chains.
\vfill\eject
\section{The model.}
\label{secmodel}
This section is devoted to the presentation of the model.
Let $\cal A$ be a finite alphabet and let
\index{$\cal A$}
\index{$\kappa$}
$\kappa=\card\cal A$ be its cardinality.
Let $\ell\geq 1
\index{$\ell$}$ be an integer. We consider the space
${\cal A}^\ell$ of sequences of length $\ell$ over the
alphabet $\cal A$.
Elements of this space represent the chromosome of an haploid
individual, or equivalently its genotype.
In our model, all the genes have the same set of alleles and each 
letter of the alphabet $\cal A$ is a possible allele.
Typical examples are
${\cal A}=\{\,A,T,G,C\,\}$ to model standard DNA, or
${\cal A}=\{\,0,1\,\}$ to deal with binary sequences.
Generic elements of
${\cal A}^\ell$ will be denoted by the letters $u,v,w\index{$u,v,w$}$.
We shall study a simple model for the evolution of a finite population
of chromosomes on the space
${\cal A}^\ell$.
An essential feature of the model we consider is that the size of the
population is constant throughout the evolution. We denote by
$m$ the size of the population.
A population is an $m$--tuple of elements of
${\cal A}^\ell$.
Generic populations will be denoted by the letters 
$x,y,z\index{$x,y,z$}$.
Thus a population $x$ is a vector
$$x\,=\,
\left(
\begin{matrix}
x(1)\\
\vdots\\
x(m)
\end{matrix}
\right)
$$
whose components are chromosomes.
For $i\in\{\,1,\dots,m\,\}$, we denote by
$$x(i,1),\dots,x(i,\ell)$$
the letters of the sequence $x(i)$. This way a population $x$
can be represented as an array
$$x\,=\,
\left(
\begin{matrix}
x(1,1)&\cdots&x(1,\ell)\\
\vdots& & \vdots\\
x(m,1)&\cdots&x(m,\ell)\\
\end{matrix}
\right)
$$
of size $m\times\ell$ of elements of $\cal A$, the
$i$--th line being the $i$--th chromosome.
The evolution of the population will be random and it will be driven
by two antagonistic forces: mutation and replication.
\medskip

\noindent
{\bf Mutation.} We assume that the mutation mechanism is the same
for all the loci, and that mutations occur independently.
Moreover we choose the most symmetric mutation scheme.
We denote by $q\in]0,1-1/\kappa[\index{$q$}$ 
the probability of the occurrence of a mutation
at one particular locus. If a mutation occurs, then the letter is replaced
randomly by another letter, chosen uniformly over the 
$\kappa-1$ remaining letters. We encode this  mechanism in a mutation matrix
$$M(u,v)\,,\quad u,v\in
{\cal A}^\ell
\index{$M(\cdot,\cdot)$}
$$
where $M(u,v)$ is the probability that the chromosome $u$ is transformed
by mutation into the chromosome $v$.
The analytical formula for 
$M(u,v)$ is then
$$
M(u,v)\,=\,
\prod_{j=1}^\ell
\left((1-q){1}_{u(j)=v(j)}
+\frac{q}{\kappa-1}
{1}_{u(j)\neq v(j)}
\right)\,.
$$
\medskip

\noindent
{\bf Replication.} The replication favors the development of fit chromosomes.
The fitness of a chromosome is encoded in a fitness function
$$A:{\cal A}^\ell\to[0,+\infty[\,.
\index{$A$}
$$
The fitness of a chromosome can be interpreted as its reproduction rate.
A chromosome $u$ gives birth at random times and the mean time interval
between two consecutive births is $1/A(u)$.
In the context of Eigen's model, the quantity $A(u)$ is the kinetic constant
associated to the chemical reaction for the replication of a 
macromolecule of
type $u$.
\medskip

\noindent
{\bf Authorized changes.}
In our model, the only authorized changes in the population
consist
in replacing one chromosome of the population by a new one. The new
chromosome is obtained 
by replicating  another chromosome,
possibly with errors.
We introduce a specific notation corresponding to these changes.
For a population
$x\in
\smash{\left({\cal A}^\ell\right)^m}$,
$j\in\{\,1,\dots,m\,\}$, $u\in {\cal A}^\ell$, we denote by
$x(j\leftarrow u)$ the population $x$ in which the $j$--th chromosome
$x(j)$ has been replaced by $u$:
$$
\index{$x(j\leftarrow u)$}
x(j\leftarrow u)
\,=\,
\left(
\begin{matrix}
x(1)\\
\vdots\\
x(j-1)\\
u\\
x(j+1)\\
\vdots\\
x(m)
\end{matrix}
\right)
$$
We make this modeling choice in order to build a very simple model.
This type of model is in fact classical in population dynamics, 
they are called Moran models \cite{EW}. 
\medskip

\noindent
{\bf The mutation--replication scheme.}
Several further choices have to be done to define the model precisely.
We have to decide how to combine the mutation and the replication processes.
There exist two main schemes in the literature. 
In the first scheme, mutations occur at any time of the life cycle
and they are caused by radiations or thermal fluctuations. This leads
to a decoupled Moran model.
In the second scheme, mutations
occur at the same time as births and they are caused by replication errors.
This is the case of the famous Eigen model and it leads 
to the Moran model we study here.
This Moran model can be described loosely as follows.
Births occur at random times. 
The rates of birth are given by the
fitness function~$A$.
There is at most one birth at each instant.
When an individual gives birth, it produces an offspring through a replication
process. Errors in the replication process induce mutations. The offspring
replaces an individual chosen randomly in the population
(with the uniform probability). 
\medskip

We build next a mathematical model for the evolution of a finite population
of size $m$ on the space
${\cal A}^\ell$, driven by mutation and replication as described above.
We will end up with a stochastic process on the population space
$\smash{\left({\cal A}^\ell\right)^m}$.
Since the genetic composition of a population contains all the necessary
information to describe its future evolution, our process will be Markovian.
\medskip

\noindent
{\bf Discrete versus continuous time.} 
We can either build a discrete time Markov chain 
or a continuous time Markov process.
Although the mathematical construction of a discrete time Markov chain is
simpler, a continuous time process seems more adequate as a model of evolution
for a population: births, deaths and mutations can occur at any time. 
In addition, the continuous time model is mathematically more appealing.
We will build both types of models, in continuous and discrete time.
Continuous time models are conveniently defined by their infinitesimal
generators, while discrete time models are defined by their
transition matrices (see the appendix).
It should be noted, however, that the discrete time and the continuous
time processes are linked through a standard stochastization procedure 
and they have the same stationary distribution.
Therefore the asymptotic results we present here hold in both frameworks.
\medskip

\noindent
{\bf Infinitesimal generator.}
The continuous time Moran model
is the Mar\-kov process $(X_t)_{t\in{\mathbb R}^+}
\index{$X_t$}$
having the following infinitesimal generator:
for $\phi$ a function from
$\smash{\left({\cal A}^\ell\right)^m}$
to $\mathbb R$ and
for any
$x\in
\smash{\left({\cal A}^\ell\right)^m}$, 
\begin{multline*}
\lim_{t\to 0}\,\frac{1}{t}\Big(E\big(\phi(X_t)|X_0=x\big) -\phi(x)\Big)\,=\,\cr
\sum_{1\leq i,j\leq m}
\sum_{u\in {\cal A}^\ell}
A(x(i))M(x(i),u)
\Big(\phi\big(x(j\leftarrow u)\big)-\phi(x)\Big)\,.\hfil
\end{multline*}
%

\noindent
{\bf Transition matrix.}
The discrete time Moran model
is 
the Markov chain $(X_n)_{n\in\mathbb N}\index{$X_n$}$ whose
transition matrix is given by
\begin{multline*}
\forall n\in{\mathbb N}\quad
\forall x\in
\smash{\left({\cal A}^\ell\right)^m} \quad
\forall j\in\{\,1,\dots,\ell\,\}\quad
\forall u \in\smash{{\cal A}^\ell} \setminus\{\,x(j)\,\}\cr
P\big(X_{n+1}=x(j\leftarrow u)\,|\,X_n=x\big)\,=\,
\frac{1}{m^2\lambda}
\sum_{1\leq i\leq m}
{A(x(i))} M(x(i),u)\,,
\hfil
\end{multline*}
where $\lambda>0$ is a constant such that
$$\lambda\,\geq\,\max\,\big\{\,A(u):
{u \in\smash{{\cal A}^\ell}}
\,\big\}\,.
\index{$\lambda$}
$$
The other non diagonal coefficients of the transition matrix are zero.
The diagonal terms are chosen so that the sum of each line is equal to one.
\noindent
Notice that the continuous time formulation is more concise and
elegant: it does not require
the knowledge of the maximum of the fitness function $A$ in its definition.
\medskip

\noindent
{\bf 
Loose description of the dynamics.}
We explain first the discrete time dynamics of 
the Markov chain $(X_n)_{n\in\mathbb N}$.
Suppose that $X_n=x$ for some $n\in\mathbb N$ and let 
us describe loosely the transition mechanism to $X_{n+1}=y$.
An index $i$ in
$\{\,1,\dots,m\,\}$
is selected
randomly with the uniform probability.
With probability
$1-{A(x(i)) }/
\lambda$, nothing happens and $y=x$.
With probability
${A(x(i)) }/
{\lambda}$, the chromosome $x(i)$ enters the
replication process and it produces an offspring $u$ according
to the law $M(x(i),\cdot)$ given by the mutation matrix.
Another index $j$ is
selected randomly with uniform probability in
$\{\,1,\dots,m\,\}$. The population $y$ is obtained by replacing
the chromosome $x(j)$ in the population $x$ by a chromosome $u$.

We consider next the continuous time dynamics of 
the Markov process $(X_t)_{t\in{\mathbb R}^+}$.
The dynamics is governed by a clock that rings randomly.
The time interval $\tau$
between each of the clock ringing is exponentially distributed
with parameter
$m^2{\lambda}$:
$$\forall t\in{\mathbb R}^+\qquad P(\tau>t)\,=\,\exp\big(- 
m^2{\lambda} t\big)\,.$$
Suppose that the clock rings at time $t$ and that the process was in
state $x$ just before the time $t$.
The population $x$ is transformed into the population $y$ following
the same scheme as for the discrete time Markov chain 
$(X_n)_{n\in\mathbb N}$
described previously. At time $t$, the process jumps to the state $y$.
\vfill\eject
\section{Main results.}
\label{mainres}
This section is devoted to the presentation of the main results.
\medskip

\noindent
{\bf Convention.}
The results hold for both the discrete time and the
continuous time models, so we do not make separate statements. The time
variable is denoted by $t$ throughout this section, 
it is either discrete with values
in $\mathbb N$ or continuous with values in $\mathbb R^+$.
\medskip

\noindent
{\bf Sharp peak landscape.} 
We will consider only the sharp peak landscape defined as follows.
We fix a specific sequence, denoted by $w^*$, called the wild type or the
master sequence.
Let $\sigma>1\index{$\sigma$}$ be a fixed real number.
The fitness function $A$ is given by
$$\forall u\in
{\cal A}^\ell\qquad
A(u)\,=\,
\begin{cases}
1 &\text{if }u\neq w^*\\
\sigma &\text{if }u=w^*\\
\end{cases}
$$
\medskip

\noindent
{\bf Density of the master sequence.}
We denote by $N(x)\index{$N(x)$}$ 
the number of copies of the master sequence~$w^*$
present in the population $x$:
$$N(x)\,=\,\card\big\{\,i: 1\leq i\leq m, \, x(i)=w^*\,\big\}\,.$$
We are interested in the expected
density of the master sequence in the steady state
distribution of the process, that is,
$$\wild(\sigma,\ell,m,q)\,=\,
\lim_{t\to\infty} 
E\Big(
\frac{1}{m}
N(X_t)\Big)
\index{$\wild$}
\,,$$
as well as the variance
$$\vwild(\sigma,\ell,m,q)\,=\,
\lim_{t\to\infty} 
E\bigg(\Big(
\frac{1}{m}
N(X_t)
-\wild(\sigma,\ell,m,q)
\Big)^2
\bigg)
\index{$\vwild$}
\,.$$
The limits exist because the transition mechanism of the Markov process 
$(X_t)_{t\geq 0}$ is irreducible (and aperiodic for the discrete time case)
as soon as the mutation probability is strictly between $0$ and $1$.
Since the state space is finite, the Markov process 
$(X_t)_{t\geq 0}$ 
admits a unique
invariant probability measure, which describes the steady state of
the process. The ergodic theorem for Markov chains implies that the
law of  
$(X_t)_{t\geq 0}$ converges towards this invariant probability measure,
hence the above expectations converge. 
The limits depend on the parameters of the model, that is
$\sigma,\ell,m,q$.
Our choices for the infinitesimal
generator and the matrix transition imply that the discrete time version
and the continuous time version have exactly the same invariant probability
measure.
In order to exhibit a sharp transition phenomenon, we send
$\ell,m$ to $\infty$ and $q$ to $0$. 
%
Let $\phi:{\mathbb R}^+\to
{\mathbb R}^+\cup\{\,+\infty\,\}$ be the function defined by
$$
\forall a<\ln\sigma\qquad
\phi(a)\,=\,
\frac
{ \displaystyle \sigma(1-e^{-a})
\ln\frac{\displaystyle\sigma(1-e^{-a})}{\displaystyle\sigma-1}
+\ln(\sigma e^{-a})}
{
\displaystyle (1-\sigma(1-e^{-a}))
}
\index{$\phi(a)$}
$$
and $\phi(a)=0$
if $a\geq\ln\sigma$. 
\begin{figure}[ht]
\centering
\hbox{
\kern-20pt
\includegraphics[scale=0.97]{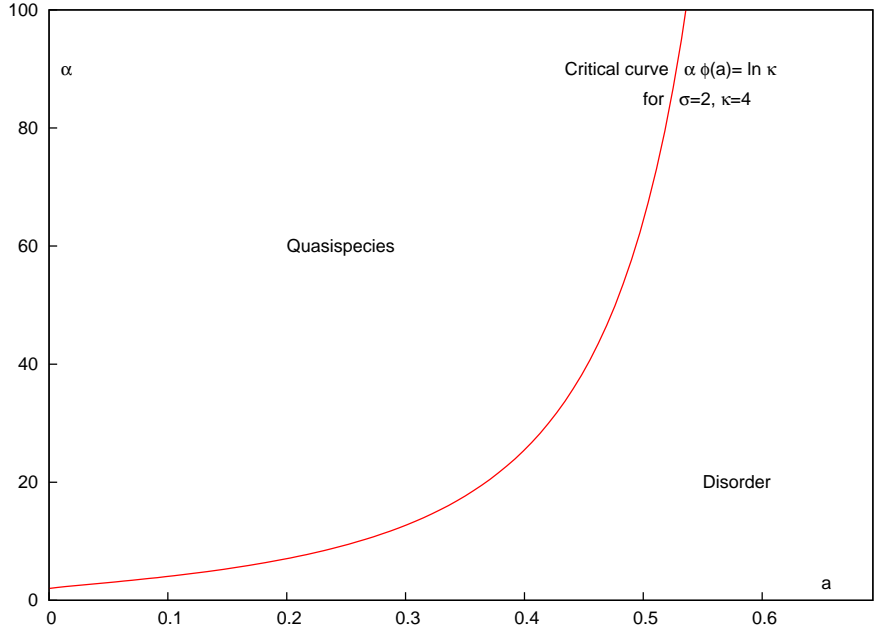} 
}
\vskip-10pt
\caption{Critical curve}
\label{crit_curve} 
\end{figure} 
\begin{theorem}\label{mainth}
We suppose that 
$$\ell\to +\infty\,,\qquad m\to +\infty\,,\qquad q\to 0\,,$$
in such a way that
$${\ell q} \to a\in ]0,+\infty[\,,
\qquad\frac{m}{\ell}\to\alpha\in [0,+\infty]\,.
\index{$a,\alpha$}
$$
We have the following dichotomy:
\medskip

\noindent
$\bullet\quad$ If $\alpha\,\phi(a)<\ln\kappa$ then
$\wild\big(\sigma,\ell,m,q\big)\,\to\,0$.
\smallskip

\noindent
$\bullet\quad$ If $\alpha\,\phi(a)>\ln\kappa$ then
$\wild\big(\sigma,\ell,m,q\big)\,\to\,
\frac{\displaystyle\sigma\exa-1}{\displaystyle\sigma-1}$.
\smallskip

\noindent
In both cases, we have
$\vwild\big(\sigma,\ell,m,q\big)\,\to\,0$.
\end{theorem}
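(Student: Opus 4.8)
The plan is to make rigorous the two-time-scale heuristic sketched in the introduction, by sandwiching the Moran process between simpler processes whose behavior at equilibrium we can compute or bound directly. Write $N(X_t)$ for the number of master sequences. The state space splits into the \emph{neutral region} $\{x : N(x)=0\}$ and its complement. The first step is to set up a renewal decomposition of the stationary measure: by the ergodic theorem, $\wild$ equals the long-run fraction of time that $N(X_t)>0$ multiplied by the conditional average density of $w^*$ during an excursion away from the neutral region. Thus I would introduce the \emph{persistence time} $T_{\mathrm{pers}}$ (length of a typical excursion with $N>0$) and the \emph{discovery time} $T_{\mathrm{disc}}$ (length of a typical sojourn in the neutral region), and aim to prove
$$\wild \;\sim\; \frac{\E(T_{\mathrm{pers}})}{\E(T_{\mathrm{pers}})+\E(T_{\mathrm{disc}})}\,\times\,(\text{conditional density}),$$
with the conditional density tending to $(\sigma e^{-a}-1)/(\sigma-1)$ whenever the excursion is ``long'' (i.e.\ the master sequence invades).

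The second step is the lower bound on persistence. Starting from a population with one copy of $w^*$, I would couple $N(X_t)$ from below with the Nowak--Schuster birth-and-death chain $(Z_n)$ on $\{0,\dots,m\}$, controlling the mutation terms by the regime $\ell q\to a$ so that each master sequence produces a non-mutant offspring with probability $\to e^{-a}$. Since $(Z_n)$ is a birth-and-death chain, the expected hitting time of $0$ from its quasi-equilibrium admits the explicit product formula, which under $m/\ell\to\alpha$ gives $\E(T_{\mathrm{pers}})=\exp(m\phi(a)(1+o(1)))=\exp(\alpha\ell\phi(a)(1+o(1)))$; I also need the matching upper bound on persistence, obtained by a symmetric coupling from above (dominating $N$ by a slightly faster birth-and-death chain), and a separate argument that once $N$ leaves a neighborhood of $0$ it reaches a neighborhood of $m e^{-a}\tfrac{\sigma-1}{\sigma}\cdot(\dots)$ — more precisely the equilibrium level of $(Z_n)$ — in time negligible compared to $\exp(m\phi(a))$, using standard drift/Foster–Lyapunov estimates for the interior of the chain. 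The conditional density claim then follows from concentration of $(Z_n)$ near its equilibrium point, which also yields $\vwild\to 0$ on the quasispecies side.

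The third step is the discovery time. Here I would project the neutral dynamics onto the Hamming distance to $w^*$ and compare with the Ehrenfest urn $(Y_n)$ on $\{0,\dots,\ell\}$: the lumped random walk on $\cA^\ell$ visits the class at distance $d$ with the Ehrenfest transition rates, and Kac's return-time theorem gives expected return time to the master class equal to $\kappa^\ell$ up to polynomial corrections, i.e.\ $\E(T_{\mathrm{disc}})=\kappa^{\ell(1+o(1))}=e^{\ell\ln\kappa\,(1+o(1))}$. A coupling is needed to show that the presence of the nonzero fitness of $w^*$ (which perturbs the dynamics only when $w^*$ is actually present, an event of exponentially small stationary probability) does not change this exponential order, and that the time spent with $N>0$ but \emph{not} invading is also negligible (a short excursion that dies out contributes $O(\mathrm{poly})$ to both numerator and denominator). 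Comparing the two exponential rates, $\alpha\ell\phi(a)$ versus $\ell\ln\kappa$, gives the dichotomy: if $\alpha\phi(a)<\ln\kappa$ the process spends a vanishing fraction of time outside the neutral region, so $\wild\to 0$ and (since $N$ is then $0$ with probability $\to 1$) $\vwild\to 0$ as well; if $\alpha\phi(a)>\ln\kappa$ the reverse holds and $\wild\to(\sigma e^{-a}-1)/(\sigma-1)$, $\vwild\to 0$.

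The main obstacle will be step two, specifically the two-sided coupling of $N(X_t)$ with a birth-and-death chain. In the true Moran model the transition rate for $N$ depends on the \emph{detailed} composition of the non-master part of the population — in particular on how many non-master chromosomes are one mutation away from $w^*$ — so $N$ alone is not Markov. One must show that this ``back-mutation'' contribution is uniformly negligible in the regime $\ell q\to a$ (intuitively, the probability that a uniformly random non-master sequence lies within Hamming distance $1$ of $w^*$ is $O(\ell q/\kappa)\to 0$ after rescaling, but it must be controlled along the whole excursion, not just in stationarity), and simultaneously that the forward mutation rate per master sequence concentrates at $e^{-a}$. Handling this rigorously — probably via a monotone coupling that dominates the composition-dependent rates by constants depending only on $a$, $\sigma$, plus a large-deviation bound ruling out atypically many near-master sequences — together with proving that the birth-and-death comparison is tight enough to pin down the \emph{exponential rate} $\phi(a)$ (not merely its sign) is where the real work lies. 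Everything else reduces to the explicit birth-and-death formulae, Kac's lemma, and soft ergodic arguments.
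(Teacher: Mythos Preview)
Your overall architecture matches the paper closely: renewal decomposition into persistence and discovery phases, comparison with the Nowak--Schuster birth--and--death chain for persistence, Ehrenfest-type analysis for discovery, and comparison of the two exponential rates $\alpha\phi(a)$ versus $\ln\kappa$. You have also correctly located the main obstacle: $N(X_t)$ is not Markov and the rates depend on the full composition of the non-master part of the population.

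Where your proposal diverges from the paper is in the resolution of that obstacle. You suggest dominating the composition-dependent rates by constants and using a large-deviation bound to rule out atypically many near-master sequences. The paper does something structurally different and cleaner. First, it passes to a \emph{normalized} Moran model (total replication rate forced to $1$) and shows the two invariant measures are related by an explicit density, so it suffices to prove the dichotomy there. Second, and crucially, it observes that the natural coupling of the original process---and even of the lumped distance process $(D_t)$---is \emph{not} monotone for $\sigma>1$ (an explicit counterexample is given). This kills any direct attempt to sandwich $N(X_t)$ by monotone coupling at that level. The fix is a further lumping to the \emph{occupancy process} $(O_t)$ on ordered partitions, equipped with the stochastic-dominance order $\preceq$; a nontrivial calculation (their Lemma on $\Phi_O$) shows that \emph{this} process is monotone for all $\sigma\ge 1$. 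Only then can one build explicit lower and upper processes $O^\ell$ and $O^1$ by forcing all non-master chromosomes into class $\ell$ (resp.\ class $1$) whenever a master is present; this makes the back-mutation rate exactly $M_H(\theta,0)\to 0$ with no need to control ``atypically many near-master sequences'' probabilistically.

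So your instinct that back-mutation is negligible is right, but the mechanism you propose (LD bounds along the excursion) would be delicate, whereas the paper's device (lumping until monotonicity is restored, then projecting to extremes) sidesteps it entirely. On the discovery side, Kac's lemma alone is not quite enough: the paper needs ancestral-line computations for the neutral process, an FKG argument, and careful drift estimates to show that the discovery time has exponential rate exactly $\ln\kappa$ uniformly over starting points in the neutral region---in particular starting from the exit point $o^\theta_{\mathrm{exit}}$, which may be far from equilibrium.
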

\begin{figure}[ht]
\centering
\hbox{
\kern-38pt
\includegraphics[scale=1.11]{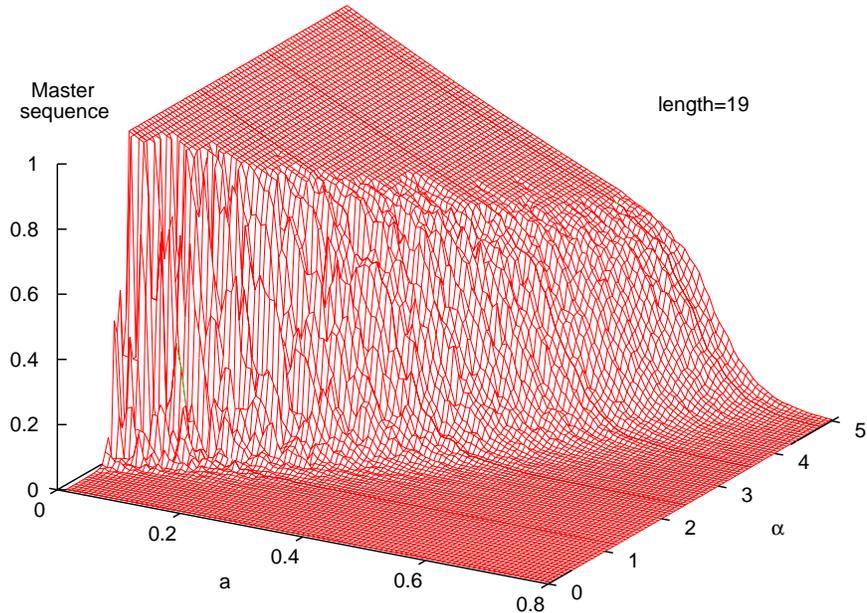} 
}
\vskip-20pt
\caption{Master sequence at equilibrium} 
\label{equimas}
\end{figure} 
\begin{figure} 
\hbox{
\kern-20pt
\includegraphics[scale=1]{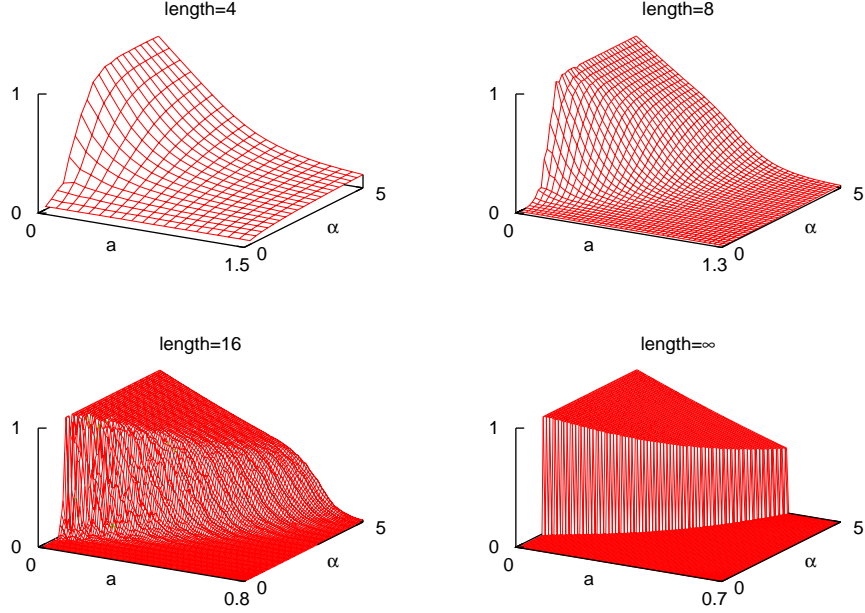} 
}
\vskip-20pt
\caption{Varying the length $\ell$}
\label{finite} 
\end{figure} 
These results are supported by computer simulations (see figure~\ref{equimas}). 
On the simulations, which are of course done for small values of $\ell$,
the transition associated to the critical population size seems even sharper than
the transition associated to the error threshold.
The programs are written in
$C$ with the help of the GNU scientific library and the graphical output is generated
with the help of the Gnuplot program.
To increase the efficiency of the simulations, we simulated the 
occupancy process
obtained by lumping
the original Moran model. The number of generations in a simulation
run was adjusted empirically in order to stabilize the output within a reasonable
amount of time.
Twenty years ago, Nowak and Schuster could perform simulations with
$\ell=10$ and $m=100$ for $20\,000$ generations \cite{NS}. 
Today's computer powers allow to simulate easily models with $\ell=20$
and $m=100$ for $10\,000\,000\,000$ generations.
The good news is that,
already for small values of $\ell$, 
the simulations are very conclusive.
Figure~\ref{finite} presents three pictures corresponding to simulations
with $\ell=4,8,16$, as well as the theoretical shape for $\ell=\infty$
in the last picture.
Notice that the statement of the theorem holds also in the case where $\alpha$ is null
or infinite. This yields the following results:
\smallskip

\noindent
{\bf Small populations.}
If $\ell,m\to +\infty,\,q\to 0,\,
{\ell q} \to a\in ]0,+\infty[,\,
\frac{\textstyle m}{\textstyle \ell}\to  0$, then
$\wild\big(\sigma,\ell,m, q)\to 0$.
%
\smallskip

\noindent
{\bf Large populations.} Suppose that
$$\ell,m\to +\infty,\quad q\to 0,\quad
{\ell q} \to a\in ]0,+\infty[,\quad
\frac{\textstyle m}{\textstyle \ell}\to  +\infty\,.$$
If $a\geq \ln\sigma$, then
$\wild\big(\sigma,\ell,m, q)\to 0$.
If $a<\ln\sigma$, then
$$\wild\big(\sigma,\ell,m,q\big)\,\to\,
\frac{\displaystyle\sigma\exa-1}{\displaystyle\sigma-1}\,.$$
%
%
Interestingly, the large population regime
is reminiscent of Eigen's model.
A slightly more restrictive formulation consists in sending
$\ell$ to $\infty$, 
$m$ to $\infty$ 
and
$q$ to $0$ 
in such a way that $m/\ell$ and $\ell q$ are kept constant.
We might then take $q$ and $m$ as functions of $\ell$.
Let $a,\alpha\in ]0,+\infty[$. We take $q=a/\ell$ and $m=\alpha \ell$
and we have
$$\lim_{\ell\to\infty}\wild\big(\sigma,\ell,\alpha \ell,a/\ell\big)\,=\,
\begin{cases}
\quad\phantom{aaa} 0\phantom{\frac{1}{2}}&\text{if $\alpha\,\phi(a)<\ln\kappa$ } \\
\quad \displaystyle\frac{\displaystyle\sigma\exa-1}{\displaystyle\sigma-1}\quad
&\text{if $\alpha\,\phi(a)>\ln\kappa$ } \\
\end{cases}
$$
\noindent
Notice that 
$\alpha\,\phi(a)>\ln\kappa$ implies that $a<\ln\sigma$ and $\sigma\exa>1$.
%
%
The critical curve
$$\big\{\,(a,\alpha)\in
{\mathbb R}^+\times {\mathbb R}^+:\alpha\,\phi(a)=\ln\kappa\,\big\}$$
corresponds to parameters $(a,\alpha)$
which are exactly at the error threshold and the critical
population size.
We are able to compute explicitly the critical curve and the limiting
density 
because we consider a toy model.
We did not examine here what happens on the critical curve. It is expected
that the limiting density of the master sequence still fluctuates 
so that
$\vwild\big(\sigma,\ell,\alpha \ell,a/\ell\big)$ does not converge to $0$
whenever $\alpha\,\phi(a)=\ln\kappa$.
An important observation is that the critical scaling should be the same for
similar Moran models. In contrast, the critical curve seems to depend
strongly on the specific dynamics of the model.
However,
in the limit where $a$ goes to $0$, the function
$\phi(a)$ 
converges towards $\ln\sigma$. This yields
the minimal
population size allowing
the emergence
of a quasispecies.
\begin{figure}[ht]
\centering
\hbox{
\kern-20pt
\includegraphics[scale=0.97]{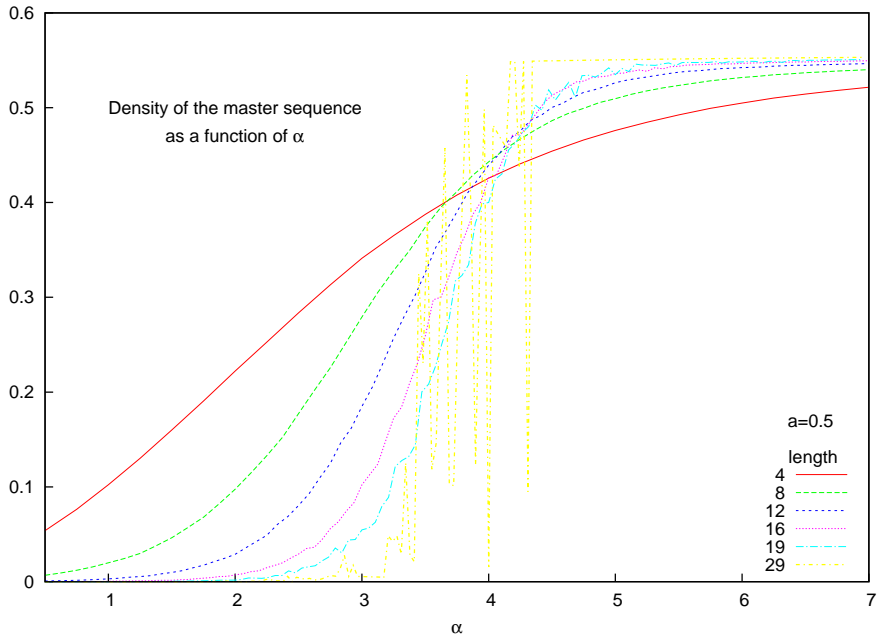} 
}
\caption{Critical population size}
\label{crit_size} 
\end{figure} 

\begin{figure}[ht]
\centering
\hbox{
\kern-20pt
\includegraphics[scale=0.97]{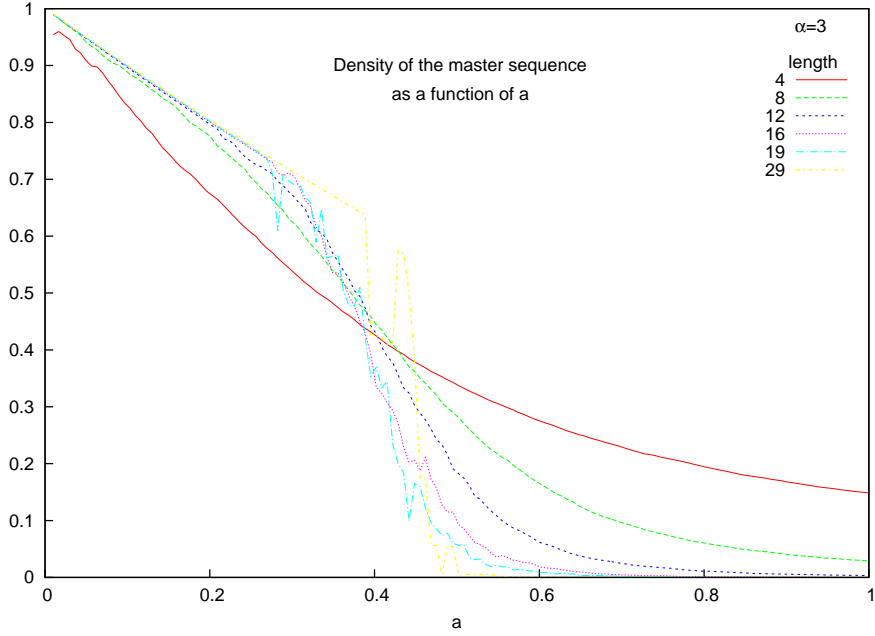} 
}
\caption{Error threshold}
\label{err_threshold}
\end{figure} 
\begin{corollary}\label{minpop}
If $\alpha< \ln\kappa/\ln\sigma$ then
$$
\forall a>0\qquad 
\lim_{\ell\to\infty}\wild\big(\sigma,\ell,\alpha \ell,a/\ell\big)\,=\,0\,.$$
If $\alpha> \ln\kappa/\ln\sigma$ then
$$
\exists a>0\qquad 
\lim_{\ell\to\infty}\wild\big(\sigma,\ell,\alpha \ell,a/\ell\big)\,>\,0\,.$$
\end{corollary}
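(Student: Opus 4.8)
The plan is to deduce Corollary~\ref{minpop} directly from Theorem~\ref{mainth}, the only extra ingredient being two elementary facts about the function $\phi$: (i) $\phi(a)\le\ln\sigma$ for every $a\ge 0$ (equivalently, $\phi$ is nonincreasing on $[0,\ln\sigma[$, since $\phi\equiv 0$ on $[\ln\sigma,+\infty[$), and (ii) $\phi$ is continuous at $0$ with $\lim_{a\to 0^+}\phi(a)=\ln\sigma$. I would first dispatch (ii): setting $x=1-e^{-a}$ one has $x\to 0$, the term $\sigma x\ln\frac{\sigma x}{\sigma-1}\to 0$ because $x\ln x\to 0$, the term $\ln(\sigma e^{-a})=\ln\sigma-a\to\ln\sigma$, and the denominator $1-\sigma x\to 1$, so $\phi(a)\to\ln\sigma$. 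For (i) I would substitute $t=\sigma(1-e^{-a})$, so that $\sigma e^{-a}=\sigma-t$ and $a=\ln\frac{\sigma}{\sigma-t}$, and reduce the claim $\phi\le\ln\sigma$ to controlling the sign of the auxiliary function
$$g(t)\,=\,t\ln\frac{t}{\sigma-1}+\ln(\sigma-t)-(1-t)\ln\sigma\,,$$
namely $g(t)\le 0$ for $t\in\,]0,1[$ and $g(t)\ge 0$ for $t\in\,]1,\sigma-1[$; since $g(1)=0$ and $g'(t)=\ln\frac{\sigma t}{\sigma-1}+1-\frac{1}{\sigma-t}$ can be signed by inspection, (i) follows.

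Granting (i) and (ii), the two assertions become immediate. For the first one, I would assume $\alpha<\ln\kappa/\ln\sigma$ and fix an arbitrary $a>0$; then by (i), $\alpha\,\phi(a)\le\alpha\ln\sigma<\ln\kappa$, so the first alternative of Theorem~\ref{mainth}, applied with the admissible choice $m=\alpha\ell$, $q=a/\ell$ (for which $\ell q=a$ and $m/\ell=\alpha$ hold identically), yields $\wild(\sigma,\ell,\alpha\ell,a/\ell)\to 0$ as $\ell\to\infty$. Since $a>0$ was arbitrary, the first statement holds.

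For the second assertion, I would assume $\alpha>\ln\kappa/\ln\sigma$, i.e.\ $\ln\kappa/\alpha<\ln\sigma$. By (ii) there is some $a>0$ with $\alpha\,\phi(a)>\ln\kappa$; as observed right after Theorem~\ref{mainth}, this forces $\phi(a)>0$, hence $a<\ln\sigma$ and $\sigma e^{-a}>1$. The second alternative of Theorem~\ref{mainth} (again with $m=\alpha\ell$, $q=a/\ell$) then gives
$$\wild(\sigma,\ell,\alpha\ell,a/\ell)\ \longrightarrow\ \frac{\sigma\exa-1}{\sigma-1}\;>\;0\,,$$
which is exactly the second statement.

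The one genuinely non-routine point is the global bound (i): near $a=0$ everything is tame, but when $\sigma>2$ the denominator $1-\sigma(1-e^{-a})$ of $\phi$ vanishes and changes sign at $a=\ln\frac{\sigma}{\sigma-1}<\ln\sigma$, so $\phi(a)\le\ln\sigma$ must be checked on both sides of this value, with the underlying inequality between the numerator and $\big(1-\sigma(1-e^{-a})\big)\ln\sigma$ reversing across it. This is precisely what is encoded in the two-sided condition on $g$, and isolating $g$ and exploiting $g(1)=0$ is the cleanest way to organize the verification; everything else is a direct appeal to Theorem~\ref{mainth}.
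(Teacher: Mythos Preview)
Your plan is correct and matches the paper's: Corollary~\ref{minpop} is meant to follow from Theorem~\ref{mainth} together with the behavior of $\phi$ near $0$ (the paper only records $\phi(a)\to\ln\sigma$ as $a\to 0$ and leaves the rest implicit). You correctly identify that the first assertion also needs the global bound (i) $\phi(a)\le\ln\sigma$, and your deduction of both parts from (i), (ii) and Theorem~\ref{mainth} is clean.

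The one soft spot is your argument for (i). The reduction to the sign of $g$ is correct, including the reversal across $t=1$ when $\sigma>2$. But ``$g'$ can be signed by inspection'' is too quick: $g'(t)=\ln\tfrac{\sigma t}{\sigma-1}+1-\tfrac{1}{\sigma-t}$ does not keep a fixed sign, and one really needs $g''(t)=\tfrac{1}{t}-\tfrac{1}{(\sigma-t)^2}$ (which changes sign exactly once, making $g'$ unimodal) together with $g'(0^+)=-\infty$, $g'(\sigma-1)=\ln\sigma>0$ and, for $\sigma>2$, $g'(1)=\ln\tfrac{\sigma}{\sigma-1}+\tfrac{\sigma-2}{\sigma-1}>0$, to locate the zeros of $g'$ and pin down the shape of $g$. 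This goes through, but it is a genuine calculus exercise rather than an inspection. A much shorter route is the integral representation from section~\ref{secsyn},
\[
\phi(a)=\int_0^{\rho^*(a)}\ln\frac{\sigma e^{-a}(1-s)}{\sigma(1-e^{-a})s+(1-s)}\,ds\,:
\]
the integrand is at most $\ln\sigma$ for every $s$ (this reduces to $(1-e^{-a})\big(1+(\sigma-1)s\big)\ge 0$), and $\rho^*(a)<1$, so $\phi(a)<\ln\sigma$ for every $a>0$. Equivalently, the positive integrand and the upper limit $\rho^*(a)$ both decrease in $a$, so $\phi$ is decreasing on $(0,\ln\sigma)$, which together with (ii) gives (i) immediately.
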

We can also compute the 
maximal mutation rate 
permitting the emergence
of a quasispecies. 
Interestingly, this maximal mutation rate is reminiscent of
the error catastrophe 
in Eigen's model.
\begin{corollary}
\label{eigenet}
If $a> \ln\sigma$ then
$$
\forall \alpha>0\qquad 
\lim_{\ell\to\infty}\wild\big(\sigma,\ell,\alpha \ell,a/\ell\big)\,=\,0\,.$$
If $a< \ln\sigma$ then
$$
\exists \alpha>0\qquad 
\lim_{\ell\to\infty}\wild\big(\sigma,\ell,\alpha \ell,a/\ell\big)\,>\,0\,.$$
\end{corollary}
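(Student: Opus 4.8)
The plan is to read Corollary~\ref{eigenet} off from Theorem~\ref{mainth} by restricting attention to the one--parameter families of scalings already used in the statement, namely $q=a/\ell$ and $m=\alpha\ell$ (an integer part being understood), which trivially satisfy $\ell q\to a\in\,]0,+\infty[$ and $m/\ell\to\alpha$ as $\ell\to\infty$. With this reduction all the probabilistic substance is carried by Theorem~\ref{mainth}, and what remains is only an elementary discussion of the sign of the explicit function $\phi$ on the interval $]0,\ln\sigma[$.

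For the first assertion, assume $a>\ln\sigma$. By the very definition of $\phi$ one has $\phi(a)=0$, hence $\alpha\,\phi(a)=0<\ln\kappa$ for every $\alpha>0$, since $\kappa=\card\cA\geq 2$. The first alternative of Theorem~\ref{mainth}, applied with $q=a/\ell$ and $m=\alpha\ell$, then gives $\wild(\sigma,\ell,\alpha\ell,a/\ell)\to 0$, which is exactly the claim.

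For the second assertion, assume $a<\ln\sigma$. The first thing I would do is check, directly on the closed formula, that $\phi(a)\in\,]0,+\infty]$. As $a\to 0^+$ the quantity $\sigma(1-e^{-a})$ tends to $0$, so the term $\sigma(1-e^{-a})\ln\frac{\sigma(1-e^{-a})}{\sigma-1}$ vanishes by $x\ln x\to 0$, while $\ln(\sigma e^{-a})\to\ln\sigma>0$ and the denominator tends to $1$; hence $\phi(0^+)=\ln\sigma$, and a short sign analysis of numerator and denominator over the whole interval shows that $\phi$ stays strictly positive on $]0,\ln\sigma[$ (taking the value $+\infty$ only where the denominator vanishes, if at all). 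Granting this, pick any finite $\alpha$ with $\alpha>\ln\kappa/\phi(a)$, so that $\alpha\,\phi(a)>\ln\kappa$. The second alternative of Theorem~\ref{mainth} then yields
$$\lim_{\ell\to\infty}\wild\big(\sigma,\ell,\alpha\ell,a/\ell\big)\,=\,\frac{\sigma\exa-1}{\sigma-1}\,,$$
and since $a<\ln\sigma$ forces $\sigma\exa>1$, this limit is strictly positive. This exhibits an $\alpha>0$ of the required kind and finishes the proof.

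I do not expect a genuine obstacle here: the corollary is simply the trace of the critical equation $\alpha\,\phi(a)=\ln\kappa$ along vertical lines $a=\mathrm{const}$ in the $(a,\alpha)$--plane, exactly as Corollary~\ref{minpop} is its trace along horizontal lines $\alpha=\mathrm{const}$ (there one uses $\phi(0^+)=\ln\sigma$ together with the bound $\phi\leq\ln\sigma$). The only mildly technical point, and hence the ``hard part'' of an otherwise immediate argument, is the one--variable verification that $\phi$ is strictly positive on all of $]0,\ln\sigma[$ and does not return to $0$ before $a$ reaches $\ln\sigma$; this is pure calculus on the given expression and involves no probability.
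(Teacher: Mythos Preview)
Your proof is correct and follows the same route the paper intends: the corollary is stated without proof and is meant to be read off directly from Theorem~\ref{mainth} together with elementary properties of $\phi$. One small remark on your ``hard part'': the point where the denominator $1-\sigma(1-e^{-a})$ vanishes is not a pole but a removable singularity (the numerator vanishes there too), and the cleanest way to see $\phi(a)>0$ on $]0,\ln\sigma[$ is via the integral representation $\phi(a)=\int_0^{\rho^*(a)}\ln\phi(e^{-a},0,s)\,ds$ derived in the synthesis section, whose integrand is strictly positive on $]0,\rho^*(a)[$ by construction.
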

In conclusion, on the sharp peak landscape, a quasispecies can emerge only if
$$
m\,>\, \frac{\ln\kappa}{\ln\sigma}\,\ell\,,\qquad
q\,<\, \frac{\ln\sigma}{\ell} \,.
$$
The heuristic ideas behind theorem~\ref{mainth} were explained in the introduction.
These heuristics are quite simple, however,
the corresponding proofs are rather delicate and technical.
There is very little hope to do a proof entirely based on
exact computations. Our strategy
consists in comparing the original Moran process with simpler processes in order
to derive adequate lower and upper bounds. To this end, we couple the
various processes starting with different initial conditions 
(section~\ref{couplsec}). 
Unfortunately, the natural coupling for the Moran model
we wish to study is not monotone. Therefore we consider an almost equivalent
model, which we call the normalized Moran model.
This model is obtained by
normalizing the reproduction rates so that the total reproduction
rate of any population is one (section~\ref{secnorm}). 
We first observe that the Moran model is
exchangeable (section~\ref{secexc}).
However, 
the initial state
space of the Moran process has no order structure and it is
huge. 
We use a classical technique, called lumping, in order to reduce the
state space (section~\ref{seclum}). 
This way we obtain two lumped processes: the distance
process 
$(D_t)_{t\geq 0}$ which records the Hamming distances between the 
chromosomes of the population and the Master sequence
and the occupancy process
$(O_t)_{t\geq 0}$ which records the distribution of these
Hamming distances.
The distance process is monotone in the neutral case $\sigma=1$,
while the occupancy process is monotone for any value $\sigma\geq 1$
(section~\ref{secmono}).
Therefore we construct lower and upper processes to bound the occupancy
process 
(section~\ref{secbounds}).
These processes have the same dynamics as the original process
in the neutral region and they evolve 
as a birth and death process as soon as the population contains a master
sequence. We use then the ergodic theorem for Markov chains and a renewal
argument to estimate the invariant probability measures of these processes.
The behavior of the lower and upper bounds depends mainly 
 on the persistence time and the discovery time of the master sequence.
We rely on the explicit formulas available for 
birth and death processes to estimate the persistence time 
(section~\ref{bide}).
To estimate the discovery time, we rely on rough estimates
for the mutation dynamics and correlation inequalities
(section~\ref{disc}).
The mutation dynamics is quite similar to the Ehrenfest urn, however 
it is more complicated 
because several mutations can occur
simultaneously and
exact formulas are not available.
The proof is concluded 
in section~\ref{secsyn}.

\noindent
{\bf Warning.} From section~\ref{secexc} onwards,
we work with the normalized Moran model defined
in section~\ref{secnorm}. This model is denoted by 
$(X_t)_{t\geq 0}$ and its transition matrix by $p$, like the initial Moran model.
We deal only with discrete time processes in the proofs. The time
is denoted by $t$ or $n$.
\vfill\eject
\section{Coupling}\label{couplsec}
The definition of the processes through infinitesimal generator
is not very intuitive at first sight. 
We will provide here a direct construction
of the processes, which does not make appeal to a general existence
result. This construction is standard and it is the formal counterpart
of the loose description of the dynamics given in section~\ref{secmodel}.
Moreover it provides a useful coupling of the processes with different
initial conditions and different control parameters $\sigma,q$.
All the processes will be built on a single large probability space.
We consider a probability space $(\Omega,{\mathcal F}, P)$
containing the following collection of independent random variables:
\medskip

\noindent
$\bullet$ a Poisson process 
$(\tau(t))_{t\geq 0}\index{$\tau(t)$}$
with intensity $m^2\lambda$.
\medskip

\noindent
$\bullet$ two sequences
of random variables 
$I_{n},J_n,\,n\geq 1
\index{$I_n$}
\index{$J_n$}
$,
with uniform law on 
the index set 
$\{\,1,\dots,\ell\,\}$.
\medskip

\noindent
$\bullet$ a family 
of random variables 
$U_{n,l},\,n\geq 1,\,1\leq l\leq\ell
\index{$U_{n,l}$}$,
with uniform law on 
the interval 
$[0,1]$.
\medskip

\noindent
$\bullet$ a sequence
of random variables 
$S_{n},\,n\geq 1 \index{$S_n$}$, 
with uniform law on the interval $[0,1]$.
\medskip

\noindent
We denote by 
$\tau_n
\index{$\tau_n$}$
the $n$--th arrival time of
the Poisson process
$(\tau(t))_{t\geq 0}$, i.e.,
$$\forall
n\geq 1\qquad
\tau_n \,=\, \inf\,\{\,t\geq 0:\tau(t)=n\,\}\,.$$
The random variables
$I_n$, $J_n$, $U_{n,l},\,\,1\leq l\leq\ell$,
and $S_{n}$ will be used to decide which move occurs at time $\tau_n$. 
To build the coupling, it is more convenient to replace the mutation
probability $q$ by the parameter $p$ given by
$$p\,=\,\frac{\kappa}{\kappa-1}\,q\,.
\index{$p$}
$$
We define a Markov chain 
$(X_n)_{n\in\mathbb N}$ with the help of the previous random ingredients,
whose law is the law of the Moran model.
The process starts at time $0$ from an arbitrary population $x_0$.
Let $n\geq 1$,
suppose that the process has been defined up to time $n-1$
and 
that $X_{n-1}=x$.  
We explain how to build $X_{n}=y$.
Let us set $i=I_n$.
If $S_n>
{A(x(i)) }/
{\lambda}$, then $y=x$. 
Suppose next that $S_n\leq 
{A(x(i)) }/
{\lambda}$. We define $y$ as follows.
We index the elements of the alphabet $\mathcal A$ in an arbitrary way:
$${\mathcal A}\,=\,
\big\{\,a_1,\dots,a_\kappa\,\big\}\,.
\index{$a_r$}
$$
Let $j=J_n$. We set
$$\forall l\in\{\,1,\dots,\ell\,\}\qquad
y(j,l)\,=\,
\begin{cases}
\quad a_1&\text{if }\quad U_{n,l}<
\genfrac{}{}{}{0}{p}{\kappa} \\
\quad \vdots\\
\quad a_r&\text{if }\quad (r-1)\genfrac{}{}{}{0}{p}{\kappa}<U_{n,l}<
r\genfrac{}{}{}{0}{p}{\kappa} \\
\quad \vdots\\
\quad a_\kappa&\text{if }\quad (\kappa-1)\genfrac{}{}{}{0}{p}{\kappa}<U_{n,l}<
p\\
\quad x(i,l) &\text{if }\quad U_{n,l}\geq p\\
\end{cases}
$$
For $k\neq j$ we set $y(k)=x(k)$.
Finally we define $X_{n}=y$.

We define also a Markov process
$(X_t)_{t\in\mathbb R^+}$ with right continuous trajectories.
The process starts at time $0$ from an arbitrary population $x_0$ and it
moves only when there is an arrival in the Poisson process
$(\tau(t))_{t\geq 0}$.
Let $t>0$ and suppose that
$\tau_n=t$ for some $n\geq 1$.
Suppose that just before $t$ the process was in state $x$:
$$\lim_{\genfrac{}{}{0pt}{1}{s\to t}{s<t}} X_s\,=\,x\,.$$
We proceed as in the construction of the discrete time process at step $n$
to build the new population $y$ starting from $x$ and we set
$X_t=y$. Therefore we have
$$\forall n\geq 0\quad
\forall t\in[\tau_n,\tau_{n+1}[\qquad
X_t\,=\,X_n\,.$$
\vfill\eject
\section{Normalized model}\label{secnorm}
The Moran model defined previously
is difficult
to analyze for several reasons.
A major problem is that the natural coupling constructed in
section~\ref{couplsec} is not monotone.
We define next a related Moran
model which is simpler to study. This model is obtained by
normalizing the reproduction rates so that the total reproduction
rate of any population is one.
The continuous time normalized Moran model
is the Markov process $(X_t)_{t\in{\mathbb R}^+}$ 
whose infinitesimal generator $L$ 
is defined as follows:
for $\phi$ a function from
$\smash{\left({\cal A}^\ell\right)^m}$
to $\mathbb R$ and
for any
$x\in
\smash{\left({\cal A}^\ell\right)^m}$, 
$$L\phi(x)\,=\,
\sum_{1\leq i,j\leq m}
\sum_{u\in {\cal A}^\ell}
\frac{A(x(i))M(x(i),u)}
{A(x(1))+\cdots+A(x(m))}
\Big(\phi\big(x(j\leftarrow u)\big)-\phi(x)\Big)\,.$$
%
The discrete time normalized Moran model
is the 
Markov chain 
$(X_n)_{n\in\mathbb N}$
with transition matrix $p$ given by
\begin{multline*}
\forall x\in
\smash{\left({\cal A}^\ell\right)^m} \quad
\forall j\in\{\,1,\dots,\ell\,\}\quad
\forall u \in\smash{{\cal A}^\ell} \setminus\{\,x(j)\,\}\cr
p\big(x,x(j\leftarrow u)\big)\,=\,
\frac{1}{m}
\sum_{1\leq i\leq m}
\frac{A(x(i))
 M(x(i),u)
}
{A(x(1))+\cdots+A(x(m))}
 \,.
\end{multline*}
The other non diagonal coefficients of the transition matrix are zero.
In the remaining of the paper,
we shall work with this Markov chain
$(X_n)_{n\in\mathbb N}$
and the transition matrix $p$. We shall prove 
the main theorem~\ref{mainth}
of section~\ref{mainres}
for this process.
In fact, we shall even prove the following stronger result.
Let $\nu$ be the image of 
the invariant probability 
measure of 
$(X_n)_{n\geq 0}$ 
through the map
$$x\in\Alm\mapsto
\frac{1}{m}
N(x)\in[0,1]\,.$$
The probability measure $\nu\index{$\nu$}$ is a measure on the interval $[0,1]$
describing the equilibrium density of the master sequence in the population.
Indeed,
$$\forall i\in\zm\qquad \nu\Big(\frac{i}{m}\Big)\,=\,
\lim_{n\to\infty} \,
P\big(N(X_n)=i\big)\,.$$
The probability $\nu$ depends on the parameters 
$\sigma,\ell,m,q$ of the model.
Let $\phi(a)$ be the function defined before theorem~\ref{mainth}, i.e.,
$$
\forall a<\ln\sigma\qquad
\phi(a)\,=\,
\frac
{ \displaystyle \sigma(1-e^{-a})
\ln\frac{\displaystyle\sigma(1-e^{-a})}{\displaystyle\sigma-1}
+\ln(\sigma e^{-a})}
{
\displaystyle (1-\sigma(1-e^{-a}))
}
\index{$\phi(a)$}
$$
and $\phi(a)=0$
if $a\geq\ln\sigma$. 
Let
$$\rho^*\,=\,
\frac{\displaystyle\sigma\exa-1}{\displaystyle\sigma-1}\,.
\index{$\rho^*$}
$$
\begin{theorem}\label{stronger}
We suppose that 
$$\ell\to +\infty\,,\qquad m\to +\infty\,,\qquad q\to 0\,,$$
in such a way that
$${\ell q} \to a\in ]0,+\infty[\,,
\qquad\frac{m}{\ell}\to\alpha\in [0,+\infty]\,.$$
We have the following dichotomy:
\medskip

\noindent
$\bullet\quad$ 
If $\alpha\,\phi(a)<\ln\kappa$ then
$\nu$ converges towards the Dirac mass at $0$:
$$\forall\ve>0\qquad
\nu([0,\ve])\,\to\,1\,.$$
%
%
\noindent
$\bullet\quad$ 
If $\alpha\,\phi(a)>\ln\kappa$ then
$\nu$ converges towards the Dirac mass at $\rho^*$:
$$\forall\ve>0\qquad
\nu([\rho^*-\ve,\rho^*+\ve])\,\to\,1\,.$$
\end{theorem}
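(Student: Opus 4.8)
The plan is to collapse the huge population space $\Alm$ onto a one--dimensional birth--death comparison and then run a renewal argument on the invariant measure. First I would invoke the exchangeability of the normalized Moran model (Section~\ref{secexc}) together with the lumping lemma (Section~\ref{seclum}) to replace $(X_n)$ by the occupancy process $(O_t)_{t\ge0}$, which records, for each $d\in\zl$, the number of chromosomes at Hamming distance $d$ from $w^*$; this process is still Markov and its coordinate at $d=0$ is exactly $N(X_t)$, so it suffices to study $\nu$ through $(O_t)$. The key structural fact (Section~\ref{secmono}) is that $(O_t)$ is monotone for every $\sigma\ge1$ in an appropriate stochastic order, which lets me sandwich it, $O^-\preceq O\preceq O^+$, between the bounding processes of Section~\ref{secbounds}: both agree with $O$ on the neutral set $\{\,N=0\,\}$, and as soon as a master sequence is present both are comparable, up to $o(1)$ corrections governed by $\ell q\to a$, to the Nowak--Schuster birth--death chain $(Z_n)$ on $\zm$. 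Since the two bounds will turn out to have the same behaviour to leading exponential order, it is enough to carry the analysis for one of them and check that the error terms are of lower order; the conclusion for the original $(X_n)$ then follows by squeezing.

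For each bounding process I would cut a typical trajectory into alternating \emph{neutral excursions} (time spent in $\{\,N=0\,\}$) and \emph{invaded excursions} (time spent in $\{\,N\ge1\,\}$), glued at the instants at which the master sequence is discovered and at which it is lost. The strong Markov property and the ergodic theorem for finite Markov chains then turn the $\nu$--mass of $\{\,N\ge1\,\}$ into a ratio of mean excursion lengths,
$$
\nu\big(\{\,N\ge1\,\}\big)\;\approx\;
\frac{\text{\bf persistence time}}{\text{\bf persistence time}\,+\,\text{\bf discovery time}}\,,
$$
where the possible geometric number of short, failed invasions preceding a successful one contributes only at lower order. Moreover, during a successful invaded excursion the occupancy process shadows $(Z_n)$, whose quasi--stationary profile concentrates on its stable fixed point; equating the birth rate and the death rate of the Nowak--Schuster chain gives $N/m\to(\sigma e^{-a}-1)/(\sigma-1)=\rho^*$. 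Thus the entire theorem reduces to comparing the two time scales on the exponential scale, plus this concentration at $\rho^*$.

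The persistence time is the cleaner half: for the birth--death chain $(Z_n)$ the expected exit time from the invaded region has the classical closed form as a sum of products of ratios of birth to death rates, and a Laplace/Stirling evaluation of that sum in the regime $\ell q\to a$, $m\to\infty$ yields $\text{\bf persistence time}=\exp\big(m\,\phi(a)(1+o(1))\big)$ with exactly the $\phi$ of the statement; when $a\ge\ln\sigma$ the fixed point $\rho^*$ is $\le0$, there is no interior equilibrium, the master sequence is lost essentially at once, and this matches $\phi(a)=0$. The discovery time is the delicate part and is where I expect the main obstacle to lie. Starting from a neutral population one must show, as both an upper and a lower bound, that a time of order $\kappa^\ell$ is needed to hit $w^*$: heuristically the distance--to--$w^*$ statistics follow an Ehrenfest urn whose stationary probability of distance $0$ is $\kappa^{-\ell}$, and Kac's return--time theorem supplies the exponent $\ell\ln\kappa$; but in the true model a whole batch of, on average, $\ell q\to a$ loci mutates at every reproduction, and the neutral chromosomes are not independent, so the Ehrenfest formulas are not available. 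I would instead bracket the hitting time between genuine Ehrenfest / independent--site dynamics and control the dependence with FKG--type correlation inequalities (the monotone events ``locus $l$ agrees with $w^*$'' being positively associated under the symmetric mutation kernel), needing these comparisons only to first exponential order, $\text{\bf discovery time}=\exp\big(\ell\ln\kappa\,(1+o(1))\big)$.

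Finally I would synthesize the estimates (Section~\ref{secsyn}). Since $m/\ell\to\alpha$, the ratio of the two exponents satisfies $m\,\phi(a)/(\ell\ln\kappa)\to\alpha\,\phi(a)/\ln\kappa$. If $\alpha\,\phi(a)<\ln\kappa$ the discovery time dominates, the process is neutral for a fraction $1-o(1)$ of the time, hence $\nu(\{\,N\ge1\,\})\to0$ and therefore $\nu([0,\ve])\to1$ for every $\ve>0$. If $\alpha\,\phi(a)>\ln\kappa$ the persistence time dominates, $\nu(\{\,N\ge1\,\})\to1$, and together with the concentration of the birth--death profile this gives $\nu([\rho^*-\ve,\rho^*+\ve])\to1$. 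Both bounding processes produce the same limit, so the dichotomy passes to $(X_n)$, and hence to Theorem~\ref{mainth}; and since $\nu$ converges to a Dirac mass in each case, $\vwild(\sigma,\ell,m,q)\to0$ as well.
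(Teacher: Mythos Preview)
Your plan is correct and matches the paper's strategy essentially step for step: lump to the occupancy process, exploit its monotonicity to sandwich between the two bounding processes $O^\ell\preceq O\preceq O^1$, run the renewal argument to reduce $\nu$ to the ratio of persistence and discovery times, evaluate the persistence time via the explicit birth--death formulas, estimate the discovery time by comparison with the single--chromosome mutation chain, and synthesize. The one place where the paper's execution differs in detail from your sketch is the lower bound on the discovery time: rather than FKG on the locus--agreement events, the paper tracks \emph{ancestral lines} (Propositions~\ref{nemarg}--\ref{neal}) to reduce each chromosome's history to a single mutation chain $(Y_{N(n)})$, designs an explicit scenario driving the whole population past $\ln\ell$ and then past $\ell_\kappa(1-\varepsilon)$ before any master sequence can appear, and uses FKG only once, to decorrelate that scenario from a ``no--back--mutation'' event; but this is a matter of implementation, not of strategy.
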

We shall prove this theorem for the normalized Moran model
$(X_n)_{n\in\mathbb N}$. 
Let us show how this implies theorem~\ref{mainth} for
the initial model.
In the remainder of this argument, we denote by
$(X'_n)_{n\in\mathbb N}$ the Moran model described in section~\ref{secmodel}
and by $p'$ its transition matrix.
The transition matrices $p$ and $p'$ are related by the simple relation
$$\forall x,y\in\Alm\,,\quad
x\neq y\,,\qquad
p'(x,y)\,=\,\beta(x)\,
p(x,y)$$
where 
$$\forall x\in\Alm\qquad
\beta(x)\,=\,
\frac{1}{m\lambda}\big({A(x(1))+\cdots+A(x(m))\big)}\,.
\index{$\beta(x)$}
$$
Let $\mu$ and $\mu'$ be the invariant probability 
measures of the processes
$(X_t)_{t\geq 0}$
and
$(X'_t)_{t\geq 0}$.
The probability $\mu\index{$\mu$}$ 
is the unique solution of the system of equations
$$\forall x\in\Alm\qquad
\mu(x)\,=\,
\sum_{y\in\alm}{\mu(y)}\,p(y,x)\,.
$$
We rewrite these equations as:
$$\forall x\in\Alm\qquad
\mu(x)
\sum_{
\genfrac{}{}{0pt}{1}{y\in\alm}
{y\neq x}}
\,p(x,y)
\,=\,
\sum_{
\genfrac{}{}{0pt}{1}{y\in\alm}
{y\neq x}}
{\mu(y)}\,p(y,x)\,.
$$
Replacing $p$ by $p'$, we get
$$\forall x\in\Alm\qquad
\frac{\mu(x)}{\beta(x)}
\sum_{
\genfrac{}{}{0pt}{1}{y\in\alm}
{y\neq x}}
\,p'(x,y)
\,=\,
\sum_{
\genfrac{}{}{0pt}{1}{y\in\alm}
{y\neq x}}
\frac{\mu(y)}{\beta(y)}
\,p'(y,x)\,.
$$
Using the uniqueness of the invariant probability
measure associated to $p'$, we conclude that
$$\forall x\in\Alm\qquad
\mu'(x)\,=\,
\frac{\displaystyle
\frac{\mu(x)}{\beta(x)}}
{\displaystyle
\sum_{y\in\alm}\frac{\mu(y)}{\beta(y)}}\,.
$$
In the case of the
sharp peak landscape, the function $\beta(x)$ can be rewritten as
$$\forall x\in\Alm\qquad
\beta(x)\,=\,
\frac{1}{m\lambda}\big((\sigma-1)N(x)+m\big)\,.$$
Let us denote by $\nu$ and $\nu'$ the images of $\mu$ and
$\mu'$ through the map
$$x\in\Alm\mapsto
\frac{1}{m}
N(x)\in[0,1]\,.$$
We can thus rewrite
$$
\sum_{y\in\alm}\frac{\mu(y)}{\beta(y)}
\,=\,
\sum_{y\in\alm}
\frac{\displaystyle\lambda
{\mu(y)}
}{\displaystyle(\sigma-1)\frac{N(y)}{m}+1}
\,=\,
\int_{[0,1]}
\frac{\displaystyle\lambda
\,d\nu(t)
}{\displaystyle(\sigma-1)t+1}
\,.
$$
For any function $f:[0,1]\to\R$,
we have then
\begin{multline*}
\int_{[0,1]} f\,d\nu'\,=\,
\lim_{t\to\infty} 
E\bigg(f\Big(
\frac{1}{m}
N(X'_t)\Big)\bigg)
\,=\,
\sum_{x\in\alm}
f\Big(\frac{1}{m}
N(x)\Big)\,\mu'(x)\cr
\,=\,
\frac{\displaystyle
\sum_{x\in\alm}
f\Big(
\frac{
N(x)
}{m}\Big)
\frac{\mu(x)}{\beta(x)}}
{\displaystyle
\sum_{y\in\alm}\frac{\mu(y)}{\beta(y)}}
\,=\,
\frac{\displaystyle\int_{[0,1]}
\frac{\displaystyle\lambda f(t)
\,d\nu(t)
}{\displaystyle(\sigma-1)t+1}}{
\displaystyle
\int_{[0,1]}
\frac{\displaystyle\lambda
\,d\nu(t)
}{\displaystyle(\sigma-1)t+1}}\,.
\end{multline*}
We suppose that 
$$\ell\to +\infty\,,\qquad m\to +\infty\,,\qquad q\to 0\,,$$
in such a way that
$${\ell q} \to a\in ]0,+\infty[\,,
\qquad\frac{m}{\ell}\to\alpha\in [0,+\infty]\,.$$
By theorem~\ref{stronger}, away from the critical curve $\alpha\,\phi(a)=\ln\kappa$,
the probability $\nu$ converges towards a Dirac mass. If $\nu$ converges towards
a Dirac mass at $\rho$, then we conclude from the above formula that
$\nu'$ converges towards the same Dirac mass and
\begin{align*}
\wild\big(\sigma,\ell,m,q\big)\,&\to\,\rho
\,,\qquad\cr
\vwild\big(\sigma,\ell,m,q\big)\,&\to\,0
\,.
\end{align*}
This way we obtain the statements of theorem~\ref{mainth}. 
From now onwards, in the proofs,
we work exclusively with the normalized Moran process,
and we denote it by~$(X_t)_{t\geq 0}$.
\vfill\eject
\section{Exchangeability}\label{secexc}
\def\Sm{\mathfrak{S}_m} 
The symmetric group $\Sm$ 
of the permutations of
$\um$ acts in a natural way on the populations through the following
group operation:
$$\forall x\in\Alm\quad\forall \rho\in\Sm\quad
\forall j\in\um\qquad
(\rho\cdot x)(j)\,=\,x(\rho(j))\,.$$
A probability measure $\mu$ on $\Alm$ is exchangeable if it is 
invariant under the action of $\Sm$:
$$\forall \rho\in\Sm\quad \forall x\in\Alm\quad
\mu(\rho\cdot x)\,=\,\mu(x)\,.$$
A process
$(X_t)_{t\geq 0}$ with values in $\Alm$ is exchangeable if and only if,
for any $t \geq 0$, the law of $X_t$ is exchangeable.
\begin{lemma}\label{matinv}
The transition matrix $p$ 
is invariant under the action of $\Sm$: 
\begin{multline*}
\forall x\in\Alm\quad\forall \rho\in\Sm\quad
\forall j\in\um\quad
\forall u \in\smash{{\cal A}^\ell} \setminus\{\,x(j)\,\}\cr
p\big(\rho\cdot x,
\rho\cdot( x(j\leftarrow u))\big)
 \,=\,
p\big(x,
x(j\leftarrow u)\big)
\,.
\end{multline*}
\end{lemma}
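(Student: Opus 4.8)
The plan is to observe that the transition probability $p\big(x,x(j\leftarrow u)\big)$, as given by the formula in section~\ref{secnorm}, depends on $x$ only through the unordered collection of chromosomes $x(1),\dots,x(m)$ and on the target sequence $u$, but not on the index $j$ at which the replacement is performed. Indeed the right-hand side
$$\frac{1}{m}\sum_{1\leq i\leq m}\frac{A(x(i))\,M(x(i),u)}{A(x(1))+\cdots+A(x(m))}$$
is visibly a symmetric function of $x(1),\dots,x(m)$ and involves no $j$ at all, so the whole task reduces to matching up the two sides of the claimed identity correctly.

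First I would record the bookkeeping identity relating the two occurrences of the $\Sm$-action in the statement: for all $x\in\Alm$, $\rho\in\Sm$, $j\in\um$ and $u\in{\cal A}^\ell$,
$$\rho\cdot\big(x(j\leftarrow u)\big)\,=\,(\rho\cdot x)\big(\rho^{-1}(j)\leftarrow u\big)\,,$$
which I would check coordinate by coordinate: the value at $k$ is $x(j\leftarrow u)(\rho(k))$, which equals $u$ exactly when $\rho(k)=j$, i.e.\ when $k=\rho^{-1}(j)$, and equals $x(\rho(k))=(\rho\cdot x)(k)$ otherwise. In particular the left-hand pair in the lemma is again of the admissible ``replace one chromosome'' type, the side condition $u\neq x(j)=(\rho\cdot x)(\rho^{-1}(j))$ being inherited from the hypothesis $u\in{\cal A}^\ell\setminus\{x(j)\}$, so the transition formula of section~\ref{secnorm} does apply to it.

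Then it only remains to substitute and reindex. Plugging $\rho\cdot x$ and $\rho^{-1}(j)$ into the formula, using $(\rho\cdot x)(i)=x(\rho(i))$, and changing the summation variable from $i$ to $\rho(i)$ in both the numerator and the denominator (legitimate since $\rho$ is a bijection of $\um$), the expression collapses to $p\big(x,x(j\leftarrow u)\big)$, which is the claim. I do not expect any genuine obstacle here: the only point needing a moment's care is keeping the direction of $\rho$ straight in the identity above, after which exchangeability of $p$ falls out of the manifest symmetry of the transition formula. The very same computation applies verbatim to the normalized infinitesimal generator $L$, so the continuous-time normalized model is exchangeable as well.
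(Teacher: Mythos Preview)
Your argument is correct and follows essentially the same route as the paper's proof: both establish the bookkeeping identity $\rho\cdot\big(x(j\leftarrow u)\big)=(\rho\cdot x)\big(\rho^{-1}(j)\leftarrow u\big)$ and then reindex the sum defining $p$ using $(\rho\cdot x)(i)=x(\rho(i))$. Your explicit remark that the transition formula is symmetric in $x(1),\dots,x(m)$ and independent of $j$ is a nice conceptual summary of why the computation succeeds, but the underlying steps are the same.
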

\begin{proof}
Let $x,\rho,j,u$ be as in the statement of the lemma.
We have
\begin{multline*}
p\big(\rho\cdot x,
\rho\cdot( x(j\leftarrow u))\big)
 \,=\,
p\big(\rho\cdot x,
(\rho\cdot x)(\rho^{-1}(j)\leftarrow u)\big)
\cr
 \,=\,
\frac{1}{m}
\sum_{1\leq i\leq m}
\frac
{A((\rho\cdot x)(i)) 
M((\rho\cdot x)(i),u)}
{A((\rho\cdot x)(1))+\cdots+
A((\rho\cdot x)(m))} 
\cr
 \,=\,
\frac{1}{m}
\sum_{1\leq i\leq m}
\frac{A(x(i))
 M(x(i),u)
}
{A(x(1))+\cdots+A(x(m))}
 \,=\,
p\big(x,
x(j\leftarrow u)\big)\,.
\end{multline*}
Thus the matrix $p$ satisfies the required invariance property.
\end{proof}
\begin{corollary}\label{exchang}
Let $\mu$ be an exchangeable probability distribution on the population space
$\Alm$. The Moran model
$(X_t)_{t\geq 0}$ 
starting 
with $\mu$ as the initial distribution is exchangeable.
\end{corollary}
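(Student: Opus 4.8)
The plan is to prove the seemingly stronger fact that for every time $t$ the law of $X_t$ is exchangeable; by the characterization of exchangeable processes recalled just before the corollary, this is exactly what is needed. Following the Warning, I would work with the discrete time chain $(X_n)_{n\in\mathbb N}$ and argue by induction on $n$. Write $\mu_n$ for the law of $X_n$. The base case is the hypothesis: $\mu_0=\mu$ is exchangeable.

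Before the induction step I would upgrade Lemma~\ref{matinv} to the statement that the \emph{whole} transition matrix is $\Sm$--invariant, namely $p(\rho\cdot x,\rho\cdot y)=p(x,y)$ for all $x,y\in\Alm$ and all $\rho\in\Sm$. For $y\neq x$ this is immediate: the coefficient $p(x,y)$ is non zero only when $y$ differs from $x$ in exactly one chromosome, i.e. $y=x(j\leftarrow u)$ for some $j\in\um$ and $u\in{\cal A}^\ell\setminus\{x(j)\}$, in which case $\rho\cdot y$ differs from $\rho\cdot x$ in exactly one chromosome too, so both sides are given by Lemma~\ref{matinv} (or both vanish). The diagonal case $y=x$ then follows because each diagonal entry is forced by the row-sum-one condition, combined with the fact that $z\mapsto\rho\cdot z$ is a bijection of $\Alm\setminus\{x\}$ onto $\Alm\setminus\{\rho\cdot x\}$.

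For the induction step, assuming $\mu_n$ exchangeable, I would fix $\rho\in\Sm$ and $y\in\Alm$ and compute, using the change of summation variable $x\mapsto\rho\cdot x$ (a bijection of $\Alm$), the induction hypothesis, and the invariance of $p$ just established:
$$\mu_{n+1}(\rho\cdot y)=\sum_{x\in\alm}\mu_n(x)\,p(x,\rho\cdot y)
=\sum_{x\in\alm}\mu_n(\rho\cdot x)\,p(\rho\cdot x,\rho\cdot y)
=\sum_{x\in\alm}\mu_n(x)\,p(x,y)=\mu_{n+1}(y).$$
Hence $\mu_{n+1}$ is exchangeable, closing the induction, and every $\mu_n$ is exchangeable.

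Finally, for the continuous time model I would either run the same computation at the level of the infinitesimal generator $L$ — checking $L(\phi\circ\rho)=(L\phi)\circ\rho$ for every $\rho\in\Sm$, hence that the semigroup $e^{tL}$ commutes with the action of $\Sm$, hence that $X_t$ has exchangeable law whenever $X_0$ does — or simply invoke the stochastization procedure of section~\ref{secmodel} linking the discrete and continuous time versions, which leaves the jump chain unchanged. I do not expect any serious difficulty; the only point needing a little care is the extension of Lemma~\ref{matinv}, which is stated only for the elementary transitions $x\to x(j\leftarrow u)$, to all entries of $p$, including the diagonal ones.
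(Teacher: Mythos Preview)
Your proof is correct and follows essentially the same route as the paper: both use the $\Sm$--invariance of the transition matrix (Lemma~\ref{matinv}) together with the exchangeability of the initial law to propagate exchangeability forward in time---the paper via a one-shot path expansion $\sum_{x_0,\dots,x_t}\mu(x_0)p(x_0,x_1)\cdots p(x_{t-1},x_t)$, you via the equivalent step-by-step induction $\mu_n\mapsto\mu_{n+1}$. Your explicit treatment of the diagonal entries of $p$ is a point the paper's proof uses tacitly but does not spell out.
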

\begin{proof}
Let $\rho\in\Sm$ and let $f$ be a function from $\Alm$ to $\R$. Using
the exchangeability of $\mu$ and lemma~\ref{matinv}, we have,
for any $t\geq 1$,
\begin{multline*}
E\big(f(\rho\cdot X_t)\big)\,=\,
\sum_{x_0,\cdots,x_t\in\alm}
\mu(x_0)\,p(x_0,x_1)\cdots
p(x_{t-1},x_t)\,
f(\rho\cdot X_t)\cr
\sum_{x_0,\cdots,x_t\in\alm}
\mu(\rho\cdot x_0)\,p(\rho\cdot x_0,\rho\cdot x_1)\cdots
p(\rho\cdot x_{t-1},\rho\cdot x_t)\,
f(\rho\cdot X_t)\cr
\,=\,
\sum_{x_0,\cdots,x_t\in\alm}
\mu(x_0)\,p(x_0,x_1)\cdots
p(x_{t-1},x_t)\,
f(X_t)
\,=\,
E\big(f(X_t)\big)\,.
\end{multline*}
Thus the process
$(X_t)_{t\geq 0}$ is exchangeable.
\end{proof}
\vfill\eject
\section{Lumping}\label{seclum}
The state space of the process 
$(X_t)_{t\geq 0}$ 
is huge, it has cardinality
$\kappa^{\ell m}$. We will rely on
a classical technique to reduce the state space called lumping
(see the appendix). 
We consider here only the sharp peak landscape. In this situation,
the fitness of a chromosome is a function of its distance to the master sequence.
A close look at the mutation mechanism reveals that chromosomes which are
at the same distance from the Master sequence are equivalent for the
dynamics, hence they can be lumped together in order to build a 
simpler process on a reduced space.
For simplicity, we consider only the discrete time process. However
similar results hold in continuous time.
\subsection{Distance process}
We denote by $d_H$ the Hamming distance between two chromosomes:
$$\forall u,v\in
{\mathcal A}^\ell\qquad
d_H(u,v)\,=\,\card\,\big\{\,j:
1\leq j\leq\ell,\,u(j)\neq v(j)\,\big\}\,.
\index{$d_H$}
$$
We will keep track of
the distances of the chromosomes to 
the master sequence 
$w^*$.
We define a function
$
{H}:\smash{{\cal A}^\ell}\to
\{\,0,\dots,\ell\,\}
\index{$H$}$
by setting
$$\forall u\in
{\cal A}^\ell\qquad
H(u)\,=\,
d_H\big(u,w^*\big)\,.$$
The map $H$ induces a partition of
${\cal A}^\ell$ into Hamming classes
$$H^{-1}(\{\,b\,\})\,,\qquad
b\in
\{\,0,\dots,\ell\,\}\,.$$
We prove first that the mutation matrix is lumpable with respect 
to the function $H$.
\begin{lemma}[Lumped mutation matrix]\label{mhlump}
Let $b,c\in\zl$ and let $u\in\Al$ such that $H(u)=b$.
The sum
$$\sum_{
\genfrac{}{}{0pt}{1}{w\in {{\cal A}^\ell}}{H(w)=c}
}M(u,w)$$
does not depend on $u$ in $H^{-1}(\{\,b\,\})$, 
it is a function of $b$ and $c$ only, which we denote by 
$M_H(b,c) \index{$M_H$}$.
The coefficient
$M_H(b,c)$ is equal to
$$
\sum_{
\genfrac{}{}{0pt}{1}{0\leq k\leq\ell-b}{
\genfrac{}{}{0pt}{1}
 {0\leq l\leq b}{k-l=c-b}
}
}
{ \binom{\ell-b}{k}}
{\binom{b}{l}}
\Big(p\Big(1-\frac{1}{\kappa}\Big)\Big)^k
\Big(1-p\Big(1-\frac{1}{\kappa}\Big)\Big)^{\ell-b-k}
\Big(\frac{p}{\kappa}\Big)^l
\Big(1-\frac{p}{\kappa}\Big)^{b-l}\,.
$$
\end{lemma}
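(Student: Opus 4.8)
The plan is to prove the two assertions in turn: first that the indicated sum is independent of the choice of $u$ in the Hamming class $H^{-1}(\{b\})$, and then to derive the closed formula by a direct combinatorial count. For the first part, the key observation is that the mutation mechanism acts independently and identically at each locus, and that the only information about $u$ that is relevant to computing $\sum_{w:H(w)=c}M(u,w)$ is, for each locus, whether $u$ agrees with $w^*$ there or not. Concretely, fix $u$ with $H(u)=b$, so there are exactly $b$ loci where $u$ disagrees with $w^*$ and $\ell-b$ loci where it agrees. Using the product form of $M(u,w)$ from Section~\ref{secmodel} and the substitution $p=\tfrac{\kappa}{\kappa-1}q$, I would write, for a single locus $j$, the probability that after mutation the offspring agrees with $w^*(j)$ or not. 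If $u(j)=w^*(j)$, the offspring disagrees with $w^*(j)$ with probability $q=p(1-1/\kappa)$ (a mutation occurs and moves to one of the $\kappa-1$ other letters) and agrees with probability $1-p(1-1/\kappa)$. If $u(j)\neq w^*(j)$, the offspring agrees with $w^*(j)$ precisely when a mutation occurs and, among the $\kappa-1$ target letters, selects $w^*(j)$; this has probability $q/(\kappa-1)=p/\kappa$, and it disagrees with probability $1-p/\kappa$.

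Having these four single-locus probabilities, the plan is to sum $M(u,w)$ over all $w$ with $H(w)=c$ by splitting the loci into the two groups (agreeing / disagreeing with $w^*$ under $u$) and summing over how many loci in each group flip their agreement status. Let $k$ be the number of the $\ell-b$ "agreeing" loci that become disagreeing, and $l$ the number of the $b$ "disagreeing" loci that become agreeing; then the offspring has Hamming distance $b+k-l$ from $w^*$, so the constraint $b+k-l=c$, i.e. $k-l=c-b$, selects the contributing terms. Since mutations at distinct loci are independent, the contribution of a fixed choice of which loci flip is a product of the appropriate single-locus probabilities, and summing over which $k$ of the $\ell-b$ agreeing loci and which $l$ of the $b$ disagreeing loci flip introduces the binomial coefficients $\binom{\ell-b}{k}$ and $\binom{b}{l}$. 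This yields exactly
$$
M_H(b,c)\,=\,
\sum_{
\genfrac{}{}{0pt}{1}{0\leq k\leq\ell-b}{
\genfrac{}{}{0pt}{1}{0\leq l\leq b}{k-l=c-b}}}
\binom{\ell-b}{k}\binom{b}{l}
\Big(p\Big(1-\frac{1}{\kappa}\Big)\Big)^k
\Big(1-p\Big(1-\frac{1}{\kappa}\Big)\Big)^{\ell-b-k}
\Big(\frac{p}{\kappa}\Big)^l
\Big(1-\frac{p}{\kappa}\Big)^{b-l}\,,
$$
and in particular the expression manifestly depends only on $b$ and $c$, not on $u$, which proves the lumpability claim simultaneously with the formula.

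I do not expect any real obstacle here; the proof is essentially bookkeeping. The one point that requires a little care is the translation between $q$ and $p$ and getting the two "disagreement-changing" single-locus probabilities right — it is easy to confuse $q/(\kappa-1)$ with $q/\kappa$, and the whole formula hinges on recognizing that $q/(\kappa-1)=p/\kappa$ and $q=p(1-1/\kappa)$. A second minor point is to note that the joint distribution of "which agreeing loci flip" and "which disagreeing loci flip" genuinely factorizes over loci, so that summing the product $M(u,w)=\prod_{j=1}^\ell(\cdots)$ over all admissible $w$ decomposes as a product of two independent binomial-type sums before the distance constraint $k-l=c-b$ is imposed; once stated this way the identification with the displayed formula is immediate.
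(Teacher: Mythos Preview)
Your proposal is correct and follows essentially the same approach as the paper: both arguments split the loci into those where $u$ agrees with $w^*$ and those where it does not, identify the single--locus ``flip'' probabilities as $p(1-1/\kappa)$ and $p/\kappa$ respectively, and use independence across loci. The paper phrases the result as the law of $H(u)+\text{Binomial}(\ell-b,p(1-1/\kappa))-\text{Binomial}(b,p/\kappa)$ with the two binomials independent, while you write the same thing directly as a double sum over $(k,l)$ with the constraint $k-l=c-b$; these are the same computation.
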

\begin{proof}
Let $b,c\in\zl$ and let $u\in\Al$ such that $H(u)=b$.
We will compute the law of $H(w)$ whenever $w$ follows
the law $M(u,\cdot)$ given by the line of $M$ associated to~$u$.
For any $w\in\Al$, we have
\begin{multline*}
H(w)\,=\,\sum_{1\leq l\leq\ell}1_{w(l)\neq w^*(l)}\cr
\,=\,\sum_{1\leq l\leq\ell}
\Big(1_{w(l)\neq w^*(l), u(l)= w^*(l)}
+
1_{w(l)\neq w^*(l), u(l)\neq w^*(l)}\Big)
\cr
\,=\,H(u)+
\sum_{1\leq l\leq\ell}
\Big(1_{w(l)\neq w^*(l), u(l)= w^*(l)}
-1_{w(l)= w^*(l), u(l)\neq w^*(l)}\Big)\,.
\end{multline*}
According to the mutation kernel $M$, for indices $l$ such that
$u(l)=w^*(l)$, the variable
$\smash{1_{w(l)\neq w^*(l)}}$
is Bernoulli with parameter $p(1-1/\kappa)$, while for indices
$l$ such that
$u(l)\neq w^*(l)$, the variable
$1_{w(l)= w^*(l)}$
is Bernoulli with parameter $p/\kappa$. Moreover these Bernoulli variables
are independent. Thus the law 
of $H(w)$ under the kernel 
$M(u,w)$ is given by
$$H(u)+\text{Binomial}\big(\ell-H(u),p(1-1/\kappa)\big)
-\text{Binomial}\big(H(u),p/\kappa\big)$$
where the two binomial random variables are independent.
This law depends only on $H(u)$, therefore the sum
$$\sum_{
\genfrac{}{}{0pt}{1}{w\in {{\cal A}^\ell}}{H(w)=c}
}M(u,w)$$
is a function of $b=H(u)$ and $c=H(w)$ only,
which we denote by $M_H(b,c)$.
The 
formula for
the lumped matrix $M_H$ is obtained by computing the law of
the difference of the two independent binomial laws appearing above.
\end{proof}

\noindent
The fitness function $A$ of the
sharp peak landscape can be factorized through $H$. If we define
$$\forall 
b\in
\{\,0,\dots,\ell\,\}
\qquad
A_H(b)\,=\,
\begin{cases}
\sigma &\text{if }b=0\\
1 &\text{if }b\geq 1\\
\end{cases}
\index{$A_H$}
$$
then we have
$$\forall u\in
{\cal A}^\ell\qquad
A(u)\,=\,A_H(H(u))\,.$$
We define further a 
vector function
${\mathbb H}:{\left({\cal A}^\ell\right)^m}\to
\{\,0,\dots,\ell\,\}^m
\index{$\mathbb H$}
$
by setting
$$
\forall x\,=\,
\left(
\begin{matrix}
x(1)\\
\vdots\\
x(m)
\end{matrix}
\right)
\in
\smash{\left({\cal A}^\ell\right)^m}
\qquad
{\mathbb H}(x)\,=\,
\left(
\begin{matrix}
H\big(x(1)\big)\\
\vdots\\
H\big(x(m)\big)
\end{matrix}
\right)\,.
$$
The partition of
$\smash{\left({\cal A}^\ell\right)^m}$ 
induced by the map $\mathbb H$ 
is
$${\mathbb H}^{-1}(\{\,d\,\})\,,\qquad
d\in
\{\,0,\dots,\ell\,\}^m\,.$$
We define finally the distance process
$(D_t)_{t\geq 0}
\index{$D_t$}
$
by 
$$\forall t\geq 0\qquad D_t\,=\,{\mathbb H}\big(X_t\big)\,.$$
Our next goal is to prove 
that the process 
$(X_t)_{t\geq 0}$ is lumpable with respect to the partition of
$\smash{\left({\cal A}^\ell\right)^m}$ induced by the map $\mathbb H$,
so that the distance process 
$(D_t)_{t\geq 0}$
is a genuine Markov process.
%
\begin{proposition}[${\mathbb H}$ Lumpability]\label{hlump}
Let $p$ be the transition matrix of the Moran model.
We have
\begin{multline*}
\forall e\in
\{\,0,\dots,\ell\,\}^m \quad
\forall 
x,y\in
{\left({\cal A}^\ell\right)^m}\,,\cr
{\mathbb H}(x)=
{\mathbb H}(y)\quad\Longrightarrow\quad
\sum_{
\genfrac{}{}{0pt}{1}{z\in {\left({\cal A}^\ell\right)^m}}
{{\mathbb H}(z)=e}
}
p(x,z)\,=\,
\sum_{
\genfrac{}{}{0pt}{1}{z\in {\left({\cal A}^\ell\right)^m}}
{{\mathbb H}(z)=e}
}
p(y,z)\,.
\end{multline*}
\end{proposition}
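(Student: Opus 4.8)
The plan is to reduce the lumpability of the full Moran transition matrix $p$ with respect to $\mathbb H$ to the already-proved lumpability of the mutation matrix $M$ with respect to $H$ (Lemma~\ref{mhlump}), exploiting the fact that in the sharp peak landscape the fitness factors through $H$ via $A = A_H\circ H$. First I would fix $e\in\zlm$ and two populations $x,y$ with $\mathbb H(x)=\mathbb H(y)$, and unfold the left-hand sum using the explicit form of $p$. The only non-diagonal transitions from $x$ are of the form $x\to x(j\leftarrow u)$, so $\mathbb H(z)=e$ forces $z = x(j\leftarrow u)$ for some $j\in\um$ and some $u\in\Al$ such that the vector obtained from $\mathbb H(x)$ by replacing its $j$-th coordinate by $H(u)$ equals $e$. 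I would therefore split into two cases according to whether the move is diagonal or not.

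The main computation is the off-diagonal case. I would write
$$
\sum_{\genfrac{}{}{0pt}{1}{z}{\mathbb H(z)=e}} p(x,z)
\,=\,
\sum_{j\in\um}\ \sum_{\genfrac{}{}{0pt}{1}{u\in\Al}{\mathbb H(x(j\leftarrow u))=e}}
\frac{1}{m}\sum_{1\le i\le m}
\frac{A(x(i))\,M(x(i),u)}{A(x(1))+\cdots+A(x(m))}\,,
$$
and then group the innermost sum over $u$ by the value $c=H(u)$. For a fixed $j$, the condition $\mathbb H(x(j\leftarrow u))=e$ depends on $u$ only through $H(u)$: it says $H(u)=e(j)$ together with the compatibility condition that $\mathbb H(x)$ and $e$ agree off coordinate $j$ (which is a condition on $x$ and $e$ alone, not on $u$). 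Hence, when this compatibility holds, the sum over $u$ collapses to a sum over $\{u : H(u)=e(j)\}$, and for each $i$ we may pull out $A(x(i))$ and replace $\sum_{H(u)=e(j)} M(x(i),u)$ by $M_H\big(H(x(i)),e(j)\big)$ using Lemma~\ref{mhlump}. Likewise each $A(x(i)) = A_H(H(x(i)))$ and $A(x(1))+\cdots+A(x(m)) = A_H(H(x(1)))+\cdots+A_H(H(x(m)))$. The resulting expression depends on $x$ only through $\mathbb H(x) = (H(x(1)),\dots,H(x(m)))$, hence is unchanged if $x$ is replaced by $y$. The diagonal case ($e=\mathbb H(x)$) is then handled by complementation: the total mass of each row of $p$ is $1$, and the total off-diagonal mass has just been shown to be a function of $\mathbb H(x)$, so the diagonal entry is too; more carefully, one checks that the set $\{z:\mathbb H(z)=\mathbb H(x)\}$ contributes the diagonal term plus off-diagonal moves $x\to x(j\leftarrow u)$ with $H(u)=H(x(j))$, and the latter are again handled by the grouping argument above.

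The step I expect to require the most care is the bookkeeping of which $z$ with $\mathbb H(z)=e$ are actually reachable in one step and the verification that the ``compatibility'' constraint linking $\mathbb H(x)$ and $e$ off coordinate $j$ truly involves no dependence on $u$ — in particular handling correctly the possibility that several distinct indices $j$ yield the same target block, and the boundary subtlety that $u=x(j)$ is excluded from the off-diagonal coefficients. None of this is deep, but it must be written so that after the grouping every surviving quantity is manifestly a function of $\mathbb H(x)$ alone; once that is visible, invariance under $x\rightsquigarrow y$ is immediate and the proof is complete. As a by-product this identifies the lumped transition matrix $p_H$ on $\zlm$ explicitly in terms of $A_H$ and $M_H$, which will be recorded for later use.
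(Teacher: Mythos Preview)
Your approach is essentially the same as the paper's: split into the off-diagonal case (where $e$ differs from $\mathbb H(x)$ in exactly one coordinate, so the index $j$ is forced and the sum over $u$ with $H(u)=e(j)$ collapses via Lemma~\ref{mhlump} to an expression in $A_H$ and $M_H$ evaluated at $\mathbb H(x)$) and the diagonal case $e=\mathbb H(x)$, handled by complementation. Your worry about ``several distinct indices $j$'' is unnecessary---when $e\neq\mathbb H(x)$ they differ in at most one coordinate or the sum vanishes---and the exclusion $u\neq x(j)$ is automatic in the off-diagonal case since $H(u)=e(j)\neq H(x(j))$; otherwise your outline matches the paper's proof.
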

\begin{proof}
For the process
$(X_t)_{t\geq 0}$, the only transitions having positive probability
are the transitions of the form
$$x\quad\longrightarrow\quad x(j\leftarrow u)\,,\qquad
1\leq j\leq m,\quad u\in{\mathcal A}^\ell\,.$$
Let $e\in \{\,0,\dots,\ell\,\}^m$
and let
$x,y\in
{\left({\cal A}^\ell\right)^m}$ be such that
${\mathbb H}(x)=
{\mathbb H}(y)$. We set
$d={\mathbb H}(x)=
{\mathbb H}(y)$.
If the vectors $d,e$ differ for more than two components, then the sums 
appearing in the statement of the proposition are equal to zero.
Suppose first
that the vectors $d,e$ differ in exactly one component, 
so that there exist
$j\in\um$ and $c\in\zl$ such that
$e=d(j\leftarrow c)$ and $d(j)\neq c$. Naturally,
$d(j\leftarrow c)$ is the vector $d$ in which the $j$--th component
$d(j)$ has been replaced by $c$:
$$
\index{$d(j\leftarrow c)$}
d(j\leftarrow c)
\,=\,
\left(
\begin{matrix}
d(1)\\
\vdots\\
d(j-1)\\
c\\
d(j+1)\\
\vdots\\
d(m)
\end{matrix}
\right)
$$
We have then
$$
\sum_{
\genfrac{}{}{0pt}{1}{z\in {\left({\cal A}^\ell\right)^m}}{{\mathbb H}(z)=e}
}
p(x,z)\,=\,
\sum_{
\genfrac{}{}{0pt}{1}{w\in {{\cal A}^\ell}}{H(w)=c}
}
p\big(x,x(j\leftarrow w)\big)\,.
$$
Using
lemma~\ref{mhlump},
we have
\begin{multline*}
\sum_{
\genfrac{}{}{0pt}{1}{w\in {{\cal A}^\ell}}{H(w)=c}
}
p\big(x,x(j\leftarrow w)\big)
\,=\,\sum_{
\genfrac{}{}{0pt}{1}{w\in {{\cal A}^\ell}}{H(w)=c}
}
\frac{1}{m}
\sum_{1\leq i\leq m}
\frac{A(x(i))
M(x(i),w)
}
{A(x(1))+\cdots+A(x(m))}
\cr
\,=\,
\frac{1}{m}
\sum_{1\leq i\leq m}
\frac{
{{A_H\big(H(x(i))\big)} M_H(H(x(i)),c)}
}
{A_H\big(H(x(1))\big)
+\cdots+ A_H\big(H(x(m))\big)}\,.
\end{multline*}
This sum is a function of ${\mathbb H}(x)$ and $c$ only.
Since
${\mathbb H}(x)=
{\mathbb H}(y)$, 
the sums are the same for $x$ and $y$.
Suppose next that $d=e$. Then
\begin{multline*}
\sum_{
\genfrac{}{}{0pt}{1}{z\in {\left({\cal A}^\ell\right)^m}}{{\mathbb H}(z)=e}
}
p(x,z)\,=\,
p(x,x)\,+\,
\sum_{1\leq j\leq m}
\sum_{
\genfrac{}{}{0pt}{1}{w\in {{\cal A}^\ell\setminus\{\,x(j)\,\}}}{H(w)=H(x(j))}
}
p\big(x,x(j\leftarrow w)\big)\cr
\,=\,1-
\kern-7pt
\sum_{
\genfrac{}{}{0pt}{1}{1\leq j\leq m}{w\in {{\cal A}^\ell\setminus\{\,x(j)\,\}}}
}
\kern-7pt
p\big(x,x(j\leftarrow w)\big)+
\sum_{1\leq j\leq m}
\sum_{
\genfrac{}{}{0pt}{1}{w\in {{\cal A}^\ell\setminus\{\,x(j)\,\}}}{H(w)=H(x(j))}
}
p\big(x,x(j\leftarrow w)\big)\cr
\,=\,1-
\sum_{1\leq j\leq m}
\sum_{
\genfrac{}{}{0pt}{1}{w\in {{\cal A}^\ell\setminus\{\,x(j)\,\}}}{H(w)\neq H(x(j))}
}
p\big(x,x(j\leftarrow w)\big)\cr
\,=\,1-
\sum_{1\leq j\leq m}
\sum_{
\genfrac{}{}{0pt}{1}{c\in \zl}{c\neq H(x(j))}
}
\sum_{
\genfrac{}{}{0pt}{1}{w\in {{\cal A}^\ell}}{H(w)=c}
}
p\big(x,x(j\leftarrow w)\big)
\,.
\end{multline*}
We have seen in the previous case that the last sum
is a function of ${\mathbb H}(x)$ and $c$ only. The second sum as well
depends only on 
${\mathbb H}(x)$. Therefore the above quantity is the same for $x$ and $y$.
\end{proof}

\par\noindent
We apply the classical lumping result (see theorem~\ref{lumpt})
to conclude that the 
distance process 
$(D_t)_{t\geq 0}$
is a Markov process. From the previous computations, we see that
its
transition matrix $p_H$ is given by
\begin{multline*}
\forall d\in\zlm\quad\forall j\in\um
\quad
\forall c \in\zl \setminus\{\,d(j)\,\}\cr
p_H\big(d,d(j\leftarrow c)\big)\,=\,
\frac{1}{m}
\sum_{1\leq i\leq m}
\frac{
{{A_H(d(i))} M_H(d(i),c)}
}
{A_H(d(1))
+\cdots+ A_H(d(m))}\,.
\end{multline*}
\subsection{Occupancy process}
We denote by $\pml \index{$\pml$}$ the set of
the ordered partitions of the integer $m$ in at most 
$\ell+1$ parts:
$$
\pml\,=\,
\big\{\,(o(0),\dots,o(\ell))\in{\mathbb N}^{\ell+1}:
o(0)+\cdots+o(\ell)=m\,\big\}\,.$$
These partitions are interpreted as occupancy distributions. The 
partition
$(o(0),\dots,o(\ell))$ 
corresponds to a population in which
$o(l)$ chromosomes are at Hamming distance $l$ from the
master sequence,
for any $l\in\{\,0,\dots,\ell\,\}$.
Let $\mathcal O \index{$\cO$}$ be the map which associates to each population $x$
its occupancy distribution ${\mathcal O}(x)= (o(x,0),\dots,o(x,\ell))$, 
defined by:
$$\forall l\in\{\,0,\dots,\ell\,\}\qquad
o(x,l)\,=\,\card\big\{\,i:
1\leq i\leq m,\,d_H(x(i),w^*)=l\,\big\}\,.$$
The map $\cO$
can be factorized through ${\mathbb H}$. 
For $d\in\zlm$, we set
$$
o_H(d,l)\,=\,\card\big\{\,i:
1\leq i\leq m,\,d(i)=l\,\big\}$$
and we define a map
$\cO_H:\zlm\to\pml 
\index{$\cO_H$}$
by setting
$$
\cO_H(d)\,=\, (o_H(d,0),\dots,o_H(d,\ell))\,.$$
We have then
$$\forall x\in\Alm\qquad
\cO(x)\,=\,\cO_H\big({\mathbb H}(x)\big)\,.$$
The map $\cO$ lumps together populations which are permutations
of each other:
$$\forall x\in\Alm\quad\forall \rho\in\Sm\qquad
\cO(x)\,=\,
\cO(\rho\cdot x)\,.$$
We define the occupancy process
$(O_t)_{t\geq 0} \index{$\O_t$}$ by setting
$$\forall t\geq 0\qquad 
O_t\,=\,{\mathcal O}(X_t)
\,=\,{\mathcal O}_H(D_t)
\,.$$
For the process
$(D_t)_{t\geq 0}$, the only transitions having positive probability
are the transitions of the form
$$d\quad\longrightarrow\quad d(j\leftarrow c)\,,\qquad
1\leq j\leq m,\quad c\in\zl\,.$$
Therefore the only possible transitions for the process
$(O_t)_{t\geq 0}$ are
$$o\quad\longrightarrow\quad o(k\rightarrow l)\,,\qquad 0\leq k,l\leq\ell\,,$$
where
$o(k\rightarrow l) \index{$o(k\rightarrow l)$}$
 is the partition obtained by moving a chromosome from
the class $k$ to the class $l$, i.e.,
$$\forall h\in\{\,0,\dots,\ell\,\}\qquad
o(k\rightarrow l)(h)\,=\,
\begin{cases}
o(h) &\text{if }h\neq k,l\\
o(k)-1&\text{if }h=k\\
o(l)+1&\text{if }h=l\\
\end{cases}
$$
\begin{proposition}[${\mathcal O}$ Lumpability]\label{olump}
Let $p_H$ be the transition matrix of the distance process.
We have
\begin{multline*}
\forall o\in\pml
\quad
\forall 
d,e\in
\zlm\,,\cr
{\mathcal O}_H(d)=
{\mathcal O}_H(e)\quad\Longrightarrow\quad
\sum_{
\genfrac{}{}{0pt}{1}{f\in\zlm }
{{\mathcal O}_H(f)=o}
}
p_H(d,f)\,=\,
\sum_{
\genfrac{}{}{0pt}{1}{f\in \zlm}
{{\mathcal O}_H(f)=o}
}
p_H(e,f)\,.
\end{multline*}
\end{proposition}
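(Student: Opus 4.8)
The plan is to follow the pattern of the proof of Proposition~\ref{hlump}, with the action of $\mathfrak S_m$ on the populations $\Alm$ replaced by the action of $\mathfrak S_m$ on the coordinates of $\zlm$. The one fact that drives everything is visible directly in the formula for $p_H$ obtained above: for $c\neq d(j)$,
$$p_H\big(d,d(j\leftarrow c)\big)\,=\,
\frac{1}{m}
\sum_{1\leq i\leq m}
\frac{A_H(d(i))\,M_H(d(i),c)}{A_H(d(1))+\cdots+A_H(d(m))}$$
does not depend on $j$ and depends on $d$ only through the multiset $\{\,d(1),\dots,d(m)\,\}$, hence only through $\cO_H(d)$ and the target class $c$. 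In particular $p_H$ is invariant under permutations of the coordinates, and $\cO_H(d)=\cO_H(e)$ holds exactly when $e$ is a coordinate permutation of $d$.

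Fix $o\in\pml$ and $d\in\zlm$, and set $\bar o=\cO_H(d)$. First I would note that the one--step successors of $d$ having positive probability are $d$ itself (occupancy $\bar o$) and the vectors $d(j\leftarrow c)$ with $c\neq d(j)$, whose occupancy is $\bar o(d(j)\to c)$; moreover the map $(j,c)\mapsto d(j\leftarrow c)$ is injective on these pairs, so no double counting occurs. Then I would split into three cases. If $o$ equals neither $\bar o$ nor $\bar o(k\to l)$ for any $k\neq l$ with $\bar o(k)\geq 1$, both sums in the statement are zero. If $o=\bar o(k\to l)$ with $k\neq l$, a componentwise comparison of $\bar o(k\to l)$ with $\bar o(k'\to l')$ shows $(k,l)=(k',l')$, so that $d(j\leftarrow c)$ has occupancy $o$ iff $d(j)=k$ and $c=l$; there are exactly $\bar o(k)$ indices $j$ with $d(j)=k$, and, by the fact above, $p_H(d,d(j\leftarrow l))$ is the same for all of them, so the sum equals $\bar o(k)$ times the displayed quantity with $c=l$, a function of $\bar o$, $k$, $l$ only. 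If $o=\bar o$, the sum reduces to $p_H(d,d)=1-\sum_{j=1}^{m}\sum_{c\neq d(j)}p_H(d,d(j\leftarrow c))$, in which each inner sum depends only on $\bar o$ and on $d(j)$, so grouping the indices $j$ by the value $d(j)$ turns the whole expression into a function of $\bar o$ only. In every case the value depends on $d$ only through $\cO_H(d)$, hence it is unchanged when $d$ is replaced by any $e$ with $\cO_H(e)=\cO_H(d)$; this is the claimed identity.

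I do not expect a real obstacle: the argument is essentially a transcription of the proof of Proposition~\ref{hlump}. The only step needing attention is the combinatorial bookkeeping in the second case --- identifying which one--step successors of $d$ sit above a prescribed occupancy $o$, checking that the representation $o=\bar o(k\to l)$ is unique, and counting those successors with the correct multiplicity $\bar o(k)$. After the proposition, one applies the classical lumping criterion (theorem~\ref{lumpt}) to conclude that the occupancy process $(O_t)_{t\geq 0}$ is a Markov chain and to read off its transition matrix from the formulas above.
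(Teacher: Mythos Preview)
Your argument is correct, but it takes a different route from the paper. The paper's proof is a three-line symmetry argument: since $\cO_H(d)=\cO_H(e)$ precisely when $e=\rho\cdot d$ for some $\rho\in\mathfrak S_m$, and since $p_H$ is $\mathfrak S_m$-invariant (Lemma~\ref{matinv}) while $\cO_H$ is permutation-invariant, the change of variable $f\mapsto\rho\cdot f$ turns one sum into the other immediately. You actually state both ingredients of this argument in your opening paragraph, but then set them aside and instead carry out the case analysis modeled on the proof of Proposition~\ref{hlump}. That works too: your bookkeeping (injectivity of $(j,c)\mapsto d(j\leftarrow c)$, uniqueness of $(k,l)$ in $o=\bar o(k\to l)$, and the diagonal case via complementation) is all correct. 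The trade-off is that the paper's proof is shorter and conceptual, while yours is more explicit and essentially computes the transition matrix $p_O$ along the way --- which is exactly what the paper does separately \emph{after} the proposition.
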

\begin{proof}
Let
$o\in\pml$ and
$d,e\in \zlm$ such that
${{\mathcal O}_H(d)=
{\mathcal O}_H(e)}$.
Since ${\mathcal O}_H(d)=
{\mathcal O}_H(e)$, then
there exists a permutation
$\rho\in\Sm$ such that
$\rho\cdot d=e$.
By lemma~\ref{matinv}, the transition matrices $p$ and $p_H$
are invariant under the action of $\Sm$, therefore
\begin{multline*}
\sum_{
\genfrac{}{}{0pt}{1}{f\in\zlm }
{{\mathcal O}_H(f)=o}
}
p_H(d,f)\,=\,
\sum_{
\genfrac{}{}{0pt}{1}{f\in\zlm }
{{\mathcal O}_H(f)=o}
}
p_H(\rho\cdot d,\rho\cdot f)\cr
\,=\,
\sum_{
\genfrac{}{}{0pt}{1}{f\in\zlm }
{{\mathcal O}_H(\rho^{-1}\cdot f)=o}
}
p_H(e,f)
\,=\,
\sum_{
\genfrac{}{}{0pt}{1}{f\in\zlm }
{{\mathcal O}_H(f)=o}
}
p_H(e,f)
\end{multline*}
as requested.
\end{proof}

\noindent
We apply the classical lumping result (see theorem~\ref{lumpt})
to conclude that the 
occupancy process
$(O_t)_{t\geq 0}$
is a Markov process. 
Let us compute its transition probabilities.
Let
$o\in\pml$ and
$d\in \zlm$ be such that
${{\mathcal O}_H(d)\neq o}$.
Let us consider the sum
$$\sum_{
\genfrac{}{}{0pt}{1}{f\in\zlm }
{{\mathcal O}_H(f)=o}
}
p_H(d,f)\,.$$
The terms in the sum vanish unless
$$\exists\,j\in\um\quad
\exists\,c\in\zl\,,\quad c\neq d(j)\,,\qquad
f=d(j\leftarrow c)\,.$$
Suppose that it is the case. If in addition $f$ is such that
${{\mathcal O}_H(f)=o}$, then
$$o\,=\,
{{\mathcal O}_H(d)}(d(j)\rightarrow c)\,.$$
Setting $k=d(j)$ and $l=c$, we conclude that
$$\exists\,k,l\in\zl\qquad o\,=\,
{{\mathcal O}_H(d)}(k\rightarrow l)\,.$$
The two indices $k,l$ satisfying the above condition are distinct and
unique.
We have then
\begin{multline*}
\sum_{
\genfrac{}{}{0pt}{1}{f\in\zlm }
{{\mathcal O}_H(f)=o}
}
p_H(d,f)\,=\,
\sum_{
\genfrac{}{}{0pt}{1}{j\in\um }
{d(j)=k}
}
p_H(d,d(j\leftarrow l))\cr
\,=\,
\sum_{
\genfrac{}{}{0pt}{1}{j\in\um }
{d(j)=k}
}
\!\!
\frac{1}{m}
\sum_{1\leq i\leq m}
\frac{
{{A_H(d(i))} M_H(d(i),l)}
}
{A_H(d(1))
+\cdots+ A_H(d(m))}\,\cr
\,=\,
\frac{\displaystyle
{{\mathcal O}_H(d)}(k)
\sum_{0\leq h\leq\ell}
{{\mathcal O}_H(d)}(h)\,
{A_H(h)}\, M_H(h,l)}
{\displaystyle
m
\sum_{0\leq h\leq\ell}
{{\mathcal O}_H(d)}(h)\,
{A_H(h)}}
\,.
\end{multline*}
This fraction is a function of 
${{\mathcal O}_H(d)}$, $k$ and $l$, thus it
depends only on 
${{\mathcal O}_H(d)}$ and $o$ as requested.
We conclude 
that the transition matrix of the occupancy process 
is given by
\begin{multline*}
\forall o\in\pml\quad\forall k,l\in\zl\,,
\quad k\neq l\,,\cr
p_{O}\big(o,o(k\rightarrow l)\big)\,=\,
\frac{\displaystyle
o(k)
\sum_{h=0}^\ell
o(h)\,
{A_H(h)}\, M_H(h,l)}
{\displaystyle
m\sum_{h=0}^\ell
o(h)\,
{A_H(h)}}\,.
\end{multline*}
\subsection{Invariant probability measures}\label{inmeas}
There are several advantages
in working with the lumped processes.
The main advantage is that the state space
is considerably smaller.
For the process
$(X_t)_{t\geq 0}$, the cardinality of the state space is
$$\card\Alm\,=\,\kappa^{\ell m}\,.$$
For the distance process
$(D_t)_{t\geq 0}$, it becomes
$$\card\zlm\,=\,(\ell+1)^{m}\,.$$
Finally for the occupancy process, the cardinality is 
the number of ordered partitions of $m$ into at most
$\ell +1$ parts. This number is quite complicated to compute, 
but in any case
$$\card\pml\,\leq\, (\ell+1)^{m}\,.$$
Our goal is to estimate the law $\nu$ of the 
fraction of the master sequence in the population
at equilibrium.
The probability measure $\nu$ is the probability measure on the interval $[0,1]$
satisfying the following identities. 
For any function $f:[0,1]\to\R$,
$$\int_{[0,1]}f\,d\nu
\,=\,
\lim_{t\to\infty} 
E\Big(
\,f\Big(
\frac{1}{m}
N(X_t)\Big)
\Big)
\,=\,
\int_{\textstyle\Alm}
f\Big(
\frac{1}{m}
N(x)\Big)
\,d\mu(x)$$
where $\mu$ is the invariant probability measure of the process
$(X_t)_{t\geq 0}$ and
$N(x)$ is the number of copies of the master sequence~$w^*$
present in the population $x$:
$$N(x)\,=\,\card\big\{\,i: 1\leq i\leq m, \, x(i)=w^*\,\big\}\,.$$
In fact, the probability measure $\nu$ 
is the image of $\mu$
through the map
$$x\in\Alm\mapsto
\frac{1}{m}
N(x)\in[0,1]\,.$$
Yet $N(x)$ is also lumpable with respect to $\mathbb H$, i.e., it can be
written as a function of ${\mathbb H}(x)$:
$$\forall x\in\Alm\qquad
N(x)\,=\,
N_H({\mathbb H}(x))\,,$$
where $N_H$ is the lumped function defined by
$$\forall d\in\zlm\qquad
N_H(d)
\,=\,\card\big\{\,i: 1\leq i\leq m, \, 
d(i)=0\,\big\}\,.$$
Let $\mu_H$ be the invariant probability measure of
the process
$(D_t)_{t\geq 0}$. 
For $d\in\zlm$, we have
\begin{multline*}
\mu_H(d)\,=\,
\lim_{t\to\infty}  P\big(D_t=d\big)
\,=\,
\lim_{t\to\infty}  P\big({\mathbb H}(X_t)=d\big)
\cr
\,=\,
\lim_{t\to\infty}  P\big(
X_t\in
{\mathbb H}^{-1}(
d)\big)
\,=\,
\mu\big(
{\mathbb H}^{-1}(
d)\big)
\,.
\end{multline*}
Thus, as it was naturally expected, the probability measure $\mu_H$
is the image of the probability measure $\mu$ through the map
$\mathbb H$.
It follows that,
for any function $f:[0,1]\to\R$,
\begin{multline*}
\int_{[0,1]}f\,d\nu
\,=\,
\int_{\textstyle\Alm}
f\Big(
\frac{1}{m}
N(x)\Big)
\,d\mu(x)\cr
\,=\,
\int_{\textstyle\Alm}
f\Big(
\frac{1}{m}
N_H\big({\mathbb H}(x)\big)
\Big)
\,d\mu(x)\cr
\,=\,
\int_{\textstyle\zlm}
f\Big(
\frac{1}{m}
N_H(d)\Big)
\,d\mu_H(d)\,.
\end{multline*}
Similarly, the invariant probability measure
$\mu_O$ 
of the process
$(O_t)_{t\geq 0}$ is the image measure of $\mu$ through the map $\cO$,
and also 
the image measure of $\mu_H$ through the map $\cO_H$. We have also, 
for any function $f:[0,1]\to\R$,
\begin{equation*}
\int_{[0,1]}f\,d\nu
\,=\,
\int_{\textstyle\pml}
f\Big(\frac{1}{m}
o(0)\Big)\,d\mu_O(o)
\,.
\end{equation*}
Another advantage of the lumped processes
is that the spaces $\zlm$ and $\pml$ are naturally
endowed with a partial order.
Since we cannot deal directly with the distance process 
$(D_t)_{t\geq 0}$ or the occupancy process
$(O_t)_{t\geq 0}$, we shall compare them with auxiliary processes
whose dynamics is much simpler.
\vfill\eject
\section{Monotonicity}\label{secmono}
A crucial property for comparing the Moran model with other processes
is monotonicity. We will realize a coupling of 
the lumped Moran processes with different
initial conditions and we will deduce the monotonicity from the coupling
construction.
\subsection{Coupling of the lumped processes}
\label{coulum}
We build here a coupling of the lumped processes, on the same probability space
as the coupling for the process
$(X_t)_{t\geq 0}$ described in section~\ref{couplsec}.
We set
$$\forall n\geq 1\qquad
R_n\,=\,\big(S_n, I_n, J_n,U_{n,1},\dots,U_{n,\ell}\big)\,.
\index{$R_n$}
$$
The vector $R_n$ is the random input which is used to perform the
$n$--th step of the Markov chain
$(X_t)_{t\geq 0}$. By construction the sequence 
$(R_n)_{n\geq 1}$ is a sequence of independent identically distributed
random vectors with values in
$${\cR}\,=\,[0,1]\times\um^2\times [0,1]^\ell\,.
\index{$\cR$}$$
We first define two maps $\cMH$ and $\cS_H$ in order to couple the
mutation and the selection mechanisms.
\smallskip

\par\noindent
{\bf Mutation.}
We define a map 
$$\cMH:\zl\times [0,1]^\ell\to\zl
\index{$\cMH$}$$
in order to
couple the mutation mechanism starting with different 
chromosomes. 
Let $b\in\zl$ and let $u_1,\dots,u_\ell\in
[0,1]^\ell$.
%
The map $\cM_H$ is defined by setting
$$\cM_H(b,
u_1,\dots,u_\ell)=
b-\sum_{k=1}^b1_{u_k<p/\kappa}
+\sum_{k=b+1}^\ell1_{u_k>1-p(1-1/\kappa)}
\,.$$
The map $\cMH$ is built in such a way that,
if $U_1,\dots,U_\ell$ are
random variables 
with uniform law on 
the interval 
$[0,1]$,
all being independent, then for any
$b\in\zl$, the law of 
$\cMH(b,
U_1,\dots,U_\ell)$
is given by the line of the mutation matrix $M_H$ associated to $b$,
i.e.,
$$\forall c\in\zl\qquad P\big(
\cMH(b,
U_1,\dots,U_\ell)=c\big)\,=\,M_H(b,c)\,.$$
%
{\bf Selection for the distance process.}
We realize the replication mechanism with the help of a selection
map
$$\cS_H:\zlm\times [0,1]\to\um\,.
\index{$\cS_H$} $$
Let $d\in\zlm$ and let $s\in[0,1[$. We define 
$\cS_H(d,s)=i$ where $i$ is the unique index in $\um$ satisfying
$$
\frac{A_H(d(1))
+\cdots+ A_H(d(i-1))}
{A_H(d(1))
+\cdots+ A_H(d(m))}
\,\leq\,s\,<\,
\frac{A_H(d(1))
+\cdots+ A_H(d(i))}
{A_H(d(1))
+\cdots+ A_H(d(m))}
\,.$$
The map $\cS_H$ is built in such a way that,
if $S$ is
a random variable 
with uniform law on 
the interval 
$[0,1]$, 
then for any
$d\in\zlm$, 
the law of 
$\cS_H(d,S)$
is given by 
$$\forall i\in\um\qquad P\big(
\cS_H(d,
S)=i\big)\,=\,
\frac{
{{A_H(d(i))} }
}
{A_H(d(1))
+\cdots+ A_H(d(m))}\,.$$
%
{\bf Coupling for the distance process.}
We build a deterministic map
$$\Phi_H:
\zlm\times
{\cR}\to\zlm
\index{$\Phi_H$}
$$
in order to realize the coupling between distance processes
with various initial conditions and
different parameters $\sigma$ or $p$. 
The coupling map $\Phi_H$ is defined by
\begin{multline*}
\forall r\,=\,(s,i,j,u_1,\dots, u_\ell)\in\cR\,,\quad
\forall d\in\zlm\cr
\Phi_H(d,r)\,=\,
d\big(j\leftarrow
\cMH(
d(\cS_H(d,s))
,u_1,\dots,u_\ell)\big)\,.
\end{multline*}
Notice that the index~$i$ is not used in the map $\Phi_H$.
The coupling is then built in a standard way with the help of the 
i.i.d. sequence
$(R_n)_{n\geq 1}$ 
and the map
$\Phi_H$.
Let $d\in\zlm$ be the starting point of the process.
We build the distance process 
$(D_t)_{t\geq 0} 
\index{$D_t$}$
by setting
$D(0)=d$ and
$$\forall n\geq 1\qquad
D_n\,=\,\Phi_H\big(D_{n-1}, R_n\big)\,.
$$
A routine check shows that the process
$(D_t)_{t\geq 0}$ 
is a Markov chain starting from $d$
with the adequate transition matrix. This way we have coupled
the distance processes with various initial conditions and
different parameters $\sigma$ or $p$. 
\smallskip

\par\noindent
{\bf Selection for the occupancy process.}
We realize the replication mechanism with the help of a selection
map
$$\cS_O:\pml\times [0,1]\to\zl
\index{$\cS_O$}\,.$$
Let $o\in\pml$ and let $s\in[0,1[$. We define 
$\cS_O(o,s)=l$ where $l$ is the unique index in $\zl$ satisfying
$$
\frac{o(0) A_H(0)
+\cdots+ o(l-1) A_H(l-1)}
{o(0) A_H(0)
+\cdots+ o(\ell) A_H(\ell)}
\leq s<
\frac{o(0) A_H(0)
+\cdots+ o(l) A_H(l)}
{o(0) A_H(0)
+\cdots+ o(\ell) A_H(\ell)}
\,.$$
The map $\cS_O$ is built in such a way that,
if $S$ is
a random variable 
with uniform law on 
the interval 
$[0,1]$, 
then for any
$o\in\pml$, 
the law of 
$\cS_O(o,S)$
is given by 
$$\forall l\in\zl\qquad P\big(
\cS_O(o,
S)=l\big)\,=\,
\frac{
{{o(l)\,A_H(l)} }
}
{o(0) A_H(0)
+\cdots+ o(\ell) A_H(\ell)}\,.$$
%
{\bf Coupling for the occupancy process.}
We build a deterministic map
$$
\Phi_O:
\pml\times
{\cR}\to\pml\,
\index{$\Phi_O$}
$$
in order to realize the coupling between occupancy processes
with various initial conditions and
different parameters $\sigma$ or $p$. 
The coupling map $\Phi_O$ is defined as follows.
Let
$r=(s,i,j,u_1,\dots, u_\ell)\in\cR$. Let $o\in\pml$, let us 
set
$l=\cS_O(o,s)$
and
let $k$ be the unique index in $\zl$ satisfying
$$o(0)+\cdots+o(k-1)\,<\,j\,\leq
o(0)+\cdots+o(k)\,.
$$
%
The coupling map $\Phi_O$ is defined by
$$\Phi_O(o,r)\,=\,
 o\big(k\rightarrow
\cMH(l,u_1,\dots,u_\ell)\big)\,.$$
Notice that the index~$i$ is not used in the map $\Phi_O$.
Let $o\in\pml$ be the starting point of the process.
We build the occupancy process 
$(O_t)_{t\geq 0} 
\index{$\O_t$}$
by setting
$O(0)=o$ and
$$\forall n\geq 1\qquad
O_n\,=\,\Phi_O\big(O_{n-1}, R_n\big)\,.
$$
A routine check shows that the process
$(O_t)_{t\geq 0}$ 
is a Markov chain starting from $o$
with the adequate transition matrix. This way we have coupled
the occupancy processes with various initial conditions and
different parameters $\sigma$ or $p$. 
\subsection{Monotonicity of the model}
The space $\zlm$ is naturally endowed with a partial order:
$$d\leq e\quad \Longleftrightarrow
\quad
\forall i\in\um\quad d(i)\leq e(i)\,.$$
\begin{lemma}\label{monmut}
The map $\cMH$ is non--decreasing with respect to the Hamming class,
	i.e.,
\begin{multline*}
\forall b,c \in\zl \quad
\forall u_1,\dots,u_\ell\in [0,1]\cr
b\leq c\quad\Rightarrow\quad
\cMH(b,u_1,\dots,u_\ell)\,\leq\,
\cMH(c,u_1,\dots,u_\ell)\,.\hfil
\end{multline*}
\end{lemma}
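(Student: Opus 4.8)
The plan is to reduce to a single step $c=b+1$ and then compute the increment of $\cMH$ explicitly.

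First I would rewrite $\cMH$ in a shape that makes the monotonicity transparent. Using $b-\sum_{k=1}^{b}1_{u_k<p/\kappa}=\sum_{k=1}^{b}\big(1-1_{u_k<p/\kappa}\big)=\sum_{k=1}^{b}1_{u_k\geq p/\kappa}$, one gets
$$\cMH(b,u_1,\dots,u_\ell)\,=\,\sum_{k=1}^{b}1_{u_k\geq p/\kappa}\,+\,\sum_{k=b+1}^{\ell}1_{u_k>1-p(1-1/\kappa)}\,.$$
In this form $\cMH$ is a sum of $\ell$ indicator terms, one per coordinate, where the $k$-th term uses the threshold $p/\kappa$ if $k\leq b$ and the threshold $1-p(1-1/\kappa)$ if $k>b$.

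Next, for $b\leq\ell-1$ I would subtract the value at $b$ from the value at $b+1$. Only the coordinate $k=b+1$ changes camp, so
$$\cMH(b+1,u_1,\dots,u_\ell)-\cMH(b,u_1,\dots,u_\ell)\,=\,1_{u_{b+1}\geq p/\kappa}\,-\,1_{u_{b+1}>1-p(1-1/\kappa)}\,.$$
The one thing to check — and really the only nontrivial point — is that $p/\kappa\leq 1-p(1-1/\kappa)$, which is equivalent to $p\leq 1$; this holds because the standing hypothesis $q\in\,]0,1-1/\kappa[$ forces $p=\kappa q/(\kappa-1)\in\,]0,1[$. Consequently $\{u_{b+1}>1-p(1-1/\kappa)\}\subseteq\{u_{b+1}\geq p/\kappa\}$, so the displayed difference is of the form $1_B-1_A$ with $A\subseteq B$, hence $\geq 0$. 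This gives $\cMH(b+1,\cdot)\geq\cMH(b,\cdot)$ pointwise in $(u_1,\dots,u_\ell)$.

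Finally, for arbitrary $b\leq c$ in $\zl$ I would chain the one-step inequalities, $\cMH(b,\cdot)\leq\cMH(b+1,\cdot)\leq\cdots\leq\cMH(c,\cdot)$, which is the assertion. I do not expect any genuine obstacle here: the argument is a short telescoping computation, and the only subtlety is the ordering of the two thresholds, which encodes the fact that a single site cannot simultaneously mutate away from and toward the master letter, so the two relevant mutation events are disjoint — exactly what makes the increment nonnegative.
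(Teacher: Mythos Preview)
Your proof is correct and follows essentially the same route as the paper. The paper computes the difference $\cMH(c,\cdot)-\cMH(b,\cdot)$ in one shot and bounds it by observing that each summand over $k\in\{b+1,\dots,c\}$ has absolute value at most~$1$; you telescope through single steps $b\to b+1$ instead, but the decisive point in both arguments is the same threshold ordering $p/\kappa\leq 1-p(1-1/\kappa)$ (equivalently $p\leq 1$), which you justify explicitly from $q\in\,]0,1-1/\kappa[$.
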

\begin{proof}
We simply use the definition of $\cMH$ 
(see section~\ref{coulum}) and we compute the difference
\begin{multline*}
\cMH(c,u_1,\dots,u_\ell)\,-
\cMH(b,u_1,\dots,u_\ell)\,=\,\cr
c-b
+\sum_{k=b+1}^c\Big(
1_{u_k>1-p(1-1/\kappa)}
-1_{u_k<p/\kappa}\Big)\,.
\end{multline*}
Since $\kappa\geq 2$, the absolute value of the sum is at most $c-b$
and the above difference is non--negative.
\end{proof}
\begin{lemma}\label{monsel}
In the neutral case $\sigma=1$,
the map $\cSH$ is non--decreasing with respect to the Hamming class,
	i.e.,
\begin{multline*}
\forall d,e \in\zlm \quad
\forall s\in [0,1]\cr
d\leq e\quad\Rightarrow\quad
d\big(\cSH(d,s)\big)\,\leq\,
e\big(\cSH(e,s)\big)\,.\hfil
\end{multline*}
\end{lemma}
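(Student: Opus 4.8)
The plan is to exploit the fact that, in the neutral regime $\sigma=1$, all the fitnesses coincide, so the selection map ceases to depend on the population it is applied to. Indeed, when $\sigma=1$ we have $A_H(b)=1$ for every $b\in\zl$ (both $A_H(0)=\sigma=1$ and $A_H(b)=1$ for $b\geq 1$). Plugging this into the definition of $\cSH$ from section~\ref{coulum}, the two inequalities determining $\cSH(d,s)$ collapse to
$$\frac{i-1}{m}\,\leq\,s\,<\,\frac{i}{m}\,,$$
in which $d$ no longer appears. Thus, for $s\in[0,1[$, the index $\cSH(d,s)$ equals $\lfloor sm\rfloor+1$, independently of $d$; this is the one substantive point, and it is essentially immediate.

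First I would record this simplification: check that $\lfloor sm\rfloor+1$ does lie in $\um$ for every $s\in[0,1[$ and that it is the unique solution of the displayed double inequality, with a moment's attention to the boundary values $s=j/m$ for integer $j$, where the half--open conventions still single out $i=j+1$. (The value $s=1$ is irrelevant, being attained with probability zero by the uniform variables $S_n$ driving the coupling.) Then, given $d,e\in\zlm$ with $d\leq e$ and any $s\in[0,1[$, set $i_0=\lfloor sm\rfloor+1$, so that $\cSH(d,s)=\cSH(e,s)=i_0$. The inequality to be proved, $d\big(\cSH(d,s)\big)\leq e\big(\cSH(e,s)\big)$, is then just $d(i_0)\leq e(i_0)$, which holds by the very definition of the componentwise partial order on $\zlm$.

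There is no genuine obstacle here: the lemma merely says that neutral selection picks a slot uniformly and hence preserves any componentwise ordering. It is worth pointing out where the hypothesis $\sigma=1$ enters — for $\sigma>1$ the selection weights depend on the occupancy of class $0$, and raising a coordinate of $d$ can shift weight out of class $0$ and thereby move the chosen index, so the analogous monotonicity fails for the distance process. This is exactly why, in the non--neutral regime, one must work instead with the occupancy process, whose monotonicity for all $\sigma\geq 1$ will be established separately.
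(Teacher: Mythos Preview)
Your proof is correct and follows essentially the same approach as the paper: observe that when $\sigma=1$ all fitnesses equal $1$, so $\cSH(d,s)=\lfloor ms\rfloor+1$ is independent of $d$, and then the componentwise order $d\leq e$ immediately gives $d(i_0)\leq e(i_0)$ for the common index $i_0$. Your additional remarks on boundary cases and on why the argument fails for $\sigma>1$ are accurate and helpful context, but the core argument is identical to the paper's.
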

\begin{proof}
In fact, when $\sigma=1$, the map $\cSH$ depends only on the second variable $s$:
$$
\forall d \in\zlm \quad
\forall s\in [0,1]\qquad 
\cSH(d,s)\,=\,\lfloor m s\rfloor+1\,.
$$
It follows that if 
$d,e \in\zlm$ are such that $d\leq e$, then
$$\forall s\in [0,1]\qquad 
d\big(\lfloor m s\rfloor+1\big)\,\leq\,
e\big(\lfloor m s\rfloor+1\big)
$$
as requested.
\end{proof}
\begin{lemma}\label{monphimor}
In the neutral case $\sigma=1$,
the map $\phi_H$ is non--decreasing with respect to the distances,
	i.e.,
\begin{equation*}
\forall d,e \in\zlm \quad \forall r\in\cR
\,,\qquad
d\leq e
\quad\Rightarrow\quad
\Phi_H(d,r)\,\leq\,
\Phi_H(e,r)\,.
\end{equation*}
\end{lemma}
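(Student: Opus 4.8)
The plan is to reduce the statement to the two monotonicity lemmas just established, namely Lemma~\ref{monsel} for the selection step and Lemma~\ref{monmut} for the mutation step. Fix $d,e\in\zlm$ with $d\leq e$ and fix $r=(s,i,j,u_1,\dots,u_\ell)\in\cR$. By the definition of $\Phi_H$ given in section~\ref{coulum}, the vector $\Phi_H(d,r)$ coincides with $d$ on every coordinate $k\neq j$ and has $j$--th coordinate equal to $\cMH\big(d(\cS_H(d,s)),u_1,\dots,u_\ell\big)$, and similarly for $\Phi_H(e,r)$ with $e$ in place of $d$. Since the same replacement site $j$ is used for both (it is read off from the common random input $r$), it suffices to compare the two resulting vectors coordinate by coordinate.

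First I would dispatch the coordinates $k\in\um$ with $k\neq j$: there $\Phi_H(d,r)(k)=d(k)\leq e(k)=\Phi_H(e,r)(k)$ directly from the hypothesis $d\leq e$. Then I would treat the single coordinate $j$. Here the assumption $\sigma=1$ enters: Lemma~\ref{monsel} gives $d(\cS_H(d,s))\leq e(\cS_H(e,s))$, i.e.\ the Hamming class of the chromosome selected for replication in $d$ is at most the one selected in $e$. Feeding this inequality into Lemma~\ref{monmut}, which says that $\cMH(\cdot,u_1,\dots,u_\ell)$ is non--decreasing in its first argument, yields
$$\cMH\big(d(\cS_H(d,s)),u_1,\dots,u_\ell\big)\,\leq\,\cMH\big(e(\cS_H(e,s)),u_1,\dots,u_\ell\big),$$
which is exactly $\Phi_H(d,r)(j)\leq\Phi_H(e,r)(j)$.

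Combining the two cases gives $\Phi_H(d,r)(k)\leq\Phi_H(e,r)(k)$ for all $k\in\um$, that is $\Phi_H(d,r)\leq\Phi_H(e,r)$, as desired. I do not expect any genuine obstacle: the content of the lemma is carried entirely by the two preceding lemmas, and the neutral hypothesis $\sigma=1$ is used only through Lemma~\ref{monsel} --- it is precisely what makes the selected index a function of $s$ alone (through $\lfloor ms\rfloor+1$) and hence order--preserving. The one point requiring a little care is that the coupling reuses the same site $j$ and the same mutation thresholds $u_1,\dots,u_\ell$ for $d$ and $e$, so that the comparison really does reduce to the monotonicity of $\cS_H$ and $\cMH$; this is built into the definition of $\Phi_H$ from the single i.i.d.\ sequence $(R_n)_{n\geq 1}$.
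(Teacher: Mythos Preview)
Your proof is correct and follows essentially the same route as the paper: apply Lemma~\ref{monsel} to compare the Hamming classes of the selected chromosomes, then Lemma~\ref{monmut} to push this inequality through the mutation map, and finally observe that the remaining coordinates are unchanged. The paper phrases the last step via the $d(j\leftarrow\cdot)$ notation rather than an explicit split into $k=j$ and $k\neq j$, but the argument is the same.
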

\begin{proof}
Let
$r=(s,i,j,u_1,\dots, u_\ell)\in\cR$ and let $d,e\in\zlm$, $d\leq e$.
By lemma~\ref{monsel}, we have
$$d\big(\cSH(d,s)\big)\,\leq\,
e\big(\cSH(e,s)\big)\,.$$
This inequality and lemma~\ref{monmut} imply that
$$\cMH\big(
d(\cS_H(d,s))
,u_1,\dots,u_\ell)\big)
\,\leq\,
\cMH\big(
e(\cS_H(e,s))
,u_1,\dots,u_\ell\big)\,,
$$
so that
\begin{multline*}
d\big(j\leftarrow
\cMH(
d(\cS_H(d,s))
,u_1,\dots,u_\ell)\big)
\cr
\,\leq\,
e\big(j\leftarrow
\cMH(
e(\cS_H(e,s))
,u_1,\dots,u_\ell)\big)\,,
\end{multline*}
whence $\Phi_H(d,r)\,\leq\,
\Phi_H(e,r)$ as requested.
\end{proof}

\noindent
Unfortunately, the map
$\Phi_H$ is not monotone for $\sigma>1$. Indeed, suppose that
$$\displaylines{
\kappa=3\,,\quad
\sigma=2\,,\quad m=3\,,\quad
\frac{2}{3}<s<
\frac{3}{4}
\,,\quad\cr
u_1,\dots,u_\ell\in\Big[
\frac{p}{3},1-
\frac{2p}{3}\Big]
\,,\quad j=i=1\,,}$$ 
then
$$\Phi_H
\left(
\begin{matrix}
0\\
2\\
1
\end{matrix}
\right)
\,=\,
\left(
\begin{matrix}
2\\
2\\
1
\end{matrix}
\right)\,,\qquad
\Phi_H
\left(
\begin{matrix}
1\\
2\\
1
\end{matrix}
\right)\,=\,
\left(
\begin{matrix}
1\\
2\\
1
\end{matrix}
\right)
\,.
$$
This creates a serious
complication. This is why we perform a second lumping and we work
with the occupancy process rather than with the distance process.
We define an order $\preceq\index{$\preceq$}$ on
$\pml$
as follows.
Let
$o=(o(0),\dots,o(\ell))$ 
and
$o'=(o'(0),\dots,o'(\ell))$ 
belong to
$\pml$.
We say that $o$ is smaller than or equal to $o'$,
which we denote by $o\preceq o'$, if 
$$\forall l\leq\ell\qquad
o(0)+\cdots+o(l)\,\leq\,
o'(0)+\cdots+o'(l)\,.$$
\begin{lemma}\label{monselo}
The map $\cS_O$ is non--increasing with respect to the occupancy distribution,
	i.e.,
\begin{multline*}
\forall o,o' \in
\pml
\quad
\forall s\in [0,1]\cr
o\preceq o'\quad\Rightarrow\quad
\cS_O(o,s)\,\geq\,
\cS_O(o',s)\,.\hfil
\end{multline*}
\end{lemma}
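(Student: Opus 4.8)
The plan is to recast $\cS_O(o,s)$ in terms of the weighted cumulative distribution it inverts. For $o\in\pml$ and $l\in\zl$ set $W_o(l)=\sum_{h=0}^{l}o(h)\,A_H(h)$, with the convention $W_o(-1)=0$; since $A_H(h)\geq 1$ for every $h$ we have $W_o(\ell)\geq m\geq 1$, so there is no division by zero and the numbers $0=W_o(-1)/W_o(\ell)\leq\cdots\leq W_o(\ell)/W_o(\ell)=1$ are non-decreasing. By the definition of $\cS_O$ this gives, for each $l\in\zl$ and each $s\in[0,1[$, the equivalence $\cS_O(o,s)\leq l\ \Longleftrightarrow\ s<W_o(l)/W_o(\ell)$. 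Hence the lemma will follow once I establish the pointwise comparison
$$\frac{W_o(l)}{W_o(\ell)}\ \leq\ \frac{W_{o'}(l)}{W_{o'}(\ell)}\qquad\text{for every }l\in\zl\text{ whenever }o\preceq o':$$
indeed, applying it with $l=\cS_O(o,s)$ yields $s<W_o(l)/W_o(\ell)\leq W_{o'}(l)/W_{o'}(\ell)$, so $\cS_O(o',s)\leq l=\cS_O(o,s)$, which is exactly $\cS_O(o,s)\geq\cS_O(o',s)$.

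Everything therefore reduces to that inequality, which I would prove using the explicit sharp peak fitness $A_H(0)=\sigma$, $A_H(h)=1$ for $h\geq 1$. Writing $S_o(l)=o(0)+\cdots+o(l)$ for the ordinary partial sums, one gets $W_o(l)=S_o(l)+(\sigma-1)\,o(0)$ and $W_o(\ell)=m+(\sigma-1)\,o(0)$, and similarly for $o'$; moreover the relation $o\preceq o'$ is precisely $S_o(l)\leq S_{o'}(l)$ for all $l\leq\ell$, which in particular (case $l=0$) gives $o(0)\leq o'(0)$. I would then interpolate through the middle term $(S_{o'}(l)+(\sigma-1)o(0))/(m+(\sigma-1)o(0))$: the first step raises the numerator while keeping the denominator fixed, using $S_o(l)\leq S_{o'}(l)$; the second step raises the parameter $t=(\sigma-1)o(0)$ to $t=(\sigma-1)o'(0)$ and uses that the fractional-linear map $t\mapsto (S_{o'}(l)+t)/(m+t)$ is non-decreasing on $[0,+\infty[$, its derivative being $(m-S_{o'}(l))/(m+t)^2\geq 0$. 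Chaining the two inequalities gives the claim; in the neutral case $\sigma=1$ it degenerates to $S_o(l)\leq S_{o'}(l)$.

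I do not expect a real obstacle: the proof is elementary once the cumulative-distribution reformulation is in place. The delicate points are purely organisational---keeping the directions of the inequalities consistent (the selected index moves opposite to $o$ in the $\preceq$ order, and the equivalence $\cS_O(o,s)\leq l\ \Longleftrightarrow\ s<W_o(l)/W_o(\ell)$ reverses a comparison), and disposing of the degenerate configurations (well-definedness of $\cS_O$, the value $l=\ell$ where both ratios equal $1$, classes with $o(h)=0$ giving empty cells). The one substantive observation, rather than an obstacle, is that the monotonicity of $t\mapsto(S_{o'}(l)+t)/(m+t)$ holds in the needed direction precisely because $S_{o'}(l)\leq S_{o'}(\ell)=m$, i.e. because both occupancy vectors are constrained to sum to the same fixed population size $m$.
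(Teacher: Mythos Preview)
Your proof is correct and follows essentially the same approach as the paper. Both arguments reduce to showing that the cumulative weighted ratio $W_o(l)/W_o(\ell)$ is monotone in $o$ for the order $\preceq$; the paper packages this as monotonicity of the two-variable function $\psi(\eta,\xi)=\frac{\eta(\sigma-1)+\xi}{\eta(\sigma-1)+m}$ in both arguments, while you carry out the same comparison as two one-variable interpolation steps.
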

\begin{proof}
Let
$o\preceq o'$.
Let $l\in\zl$. We have
$$o(0) A_H(0)
+\cdots+ o(l) A_H(l)\,=\,
o(0)(\sigma-1)+o(0)+\cdots+o(l)\,.$$
Thus
$$\frac{o(0) A_H(0)
+\cdots+ o(l) A_H(l)}
{o(0) A_H(0)
+\cdots+ o(\ell) A_H(\ell)}
\,=\,\psi\big(o(0),o(0)+\cdots+o(l)\big)\,,$$
where $\psi$ is the function defined by
$$\forall \eta,\xi\in [0,m]\qquad
\psi(\eta,\xi)\,=\,\frac{\eta(\sigma-1)+\xi}
{\eta(\sigma-1)+m}\,.$$
The map $\psi$ is non--decreasing in $\eta$ and $\xi$ on $[0,m]^2$, 
therefore
$$\psi\big(o(0),o(0)+\cdots+o(l)\big)\,\leq\,
\psi\big(o'(0),o'(0)+\cdots+o'(l)\big)\,,$$
i.e.,
$$\frac{o(0) A_H(0)
+\cdots+ o(l) A_H(l)}
{o(0) A_H(0)
+\cdots+ o(\ell) A_H(\ell)}
\,\leq\,
\frac{o'(0) A_H(0)
+\cdots+ o'(l) A_H(l)}
{o'(0) A_H(0)
+\cdots+ o'(\ell) A_H(\ell)}\,.
$$
It follows that
$\cS_O(o,s)\geq
\cS_O(o',s)$ for any $s\in[0,1]$.
\end{proof}
\begin{lemma}\label{monophio}
The map $\phi_O$ is non--decreasing with respect to
the occupancy distributions,
	i.e.,
$$\forall o,o' \in\pml \quad \forall r\in\cR
\qquad
o\preceq o' 
\quad\Rightarrow\quad
\Phi_O(o,r)\,\preceq\,
\Phi_O(o',r)\,.
$$
\end{lemma}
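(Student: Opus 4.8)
The plan is to deduce the monotonicity of $\Phi_O$ from the monotonicity of its two constituents — the selection map $\cS_O$ (Lemma~\ref{monselo}) and the mutation map $\cMH$ (Lemma~\ref{monmut}) — together with an elementary computation on the partial sums $o(0)+\cdots+o(h)$, which are exactly the quantities entering the definition of $\preceq$. Fix $r=(s,i,j,u_1,\dots,u_\ell)\in\cR$ and $o\preceq o'$ in $\pml$. Following the definition of $\Phi_O$, set $l=\cS_O(o,s)$, $l'=\cS_O(o',s)$, put $c=\cMH(l,u_1,\dots,u_\ell)$ and $c'=\cMH(l',u_1,\dots,u_\ell)$, and let $k$ (resp. $k'$) be the unique index with $o(0)+\cdots+o(k-1)<j\le o(0)+\cdots+o(k)$ (resp. with $o'$ in place of $o$), so that $\Phi_O(o,r)=o(k\rightarrow c)$ and $\Phi_O(o',r)=o'(k'\rightarrow c')$.

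The first step would be to record three monotonicity facts. Lemma~\ref{monselo} gives $l\ge l'$, and then Lemma~\ref{monmut} gives $c\ge c'$. Finally, since $o\preceq o'$ means $o(0)+\cdots+o(h)\le o'(0)+\cdots+o'(h)$ for every $h$, the smallest index at which the cumulative sum of $o$ reaches $j$ is at least the smallest such index for $o'$, that is $k\ge k'$. The second step is the computation: for any occupancy distribution and any indices $a,b$ one has, for $0\le h\le\ell$,
$$
\big(o(a\rightarrow b)\big)(0)+\cdots+\big(o(a\rightarrow b)\big)(h)\,=\,o(0)+\cdots+o(h)+1_{b\le h}-1_{a\le h}\,,
$$
the degenerate case $a=b$ (where $o(a\rightarrow b)=o$) being covered since the two indicators then cancel. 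Applying this to $o(k\rightarrow c)$ and $o'(k'\rightarrow c')$, the desired inequality $\Phi_O(o,r)\preceq\Phi_O(o',r)$ is equivalent to
$$
\Big(\sum_{g\le h}o'(g)-\sum_{g\le h}o(g)\Big)+\big(1_{c'\le h}-1_{c\le h}\big)-\big(1_{k'\le h}-1_{k\le h}\big)\,\ge\,0
$$
for all $h\le\ell$. Here the first bracket is nonnegative by $o\preceq o'$, the second is nonnegative because $c\ge c'$, and — using $k\ge k'$ — the third bracket equals $1$ when $k'\le h<k$ and $0$ otherwise.

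Hence the only range that needs a genuine argument is $k'\le h<k$, and this is the part I would flag as the main (though modest) obstacle: on it the naive bound $o\preceq o'$ alone does not suffice, and one must extract the missing unit of slack from the very definitions of $k$ and $k'$. Indeed, $h\ge k'$ forces $o'(0)+\cdots+o'(h)\ge j$, whereas $h\le k-1$ forces $o(0)+\cdots+o(h)<j$, so that $o'(0)+\cdots+o'(h)-\big(o(0)+\cdots+o(h)\big)\ge 1$, which exactly offsets the $-1$ coming from the last bracket; on the complementary ranges $h<k'$ and $h\ge k$ the last bracket vanishes and nonnegativity is immediate. This would establish $\Phi_O(o,r)\preceq\Phi_O(o',r)$. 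Worth noting is that, unlike the distance map $\Phi_H$, this argument works for every $\sigma\ge1$, since the entire $\sigma$-dependence has been absorbed into $\cS_O$, whose monotonicity already holds in the non-neutral case.
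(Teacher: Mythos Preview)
Your proof is correct and follows essentially the same route as the paper: you invoke Lemmas~\ref{monselo} and~\ref{monmut} to get $l\ge l'$, $c\ge c'$, deduce $k\ge k'$ from $o\preceq o'$, write the partial-sum identity for $o(k\rightarrow c)$, and isolate the only nontrivial range $k'\le h<k$ where the strict inequalities defining $k,k'$ supply the missing unit. The paper's proof proceeds by exactly the same case split and the same extraction of the extra~$1$ from the definitions of $k$ and $k'$.
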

\begin{proof}
Let
$r=(s,i,j,u_1,\dots, u_\ell)\in\cR$ and let 
$o,o' \in\pml$ be such that 
$o\preceq o'$.
Let us 
set
$l=\cS_O(o,s)$,
$l'=\cS_O(o',s)$
and
let $k,k'$ be the unique indices in $\zl$ satisfying
\begin{align*}
o(0)+\cdots+o(k-1)\,&<\,j\,\leq
o(0)+\cdots+o(k)\,,\cr
o'(0)+\cdots+o'(k'-1)\,&<\,j\,\leq
o'(0)+\cdots+o'(k')\,.
\end{align*}
Since
$o\preceq o'$, then $k\geq k'$. 
Let us set
$$b=\cMH(l,u_1,\dots,u_\ell)\,,\qquad
b'=\cMH(l',u_1,\dots,u_\ell)\,.$$
Since
 $l\geq l'$
by lemma~\ref{monselo}, then $b\geq b'$
by lemma~\ref{monmut}.
We must now compare
$$\Phi_O(o,r)\,=\,o(k\rightarrow b)\,,\qquad
\Phi_O(o',r)
\,=\,o'(k'\rightarrow b')\,.$$
Let $h\in\zl$. We have
$$
o(k\rightarrow b)(0)
+\cdots+
o(k\rightarrow b)(h)
\,=\,
o(0)+\cdots+o(h)-1_{k\leq h}+1_{b\leq h}
\,.
$$
Since $o\preceq o'$, then
$o(0)+\cdots+o(h)\leq
o'(0)+\cdots+o'(h)$.
Since $b\geq b'$, then
$1_{b\leq h}\leq 1_{b'\leq h}$.
The problem comes from the indicator function
$1_{k\leq h}$.
We consider several cases:
\smallskip

\par\noindent
$\bullet $ $k\leq h$. Then
\begin{align*}
o(0)+\cdots+o(h)-1_{k\leq h}+1_{b\leq h}
&\,\leq\,
o'(0)+\cdots+o'(h)-1+1_{b\leq h}\cr
&\,\leq\,
o'(0)+\cdots+o'(h)-1_{k'\leq h}+1_{b'\leq h}
\,.
\end{align*}
\noindent
$\bullet $ $k'\leq h<k$. The definition of $k,k'$ implies that
$$o(0)+\cdots+o(h)\,<\,j\,\leq
o'(0)+\cdots+o'(h)$$
whence
$$o(0)+\cdots+o(h)\,\leq
o'(0)+\cdots+o'(h)-1\,.$$ 
It follows that
$$o(0)+\cdots+o(h)
+1_{b\leq h}
\,\leq\,
o'(0)+\cdots+o'(h)-1_{k'\leq h}+1_{b'\leq h}
\,.
$$
\smallskip

\par\noindent
$\bullet $ $h<k'$. Then
$$o(0)+\cdots+o(h)
+1_{b\leq h}
\,\leq\,
o'(0)+\cdots+o'(h)+1_{b'\leq h}
\,.
$$
In each case, we have
$$
o(k\rightarrow b)(0)
+\cdots+
o(k\rightarrow b)(h)
\,\leq\,
o'(k'\rightarrow b')(0)
+\cdots+
o'(k'\rightarrow b')(h)\,.
$$
%
Therefore $\Phi_O(o,r)\,\preceq\,
\Phi_O(o',r)$
as requested.
\end{proof}

\noindent
Let us try to see the implications of the previous results for
the monotonicity of the model (see the appendix
for the definition
of a monotone process).
There is not much to do with the original Moran model, because its
state space is not partially ordered.
So we examine the distance process and the occupancy process.
\begin{corollary}\label{corneu}
In the neutral case $\sigma=1$, 
the distance process 
$(D_t)_{t\geq 0}$ is monotone. 
\end{corollary}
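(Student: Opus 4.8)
The plan is to read off the monotonicity directly from the explicit coupling constructed in section~\ref{coulum}, using Lemma~\ref{monphimor} as the sole nontrivial ingredient. Recall that a Markov chain on a finite partially ordered set is monotone when, for any two ordered initial states, one can realize on a common probability space two copies of the chain started from these states whose trajectories remain ordered at all times (equivalently, the transition matrix preserves the stochastic order; see the appendix). So it suffices to exhibit such an ordered coupling for the distance process when $\sigma=1$, and the coupling of section~\ref{coulum} is tailor‑made for this.

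Concretely, fix $d,e\in\zlm$ with $d\leq e$. Using the single i.i.d.\ sequence $(R_n)_{n\geq 1}$ with values in $\cR$, define two distance processes $(D^d_n)_{n\geq 0}$ and $(D^e_n)_{n\geq 0}$ by $D^d_0=d$, $D^e_0=e$ and $D^\bullet_n=\Phi_H(D^\bullet_{n-1},R_n)$ for $n\geq 1$. As noted after the definition of $\Phi_H$, each of these is a genuine Markov chain with the transition matrix $p_H$ of the distance process, started respectively at $d$ and at $e$. I would then prove by induction on $n$ that $D^d_n\leq D^e_n$ for all $n\geq 0$: the case $n=0$ is the hypothesis, and for the inductive step, assuming $D^d_{n-1}\leq D^e_{n-1}$, Lemma~\ref{monphimor} (which requires precisely $\sigma=1$) applied with $r=R_n$ gives $\Phi_H(D^d_{n-1},R_n)\leq\Phi_H(D^e_{n-1},R_n)$, i.e.\ $D^d_n\leq D^e_n$. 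Thus the two coupled trajectories stay ordered for all times, which is exactly the assertion that $(D_t)_{t\geq 0}$ is monotone in the neutral case. In continuous time the same argument applies verbatim, since the process moves only at the Poisson arrival times $\tau_n$ and executes there exactly the jumps prescribed by $\Phi_H$ on the inputs $R_n$.

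There is essentially no obstacle: all the work has been done in Lemma~\ref{monphimor} and in the coupling construction, and the corollary is merely their combination. The only point deserving a line of care is to match the formulation of monotonicity recalled in the appendix with the "ordered coupling from ordered initial data" statement just proved — either the appendix already states monotonicity in this coupling form, or one deduces from the existence of the coupling that $p_H$ preserves the stochastic order on $\zlm$, which is what one wants. I expect this to be a one‑sentence observation rather than a real difficulty.
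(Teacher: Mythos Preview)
Your proposal is correct and follows essentially the same approach as the paper: the paper's proof is a one-line remark that by Lemma~\ref{monphimor} the map $\Phi_H$ is non--decreasing when $\sigma=1$, hence the coupling of section~\ref{coulum} is monotone. You have simply spelled out the induction on $n$ and the matching with the appendix definition, which is exactly what the paper leaves implicit.
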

Indeed, by lemma~\ref{monphimor}, the map $\Phi_H$ is non--decreasing
in the neutral case $\sigma=1$, hence the coupling is monotone.
Unfortunately, we did not manage to reach the same conclusion in
the non neutral case.
The main point of lumping further the distance process is to get
a process which is monotone even in the non neutral case.
\begin{corollary}\label{corocc}
The occupancy process 
$(O_t)_{t\geq 0}$ is monotone.
\end{corollary}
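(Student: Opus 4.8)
The plan is to read the monotonicity straight off the coupling constructed in section~\ref{coulum}, in exactly the way corollary~\ref{corneu} was obtained for the distance process in the neutral case. Recall that the occupancy process is built from a single i.i.d.\ sequence $(R_n)_{n\geq 1}$ of random vectors with values in $\cR$ through the recursion $O_n=\Phi_O(O_{n-1},R_n)$, and that lemma~\ref{monophio} already establishes that the deterministic map $\Phi_O$ is non--decreasing in its occupancy argument with respect to $\preceq$. So the only thing left to do is to assemble these facts into a genuine order--preserving coupling of two copies of the process.

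First I would fix two initial occupancy distributions $o\preceq o'$ in $\pml$ and run two copies $(O_n)_{n\geq 0}$ and $(O'_n)_{n\geq 0}$, started from $O_0=o$ and $O'_0=o'$, driven by the \emph{same} sequence $(R_n)_{n\geq 1}$. By the routine check already made in section~\ref{coulum}, each of these is a Markov chain with the correct transition matrix, so this is a legitimate coupling of the two laws. Then I would argue by induction on $n$: the base case $O_0=o\preceq o'=O'_0$ holds by hypothesis, and if $O_{n-1}\preceq O'_{n-1}$, then applying lemma~\ref{monophio} with $r=R_n$ gives
$$O_n=\Phi_O(O_{n-1},R_n)\,\preceq\,\Phi_O(O'_{n-1},R_n)=O'_n\,.$$
Hence $O_n\preceq O'_n$ for every $n\geq 0$, which is precisely the assertion that $(O_t)_{t\geq 0}$ is monotone in the sense recalled in the appendix.

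I do not expect any real obstacle here: the hard work has already been absorbed into lemma~\ref{monophio}, whose proof dealt with the delicate case analysis on the position of the emptied class $k$ relative to the running index $h$. The only point requiring a little care is to match the conclusion to whichever formulation of monotonicity the appendix adopts --- existence of an order--preserving coupling of the two laws, or monotonicity of the transition kernel $p_O$ itself with respect to $\preceq$ --- but both are immediate from the displayed induction, since $p_O(o,\cdot)$ is the push--forward of the law of $R_1$ under $o\mapsto\Phi_O(o,\cdot)$, and a pointwise--monotone map pushes a fixed measure to a $\preceq$--monotone family of measures.
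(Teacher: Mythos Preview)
Your proof is correct and follows exactly the same approach as the paper: invoke lemma~\ref{monophio} to conclude that the coupling built in section~\ref{coulum} is monotone, from which monotonicity of the process follows by the standard induction you wrote out. The paper's own argument is a one--line appeal to lemma~\ref{monophio}; you have simply unpacked that appeal in full.
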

By lemma~\ref{monophio}, the coupling for the occupancy process is monotone.
\vfill\eject
\section{Stochastic bounds}\label{secbounds}
In this section, we take advantage of the monotonicity of the map $\Phi_O$
to compare the process 
$(O_t)_{t\geq 0}$ with simpler processes.
\subsection{Lower and upper processes}
We shall construct 
a lower
process
$(O^\ell_t)_{t\geq 0} \index{$O^\ell_t$}$
and an upper
process
$(O^1_t)_{t\geq 0} \index{$O^1_t$}$
satisfying
$$\forall t\geq 0\qquad
O^\ell_t\,\preceq\,
O_t
\,\preceq\,
O^1_t
\,.$$
Loosely speaking, the lower process
evolves as follows. As long as there is no master sequence present
in the population, the process
$(O^1_t)_{t\geq 0}$ evolves exactly as the initial process
$(O_t)_{t\geq 0}$.
When the first master sequence appears, all the other chromosomes
are set in the Hamming class~$1$, i.e., the process jumps to the
state $(1,m-1,0,\dots,0)$. As long as the master sequence is present,
the mutations on non master sequences leading to non master sequences are
suppressed, and any mutation of a master sequence leads to a chromosome
in the first Hamming class. 
The dynamics of the upper process is similar, except that the chromosomes
distinct from the master sequence are sent to the last Hamming 
class~$\ell$
instead of the first one.
We shall next construct precisely these dynamics.
We define two maps
$\pi_\ell, \pi_1:\pml\to\pml
\index{$\pi_\ell,\pi_1$}$
by setting 
\begin{align*}
\forall o\in\pml\qquad\
\pi_\ell(o)\,&=\,\big( o(0),0,\dots,0,m-o(0)\big)
\,,\cr
\pi_1(o)\,&=\,\big( o(0),m-o(0),0,\dots,0\big)
\,.
\end{align*}
Obviously,
$$
\forall o\in\pml\qquad
\pi_\ell(o)
\, \preceq\, 
o 
\, \preceq\, 
\pi_1(o)
\,.$$
We denote by $\cW$ the set of the occupancy distributions
containing the master sequence,
i.e.,
$$\cW\,=\,\big\{\,o\in\pml:o(0)\geq 1\,\big\}
\index{$\cW$}
$$
and 
by $\cN$ the set of the occupancy distributions
which do not contain the master sequence,
i.e.,
$$\cN
\,=\,\big\{\,o\in\pml:o(0)=0\,\big\}\,.
\index{$\cN$}
$$
Let $\Phi_O$ be the coupling map defined in
section~\ref{coulum}
We define 
a lower map $\Phi_O^{\ell}$ by setting,
for $o\in\pml$ and $r\in\cR$,
\begin{equation*}
\index{$\Phi_O^{\ell}$}
\Phi_O^{\ell}(o,r)\,=\,
\begin{cases}
\quad 
\Phi_{O}(o,r)
&\text{if }o\in\cN \,\,{\text{ and }}\,\, 
\Phi_O(o,r)\not\in\cW \\
\quad 
\pi_\ell\big(\Phi_{O}(o,r)\big)
&\text{if }o\in\cN \,\,{\text{ and }}\,\, 
\Phi_O(o,r)\in\cW \\
\quad 
\pi_\ell\big(\Phi_{O}(\pi_\ell(o),r)\big)
&\text{if }o\in\cW \\
\end{cases}
\end{equation*}
Similarly, we define 
an upper map $\Phi_O^{1}$ by setting,
for $o\in\pml$ and $r\in\cR$,
\begin{equation*}
\index{$\Phi_O^{1}$}
\Phi_O^{1}(o,r)\,=\,
\begin{cases}
\quad 
\Phi_{O}(o,r)
&\text{if }o\in\cN \,\,{\text{ and }}\,\, 
\Phi_O(o,r)\not\in\cW \\
\quad 
\pi_1\big(\Phi_{O}(o,r)\big)
&\text{if }o\in\cN \,\,{\text{ and }}\,\, 
\Phi_O(o,r)\in\cW \\
\quad 
\pi_1\big(\Phi_{O}(\pi_1(o),r)\big)
&\text{if }o\in\cW \\
\end{cases}
\end{equation*}
A direct application of
lemma~\ref{monophio} yields that the map 
$\Phi_O^{\ell}$ is below the map
$\Phi_{O}$ and the map
$\Phi_O^{1}$ is above the map
$\Phi_{O}$ 
in the following sense:
$$ \forall r\in\cR\quad
\forall o\in\pml\qquad
\Phi_O^{\ell}(o,r)
\,\preceq\,
\Phi_{O}(o,r)
\,\preceq\,
\Phi_O^{1}(o,r)
\,.
$$
We define a lower process 
$(O^\ell_t)_{t\geq 0}$ and
an upper process 
$(O^1_t)_{t\geq 0}$ 
with the help
of the i.i.d. sequence
$(R_n)_{n\geq 1}$ and
the maps $\Phi_O^{\ell}$,
$\Phi_O^{1}$ as follows.
Let $o\in\pml$ be the starting point of the process.
We set
$O^\ell(0)=O^1(0)=o$ and
$$\forall n\geq 1\qquad
O^\ell_n\,=\,\Phi_O^{\ell}\big(O^\ell_{n-1}, R_n\big)\,,
\qquad
O^1_n\,=\,\Phi_O^{1}\big(O^1_{n-1}, R_n\big)\,.
$$
\begin{proposition}\label{domiji}
Suppose that the processes
$(O^\ell_t)_{t\geq 0}$,
$(O_t)_{t\geq 0}$,
$(O^1_t)_{t\geq 0}$,
start from the same occupancy distribution $o$.
We have
$$\forall t\geq 0\qquad
O^\ell_t
\,\preceq\,
O_t
\,\preceq\,
O^1_t
\,.$$
\end{proposition}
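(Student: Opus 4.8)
The plan is to establish the pathwise sandwiching by induction on the discrete time index $n$, using that the three chains $(O^\ell_n)_{n\geq 0}$, $(O_n)_{n\geq 0}$, $(O^1_n)_{n\geq 0}$ are driven by the \emph{same} i.i.d.\ sequence $(R_n)_{n\geq 1}$ through the maps $\Phi_O^{\ell}$, $\Phi_O$, $\Phi_O^{1}$ respectively. Once the discrete time inequalities hold for every $n$, the continuous time version is immediate, since $O_t=O_n$ for $t\in[\tau_n,\tau_{n+1}[$ and likewise for $(O^\ell_t)$ and $(O^1_t)$; so I would only write out the discrete time argument.

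For the base case $n=0$ the three processes all equal $o$ by assumption, hence $O^\ell_0\preceq O_0\preceq O^1_0$ trivially. For the inductive step I would assume $O^\ell_n\preceq O_n\preceq O^1_n$ and set $r=R_{n+1}$. Two facts are available: the pointwise domination displayed just before the statement, namely $\Phi_O^{\ell}(o',r)\preceq\Phi_O(o',r)\preceq\Phi_O^{1}(o',r)$ for all $o'\in\pml$ and $r\in\cR$; and the monotonicity of $\Phi_O$ in its first argument, lemma~\ref{monophio}. Chaining them gives
$$O^\ell_{n+1}=\Phi_O^{\ell}(O^\ell_n,r)\preceq\Phi_O(O^\ell_n,r)\preceq\Phi_O(O_n,r)=O_{n+1},$$
where the first step is the pointwise bound at $o'=O^\ell_n$ and the second is lemma~\ref{monophio} applied to $O^\ell_n\preceq O_n$; and symmetrically
$$O_{n+1}=\Phi_O(O_n,r)\preceq\Phi_O(O^1_n,r)\preceq\Phi_O^{1}(O^1_n,r)=O^1_{n+1},$$
where now the first step is lemma~\ref{monophio} applied to $O_n\preceq O^1_n$ and the second is the pointwise bound at $o'=O^1_n$. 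This closes the induction and proves the proposition.

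There is essentially no remaining obstacle at this stage: the genuinely delicate facts have already been proved, namely the monotonicity of $\Phi_O$ for every $\sigma\geq 1$ (lemma~\ref{monophio}, with its case analysis on the indicator $1_{k\leq h}$) and the fact that the truncations $\pi_\ell$ and $\pi_1$ only lower, resp.\ raise, the value of $\Phi_O$, which yields the pointwise sandwich $\Phi_O^{\ell}\preceq\Phi_O\preceq\Phi_O^{1}$. The only point requiring a little care in writing the induction is to route each of the two comparisons so that \emph{only} the monotonicity of the middle map $\Phi_O$ is invoked — as above, applying the pointwise bound at the extremal process and the monotonicity of $\Phi_O$ in the appropriate order — since the monotonicity of $\Phi_O^{\ell}$ or $\Phi_O^{1}$ themselves has not been, and need not be, established.
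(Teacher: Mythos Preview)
Your proof is correct and follows essentially the same approach as the paper's own proof: induction on $n$, using the monotonicity of $\Phi_O$ (lemma~\ref{monophio}) together with the pointwise sandwich $\Phi_O^{\ell}\preceq\Phi_O\preceq\Phi_O^{1}$, chained in the appropriate order. Your closing observation that only the monotonicity of the middle map $\Phi_O$ is needed, not that of $\Phi_O^{\ell}$ or $\Phi_O^{1}$, is exactly the point the paper exploits as well.
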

\begin{proof}
We prove the inequality by induction over $n\in\mathbb N$.
For $n=0$ we have
$O(0)=
O^\ell(0)=
O^1(0)=
o$. Suppose that the inequality has been proved
at time $t=n\in\mathbb N$, so that
$O_n^\ell\,\preceq\,
O_n\,\preceq\,
O^1_n$.
By construction, we have
$$
O^\ell_{n+1}\,=\,\Phi_O^\ell\big(O^\ell_n, R_n\big)
\,,\quad
O_{n+1}\,=\,\Phi_O\big(O_n, R_n\big)
\,,\quad
O^1_{n+1}\,=\,\Phi_O^1\big(O^1_n, R_n\big)
\,.$$
We use the induction hypothesis and we apply 
lemma~\ref{monophio} to get
$$\Phi_{O}\big(O^\ell_n, R_n\big)\,\preceq
\,\Phi_{O}\big(O_n, R_n\big)
\,\preceq
\Phi_{O}\big(O^1_n, R_n\big)
\,.
$$
Yet
the map 
$\Phi_O^{\ell}$ is below the map 
$\Phi_O$ and
the map 
$\Phi_O^{1}$ is above the map 
$\Phi_O$, 
thus
$$\Phi^\ell_O\big(O^\ell_n, R_n\big)\,\preceq
\,\Phi_{O}\big(O^\ell_n, R_n\big)\,,\qquad
\Phi_O\big(O^1_n, R_n\big)\,\preceq
\,\Phi^1_{O}\big(O^1_n, R_n\big)
.$$
Putting together these inequalities we obtain that
$O^\ell_{n+1}\,\preceq\,
O_{n+1} 
\,\preceq\,
O^1_{n+1}$
and the induction step is completed.
\end{proof}
\subsection{Dynamics
of the bounding processes} 
\label{dynabound}
We study next the dynamics of the processes
$(O^\ell_t)_{t\geq 0}$ 
and
$(O^1_t)_{t\geq 0}$ 
in $\cW$. The computations
are the same for both processes. Throughout the section,
we let $\theta$ be either $1$ or~$\ell$ and we denote by
$(O^\theta_t)_{t\geq 0}$ the corresponding process.
For the process 
$(O^\theta_t)_{t\geq 0}$, the states
$$\cT^\theta\,=\,\big\{\,o\in\pml:
o(0)\geq 1\text{ and }o(0)+o(\theta)<m
\,\big\}\,
\index{$\cT^\theta$}
$$
are transient, 
while the populations in 
$\smash{\cN\cup\big(\cW\setminus\cT^\theta\big)}$ 
form a recurrent class. 
Let us look at the transition mechanism of the process restricted to
$\smash{\cW\setminus\cT^\theta}$.
Since
$$\smash{\cW\setminus\cT^\theta}
\,=\,\big\{\,
o\in\pml:
o(0)\geq 1\text{ and }o(0)+o(\theta)=m
\,\big\}\,,$$
we see that a state of 
$\smash{\cW\setminus\cT^\theta}$
is completely determined by the first occupancy number,
or equivalently the number of copies of the master sequence present in the 
population. 
Let $\oti$ be the occupancy distribution having one master sequence
and $m-1$ chromosomes in the Hamming class $\theta$:
$$\forall l\in\zl\qquad
\oti(l)\,=\,\begin{cases}
1 &\text{if $l=0$} \\
m-1&\text{if $l=\theta$} \\
0 &\text{otherwise} \\
\end{cases}\,.
\index{$\oti$}
$$
The 
process
$(O^\theta_t)_{t\geq 0}$ always enters the set $\cW\setminus\cT^\theta$ at
$\oti$.
The only possible transitions for the first occupancy number of the
process 
$(O^\theta_t)_{t\geq 0}$
starting from a point in 
$\smash{\cW\setminus\cT^\theta}$ are
\begin{align*}
o(0)\quad &\longrightarrow\quad
o(0)-1
\,,\qquad 1\leq o(0)\leq m\,,\cr
o(0)\quad &\longrightarrow\quad
o(0)+1
\,,\qquad 0\leq o(0)\leq m-1\,.
\end{align*}
Let $\oto$
be the occupancy distribution having 
$m$ chromosomes in the Hamming class $\theta$:
$$\forall l\in\zl\qquad
\qquad
\oto(l)\,=\,\begin{cases}
m&\text{if $l=\theta$} \\
0 &\text{otherwise} \\
\end{cases}\,.
\index{$\oto$}
$$
The 
process
$(O^\theta_t)_{t\geq 0}$ 
always exits
$\cW\setminus\cT^\theta$ at
$\oto$.
From the previous observations,
we conclude that,
whenever
$(O^\theta_t)_{t\geq 0}$ starts in
$\smash{\cW\setminus\cT^\theta}$,
the dynamics of
$(O^\theta_t(0))_{t\geq 0}$ 
is the one of a standard birth and death process,
until the time of exit from
$\cW\setminus \cT^\theta$. 
We denote 
by $(Z^\theta_t)_{t\geq 0}\index{$Z^\theta_t$}$ 
a birth and death process on $\zm$
starting at $Z^\theta_0=1$ with the following 
transition probabilities:
\smallskip

\par\noindent
$\bullet $ Transitions to the left. For
$i\in\um$,
\begin{multline*}
P\big(Z^\theta_{t+1}=i-1\,|\,
Z^\theta_t=i\big)\,=\,
 P\big(O^\theta_{t+1}(0)= i-1
\,|\,
O^\theta_t(0)= i\big)
\cr
\,=\,
\frac{\displaystyle \sigma i^2 
\big(1-M_H(0,0) \big)+
i(m-i)
\big(1-M_H(\theta,0) \big)
}{\displaystyle m(\sigma i + m -i)}\,.
\end{multline*}
\par\noindent
$\bullet $ Transitions to the right. For
$i\in\{\,0,\dots,m-1\,\}$, 
\begin{multline*}
P\big(Z^\theta_{t+1}=i+1\,|\,
Z^\theta_t=i\big)\,=\,
P\big(O^\theta_{t+1}(0)= i+1
\,|\,
O^\theta_t(0)= i\big)
\cr
\,=\,
\frac{\displaystyle\sigma i(m-i) 
M_H(0,0)+ 
(m-i)^2
M_H(\theta,0) }
{\displaystyle m(\sigma i + m -i)}\,.
\end{multline*}
\subsection{A renewal argument}
Let
$(X_t)_{t\geq 0}$ be a discrete time Markov chain with
values in a finite state space $\cE$ which is irreducible and aperiodic.
Let $\mu$ be the invariant probability 
measure of the Markov chain
$(X_t)_{t\geq 0}$.
\begin{proposition}\label{renewal}
Let $\cW$ be a subset of $\cE$ and let $e$ be a point of $\cE\setminus\cW$.
Let $f$ be a map from $\cE$ to $\R$ which vanishes on $\cE\setminus \cW$.
Let
$$\tau^* \,=\,\inf\,\big\{\,t\geq 0: 
X_t\in\cW
\,\big\}\,,\qquad
\tau \,=\,\inf\,\big\{\,t\geq \tau^*: 
X_t=e
\,\big\}\,.
$$
We have
$$
\int_\cE f(x)\,d\mu(x)\,=\,
\frac{1}{E(\tau\,|\,X_0=e)}\,
E\bigg(\int_{\tau^*}^{\tau}
f(X_s)\,ds\,\Big|\,X_0=e\bigg)
\,.$$
\end{proposition}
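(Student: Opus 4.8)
The plan is to view the right--hand side as a renewal--reward ratio over the regeneration cycles associated with the stopping time $\tau$, and to identify it with $\int_\cE f\,d\mu$ through the ergodic theorem for finite Markov chains. We may assume $\cW\neq\emptyset$, since otherwise $f$ is identically zero and both sides vanish. Because $\cE$ is finite and $(X_t)_{t\geq 0}$ is irreducible, $\tau^*$ is the hitting time of the set $\cW$ and, by the strong Markov property at $\tau^*$, $\tau-\tau^*$ is the hitting time of $e$ for the chain started at $X_{\tau^*}$; both have finite expectation, so $\tau<\infty$ almost surely and $E\big(\tau\,|\,X_0=e\big)<\infty$. The first observation is that $f$ vanishes on $\cE\setminus\cW$ while $X_s\notin\cW$ for every $s<\tau^*$, hence $f(X_s)=0$ for all $s<\tau^*$ and
$$
\int_{\tau^*}^{\tau}f(X_s)\,ds\,=\,\sum_{0\leq s<\tau}f(X_s)\,;
$$
it therefore suffices to prove that $\int_\cE f\,d\mu$ equals $E\big(\sum_{0\leq s<\tau}f(X_s)\,\big|\,X_0=e\big)\big/E\big(\tau\,|\,X_0=e\big)$.

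The key structural point is that $\tau$ is a stopping time with $X_\tau=e=X_0$ on the event $\{X_0=e\}$. Iterating the construction --- set $\tau_0=0$ and, given $\tau_k$, let $\tau_{k+1}$ be built from the trajectory $(X_{\tau_k+t})_{t\geq 0}$ exactly as $\tau$ was built from $(X_t)_{t\geq 0}$ --- the strong Markov property shows that the successive cycles $\big(X_t:\tau_k\leq t<\tau_{k+1}\big)$, $k\geq 0$, are independent and identically distributed under $P\big(\,\cdot\,|\,X_0=e\big)$, with common cycle length of mean $E\big(\tau\,|\,X_0=e\big)<\infty$ and common accumulated reward $\sum_{\tau_k\leq s<\tau_{k+1}}f(X_s)$ of mean $E\big(\sum_{0\leq s<\tau}f(X_s)\,\big|\,X_0=e\big)$, the latter being finite since $f$ is bounded. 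The renewal--reward theorem then gives, $P\big(\,\cdot\,|\,X_0=e\big)$--almost surely,
$$
\frac1n\sum_{t=0}^{n-1}f(X_t)\ \stackrel{n\to\infty}{\longrightarrow}\ \frac{E\big(\sum_{0\leq s<\tau}f(X_s)\,\big|\,X_0=e\big)}{E\big(\tau\,|\,X_0=e\big)}\,,
$$
while the ergodic theorem for an irreducible finite Markov chain gives $\frac1n\sum_{t=0}^{n-1}f(X_t)\to\int_\cE f\,d\mu$ almost surely, whatever the initial state. Comparing the two limits yields the claimed identity.

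Since $\tau$ is not the first return time of $(X_t)_{t\geq 0}$ to $e$ --- the trajectory may pass through $e$ several times before reaching $\cW$ --- the one point that genuinely needs care is the verification that the $\tau_k$--cycles are truly i.i.d.; I expect this to be the main (and essentially only) real obstacle. If a fully explicit argument is preferred, I would instead decompose $[0,\tau)$ along the successive return times of $(X_t)_{t\geq 0}$ to $e$: these excursions from $e$ are i.i.d., and $\tau$ is the endpoint of the first one that meets $\cW$, so, writing $p\in\,]0,1]$ for the probability that an excursion from $e$ meets $\cW$ and $T_e$ for the first return time to $e$, the number of excursions completed by time $\tau$ is geometric with parameter $p$. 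Wald's identity then gives $E\big(\tau\,|\,X_0=e\big)=\tfrac1p\,E\big(T_e\,|\,X_0=e\big)$, and, using that an excursion not meeting $\cW$ carries zero reward because $f$ vanishes outside $\cW$, $E\big(\sum_{0\leq s<\tau}f(X_s)\,\big|\,X_0=e\big)=\tfrac1p\,E\big(\sum_{0\leq s<T_e}f(X_s)\,\big|\,X_0=e\big)$; the factors $1/p$ cancel in the ratio, and one concludes via the classical formulas $\mu(\{e\})=1/E\big(T_e\,|\,X_0=e\big)$ and $\int_\cE f\,d\mu=\mu(\{e\})\,E\big(\sum_{0\leq s<T_e}f(X_s)\,\big|\,X_0=e\big)$.
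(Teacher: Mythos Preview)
Your main argument is correct and is essentially the paper's own proof: iterate $\tau$ to build regeneration times $\tau_k$, use the strong Markov property at each $\tau_k$ (where $X_{\tau_k}=e$) to get i.i.d.\ cycles, then combine the law of large numbers for the cycle lengths and cycle rewards with the ergodic theorem. The paper spells out the renewal--reward step by hand (introducing $K(t)=\max\{k:\tau_k\leq t\}$ and passing to the limit in $\tfrac1t\int_0^t f(X_s)\,ds$), whereas you invoke the renewal--reward theorem by name; the substance is identical, and your preliminary reduction $\int_{\tau^*}^{\tau}f(X_s)\,ds=\sum_{0\leq s<\tau}f(X_s)$ is exactly the observation the paper uses implicitly when decomposing the time average along cycles.

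Your second, optional route via excursions from $e$ and Wald's identity is a genuine alternative not in the paper: it bypasses the ergodic theorem entirely by reducing to the classical cycle formula $\int f\,d\mu=\mu(\{e\})\,E_e\!\big(\sum_{0\leq s<T_e}f(X_s)\big)$ and Kac's relation $\mu(\{e\})=1/E_e(T_e)$. This is slightly more elementary (no time-average limits) and makes transparent why the factor $1/p$ cancels; the paper's approach has the advantage of being self-contained and not relying on Kac's formula as a black box.
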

\begin{proof}
We define two sequences 
$(\tau^*_k)_{k\geq 1}$,
$(\tau_k)_{k\geq 0}$
of stopping times by 
setting $\tau_0=0$ and
\begin{align*}
&\tau^*_1 \,=\,\inf\,\big\{\,t\geq 0: 
X_t\in\cW
\,\big\}\,,\quad
&&\tau_1 \,=\,\inf\,\big\{\,t\geq \tau^*_1: 
X_t=e
\,\big\}\,,
\cr
&\,\,\,\vdots
&&
\,\,\,\vdots
\cr
&\tau^*_k \,=\,\inf\,\big\{\,t\geq \tau_{k-1}: 
X_t\in\cW
\,\big\}\,,\quad
&&\tau_k \,=\,\inf\,\big\{\,t\geq \tau^*_k: 
X_t=e
\,\big\}\,,
\cr
&\,\,\,\vdots
&&
\,\,\,\vdots
\cr
\end{align*}
Our first goal is to evaluate the asymptotic behavior of $\tau_k$ as $k$ goes to
$\infty$.
For any $k\geq 1$,
by the strong Markov property, the trajectory 
$(X_t)_{t\geq \tau_k}$ 
of the process after time $\tau_k$
is independent from the trajectory 
$(X_t)_{t\leq \tau_k}$
of
the process until time $\tau_k$,
and its law is the same as the law
of the whole process 
$(X_t)_{t\geq 0}$ starting from $e$.
As a consequence,
the successive excursions 
$$\big(X_t,\, \tau_k\leq t\leq \tau_{k+1}\big)\,,\qquad k\geq 1\,,$$
are independent identically distributed. 
In particular, the sequence
$$\big(\tau_{k+1}-\tau_k\big)_{k\geq 1}$$
is a sequence of i.i.d. random variables, having the same law
as the random time $\tau_1$ whenever the
process 
$(X_t)_{t\geq 0}$ starts from $e$.
For $k\geq 1$, we decompose $\tau_k$ as the sum
$$\tau_k\,=\,\tau_1+\sum_{h=1}^{k-1}\big(\tau_{h+1}-\tau_{h})\,.$$
We denote by $E_e(\cdot)$ the expectation for 
the process $(X_t)_{t\geq 0}$ starting from~$e$.
Since the state space $\cE$ is finite, then the random time $\tau_1$
is finite with probability one, and it is also integrable.
Applying the classical law of large numbers, we get
$$\lim_{k\to\infty}\,\frac{\tau_k}{k}\,=\,E_e(\tau_1)
\qquad 
\text{with probability $1$}.$$
Whenever the process $(X_t)_{t\geq 0}$ starts from $e$,
the random times $\tau^*_1$, $\tau_1$ satisfy $\tau^*_1\geq 1$, $\tau_1\geq 2$,
therefore the expected mean $E_e(\tau_1)$ is strictly positive and we conclude that
$$\lim_{k\to\infty}\,{\tau_k}\,=\,+\infty
\qquad 
\text{with probability $1$}.$$
We define next
$$\forall t\geq 0\qquad
K(t)\,=\,\max\,\big\{\,k\geq 0:\tau_k\leq t\,\big\}\,.$$
From the previous discussion, we see that, with probability one,
$K(t)$ is finite for any $t\geq 0$. From the very definition of $K(t)$, 
we have
$$\forall t\geq 0\qquad
\tau_{K(t)}\,\leq\,t\,<\,
\tau_{K(t)+1}\,,$$
and since $\tau_k$ goes to $\infty$ with $k$, then
$$\lim_{t\to\infty}\,{K(t)}\,=\,+\infty
\qquad 
\text{with probability $1$}.$$
We rewrite the previous double inequality as
$$
\frac{\tau_{K(t)}}{K(t)}
\,\leq\,
\frac{t}{K(t)}
\,<\,
\frac{\tau_{K(t)+1}}{K(t)+1}\times
\frac{{K(t)+1}}{K(t)}
\,.$$
Sending $t$ to $\infty$, we conclude that
$$\lim_{t\to\infty}\,
\frac{{K(t)}}{t}
\,=\,
\frac{1}{
E_e(\tau_1)}
\qquad 
\text{with probability $1$}.$$
We suppose that
the process $(X_t)_{t\geq 0}$ starts from $e$.
Let $f$ be a map from $\cE$ to $\R$ which vanishes on $\cE\setminus \cW$.
By the ergodic theorem~\ref{ergodic}, we have
$$
\lim_{t\to\infty} 
E_e\big(
f(X_t)\big)
\,=\,
\lim_{t\to\infty} \,
\frac{1}{t}
\int_0^tf(X_s)\,ds\,.$$
We decompose the last integral as follows:
$$
\int_0^tf(X_s)\,ds\,=\,
\sum_{k=1}^{K(t)}\int_{\tau^*_k}^{\tau_k}
f(X_s)\,ds\,+\,
\int_{
\tau^*_{K(t)+1}\wedge t
}^{t}
f(X_s)\,ds
\,,$$
where
$\tau^*_{K(t)+1}\wedge t$ stands for
$\min(\tau^*_{K(t)+1}, t)$.
For $k\geq 1$,
the integral
$$N_k\,=\,\int_{\tau^*_k}^{\tau_k}
f(X_s)\,ds$$
is a deterministic function of the excursion
$\big(X_t,\, \tau_{k-1}\leq t\leq \tau_{k}\big)$,
hence the random variables
$(N_{k},k\geq 1)$
are independent identically distributed.
With probability one,
$K(t)$ goes to $\infty$ as $t$ goes to $\infty$, thus
by the classical law of large numbers, we have
$$\lim_{t\to\infty}\,
\frac{1}{K(t)}
\sum_{k=1}^{K(t)}
N_k\,=\,
E_e(N_1)
\qquad 
\text{with probability $1$}.$$
Writing
$$
\frac{1}{t}
\int_0^tf(X_s)\,ds
\,=\,
\frac{{K(t)}}{t}\times
\frac{1}{K(t)}
\sum_{k=1}^{K(t)}
N_k\,+\,
\frac{1}{t}
\int_{
\tau^*_{K(t)+1}\wedge t
}^{t}
f(X_s)\,ds
\,,$$
and letting $t$ go to $\infty$,
we conclude
$$\lim_{t\to\infty}\,
\frac{1}{t}
\int_0^tf(X_s)\,ds
\,=\,
\frac{E_e(N_1)}{E_e(\tau_1)}
\qquad 
\text{with probability $1$}\,.$$
This yields the desired formula.
\end{proof}

\noindent

\subsection{Bounds on $\nu$}
\label{bounds}
We denote by 
$\mu_O^\ell \index{$\mu^\ell_O$}$, 
$\mu_O \index{$\mu_O$}$, 
$\mu^1_O \index{$\mu^1_O$}$ 
the invariant probability measures of the processes
$(O^\ell_t)_{t\geq 0}$, 
$(O_t)_{t\geq 0}$, 
$(O^1_t)_{t\geq 0}$.
From section~\ref{inmeas}, 
the probability $\nu$
is the image of $\mu_O$
through the map
$$o\in\pml\mapsto
\frac{1}{m}
o(0)\in[0,1]\,.$$
Thus,
for any function $f:[0,1]\to\R$,
$$\int_{[0,1]}f\,d\nu
\,=\,
\int_{\textstyle\pml}
f\Big(\frac{
o(0)
}{m}
\Big)\,d\mu_O(o)
\,=\,
\lim_{t\to\infty} 
E\Big(
f\Big(
\frac{1}{m}
O_t(0)\Big)
\Big)
\,.
$$
We fix now a non--decreasing function 
$f:[0,1]\to\R$ such that $f(0)=0$.
Proposition~\ref{domiji} yields the inequalities
$$\forall t\geq 0\qquad
f\Big(
\frac{1}{m}
O^\ell_t(0)\Big)
\,\leq\,
f\Big(
\frac{1}{m}
O_t(0)\Big)
\,\leq\,
f\Big(
\frac{1}{m}
O^1_t(0)\Big)
\,.
$$
Taking the expectation and sending $t$ to $\infty$, we get
$$
\int_{\textstyle\pml}
f\Big(
\frac{
o(0)
}{m}
\Big)
\,d\mu_O^\ell(o)
\,\leq\,
\int_{[0,1]}f\,d\nu
\,\leq\,
\int_{\textstyle\pml}
f\Big( \frac{ o(0) }{m} \Big)
\,d\mu_O^1(o)
\,.
$$
We seek next estimates on the above integrals.
The strategy is the same for the lower and the upper integral.
Thus we fix $\theta$ to be either $1$ or~$\ell$ and 
we study 
the invariant probability measure $\mu^\theta_O$.
For the
process 
$(O^\theta_t)_{t\geq 0}$, 
the states of $\cT^\theta$ are 
transient, 
while the populations in 
$\smash{\cN\cup\big(\cW\setminus\cT^\theta\big)}$ 
form a recurrent class. 
We apply the renewal result of proposition~\ref{renewal}
to the process
$(O^\theta_t)_{t\geq 0}$ restricted to 
$\smash{\cN\cup\big(\cW\setminus\cT^\theta\big)}$,
the set $\cW\setminus\cT^\theta$, 
the occupancy distribution $\oto$
and
the function $o\mapsto f(o(0)/m)$.
Setting
\begin{align*}
\index{$\tau^*$}
\tau^* \,&=\,\inf\,\big\{\,t\geq 0: 
O^\theta_t\in\cW
\,\big\}\,,\cr
\tau \,&=\,\inf\,\big\{\,t\geq \tau^*: 
O^\theta_t=\oto
\,\big\}\,,
\end{align*}
we have
$$
\int_{\textstyle\pml}
f\Big( \frac{ o(0) }{m} \Big)
\,d\mu_O^\theta(o)
\,=\,
\frac{
\displaystyle
E\bigg(\int_{\tau^*}^{\tau}
f\Big( \frac{ 
O^\theta_s(0)
 }{m} \Big)
\,ds\,\Big|\,
O^\theta_0=\oto
\bigg)
}{
\displaystyle
E\big(\tau\,|\,
O^\theta_0=\oto
\big)}\,
\,.
$$
Yet, whenever the process
$(O^\theta_t)_{t\geq 0}$ is in $\cW\setminus\cT^\theta$,
the dynamics of
$(O^\theta_t(0))_{t\geq 0}$ 
is the same as the birth and death process 
$(Z^\theta_t)_{t\geq 0}$ defined at the end of section~\ref{dynabound}.
We suppose that
$(Z^\theta_t)_{t\geq 0}$ starts from $Z_0^\theta=1$.
Let $\tau_0$ be the hitting time of $0$, defined by
$$\tau_0\,=\,\inf\,\big\{\,n\geq 0: Z^\theta_n=0\,\big\}\,.
\index{$\tau_0$}$$
The process
$(O^\theta_t)_{t\geq 0}$ always enters 
$\cW$ at $\oti$ and
it always exits
$\cW\setminus\cT^\theta$
at $\oto$. In particular $\tau$ coincides with the exit time of
$\cW\setminus\cT^\theta$ after $\tau^*$.
From the previous elements, we see that
$\big(O^\theta_t(0),\, {\tau^*}\leq t  \leq {\tau}\big)$
has the same
law as
$\big(Z^\theta_t\,, 0\leq t\leq \tau_0\big)$, whence
$$
E\bigg(\int_{\tau^*}^{\tau}
f\Big(\frac{O^\theta_s(0)}{m}\Big)\,ds\,\Big|\,
O^\theta_0=\oto
\bigg)
\,=\,
E\bigg(\int_{0}^{\tau_0}
f\Big(\frac{Z^\theta_s}{m}\Big)\,ds\,\Big|\,
Z^\theta_0= 1\bigg)\,.
$$
Moreover, using the Markov property, we have
$$
E\Big(\tau-{\tau^*}
\,\big|\,
O^\theta_0=\oto
\Big)
\,=\,
E\Big(\tau
\,\big|\,
O^\theta_0=\oti
\Big)
\,=\,
E\Big({\tau_0}
\,\big|\,
Z^\theta_0= 1\Big)\,.
$$
Reporting back in the formula for the invariant probability measure 
$\mu^\theta_O$, we get
$$ \int_{\textstyle\pml}
f\Big( \frac{ o(0) }{m} \Big)
\,d\mu_O^\theta(o)
\,=\,
\frac{
\displaystyle
E\bigg(\int_{0}^{\tau_0}
f\Big(\frac{Z^\theta_s}{m}\Big)\,ds\,\Big|\,
Z^\theta_0= 1\bigg)
}{
\displaystyle
E\big(\tau^*\,|\,
O^\theta_0=\oto
\big)+
E\big({\tau_0}
\,\big|\,
Z^\theta_0= 1\big)
}\,
\,.
$$
In order to reinterpret this formula, 
we apply the renewal result stated in proposition~\ref{renewal}
to the process
$(Z^\theta_t)_{t\geq 0}$,
the set $\um$, the point $0$ and
the map~$f(\cdot/m)$.
Setting
$$\tau_1 \,=\,\inf\,\big\{\,t\geq 0: 
Z^\theta_t=1
\,\big\}\,,
$$
and denoting by $\nu^\theta$ the invariant probability measure of
$(Z^\theta_t)_{t\geq 0}$,
we have, with the help of the Markov property, 
$$\sum_{i=1}^m 
f\Big( \frac{ i }{m} \Big)\,
\nu^\theta(i)
\,=\,
\frac{
\displaystyle
E\bigg(\int_{0}^{\tau_0}
f\Big(\frac{Z^\theta_s}{m}\Big)\,ds\,\Big|\,
Z^\theta_0= 1\bigg)
}{
\displaystyle
E\big(\tau_1\,|\,
Z^\theta_0= 0
\big)+
E\big({\tau_0}
\,\big|\,
Z^\theta_0= 1\big)
}
\,.
$$
Yet
$$E\big(\tau_1\,|\,
Z^\theta_0= 0
\big)\,=\,
\frac{1}{
P\big(Z^\theta_{1}=1\,|\,
Z^\theta_0=0\big)
}
\,=\,
\,\frac{1 }
{M_H(\theta,0) }
\,.
$$
We conclude finally that
\begin{multline*}
\int_{\textstyle\pml}
f\Big( \frac{ o(0) }{m} \Big)
\,d\mu_O^\theta(o)
\,=\,
\cr
\frac{
\displaystyle
\,\frac{1 }
{ M_H(\theta,0) }
+
E\big({\tau_0}
\,\big|\,
Z^\theta_0= 1\big)
}{
\displaystyle
E\big(\tau^*\,|\,
O^\theta_0=\oto
\big)+
E\big({\tau_0}
\,\big|\,
Z^\theta_0= 1\big)
}
\sum_{i=1}^m 
f\Big( \frac{ i }{m} \Big)
\nu^\theta(i)
\,.
\end{multline*}
To estimate the integral, we must estimate each term appearing on the 
right--hand side. 
In section~\ref{bide},
we deal with the terms involving the 
birth and death processes. In section~\ref{disc}, 
we deal with the discovery time $\tau^*$.
\vfill\eject
\section{Birth and death processes}\label{bide}
We first give explicit formulas for a birth and death Markov chain
that are well adapted to our situation. The formula for the invariant probability
measure can be found in classical books, for instance \cite{SH}.
\subsection{General formulas}
We consider a birth and death Markov chain
$(Z_n)_{n\geq 0}$ on the finite set $\zm$ with transition
probabilities given by
\begin{align*}
P(Z_{n+1}=i+1\,|\,Z_n=i)
\,&=\,
\delta_i\,,\quad 0\leq i\leq m-1\,,\cr
P(Z_{n+1}=i-1\,|\,Z_n=i)
\,&=\,
\gamma_i\,,\quad 1\leq i\leq m\,,
\index{$\delta_i$}
\index{$\gamma_i$}
\end{align*}
for any $n\geq 0$.
We define 
$$
\pi(0)=1\,,\qquad
\pi(i)\,=\,
\frac{{\delta_1\cdots\delta_{i}}}{{\gamma_{1}\cdots\gamma_{i}}}
\,,\qquad
1\leq i\leq m-1\,.
\index{$\pi(i)$}
$$
Let $\tau_0$ be the hitting time of $0$, defined by
$$\tau_0\,=\,\inf\,\big\{\,n\geq 0: Z_n=0\,\big\}\,.$$
We have the following explicit formula for the expected value of $\tau_0$:
$$
E(\tau_0\,|\,Z_0=1)\,=\,
\sum_{i=1}^m
\frac{1}
{\gamma_{i}}
\pi(i-1)
\,.
$$
Let
$\nu$ be the invariant probability measure of
$(Z_n)_{n\geq 0}$. 
We have the following explicit formula 
for $\nu$:
$$\displaylines{
\nu(0)\,=\,
\frac{
1
}
{\displaystyle 1+
\delta_0\,E(\tau_0\,|\,Z_0=1)
}
\,,\cr
\forall i\in\um\qquad
\nu(i)\,=\,
\frac{
\displaystyle
\frac{\delta_0}{\gamma_i}\pi(i-1)
}
{\displaystyle 1+
\delta_0\,E(\tau_0\,|\,Z_0=1)
}
\,.
}$$
\subsection{The case of
$(Z^\theta_t)_{t\geq 0}$}
We will now apply these formula to the birth and death chains
$(Z^\theta_t)_{t\geq 0}$ introduced at the end of section~\ref{dynabound}.
For these two processes, we have the following explicit formula for the
transition probabilities:
\begin{align*}
\gamma_i
\,=\,
\frac{\displaystyle \sigma i^2 
\big(1-M_H(0,0) \big)+
i(m-i)
\big(1-M_H(\theta,0) \big)
}{\displaystyle m(\sigma i + m -i)}
\,,
&\quad 1\leq i\leq m\,,\cr
\delta_i
\,=\,
\frac{\displaystyle\sigma i(m-i) 
M_H(0,0)+ 
(m-i)^2
M_H(\theta,0) }
{\displaystyle m(\sigma i + m -i)}
\,,
&\quad 0\leq i\leq m-1\,.
\index{$\delta_i$}
\index{$\gamma_i$}
\end{align*}
The transition probabilities 
$\delta_i,\gamma_i$ depend on the parameters 
$\sigma,\ell,m,q$ as well as $\theta$.
We seek estimates of 
the expected value of $\tau_0$ and of the asymptotic behavior of $\nu$ 
in the regime where
$$m,\ell\to +\infty\,,\quad q\to 0\,.$$
For this reason, we choose the above specific forms
of the formulas, which are well suited for our purposes. 
Since the results are the same for $\theta=1$ and $\theta=\ell$,
we drop the superscript $\theta$ from the notation, and we write
simply
$Z_n$, $\nu$ instead of
$Z^\theta_n$, $\nu^\theta$.
Our first goal is to
estimate the products $\pi(i)$.
We start by studying the ratio
${\delta_i}/{\gamma_i}$.
We have
$$\forall i\in\umu
\qquad
\frac{\delta_i}{\gamma_i}\,=\,
\phi\Big(M_H(0,0),M_H(\theta,0),\frac{i}{m}\Big)\,,
$$
where
$\phi:\,
]0,1]\times[0,1[\times]0,1[\to ]0,+\infty[$ is the function defined by
$$\phi(\beta,\ve,\rho)\,=\,
\index{$\phi(\beta,\ve,\rho)$}
\frac{\displaystyle (1-\rho)\big(\sigma\beta \rho+(1-\rho)\ve\big)}
{\displaystyle\rho\big(\sigma(1-\beta)\rho+(1-\rho)(1-\ve)\big)}\,.$$
What matters for the behavior of the products $\pi(i)$ is
whether the values of $\phi$ are larger or smaller than~$1$.
The equation $\phi(\beta,\ve,\rho)=1$ can be rewritten as
$$(\sigma-1)\rho^2+(1-\sigma\beta+\ve)\rho-\ve\,=\,0\,.$$
This equation admits one positive root, given by
$$\rho(\beta,\ve)\,=\,
\index{$\rho(\beta,\ve)$}
\frac{1}{2(\sigma-1)}\Big(
\sigma\beta-1-\ve+
\sqrt{(\sigma\beta-1-\ve)^2+4\ve(\sigma-1)}\Big)\,.$$
Therefore we have
\begin{align*}
\phi(\beta,\ve,\rho)
>1 & \quad\text{ if }\quad \rho<\rho(\beta,\ve)\,,\cr
\phi(\beta,\ve,\rho)
<1 & \quad\text{ if }\quad \rho>\rho(\beta,\ve)\,.
\end{align*}
This readily implies that
\begin{align*}
1\leq i\leq j\leq\lfloor \rho(\beta,\ve) m\rfloor\qquad
&\Longrightarrow\qquad \pi(i)\,\leq \pi(j)\,,\cr
\lfloor \rho(\beta,\ve) m\rfloor
\leq i\leq j\leq
m
\qquad
&\Longrightarrow\qquad \pi(i)\,\geq \pi(j)\,,\cr
\pi\big(\lfloor \rho(\beta,\ve) m\rfloor\big)\,\geq\,
&\pi\big(\lfloor \rho(\beta,\ve) m\rfloor+1\big)
\,.
\end{align*}
It follows 
that the product $\pi(i)$ is maximal for 
$i= \lfloor \rho(\beta,\ve) m\rfloor$:
$$\max_{1\leq i\leq m}\pi(i)\,=\,
\pi\big(\lfloor \rho(\beta,\ve) m\rfloor\big)\,.$$
We notice in addition that $\phi(\beta,\ve,\rho)$ is
continuous and non--decreasing with respect to the first
two variables $\beta,\ve$.
In the next two sections, we compute the relevant asymptotic
estimates on the birth 
and death process.
Lemma~\ref{mhlump} yields
$$\displaylines{
M_H(0,0)\,=\,\Big(1-p\big(1-\frac{1}{\kappa}\big)\Big)^\ell\,,
\cr
M_H(1,0)\,=\,\Big(1-p\big(1-\frac{1}{\kappa}\big)\Big)^{\ell-1}\frac{p}{\kappa}\,,
\quad
M_H(\ell,0)\,=\,\Big(\frac{p}{\kappa}\Big)^{\ell}\,.}$$
As in theorem~\ref{mainth},
we suppose that 
$$\ell\to +\infty\,,\qquad m\to +\infty\,,\qquad q\to 0\,,$$
in such a way that
$${\ell q} \to a\in ]0,+\infty[\,.$$
In this regime, we have
$$\displaylines{
\lim_{
\genfrac{}{}{0pt}{1}{\ell\to\infty,\,
q\to 0}
{{\ell q} \to a}
}
\,
M_H(0,0)\,=\,\exp(-a)\,,\cr
\lim_{
\genfrac{}{}{0pt}{1}{\ell\to\infty,\,
q\to 0}
{{\ell q} \to a}
}
\,
M_H(1,0)\,=\,
\lim_{
\genfrac{}{}{0pt}{1}{\ell\to\infty,\,
q\to 0}
{{\ell q} \to a}
}
\,
M_H(\ell,0)\,=\,
0\,.}$$
\subsection{Persistence time}
In this section, we will estimate the expected hitting time
$E(\tau_0\,|\,Z_0=1)$. This 
quantity
approximates
the persistence time of the master
sequence~$w^*$.
We estimate first the products $\pi(i)$.
\begin{proposition}\label{bde}
Let $a \in ]0,+\infty[$.
For $\rho\in[0,1]$, we have
$$
\lim_{
\genfrac{}{}{0pt}{1}{\ell,m\to\infty}
{q\to 0,\,
{\ell q} \to a}
}
\,\frac{1}{m}\ln\pi(\lfloor\rho m\rfloor)\,=\,
\int_0^\rho\ln \phi(
\exa
,0,s)\,ds\,.$$
\end{proposition}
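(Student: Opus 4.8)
Since $\pi(k)=\prod_{i=1}^{k}\delta_i/\gamma_i$ and, as recorded just above, $\delta_i/\gamma_i=\phi\big(M_H(0,0),M_H(\theta,0),i/m\big)$ for $1\le i\le m-1$, we have
$$\frac1m\ln\pi(\lfloor\rho m\rfloor)\,=\,\frac1m\sum_{i=1}^{\lfloor\rho m\rfloor}\ln\phi\big(\beta_m,\ve_m,\tfrac im\big)\,,\qquad \beta_m:=M_H(0,0),\ \ \ve_m:=M_H(\theta,0)\,.$$
This is a Riemann sum, and the plan is to show it converges to $\int_0^\rho\ln\phi(\exa,0,s)\,ds$. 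From Lemma~\ref{mhlump} and the limits displayed before the proposition, $\beta_m\to\exa$ and $\ve_m\to0$; note also $\beta_m<1$ and $\exa<1$, since $a>0$. We may assume $0<\rho<1$ (the value $\rho=0$ being trivial, and $\pi$ being defined only up to index $m-1$).

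The only real difficulty is the behaviour near $s=0$: when $\ve>0$ one has $\phi(\beta,\ve,s)\sim\ve/s$ as $s\to0$, so $\ln\phi(\beta_m,\ve_m,\cdot)$ is not uniformly bounded near $0$. I would therefore fix a small $\eta\in(0,\min(\rho,1/2))$ and split the sum at $\lfloor\eta m\rfloor$. On the upper block $\lfloor\eta m\rfloor<i\le\lfloor\rho m\rfloor$, the argument $i/m$ lies in the compact interval $[\eta,\rho]\subset\,]0,1[$, and (for $m$ large) $(\beta_m,\ve_m)\in[\exa/2,1]\times[0,1/2]$; on the compact box $[\exa/2,1]\times[0,1/2]\times[\eta,\rho]$ the function $\phi$ is continuous and strictly positive (the denominator of $\phi$ is bounded below there, using $\rho<1$), hence $\ln\phi$ is uniformly continuous on this box, so $\ln\phi(\beta_m,\ve_m,\cdot)\to\ln\phi(\exa,0,\cdot)$ uniformly on $[\eta,\rho]$ and the Riemann sum converges:
$$\lim_{m\to\infty}\ \frac1m\!\!\!\sum_{\lfloor\eta m\rfloor<i\le\lfloor\rho m\rfloor}\!\!\!\ln\phi\big(\beta_m,\ve_m,\tfrac im\big)\,=\,\int_\eta^\rho\ln\phi(\exa,0,s)\,ds\,.$$

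For the low--index block $1\le i\le\lfloor\eta m\rfloor$ I would use two elementary bounds, valid for $0<s\le1/2$, $\exa/2\le\beta\le1$, $0\le\ve\le1/2$, obtained by bounding the numerator and denominator of $\phi$ crudely (using $1-s\ge1/2$):
$$0<c(\sigma,a)\,\le\,\phi(\beta,\ve,s)\,\le\,4\sigma\Big(1+\frac{\ve}{\sigma s}\Big)\,.$$
Hence, for $m$ large, $-\eta\,|\ln c|\le A_m:=\frac1m\sum_{i=1}^{\lfloor\eta m\rfloor}\ln\phi(\beta_m,\ve_m,\tfrac im)\le \eta\ln(4\sigma)+\frac1m\sum_{i=1}^{\lfloor\eta m\rfloor}\ln\big(1+\tfrac{\ve_m m}{\sigma i}\big)$. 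The last sum tends to $0$ as $m\to\infty$: set $K=\ve_m m/\sigma$ (so $K<\lfloor\eta m\rfloor$ for $m$ large, as $\ve_m\to0$); bounding $\ln(1+K/i)$ by $\ln(2K/i)$ for $i\le K$ and by $K/i$ for $i>K$, Stirling's formula gives $\sum_{i=1}^{\lfloor\eta m\rfloor}\ln(1+K/i)=O\big(K\ln(1/\ve_m)\big)$, whence $\frac1m\sum=O\big(\ve_m\ln(1/\ve_m)\big)\to0$ since $x\ln(1/x)\to0$. (For $\theta=\ell$ this step is immediate, as $\ve_m=(p/\kappa)^\ell$ is super-exponentially small.) Therefore $\limsup_{m}|A_m|\le\eta\big(|\ln c|+\ln(4\sigma)\big)$.

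Finally, $\phi(\exa,0,\cdot)$ is continuous and strictly positive on $[0,1/2]$ (with $\phi(\exa,0,0)=\sigma\exa$), so $\big|\int_0^\eta\ln\phi(\exa,0,s)\,ds\big|\le\eta\,C''$ with $C''=\sup_{[0,1/2]}|\ln\phi(\exa,0,\cdot)|<\infty$. Combining the three estimates,
$$\limsup_{m\to\infty}\ \Big|\,\frac1m\ln\pi(\lfloor\rho m\rfloor)-\int_0^\rho\ln\phi(\exa,0,s)\,ds\,\Big|\ \le\ \eta\big(|\ln c|+\ln(4\sigma)+C''\big)\,,$$
and letting $\eta\downarrow0$ completes the proof. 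The main obstacle is the low--index block: one must verify that the genuine but vanishing positivity of $\ve_m=M_H(\theta,0)$ contributes nothing to the exponential rate, which is precisely the $x\ln(1/x)\to0$ estimate; the rest is a routine uniform-convergence/Riemann-sum argument. (To allow $\rho$ arbitrarily close to $1$, one splits symmetrically near $s=1$ as well, using that $\ln\phi(\exa,0,s)\sim\ln(1-s)$ there is integrable.)
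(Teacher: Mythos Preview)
Your argument is correct. The Riemann-sum convergence on $[\eta,\rho]$ via uniform continuity on a compact box is fine, and your treatment of the low-index block is the heart of the matter: you correctly identify that when $\ve_m>0$ the function $\phi(\beta_m,\ve_m,s)$ behaves like $\ve_m/s$ near $s=0$, and your bound $\frac1m\sum_{i\le\eta m}\ln(1+\ve_m m/(\sigma i))=O(\ve_m\ln(1/\ve_m))\to0$ is valid regardless of how $m$ scales relative to $\ell$ (which matters, since the proposition imposes no constraint on $m/\ell$).

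The paper takes a shorter route by exploiting a structural fact you do not use: $\phi(\beta,\ve,s)$ is \emph{monotone nondecreasing} in $(\beta,\ve)$. Fixing a small $\ve>0$ and using $|\beta_m-e^{-a}|<\ve$, $0<\ve_m<\ve$ for large $\ell,m$, the paper sandwiches
\[
\ln\phi\big(e^{-a}-\ve,0,\tfrac im\big)\ \le\ \ln\phi\big(\beta_m,\ve_m,\tfrac im\big)\ \le\ \ln\phi\big(e^{-a}+\ve,\ve,\tfrac im\big)\,,
\]
so the Riemann sums now have \emph{fixed} integrands, converge to $\int_0^\rho\ln\phi(e^{-a}\mp\ve,\cdot,s)\,ds$, and one finishes by sending $\ve\to0$. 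This is slicker because the lower bound involves $\phi(\cdot,0,\cdot)$, which has no singularity at $s=0$ (indeed $\phi(\beta,0,0^+)=\sigma\beta$), so that Riemann sum is entirely elementary; the mild $\ln(1/s)$ singularity in the upper bound is handled by monotonicity of the integrand near $0$. Your approach, by contrast, works directly with the $m$-dependent parameters and therefore must quantify the vanishing contribution of $\ve_m>0$ explicitly---more work, but also more robust, since it does not rely on the special monotonicity of $\phi$ in its first two arguments.
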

\begin{proof}
Let $\rho\in[0,1]$.
For $m\geq 1$, we have
$$\frac{1}{m}\ln\pi(\lfloor\rho m\rfloor)\,=\,
\frac{1}{m}\sum_{i=1}^{\lfloor\rho m\rfloor}
\ln
\phi\Big(M_H(0,0),M_H(\theta,0),\frac{i}{m}\Big)\,.$$
Let $\ve\in\,]0,\exa[$. For $\ell,m$ large enough and $q$ small enough, we have
$$\big|M_H(0,0)-\exa\big|\,<\,\ve\,,\qquad
0\,<\, M_H(\theta,0) \,<\,\ve\,,$$
therefore, using the monotonicity properties of $\phi$,
\begin{multline*}
\frac{1}{m}\sum_{i=1}^{\lfloor\rho m\rfloor}
\ln
\phi\Big(\exa-\ve,0,\frac{i}{m}\Big)
\,\leq\,
\frac{1}{m}\ln\pi(\lfloor\rho m\rfloor)
\cr
\,\leq\,
\frac{1}{m}\sum_{i=1}^{\lfloor\rho m\rfloor}
\ln
\phi\Big(\exa+\ve,\ve,\frac{i}{m}\Big)\,.
\end{multline*}
These sums are Riemann sums.
Letting $\ell,m$ go to $\infty$ and $q$ go to $0$, we get
\begin{align*}
\liminf_{
\genfrac{}{}{0pt}{1}{\ell,m\to\infty}
{q\to 0,\,
{\ell q} \to a}
}
\,\frac{1}{m}\ln\pi(\lfloor\rho m\rfloor)\,&\geq\,
\int_0^\rho\ln \phi(\exa-\ve,0,s)\,ds\,,\cr
\limsup_{
\genfrac{}{}{0pt}{1}{\ell,m\to\infty}
{q\to 0,\,
{\ell q} \to a}
}
\,\frac{1}{m}\ln\pi(\lfloor\rho m\rfloor)\,&\leq\,
\int_0^\rho\ln \phi(\exa+\ve,\ve,s)\,ds\,.
\end{align*}
We send $\ve$ to $0$ to obtain the result stated in the 
proposition.
\end{proof}

\noindent
We define
$$\rho^*(a)\,=\,\rho(\exa,0)\,=\,\begin{cases}
\quad\displaystyle\frac{\displaystyle\sigma\exa-1}{\displaystyle\sigma-1}
&\text{if $\sigma\exa>1$} \\
\quad\phantom{aaa}0&\text{if $\sigma\exa\leq 1$} \\
\end{cases}
$$
Since 
$\phi(\exa,0,s)>1$ for $s<\rho^*(a)$ and
$\phi(\exa,0,s)<1$ for $s>\rho^*(a)$, then
the integral
$$\int_0^{\rho}\ln \phi(\exa,0,s)\,ds$$
is maximal for $\rho=\rho^*(a)$.
\begin{corollary}\label{exta}
Let $a \in ]0,+\infty[$.
The expected hitting time of $0$ starting from $1$ satisfies
$$
\lim_{
\genfrac{}{}{0pt}{1}{\ell,m\to\infty}
{q\to 0,\,
{\ell q} \to a}
}
\,\frac{1}{m}\ln
E(\tau_0\,|\,Z_0=1)\,=\,
\int_0^{\rho^*(a)}\ln \phi(\exa,0,s)\,ds\,.$$
\end{corollary}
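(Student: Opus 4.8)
The plan is to feed the large--deviation estimate of Proposition~\ref{bde} into the explicit formula $E(\tau_0\,|\,Z_0=1)=\sum_{i=1}^m\gamma_i^{-1}\pi(i-1)$, exploiting that on the exponential scale $\frac1m\ln(\cdot)$ a sum of $m$ nonnegative terms is governed by its largest term, provided the prefactors $\gamma_i^{-1}$ grow only polynomially in $m$. So the first step is to control the death coefficients: each $\gamma_i$ is a transition probability, hence $\gamma_i^{-1}\ge 1$; and since $M_H(0,0)\to\exa<1$ in the regime considered, for $\ell,m$ large and $q$ small the numerator of $\gamma_i$ is bounded below by $\sigma(1-\exa)/2>0$ uniformly over $i\in\um$, while its denominator is at most $(\sigma+1)m^2$, so $\gamma_i^{-1}\le Cm^2$ for a constant $C$ and all $i\in\um$. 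In particular $\frac1m\ln\!\big(m\max_{1\le i\le m}\gamma_i^{-1}\big)\to 0$, i.e. the factors $\gamma_i^{-1}$ are invisible on the exponential scale.

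For the lower bound I would keep in the sum only the single term $i=\lfloor\rho^*(a)m\rfloor+1$ (which lies in $\um$ for $m$ large, since $\rho^*(a)<1$) and use $\gamma_i^{-1}\ge 1$ to get $E(\tau_0\,|\,Z_0=1)\ge\pi(\lfloor\rho^*(a)m\rfloor)$; Proposition~\ref{bde} applied with $\rho=\rho^*(a)$ then gives $\liminf\frac1m\ln E(\tau_0\,|\,Z_0=1)\ge\int_0^{\rho^*(a)}\ln\phi(\exa,0,s)\,ds$, which is exactly the target value (and in fact equals $\sup_{\rho}\int_0^{\rho}\ln\phi(\exa,0,s)\,ds$, so nothing is lost by this choice). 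For the upper bound I would write $E(\tau_0\,|\,Z_0=1)\le m\,(\max_i\gamma_i^{-1})\max_{0\le i\le m-1}\pi(i)$ and use that the $\pi(i)$ are unimodal in $i$ with peak at $\lfloor\rho(\beta,\ve)m\rfloor$, $\beta=M_H(0,0)$, $\ve=M_H(\theta,0)$, as recorded before Proposition~\ref{bde}, so $\max_{0\le i\le m-1}\pi(i)\le\pi(\lfloor\rho(\beta,\ve)m\rfloor)$. Fixing $\ve'\in\,]0,\exa[$, for $\ell,m$ large and $q$ small one has $|\beta-\exa|<\ve'$ and $0<\ve<\ve'$, so by monotonicity of $\phi$ in its first two arguments $\pi(\lfloor\rho(\beta,\ve)m\rfloor)=\prod_{i=1}^{\lfloor\rho(\beta,\ve)m\rfloor}\phi(\beta,\ve,i/m)\le\prod_{i=1}^{\lfloor\rho(\beta,\ve)m\rfloor}\phi(\exa+\ve',\ve',i/m)$; since $\phi(\exa+\ve',\ve',i/m)\ge 1$ precisely for $i\le\lfloor\rho(\exa+\ve',\ve')m\rfloor$, the last product is at most $\prod_{i=1}^{\lfloor\rho(\exa+\ve',\ve')m\rfloor}\phi(\exa+\ve',\ve',i/m)$, whose $\frac1m\ln$ is a Riemann sum converging to $\int_0^{\rho(\exa+\ve',\ve')}\ln\phi(\exa+\ve',\ve',s)\,ds$. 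This yields $\limsup\frac1m\ln E(\tau_0\,|\,Z_0=1)\le\int_0^{\rho(\exa+\ve',\ve')}\ln\phi(\exa+\ve',\ve',s)\,ds$, and letting $\ve'\to 0$ — using the continuity of $(\beta,\ve)\mapsto\rho(\beta,\ve)$ and dominated convergence, the integrand being dominated near $s=0$ by an integrable function (a constant plus a multiple of $|\ln s|$) — the right--hand side tends to $\int_0^{\rho^*(a)}\ln\phi(\exa,0,s)\,ds$. Combining the two bounds gives the corollary.

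I expect the upper bound to be the only real obstacle: both the position $\lfloor\rho(\beta,\ve)m\rfloor$ of the maximal $\pi(i)$ and the comparison function depend on the parameters through $M_H(0,0)$ and $M_H(\theta,0)$, which is why one must first compare monotonically at a fixed $\ve'>0$ and only afterwards send $\ve'\to 0$. The interchange of limits at the end is harmless because $\ln\phi(\,\cdot\,,\,\cdot\,,s)$ has at most a logarithmic singularity as $s\to 0$, hence stays uniformly integrable for small $\ve'$. The lower bound, by contrast, follows immediately from Proposition~\ref{bde}.
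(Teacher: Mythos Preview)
Your proof is correct and follows essentially the same route as the paper: bound the prefactors $\gamma_i^{-1}$ polynomially, replace the sum by its maximal term on the exponential scale, use the unimodality of $\pi(\cdot)$ at $\lfloor\rho(\beta,\ve)m\rfloor$, compare monotonically at fixed $\ve'>0$, pass to a Riemann sum, and let $\ve'\to 0$. The one small difference is that for the lower bound you invoke Proposition~\ref{bde} directly at $\rho=\rho^*(a)$, whereas the paper keeps the same single term but re-runs the Riemann--sum argument with $\phi(e^{-a}-\ve,0,\cdot)$ and then sends $\ve\to0$; your version is a legitimate shortcut since Proposition~\ref{bde} already packages that computation.
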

\begin{proof}
We have the explicit formula
$$
E(\tau_0\,|\,Z_0=1)\,=\,
\sum_{i=1}^m
\frac{1}
{\gamma_{i}}
\pi(i-1)
$$
and the following bounds on $\gamma_i$:
$$\forall i\in\um\qquad 
\frac{1-M_H(0,0)}{m^2}\,\leq\,
\gamma_i\,\leq\,2\sigma
\,.$$
Let $\ve\in\,]0,\exa[$. For $\ell,m$ large enough and $q$ small enough, we have
$$\big|M_H(0,0)-\exa\big|\,<\,\ve\,,\qquad
0\,<\, M_H(\theta,0) \,<\,\ve\,.$$
We first compute an upper bound:
\begin{multline*}
E(\tau_0\,|\,Z_0=1)\,\leq\,
\frac
{m^3
}
{1-M_H(0,0)}
\max_{1\leq i\leq m}\pi(i)
\cr
\,\leq\,
\frac
{m^3}
{1-M_H(0,0)}
\pi\Big(\big\lfloor \rho\big(M_H(0,0),M_H(\theta,0)\big) m\big\rfloor\Big)
\,.
\end{multline*}
Using the monotonicity properties of $\phi$, we get
\begin{multline*}
\pi\big(\lfloor \rho\big(M_H(0,0),M_H(\theta,0)\big) m\rfloor\big)
\cr
\,=\,
\prod_{i=1}^
{\lfloor \rho(M_H(0,0),M_H(\theta,0)) m\rfloor}
\phi\Big(M_H(0,0),M_H(\theta,0),\frac{i}{m}\Big)
\cr
\,\leq\,
\prod_{i=1}^
{\lfloor \rho(M_H(0,0),M_H(\theta,0)) m\rfloor}
\phi\Big(\exa+\ve,\ve,\frac{i}{m}\Big)
\cr
\,\leq\,
\prod_{i=1}^
{\lfloor \rho(\exa+\ve,\ve) m\rfloor}
\phi\Big(\exa+\ve,\ve,\frac{i}{m}\Big)\,.
\end{multline*}
The last inequality holds because the product $\pi(i)$ corresponding to the parameters
$\exa+\ve$, $\ve$ is maximal for $i=
{\lfloor \rho(\exa+\ve,\ve) m\rfloor}$.
We obtain that
$$
E(\tau_0\,|\,Z_0=1)\,\leq\,
\frac
{m^3
}
{1-M_H(0,0)}
\prod_{i=1}^
{\lfloor \rho(\exa+\ve,\ve) m\rfloor}
\phi\Big(\exa+\ve,\ve,\frac{i}{m}\Big)\,.
$$
Taking logarithms, we recognize a Riemann sum, 
hence
$$
\limsup_{
\genfrac{}{}{0pt}{1}{\ell,m\to\infty}
{q\to 0,\,
{\ell q} \to a}
}
\,\frac{1}{m}\ln
E(\tau_0\,|\,Z_0=1)\,\leq\,
\int_0^{\rho(\exa+\ve,\ve)}\ln \phi(\exa+\ve,\ve,s)\,ds\,.$$
Conversely,
\begin{multline*}
E(\tau_0\,|\,Z_0=1)\,\geq\,
\frac{1}{2\sigma}
\prod_{i=1}^
{\lfloor \rho(\exa,0) m\rfloor}
\phi\Big(M_H(0,0),M_H(\theta,0),\frac{i}{m}\Big)
\cr
\,\geq\,
\frac{1}{2\sigma}
\prod_{i=1}^
{\lfloor \rho(\exa,0) m\rfloor}
\phi\Big(\exa-\ve,0,\frac{i}{m}\Big)
\,.
\end{multline*}
Taking logarithms, we recognize a Riemann sum, 
hence
$$
\liminf_{
\genfrac{}{}{0pt}{1}{\ell,m\to\infty}
{q\to 0,\,
{\ell q} \to a}
}
\,\frac{1}{m}\ln
E(\tau_0\,|\,Z_0=1)\,\geq\,
\int_0^{\rho(\exa,0)}\ln \phi(\exa-\ve,0,s)\,ds\,.$$
We let $\ve$ go to $0$ in the upper bound and in the lower bound
to obtain the desired conclusion.
\end{proof}
\subsection{Invariant probability measure}
In this section, we estimate the invariant probability measure of the process
$(Z^\theta_t)_{t\geq 0}$, or rather the numerator of the last formula
of section~\ref{bounds}.
As usual, we drop the superscript $\theta$ from the notation when it is not necessary, 
and we put it back when we need to emphasize the differences between the cases
$\theta=\ell$ and $\theta=1$.
We define, as before
corollary~\ref{exta},
$$\rho^*(a)\,=\,\rho(\exa,0)\,=\,\begin{cases}
\quad\displaystyle\frac{\displaystyle\sigma\exa-1}{\displaystyle\sigma-1}
&\text{if $\sigma\exa>1$} \\
\quad\phantom{aaa}0&\text{if $\sigma\exa\leq 1$} \\
\end{cases}
$$
Let
$f:[0,1]\to\R$ 
be a non--decreasing function 
such that $f(0)=0$.
We have the formula
$$
\sum_{1\leq i\leq m}f\Big(\frac{i}{m}\Big)\nu(i)\,=\,
\frac{
{\delta_0}\displaystyle\sum_{1\leq i\leq m}f\Big(\frac{i}{m}\Big)
\displaystyle
\frac{1}{\gamma_i}\pi(i-1)
}
{\displaystyle 1+
\delta_0
E(\tau_0\,|\,Z_0=1)
}
\,.
$$
Moreover
$\delta_0=M_H(\theta,0)$, thus
the numerator of the last formula
of section~\ref{bounds} can be rewritten as
\begin{multline*}
{
\displaystyle
\Big(\,\frac{1 }
{ \delta_0 }
+
E\big({\tau_0}
\,\big|\,
Z_0= 1\big)
}\Big)
\sum_{i=1}^m 
f\Big(\frac{i}{m}\Big)
\,\nu(i)\,=\,
\displaystyle\sum_{1\leq i\leq m}
f\Big(\frac{i}{m}\Big)
\displaystyle
\frac{1}{\gamma_i}\pi(i-1)
\,.
\end{multline*}
Our goal is to estimate the asymptotic behavior of the right--hand quantity.
\begin{proposition}\label{numest}
Let
$f:[0,1]\to\R$ 
be a continuous non--decreasing function 
such that $f(0)=0$.
Let $a \in ]0,+\infty[$.
We have
$$
\lim_{
\genfrac{}{}{0pt}{1}{\ell,m\to\infty}
{q\to 0,\,
{\ell q} \to a}
}
\,
\frac{
\displaystyle\sum_{1\leq i\leq m}f\Big(\frac{i}{m}\Big)
\displaystyle
\frac{1}{\gamma_i}\pi(i-1)
}
{\displaystyle 
E(\tau_0\,|\,Z_0=1)}
\,=\,f\big(\rho^*(a)\big)\,.
$$
\end{proposition}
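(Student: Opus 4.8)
The first step is to recognise the quantity in the statement as a weighted average of the values $f(i/m)$. Recalling from the beginning of Section~\ref{bide} that $E(\tau_0\,|\,Z_0=1)=\sum_{1\le i\le m}\gamma_i^{-1}\pi(i-1)$, we may write
$$\frac{\displaystyle\sum_{1\le i\le m}f\Big(\frac{i}{m}\Big)\,\frac{1}{\gamma_i}\,\pi(i-1)}{\displaystyle E(\tau_0\,|\,Z_0=1)}\,=\,\sum_{i=1}^m f\Big(\frac{i}{m}\Big)\,w_i\,,\qquad w_i\,=\,\frac{\gamma_i^{-1}\,\pi(i-1)}{\displaystyle\sum_{1\le j\le m}\gamma_j^{-1}\,\pi(j-1)}\,,$$
so that $(w_i)_{1\le i\le m}$ is a probability vector. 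The plan is to show that, in the regime $\ell,m\to\infty$, $q\to 0$, $\ell q\to a$, this vector concentrates on the indices $i$ with $i/m$ near $\rho^*(a)$, namely that for every $\delta>0$ one has $\sum_{|i/m-\rho^*(a)|>\delta}w_i\to 0$. Granting this, the conclusion follows from the uniform continuity of $f$ on $[0,1]$: bounding $\big|\sum_i f(i/m)\,w_i-f(\rho^*(a))\big|\le\sum_i|f(i/m)-f(\rho^*(a))|\,w_i$ and splitting the sum according to $|i/m-\rho^*(a)|\le\delta$ or $>\delta$, the first part is at most the modulus of continuity $\omega_f(\delta)$ and the second at most $2\norma{f}\sum_{|i/m-\rho^*(a)|>\delta}w_i$, so letting the parameters tend to their limits and then $\delta\to 0$ gives the result.

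To prove the concentration, set $\Lambda(\rho)=\int_0^\rho\ln\phi(\exa,0,s)\,ds$ for $\rho\in[0,1]$; this is a continuous function, and since $\Lambda'(\rho)=\ln\phi(\exa,0,\rho)$ is positive for $\rho<\rho^*(a)=\rho(\exa,0)$ and negative for $\rho>\rho^*(a)$, $\Lambda$ attains its maximum over $[0,1]$ only at $\rho^*(a)$. Fix $\delta>0$, small enough that $\delta<\rho^*(a)$ when $\rho^*(a)>0$. The key points are: (i) the sequence $i\mapsto\pi(i-1)$ is unimodal, being, by the monotonicity relations for $\pi(\cdot)$ recalled earlier in Section~\ref{bide}, non-decreasing up to $i^*=\lfloor\rho(M_H(0,0),M_H(\theta,0))\,m\rfloor$ and non-increasing afterwards; (ii) $i^*/m\to\rho^*(a)$, because $\rho(\cdot,\cdot)$ is continuous and $M_H(0,0)\to\exa$, $M_H(\theta,0)\to 0$; and (iii) $\gamma_i^{-1}\le m^2/(1-M_H(0,0))$ for all $i$, as used in the proof of Corollary~\ref{exta}. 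From (i) and (ii), for $\ell,m$ large and $q$ small any index $i$ with $i/m>\rho^*(a)+\delta$ satisfies $i-1\ge\lfloor(\rho^*(a)+\delta/2)m\rfloor\ge i^*$, hence $\pi(i-1)\le\pi(\lfloor(\rho^*(a)+\delta/2)m\rfloor)$, and symmetrically $\pi(i-1)\le\pi(\lfloor(\rho^*(a)-\delta/2)m\rfloor)$ whenever $i/m<\rho^*(a)-\delta$. Combining with (iii),
$$\sum_{\genfrac{}{}{0pt}{1}{1\le i\le m}{|i/m-\rho^*(a)|>\delta}}\frac{1}{\gamma_i}\,\pi(i-1)\,\le\,\frac{m^3}{1-M_H(0,0)}\Big(\pi\big(\lfloor(\rho^*(a)-\delta/2)m\rfloor\big)+\pi\big(\lfloor(\rho^*(a)+\delta/2)m\rfloor\big)\Big)\,,$$
a term being dropped if the corresponding range of $i$ is empty. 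By Proposition~\ref{bde} the right-hand side is $\exp(m\Lambda_\delta+o(m))$ with $\Lambda_\delta<\Lambda(\rho^*(a))$ the larger of the relevant values $\Lambda(\rho^*(a)\pm\delta/2)$, while Corollary~\ref{exta} gives $E(\tau_0\,|\,Z_0=1)=\exp(m\Lambda(\rho^*(a))+o(m))$. Dividing, the two exponential rates separate and $\sum_{|i/m-\rho^*(a)|>\delta}w_i\to 0$. When $\rho^*(a)=0$ the same argument applies, there being only the side $i/m>\delta$ to handle and $f(\rho^*(a))=f(0)=0$, so the target value is simply $0$.

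The main obstacle is the Laplace-type estimate of the preceding paragraph: one has to combine carefully the unimodality of $\pi(\cdot)$, the convergence $i^*/m\to\rho^*(a)$ of the location of its maximum, the crude polynomial control of the $\gamma_i^{-1}$, and the strict uniqueness of the maximum of $\Lambda$, so that the mass carried by indices far from $\rho^*(a)m$ is exponentially negligible compared with $E(\tau_0\,|\,Z_0=1)$. Once that is established, the continuity of $f$ turns the resulting concentration into the claimed convergence of means.
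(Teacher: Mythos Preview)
Your proposal is correct and follows essentially the same route as the paper: both arguments isolate the Laplace--type tail estimate, bounding $\sum_{|i/m-\rho^*|>\delta}\gamma_i^{-1}\pi(i-1)$ via the unimodality of $\pi(\cdot)$, the crude bound $\gamma_i^{-1}\le m^2/(1-M_H(0,0))$, Proposition~\ref{bde}, and the strict maximality of $\Lambda(\rho^*)=\int_0^{\rho^*}\ln\phi(\exa,0,s)\,ds$, then compare with Corollary~\ref{exta}. The only cosmetic difference is in how the conclusion is extracted: the paper exploits the monotonicity of $f$ to sandwich the near--$\rho^*$ contribution between $f(\rho^*-\eta)$ and $f(\rho^*+\eta)$ directly, whereas you phrase it as concentration of the probability weights $(w_i)$ and then invoke the uniform continuity of $f$; both hypotheses are available and the resulting bounds are the same.
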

\begin{proof}
Throughout the proof, we write simply $\rho^*$ instead of
$\rho^*(a)$.
Let $\eta>0$. 
For $\ell,m$ large enough and $q$ small enough, we have
$$\big|\rho^*-\rho\big(M_H(0,0),M_H(\theta,0)\big)\big|\,<\,\eta\,,$$
whence
\begin{multline*}
\sum_{1\leq i\leq m}
f\Big(\frac{i}{m}\Big)\frac{1}{\gamma_i}\pi(i-1)
\cr
\,=\,
\sum_{
\genfrac{}{}{0pt}{1}
{1\leq i\leq m}
{|i/m-\rho^*|\leq\eta}}
\kern-3pt
f\Big(\frac{i}{m}\Big)\frac{1}{\gamma_i}\pi(i-1)
+\sum_{
\genfrac{}{}{0pt}{1}
{1\leq i\leq m}
{|i/m-\rho^*|>\eta}}
\kern-3pt
f\Big(\frac{i}{m}\Big)\frac{1}{\gamma_i}\pi(i-1)
\cr
\,\leq\,
\kern-3pt
\sum_{
\genfrac{}{}{0pt}{1}
{1\leq i\leq m}
{|i/m-\rho^*|\leq\eta}}
\kern-3pt
f(\rho^*+\eta)
\frac{1}{\gamma_i}\pi(i-1)
+\sum_{
\genfrac{}{}{0pt}{1}
{1\leq i\leq m}
{|i/m-\rho^*|>\eta}}
\kern-3pt
f(1)\frac{1}{\gamma_i}\pi(i-1)
\cr
\,\leq\,
f(\rho^*+\eta)
E(\tau_0\,|\,Z_0=1)
+\hfil\cr
\hfill
\frac{m^3 f(1)}{1-M_H(0,0)}\Big(
\pi\big(\lfloor (\rho^*-\eta) m\rfloor\big)
+
\pi\big(\lfloor (\rho^*+\eta) m\rfloor\big)\Big)\,.
\end{multline*}
To obtain 
the last inequality,
we have used the monotonicity properties of $\pi(i)$
and the bounds on~$\gamma_i$ given
at the beginning of the proof of 
corollary~\ref{exta}.
The properties of $\phi$ and the definition of $\rho^*$
imply that
\begin{multline*}
\int_0^{\rho^*}\ln \phi(\exa,0,\rho)\,d\rho\,>\,\cr
\max\bigg(
\int_0^{\rho^*-\eta}\kern-7pt
\ln \phi(\exa,0,\rho)\,d\rho\,,
\int_0^{\rho^*+\eta}\kern-7pt
\ln \phi(\exa,0,\rho)\,d\rho\bigg)\,,
\end{multline*}
so that, using proposition~\ref{bde}, for $m$ large enough,
$$\frac{m^3 f(1)}{1-M_H(0,0)}\Big(
\pi\big(\lfloor (\rho^*-\eta) m\rfloor\big)
+
\pi\big(\lfloor (\rho^*+\eta) m\rfloor\big)\Big)\,\leq\,
\eta 
E(\tau_0\,|\,Z_0=1)\,.$$
Adding together the previous inequalities, we arrive at
$$\sum_{1\leq i\leq m}
f\Big(\frac{i}{m}\Big)\frac{1}{\gamma_i}\pi(i-1)
\,\leq\,
\big(f(\rho^*+\eta)+\eta\big)
E(\tau_0\,|\,Z_0=1)\,.$$
Passing to the limit,
we obtain that
$$
\limsup_{
\genfrac{}{}{0pt}{1}{\ell,m\to\infty}
{q\to 0,\,
{\ell q} \to a}
}
\frac{
\displaystyle\sum_{1\leq i\leq m}f\Big(\frac{i}{m}\Big)
\displaystyle
\frac{1}{\gamma_i}\pi(i-1)
}
{\displaystyle 
E(\tau_0\,|\,Z_0=1)}
\,\leq\,
f(\rho^*+\eta)+\eta
\,.$$
We seek next a complementary lower bound.
If $\sigma\exa\leq 1$, then $\rho^*=0$, and obviously
$$\sum_{1\leq i\leq m}
f\Big(\frac{i}{m}\Big)\frac{1}{\gamma_i}\pi(i-1)
\,\geq\,f(0)\,=\,0\,.$$
If $\sigma\exa> 1$, then $\rho^*>0$ and
$$\int_0^{\rho^*}\ln \phi(\exa,0,\rho)\,d\rho\,>\,
\int_0^{\rho^*-\eta}\kern-7pt
\ln \phi(\exa,0,\rho)\,d\rho\,.$$
By corollary~\ref{exta},
for $\ell,m$ large enough and $q$ small enough,
$$\displaylines{
\sum_{
\genfrac{}{}{0pt}{1}
{1\leq i\leq m}
{i/m-\rho^*<-\eta}}\kern-3pt
\frac{1}{\gamma_i}\pi(i-1)
\,\leq\,
\frac{m^3}{1-M_H(0,0)}
\pi\big(\lfloor (\rho^*-\eta) m\rfloor\big)
\,\leq\,\eta
E(\tau_0\,|\,Z_0=1)\,.}$$
Combining these inequalities, we obtain
\begin{multline*}
\sum_{1\leq i\leq m}
f\Big(\frac{i}{m}\Big)\frac{1}{\gamma_i}\pi(i-1)
\,\geq\,
\sum_{
\genfrac{}{}{0pt}{1}
{1\leq i\leq m}
{i/m-\rho^*\geq-\eta}}
\kern-3pt
f(\rho^*-\eta)
\frac{1}{\gamma_i}\pi(i-1)\cr
\,=\,
f(\rho^*-\eta)\bigg(
\sum_{1\leq i\leq m}
\kern-3pt
\frac{1}{\gamma_i}\pi(i-1)
-
\sum_{
\genfrac{}{}{0pt}{1}
{1\leq i\leq m}
{i/m-\rho^*<-\eta}}
\kern-3pt
\frac{1}{\gamma_i}\pi(i-1)\bigg)
\cr
\,\geq\,
f(\rho^*-\eta)
E(\tau_0\,|\,Z_0=1)
(1-\eta)
\,.
\end{multline*}
Passing to the limit,
we obtain that
$$
\liminf_{
\genfrac{}{}{0pt}{1}{\ell,m\to\infty}
{q\to 0,\,
{\ell q} \to a}
}
\frac{
\displaystyle\sum_{1\leq i\leq m}f\Big(\frac{i}{m}\Big)
\displaystyle
\frac{1}{\gamma_i}\pi(i-1)
}
{\displaystyle 
E(\tau_0\,|\,Z_0=1)}
\,\geq\,
f(\rho^*-\eta)
(1-\eta)\,.
$$
We finally let $\eta$ go to $0$ in the lower and the upper bounds
to obtain the claim of the proposition.
\end{proof}
\vfill\eject
\section{The neutral phase}\label{disc}
We denote by $\cN$
the set 
of the populations which do not contain the master sequence $w^*$, i.e.,
$$\cN=\Big(\Al\setminus\,\{\,w^*\,\}\Big)^m\,.$$ 
Since we deal with the sharp peak landscape, the transition mechanism
of the process restricted to the set $\cNm$ is neutral. 
We consider a Moran process
$(X_n)_{n\geq 0}$ 
starting from a population of $\cNm$.
We wish to evaluate the first time when a master sequence appears in the
population:
$$\tau_*\,=\,\inf\,\big\{\,n\geq 0: X_n\not\in\cNm\,\big\}\,.$$
We call the time $\tau_*$ the discovery time.
Until the time $\tau_*$, the process evolves in
$\cNm$ and the dynamics of the Moran model in $\cNm$ does not depend on $\sigma$.
In particular, the law
of the discovery time $\tau_*$ is the same for the Moran model with $\sigma>1$
and the neutral Moran model with $\sigma=1$. Therefore,
we compute the estimates for the latter model.
\smallskip

\noindent
{\bf Neutral hypothesis.} Throughout this section, we suppose that $\sigma=1$.
\smallskip

\noindent

\subsection{Ancestral lines}
It is a classical fact that neutral evolutionary processes are much easier to
analyze than evolutionary processes with selection. 
The main reason is that the mutation mechanism and the sampling mechanism
can be decoupled. For instance, it is possible to compute explicitly the
law of a chromosome in the population at time~$n$.

\noindent
Let $\mu_0$ be an exchangeable probability distribution on 
$\Alm$. 
Let
$(X_n)_{n\geq 0}$ be the normalized neutral Moran process
with mutation matrix~$M$ and initial law $\mu_0$.
Let $\nu_0$ be the component marginal of $\mu_0$:
$$\forall u\in\Al\qquad\nu_0(u)\,=\,\mu_0\big(\{\,x\in\Alm:x(1)=u\,\}\big)\,.$$
Let $(W_n)_{n\geq 0} 
\index{$W_n$}$
be a Markov chain 
with state space $\Al$, having for
transition matrix the mutation matrix $M$ and with initial law~$\nu_0$.
Let $(\ve_n)_{n\geq 1}$ be a sequence of i.i.d. Bernoulli random variables 
with parameter
${1}/{m}$:
$$\forall n\geq 1\qquad 
P(\ve_n=0)\,=\,1-\frac{1}{m}
\,,\qquad
P(\ve_n=1)\,=\,\frac{1}{m}
\,$$
and let us set
$$
\forall n\geq 1\qquad N(n)\,=\,\ve_1+\cdots+\ve_n\,.
\index{$N(n)$}
$$
We suppose also that the sequence
$(\ve_n)_{n\geq 1}$ and
the Markov chain 
$(W_n)_{n\geq 0}$ are independent.
\begin{proposition}\label{nemarg}
Let $i\in\um$. For any $n\geq 0$, the law of the $i$--th chromosome of
$X_n$ is equal to the law of 
$W_{N(n)}$.
\end{proposition}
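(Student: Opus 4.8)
The plan is to prove the statement by induction on $n$, tracking simultaneously the genealogy of a fixed individual and the total number of mutation events it has undergone. First I would fix $i\in\um$ and unwind one step of the normalized neutral Moran dynamics from the construction in section~\ref{couplsec}, specializing $\sigma=1$. At each step $n$, the chain redraws the chromosome in slot $J_n$ by replicating the chromosome in slot $I_n$, possibly with mutations governed by the variables $U_{n,l}$; all other slots are left untouched. So, conditionally on the past, with probability $1-1/m$ the index $i$ is not selected as $J_n$, in which case $X_n(i)=X_{n-1}(i)$; with probability $1/m$ the index $i$ \emph{is} selected, in which case $X_n(i)$ is obtained from $X_{n-1}(I_n)$ by one application of the mutation kernel $M$. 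The key point, which is where neutrality enters, is that when $\sigma=1$ the selection map is uniform (this is exactly lemma~\ref{monsel}: $\cSH(d,s)=\lfloor ms\rfloor+1$), so the replicating slot $I_n$ is uniform on $\um$ and independent of everything else; hence $X_{n-1}(I_n)$ has, by the exchangeability of $\mu_0$ propagated by corollary~\ref{exchang}, the \emph{same} marginal law as $X_{n-1}(i)$ itself.

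Concretely, I would introduce the ancestral slot process: let $A_n$ be the slot in the population at time $0$ from which $X_n(i)$ descends, defined by $A_n=i$ if $\e_n=0$ (no replacement of slot $i$ at step $n$, set $\e_n=1_{J_n=i}$) and $A_n=$ the ancestral slot of $X_{n-1}(I_n)$ if $\e_n=1$, together with the count $N(n)=\e_1+\cdots+\e_n$ of mutation-bearing replication events affecting the lineage. The claim to establish by induction is the joint statement that $(X_n(i), N(n))$ has the same law as $(W_{N(n)}, N(n))$ with $(W_k)$ and $(\e_k)$ independent as in the proposition — equivalently, that conditionally on $N(n)=k$, the chromosome $X_n(i)$ is distributed as $W_k$, i.e.\ as $M^k$ applied to $\nu_0$. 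The base case $n=0$ is immediate: $N(0)=0$ and $X_0(i)\sim\nu_0$ since $\mu_0$ is exchangeable with component marginal $\nu_0$. For the induction step, condition on the event $\{\e_n=0\}$ (probability $1-1/m$): then $N(n)=N(n-1)$ and $X_n(i)=X_{n-1}(i)$, so the law is unchanged and matches $W_{N(n-1)}$. On $\{\e_n=1\}$ (probability $1/m$): then $N(n)=N(n-1)+1$ and $X_n(i)$ is $M$ applied to $X_{n-1}(I_n)$, which by uniformity of $I_n$, independence, and the exchangeability of the law of $X_{n-1}$ is $M$ applied to a chromosome distributed as $W_{N(n-1)}$, hence distributed as $W_{N(n-1)+1}=W_{N(n)}$. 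Since the two cases are disjoint and the splitting probabilities $1-1/m,1/m$ are exactly those of $\e_n$, this gives the joint law at time $n$, completing the induction; marginalizing over $N(n)$ yields the stated equality of the law of $X_n(i)$ with that of $W_{N(n)}$.

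The main obstacle, and the place I would be most careful, is the independence and exchangeability bookkeeping at the replication step: one must verify that the replicating slot $I_n$ is genuinely uniform \emph{and independent of the current population configuration} (true only because $\sigma=1$, via lemma~\ref{monsel}), and that the law of $X_{n-1}$ remains exchangeable for all $n$ (this is corollary~\ref{exchang}, provided $\mu_0$ is exchangeable, which is hypothesized). A secondary subtlety is that the mutation applied at a replication step acts on a \emph{freshly sampled} chromosome rather than on $X_{n-1}(i)$, so the recursion for $X_n(i)$ is not a function of $X_{n-1}(i)$ alone; the induction must therefore be carried at the level of laws (or one couples the ancestral-lineage chromosome to the chain $(W_k)$ directly, re-using the same mutation randomness $U_{n,\cdot}$). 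Apart from this, the argument is a routine first-step decomposition, and I do not anticipate analytic difficulties.
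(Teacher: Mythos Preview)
Your approach is essentially the paper's: both prove the marginal statement by induction on $n$, identifying the one-step transition of the law of $X_n(i)$ as $(1-\tfrac{1}{m})I+\tfrac{1}{m}M$ and matching it to the transition of $W_{N(n)}$. The paper does this by summing over the full configuration $X_n=x$ and using the explicit neutral transition probabilities; you do it by conditioning on $\{J_n=i\}$ versus $\{J_n\neq i\}$, which is just a different bookkeeping of the same decomposition. The induction step you wrote only requires the \emph{marginal} hypothesis $X_{n-1}(j)\stackrel{d}{=}W_{N(n-1)}$ for each $j$, and that part of your argument is correct.

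However, the joint claim you announce as your induction hypothesis --- that $(X_n(i),N(n))\stackrel{d}{=}(W_{N(n)},N(n))$ with $N(n)=\sum_{k\le n}1_{J_k=i}$, equivalently that $X_n(i)\sim\nu_0M^k$ conditionally on $N(n)=k$ --- is \emph{false} in general. Take $m=2$, $n=2$, $i=1$, and condition on $J_1=J_2=1$ (so $N(2)=2$). Then $X_1(1)$ is a mutated copy of $X_0(I_1)\sim\nu_0$, hence $X_1(1)\sim M\nu_0$, while $X_1(2)=X_0(2)\sim\nu_0$; and $X_2(1)$ is a mutated copy of $X_1(I_2)\sim\tfrac{1}{2}M\nu_0+\tfrac{1}{2}\nu_0$, so $X_2(1)\sim\tfrac{1}{2}M^2\nu_0+\tfrac{1}{2}M\nu_0\neq M^2\nu_0$. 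The reason is that when slot $i$ gets refreshed it copies a slot whose own mutation count is \emph{not} tracked by your $N(n)$. Fortunately you never actually use the joint hypothesis in the step, so the proof of the marginal statement (which is all the proposition asserts) survives intact; just drop the joint formulation and the ancestral-slot process $A_n$, neither of which is needed.
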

\begin{proof}
We start by
computing the transition matrix of the process
$(W_{N(n)})_{n\geq 0}$.
For $u\in\Al$ and $n\geq 0$,
\begin{multline*}
P( W_{N(n+1)}=u)\,=\,
P( W_{N(n+1)}=u,\,\ve_{n+1}=0)\cr
\hfill
+
P( W_{N(n+1)}=u,\,\ve_{n+1}=1)\cr
\,=\,
\Big(1-\frac{1}{m}\Big)
P( W_{N(n)}=u)+
\frac{1}{m}
P( W_{N(n)+1}=u)\,.
\end{multline*}
Moreover
\begin{multline*}
P( W_{N(n)+1}=u)\,=\,
\sum_{v\in\Al}
P(
W_{N(n)+1}=u,\,
W_{N(n)}=v)\cr
\,=\,
\sum_{v\in\Al}
P(
W_{N(n)+1}=u
\,|\,
W_{N(n)}=v)
P(W_{N(n)}=v)\cr
\,=\,
\sum_{v\in\Al} 
P(W_{N(n)}=v) M(v,u)
\,.
\end{multline*}
Therefore the transition matrix of
the process
$(W_{N(n)})_{n\geq 0}$ is
$$
\Big(1-\frac{1}{m}\Big) I +
\frac{1}{m}M\,,$$
where $I$ is the identity matrix.
We do now the proof by induction over $n$. The result holds for $n=0$.
Suppose that it has been proved until time $n$.
Let $i\in\um$. We have, for any $u\in\Al$,
\begin{multline*}
P(X_{n+1}(i)=u)\,=\,
\sum_{x\in\alm}
P(X_{n+1}(i)=u,\,X_n=x)\cr
\,=\,
\sum_{x\in\alm}
P(X_{n+1}(i)=u\,|\,X_n=x)
P(X_n=x)\,.
\end{multline*}
Yet we have
$$
P(X_{n+1}(i)=u\,|\,X_n=x)
\,=\,
\Big(1-\frac{1}{m}\Big)
1_{x(i)=u}+
\frac{1}{m^2}
\sum_{1\leq j\leq m}
M(x(j),u)\,.
$$
Thus
\begin{multline*}
P(X_{n+1}(i)=u)\,=\,
\sum_{x\in\alm}
\Big(1-\frac{1}{m}\Big)
1_{x(i)=u}
P(X_n=x)\cr
\hfill+
\sum_{x\in\alm}
\frac{1}{m^2}
\sum_{1\leq j\leq m}
M(x(j),u)
P(X_n=x)\cr
\,=\,\Big(1-\frac{1}{m}\Big)
P(X_n(i)=u)+
\sum_{v\in\Al}
\frac{1}{m^2}
\sum_{1\leq j\leq m}
M(v,u)
P(X_n(j)=v)
\,.
\end{multline*}
By the induction hypothesis, 
$$
\forall {v\in\Al}\quad\forall
j\in\um\qquad
P(X_n(j)=v)\,=\,
P(
W_{N(n)}
=v)\,,$$
whence
\begin{multline*}
P(X_{n+1}(i)=u)\,=\,\cr
\Big(1-\frac{1}{m}\Big)
P(
W_{N(n)}
=u)+
\frac{1}{m}
\sum_{v\in\Al}
P(
W_{N(n)}
=v)
M(v,u)
\cr
\,=\,
P(
W_{N(n+1)}
=u)
\,.
\end{multline*}
The result still holds at time $n+1$.
\end{proof}

\noindent
We perform next a similar computation to obtain
the law of an ancestral line. Let us first define an ancestral line.
For $i\in\um$ and $n\geq 1$, we denote by
$\cI(i,n,n-1) 
\index{$\cI(i,n,n-1)$}$
the index of the ancestor at time $n-1$ of the $i$--th chromosome
at time~$n$. Let us explicit its value. If $X_{n-1}=x$ and
$X_n=y$ with $y=x(j\leftarrow u)$, where the chromosome $u$ 
has been obtained by replicating the $k$--th chromosome of $x$, then
$$\cI(i,n,n-1)\,=\,
\begin{cases}
i &\text{if $i\neq j$}\\
k &\text{if $i=j$}\\
\end{cases}
$$
For $s\leq n$,
the index
$\cI(i,n,s)$ of the ancestor at time $s$ of the $i$--th chromosome
at time~$n$ is then defined recursively with the help of the following
formula:
$$
\cI(i,n,s)\,=\,
\cI(
\cI(i,n,n-1)
,n-1,s)\,.$$
The ancestor at time $s$ of the $i$--th chromosome
at time~$n$ is the chromosome 
$$
\anc(i,n,s)\,=\,
X_s(\cI(i,n,s))\,.
\index{$\anc(i,n,s)$}
$$
The ancestral line
of the $i$--th chromosome 
at time~$n$ is the sequence of its ancestors until time $0$,
$$(\anc(i,n,s),\,0\leq s\leq n)\,=\,
(X_s(\cI(i,n,s)),\,0\leq s\leq n)\,.$$
\begin{proposition}\label{neal}
Let $i\in\um$. For any $n\geq 0$, the law of the ancestral line
$(\anc(i,n,s),\,0\leq s\leq n)$
of the $i$--th chromosome of $X_n$
is equal to the law of 
$(W_{N(0)},\dots,W_{N(n)})$.
\end{proposition}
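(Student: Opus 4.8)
The plan is to prove this by induction on $n$, mirroring the computation used for Proposition~\ref{nemarg} but now propagating the whole trajectory of the ancestral line rather than only its final value. The induction hypothesis is taken to be the statement of the proposition for \emph{every} index $i\in\um$ at the given time $n$; carrying "for all $i$" along is essential, because the inductive step will feed on the ancestral line of a chromosome with a \emph{random} index. The base case $n=0$ is immediate: the ancestral line reduces to the single chromosome $X_0(i)$, whose law is the component marginal $\nu_0$ of $\mu_0$, which by construction is also the law of $W_0=W_{N(0)}$.

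For the inductive step, fix $i\in\um$ and examine the transition from time $n$ to time $n+1$. By the construction of the Moran dynamics (section~\ref{couplsec}), at step $n+1$ an index $J_{n+1}$ is drawn uniformly and independently of the past, the offspring being inserted at site $J_{n+1}$; it is a mutated copy — the mutation randomness $U_{n+1,\cdot}$ being again independent of everything else — of the chromosome $X_n(I_{n+1})$, with $I_{n+1}$ uniform and independent. Two cases arise. On the event $\{J_{n+1}\neq i\}$, of probability $1-1/m$, site $i$ is untouched, so $\anc(i,n+1,s)=\anc(i,n,s)$ for $0\le s\le n$ and $\anc(i,n+1,n+1)=X_{n+1}(i)=X_n(i)=\anc(i,n,n)$; the new ancestral line is the old one with its last coordinate duplicated. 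On the complementary event $\{J_{n+1}=i\}$, of probability $1/m$, one has $\cI(i,n+1,n)=I_{n+1}$, hence $\anc(i,n+1,s)=\anc(I_{n+1},n,s)$ for $0\le s\le n$, while $\anc(i,n+1,n+1)=X_{n+1}(i)$ is obtained from $\anc(I_{n+1},n,n)$ by one application of the mutation kernel $M$.

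These two cases are matched against the two obtained by conditioning on $\ve_{n+1}$. When $\ve_{n+1}=0$ (probability $1-1/m$) one has $N(n+1)=N(n)$, so $(W_{N(0)},\dots,W_{N(n+1)})=(W_{N(0)},\dots,W_{N(n)},W_{N(n)})$: the trajectory up to time $n$ with its last coordinate duplicated, i.e.\ the image of $(W_{N(0)},\dots,W_{N(n)})$ under the same deterministic map as in the first case above, so the induction hypothesis for the index $i$ gives equality of laws. When $\ve_{n+1}=1$ (probability $1/m$) one has $N(n+1)=N(n)+1$, so $(W_{N(0)},\dots,W_{N(n+1)})$ is $(W_{N(0)},\dots,W_{N(n)})$ followed by a single $M$-step from $W_{N(n)}$, that step being conditionally independent of the past given $W_{N(n)}$. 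This is matched with the second case above: since $I_{n+1}$ is independent of $(X_0,\dots,X_n)$, averaging the induction hypothesis over the value of $I_{n+1}$ shows that $(\anc(I_{n+1},n,s))_{0\le s\le n}$ has the law of $(W_{N(0)},\dots,W_{N(n)})$, and the fresh mutation applied to its endpoint — independent of everything given that endpoint — plays exactly the role of the $M$-step. Since in both models the two events have the same probabilities $1-1/m$ and $1/m$ and are independent of the randomness producing the appended coordinate, adding the two contributions yields the law of $(W_{N(0)},\dots,W_{N(n+1)})$, completing the induction.

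The only delicate point I expect is the bookkeeping of the conditional independence at step $n+1$: one must use that $J_{n+1}$, $I_{n+1}$ and $U_{n+1,\cdot}$ are jointly independent of the past and of one another, so that the decomposition over $\{J_{n+1}=i\}$ versus $\{J_{n+1}\neq i\}$ goes through while the appended chromosome is governed solely by $I_{n+1}$ and $U_{n+1,\cdot}$; and one must invoke the induction hypothesis \emph{uniformly in the chromosome index} in order to handle the random ancestor index $I_{n+1}$ (the harmless possibility $I_{n+1}=i$ included). Everything else is a routine repetition of the computation in the proof of Proposition~\ref{nemarg}.
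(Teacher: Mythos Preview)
Your proof is correct and follows essentially the same strategy as the paper: induction on $n$, with the inductive step driven by the dichotomy ``site $i$ is replaced at step $n+1$'' versus ``site $i$ is untouched,'' matched against $\ve_{n+1}=1$ versus $\ve_{n+1}=0$. The paper carries this out by writing the finite--dimensional distributions $P(\anc(i,n+1,s)=u_s,\,0\le s\le n+1)$ explicitly and summing over the ancestor index $j=\cI(i,n+1,n)$, whereas you phrase the same decomposition at the level of laws via the coupling; the two presentations are equivalent.
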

\begin{proof}
We do the proof by induction over $n$. The result is true at rank $n=0$. Suppose
it has been proved until time $n$.
Let  $i\in\um$ and let
$u_0,\dots,u_{n+1}\in\Al$. We compute
\begin{multline*}
P\big(\anc(i,n+1,s)=u_s,\,0\leq s\leq n+1\big)
\cr
\,=\,
\sum_{x\in\alm}
\sum_{1\leq j\leq m}
P\bigg(
\begin{matrix}
X_{n+1}(i)=u_{n+1},
\,\cI(i,n+1,n)=j\\
X_n=x,\,
\anc(j,n,s)=u_s,\,0\leq s\leq n
\end{matrix}
\bigg)\cr
\,=\,
\sum_{x\in\alm}
\sum_{1\leq j\leq m}
P\bigg(\,
\begin{matrix}
X_{n+1}(i)=u_{n+1}\\
\cI(i,n+1,n)=j\\
\end{matrix}
\,\Big|\,
\begin{matrix}
\anc(j,n,s)=u_s\\
0\leq s\leq n,\,X_n=x
\end{matrix}
\bigg)\cr
\hfill\times
P\Big(\begin{matrix}
\anc(j,n,s)=u_s\\
0\leq s\leq n,\,X_n=x
\end{matrix}
\Big)\,.
\end{multline*}
Since we deal with the neutral process, we have
\begin{multline*}
P\bigg(\,
\begin{matrix}
X_{n+1}(i)=u_{n+1}\\
\cI(i,n+1,n)=j\\
\end{matrix}
\,\Big|\,
\begin{matrix}
\anc(j,n,s)=u_s\\
0\leq s\leq n,\,X_n=x
\end{matrix}
\bigg)\cr
\,=\,
P\bigg(\,
\begin{matrix}
X_{n+1}(i)=u_{n+1}\\
\cI(i,n+1,n)=j\\
\end{matrix}
\,\Big|\,
X_n=x
\bigg)
\cr 
\,=\,
\begin{cases}
\displaystyle\Big(1-\frac{1}{m}\Big) 1_{x(i)=u_{n+1}} +
\frac{1}{m^2}M(x(i),u_{n+1}) 
 &\text{if $j=i$}\\
\displaystyle\frac{1}{m^2}M(x(j),u_{n+1})
 &\text{if $j\neq i$}\\
\end{cases}
\end{multline*}
Reporting in the previous equality, we get
\begin{multline*}
P\big(\anc(i,n+1,s)=u_s,\,0\leq s\leq n+1\big)
\,=\,\cr
\sum_{x\in\alm}
\displaystyle\Big(1-\frac{1}{m}\Big) 1_{x(i)=u_{n+1}} 
\,
P\Big(\begin{matrix}
\anc(i,n,s)=u_s\\
0\leq s\leq n,\,X_n=x
\end{matrix}
\Big)\,\hfill
\cr
\hfill+\sum_{x\in\alm}
\sum_{1\leq j\leq m}
\frac{1}{m^2}M(x(j),u_{n+1})
\,
P\Big(\begin{matrix}
\anc(j,n,s)=u_s\\
0\leq s\leq n,\,X_n=x
\end{matrix}
\Big)
\cr
\,=\,
\displaystyle\Big(1-\frac{1}{m}\Big) 1_{u_n=u_{n+1}} 
P\big(
\anc(i,n,s)=u_s,\,
0\leq s\leq n
\big)\,\hfill
\cr\hfill
+
\sum_{1\leq j\leq m}
\frac{1}{m^2}M(u_n,u_{n+1})
\,
P\big(
\anc(j,n,s)=u_s,\,
0\leq s\leq n
\big)
\,.
\end{multline*}
By the induction hypothesis, we have,
for any $j\in\{\,1,\cdots,m\,\}$,
$$
P\big(
\anc(i,n,s)=u_s,\,
0\leq s\leq n
\big)
\,=\,
P\big(W_{N(0)}=u_0,\dots,W_{N(n)}=u_n\big)\,.$$
Therefore
\begin{multline*}
P\big(\anc(i,n+1,s)=u_s,\,0\leq s\leq n+1\big)
\,=\,\cr
P\big(W_{N(n+1)}=u_{n+1}\,|\,W_{N(n)}=u_n\big)
P\big(W_{N(0)}=u_0,\dots,W_{N(n)}=u_n\big)
\cr
\,=\,
P\big(W_{N(0)}=u_0,\dots,W_{N(n+1)}=u_{n+1}\big)
\end{multline*}
and the induction step is completed.
\end{proof}
\subsection{Mutation dynamics}\label{mutdyn}
Throughout the section, we consider a Markov chain 
$(Y_n)_{n\geq 0} 
\index{$Y_n$}$
with state space $\zl$ and having for
transition matrix the lumped mutation matrix $M_H$.
By lemma~\ref{mhlump}, 
for $b,c\in\zl$,
the coefficient $M_H(b,c)$ of the matrix $M_H$ is equal to
$$
\sum_{
\genfrac{}{}{0pt}{1}{0\leq k\leq\ell-b}{
\genfrac{}{}{0pt}{1}
 {0\leq h\leq b}{k-h=c-b}
}
}
{ \binom{\ell-b}{k}}
{\binom{b}{h}}
\Big(p\Big(1-\frac{1}{\kappa}\Big)\Big)^k
\Big(1-p\Big(1-\frac{1}{\kappa}\Big)\Big)^{\ell-b-k}
\Big(\frac{p}{\kappa}\Big)^h
\Big(1-\frac{p}{\kappa}\Big)^{b-h}\,.
$$
Such a Markov chain can be realized on our common probability
space. Its construction requires only 
the family 
of random variables 
$$(U_{n,l},\,n\geq 1,\,1\leq l\leq\ell)$$
with uniform law on 
the interval 
$[0,1]$.
Let $b\in\zl$ be the starting point of the chain.
We set $Y_0=b$ and we define inductively for $n\geq 1$
\begin{align*}
Y_{n}\,&=\,Y_{n-1}
-\sum_{k=1}^{Y_{n-1}}1_{U_{n,k}<p/\kappa}
+\sum_{k=Y_{n-1}+1}^\ell1_{U_{n,k}>1-p(1-1/\kappa)}\cr
\,&=\,\cMH(Y_{n-1},
U_{n,1},\dots,
U_{n,\ell})
\,.
\end{align*}
By lemma~\ref{monmut}, the map~$\cMH$ is non--decreasing with respect
to its first argument.
Thus the above
construction provides a monotone coupling of the processes starting
with different initial conditions and we conclude that the 
Markov chain
$(Y_n)_{n\geq 0}$ is monotone.
\begin{proposition}\label{ivy}
The matrix $M_H$ is reversible with respect to 
the binomial law 
$\cB(\ell, 1-1/\kappa)$
with parameters 
$\ell$ and $1-1/\kappa$.
This binomial law is the invariant probability measure of the 
Markov chain
$(Y_n)_{n\geq 0}$.
\end{proposition}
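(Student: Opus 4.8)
The plan is to derive both assertions from one structural fact: the full mutation matrix $M$ on $\Al$ is symmetric, hence reversible with respect to the uniform measure, and the lumping map $H$ carries this reversibility down to $M_H$ while carrying the uniform measure to $\cB(\ell,1-1/\kappa)$. First I would record that at a single locus the mutation kernel keeps a letter with probability $1-q$ and sends it to each of the other $\kappa-1$ letters with probability $q/(\kappa-1)$, so that this one-locus kernel is symmetric; since $M$ is its $\ell$-fold tensor product, $M(u,v)=M(v,u)$ for all $u,v\in\Al$. Consequently $M$ is reversible for the uniform probability measure $\upsilon(u)=\kappa^{-\ell}$ on $\Al$.

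Next I would compute the image of $\upsilon$ under $H$: the Hamming class $H^{-1}(\{b\})$ has $\binom{\ell}{b}(\kappa-1)^b$ elements, so $\upsilon\big(H^{-1}(\{b\})\big)=\binom{\ell}{b}(1-1/\kappa)^b(1/\kappa)^{\ell-b}$, which is exactly the mass that $\cB(\ell,1-1/\kappa)$ assigns to $b$; call this law $\pi$. The heart of the proof is then the detailed balance relation $\pi(b)M_H(b,c)=\pi(c)M_H(c,b)$ for $b,c\in\zl$. To obtain it I would use lemma~\ref{mhlump}, which says that $\sum_{w:\,H(w)=c}M(u,w)=M_H(b,c)$ whatever the choice of $u$ with $H(u)=b$, and write
$$\pi(b)\,M_H(b,c)\,=\,\sum_{u:\,H(u)=b}\,\sum_{w:\,H(w)=c}\upsilon(u)M(u,w)\,=\,\sum_{u:\,H(u)=b}\,\sum_{w:\,H(w)=c}\upsilon(w)M(w,u)\,,$$
using the symmetry of $M$ in the middle step; the right-hand side collapses, again by lemma~\ref{mhlump} with the roles of $b$ and $c$ exchanged, to $\sum_{w:\,H(w)=c}\upsilon(w)M_H(c,b)=\pi(c)M_H(c,b)$. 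Summing the detailed balance relation over $b\in\zl$ and using that $M_H$ is stochastic gives $\sum_b\pi(b)M_H(b,c)=\pi(c)$, so $\pi$ is invariant for $M_H$; since $(Y_n)_{n\geq 0}$ has transition matrix $M_H$, this binomial law is its invariant probability measure.

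There is no genuine obstacle here; the one point that deserves attention is that lumping a reversible chain need not yield a reversible chain, and the computation above works precisely because the reversing measure $\upsilon$ of $M$ is constant on the fibres of $H$, i.e. $H$ lumps $M$ and $\upsilon$ simultaneously. If one preferred a self-contained argument bypassing lemma~\ref{mhlump}, one could instead realize $Y_n$ as a sum of $\ell$ independent two-state Markov chains, one per locus, recording agreement or disagreement with $w^*$; each such chain is reversible with invariant law $(1/\kappa,\,1-1/\kappa)$, and the invariant law of the sum is then the law of a $\cB(\ell,1-1/\kappa)$ variable. The fibre-sum computation is the most economical, so that is the route I would take.
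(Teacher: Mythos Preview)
Your proof is correct and takes a genuinely different route from the paper. The paper verifies the detailed balance relation $\cB(b)\,M_H(b,c)=\cB(c)\,M_H(c,b)$ by brute force: it expands $\cB(b)\,M_H(b,c)$ using the explicit binomial-sum formula for $M_H$ from lemma~\ref{mhlump}, rewrites the product of binomial coefficients as a multinomial, performs a change of summation index, and recognizes the resulting expression as $\cB(c)\,M_H(c,b)$. Your argument is structural instead: you observe that the full mutation matrix $M$ on $\Al$ is symmetric (being a tensor power of a symmetric one-locus kernel), hence reversible for the uniform law, and that lumping through $H$ transports this reversibility to $M_H$ while transporting the uniform law to the binomial $\cB(\ell,1-1/\kappa)$. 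This is cleaner and explains \emph{why} the binomial law arises, rather than merely checking that it works; the paper's computation, on the other hand, is self-contained and does not appeal to the unlumped chain.

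One small remark on your aside: lumpability in the Kemeny--Snell sense together with reversibility of the original chain already forces reversibility of the lumped chain with respect to the pushforward measure, with no extra hypothesis on the fibres. Your displayed computation only uses the detailed balance $\upsilon(u)M(u,w)=\upsilon(w)M(w,u)$ and the lumpability identity from lemma~\ref{mhlump}; constancy of $\upsilon$ on fibres is a feature of the present situation but is not what makes the argument go through. This does not affect the validity of your proof.
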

{\bf Notation}. 
We 
denote simply by $\cB\index{$\cB$}$
the binomial law 
$\cB(\ell, 1-1/\kappa)$. Thus
$$
\forall b\in\zl \qquad
\cB(b)\,=\,
\binom{\ell}{b}
\Big(1-\frac{1}{\kappa}\Big)^b
\Big(\frac{1}{\kappa}\Big)^{\ell-b}
\,.$$
\begin{proof}
We check that the matrix $M_H$ is reversible with respect to $\cB$.
Let $b,c\in\zl$. We use the identity
$$\binom{\ell}{b}
{ \binom{\ell-b}{k}}
{\binom{b}{h}}
\,=\,\frac{\ell!}{k!\,h!\,(\ell-b-k)!\,(b-h)!}$$
to write
\begin{multline*}
\cB(b)\,M_H(b,c)\,=\,
\binom{\ell}{b}
\Big(1-\frac{1}{\kappa}\Big)^b
\Big(\frac{1}{\kappa}\Big)^{\ell-b}\,\times\cr
\sum_{
\genfrac{}{}{0pt}{1}{0\leq k\leq\ell-b}{
\genfrac{}{}{0pt}{1}
 {0\leq h\leq b}{k-h=c-b}
}
}\kern-3pt
{ \binom{\ell-b}{k}}
{\binom{b}{h}}
\Big(p\Big(1-\frac{1}{\kappa}\Big)\Big)^k
\Big(1-p\Big(1-\frac{1}{\kappa}\Big)\Big)^{\ell-b-k}
\Big(\frac{p}{\kappa}\Big)^h
\Big(1-\frac{p}{\kappa}\Big)^{b-h}\cr
\,=\,
\kern-5pt
\sum_{
\genfrac{}{}{0pt}{1}{0\leq k\leq\ell-b}{
\genfrac{}{}{0pt}{1}
 {0\leq h\leq b}{k-h=c-b}
}
}
\kern-5pt
\frac{
\displaystyle
\ell!\, p^{k+h}
\Big(1-\frac{1}{\kappa}\Big)^{b+k}
\Big(\frac{1}{\kappa}\Big)^{\ell-b+h}
}{k!\,h!\,(\ell-b-k)!\,(b-h)!}
\Big(1-p\Big(1-\frac{1}{\kappa}\Big)\Big)^{\ell-b-k}
\Big(1-\frac{p}{\kappa}\Big)^{b-h}\kern-5pt\,.
\end{multline*}
We eliminate the variable $h=k+b-c$ in this formula:
\begin{multline*}
\cB(b)\,M_H(b,c)\,=\,
\sum_{
\genfrac{}{}{0pt}{1}{0\leq k\leq\ell-b}
{c-b\leq k\leq c}
}
\frac{\ell!
}{k!\,(k+b-c)!\,(\ell-b-k)!\,(c-k)!}
\,\times\cr
\, p^{2k+b-c}
\Big(1-\frac{1}{\kappa}\Big)^{b+k}
\Big(\frac{1}{\kappa}\Big)^{\ell-c+k}
\Big(1-p\Big(1-\frac{1}{\kappa}\Big)\Big)^{\ell-b-k}
\Big(1-\frac{p}{\kappa}\Big)^{c-k}\,.
\end{multline*}
If we set now $h=k+b-c$ and we eliminate $k$, we get
\begin{multline*}
\cB(b)\,M_H(b,c)\,=\,
\sum_{
\genfrac{}{}{0pt}{1}
{b-c\leq h\leq\ell-c}
{0\leq h\leq b}
}
\frac{\ell!
}{(h+c-b)!\,h!\,(\ell-c-h)!\,(b-h)!}
\,\times\cr
\, p^{2h+c-b}
\Big(1-\frac{1}{\kappa}\Big)^{c+h}
\Big(\frac{1}{\kappa}\Big)^{\ell-b+h}
\Big(1-p\Big(1-\frac{1}{\kappa}\Big)\Big)^{\ell-c-h}
\Big(1-\frac{p}{\kappa}\Big)^{b-h}\cr
\,=\,
\cB(c)\,M_H(c,b)
\,.
\end{multline*}
We obtain the same expression as before, but with $b$ and $c$ exchanged.
Thus the matrix $M_H$ is reversible with respect to $\cB$ and 
$\cB$ is the invariant probability measure of $\cMH$.
\end{proof}

\noindent
When $\ell$ grows, the law $\cB$ concentrates exponentially fast
in a neighborhood of its mean
$$\lk\,=\,\sum_{l=0}^\ell l\,\cB(l)\,=\,
\ell
(1-1/\kappa)\,.$$
We estimate next the probability of the points at the left of $\lk$.
\begin{lemma}\label{exco}
For $b\leq \ell/2$, we have
$$\frac{1}{\kappa^\ell}
\left(\frac{\ell}{2b}\right)^b\,\leq\,
\cB(b)\,\leq\,\frac{\ell^b}{\kappa^{\ell-b}}\,.$$
\end{lemma}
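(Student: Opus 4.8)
The plan is to derive both inequalities directly from elementary bounds on the binomial coefficient $\binom{\ell}{b}$, after rewriting $\cB(b)$ in a convenient form. First I would merge the two powers of $\kappa$: since
$$\Big(1-\frac1\kappa\Big)^b\Big(\frac1\kappa\Big)^{\ell-b}\,=\,\frac{(\kappa-1)^b}{\kappa^\ell}\,,$$
we have $\cB(b)=\binom{\ell}{b}\,(\kappa-1)^b/\kappa^\ell$, so that the whole task reduces to sandwiching $\binom{\ell}{b}\,(\kappa-1)^b$ between suitable quantities.

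For the upper bound I would discard the factor $(1-1/\kappa)^b\le 1$ in the original product and use the crude estimate $\binom{\ell}{b}\le \ell^b$ (each of the $b$ factors in $\ell(\ell-1)\cdots(\ell-b+1)/b!$ is at most $\ell$), together with $(1/\kappa)^{\ell-b}=\kappa^{-(\ell-b)}$. This gives $\cB(b)\le \ell^b/\kappa^{\ell-b}$ at once, and in fact it does not even require the hypothesis $b\le\ell/2$.

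For the lower bound I would invoke the standard inequality $\binom{\ell}{b}\ge (\ell/b)^b$, which one sees by writing $\binom{\ell}{b}=\prod_{i=0}^{b-1}\frac{\ell-i}{b-i}$ and noting that each factor $\frac{\ell-i}{b-i}$ is at least $\frac{\ell}{b}$ because $\ell\ge b$. Combined with the reformulation above,
$$\cB(b)\,\ge\,\Big(\frac{\ell}{b}\Big)^b\frac{(\kappa-1)^b}{\kappa^\ell}\,.$$
Since $\kappa\ge 2$ we have $\kappa-1\ge 1\ge \tfrac12$, hence $(\kappa-1)^b\ge 2^{-b}$, and therefore $\cB(b)\ge \big(\ell/(2b)\big)^b/\kappa^\ell$, as claimed.

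The argument is entirely elementary and I do not expect any genuine obstacle; the only point needing a little care is bookkeeping the direction of the inequalities, namely throwing away $(1-1/\kappa)^b$ for the upper bound while keeping it, bounded below by $2^{-b}$, for the lower bound.
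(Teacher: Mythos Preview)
Your proof is correct and follows essentially the same elementary approach as the paper: bound the binomial coefficient above by $\ell^b$ and below by a power of $\ell/b$, and handle the remaining $\kappa$-factors trivially. The only cosmetic difference is in the lower bound: the paper uses $\binom{\ell}{b}\ge (\ell-b)^b/b^b$ and then invokes $b\le\ell/2$ to get $(\ell-b)^b\ge(\ell/2)^b$, whereas you use the sharper $\binom{\ell}{b}\ge(\ell/b)^b$ directly and only afterwards give away a factor $2^b$ via $(\kappa-1)^b\ge 2^{-b}$; as you note, your route makes the hypothesis $b\le\ell/2$ unnecessary.
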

\begin{proof} Let $b\leq\ell/2$. Then
$$\cB(b)\,=\,
\binom{\ell}{b}
\Big(1-\frac{1}{\kappa}\Big)^b
\Big(\frac{1}{\kappa}\Big)^{\ell-b}
\,\geq\,
\binom{\ell}{b}
\frac{1}{\kappa^\ell}
\,\geq\,
\frac{(\ell -b)^b}{b^b}
\frac{1}{\kappa^\ell}
\,\geq\,
\left(\frac{\ell}{2b}\right)^b
\frac{1}{\kappa^\ell}\,.
$$
The upper bound on
$\cB(b)$ is straightforward.
\end{proof}

\noindent
The estimates of lemma~\ref{exco} can be considerably enhanced.
In the next lemma, we present the fundamental large deviation
estimates for the binomial distribution.
This is the simplest case of the famous Cram\'er theorem.
\begin{lemma}\label{gdb}
For $\rho\in[0,1]$, we have
$$\lim_{\ell\to\infty}
\,\frac{1}{\ell}\ln\cB(\lfloor \rho \ell\rfloor)\,=\,
-(1-\rho)\ln\big(\kappa(1-\rho)\big)
-\rho\ln
\frac{\kappa\rho}{\kappa-1}\,.
$$
\end{lemma}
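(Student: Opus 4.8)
The plan is to reduce the statement to a direct application of Stirling's formula, exploiting the fact that $\cB$ is an explicit binomial law. Write $b=\lfloor\rho\ell\rfloor$, so that $b/\ell\to\rho$ and $(\ell-b)/\ell\to 1-\rho$ as $\ell\to\infty$. I would split
$$\frac{1}{\ell}\ln\cB(b)\,=\,\frac{1}{\ell}\ln\binom{\ell}{b}+\frac{b}{\ell}\ln\Big(1-\frac{1}{\kappa}\Big)+\frac{\ell-b}{\ell}\ln\frac{1}{\kappa}$$
and treat the three terms separately. The last two are immediate: they converge to $\rho\ln\frac{\kappa-1}{\kappa}-(1-\rho)\ln\kappa$.

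For the binomial coefficient I would use Stirling in the form $\ln n!=n\ln n-n+O(\ln n)$, together with the convention $0\ln 0=0$ so as to cover the endpoints, which gives
$$\ln\binom{\ell}{b}\,=\,\ell\ln\ell-b\ln b-(\ell-b)\ln(\ell-b)+O(\ln\ell)\,.$$
Dividing by $\ell$ and substituting $b=\rho\ell+O(1)$, the term $O(\ln\ell)/\ell$ tends to $0$ and one finds $\frac{1}{\ell}\ln\binom{\ell}{b}\to-\rho\ln\rho-(1-\rho)\ln(1-\rho)$, the Shannon entropy of the pair $(\rho,1-\rho)$. Adding the three limits and rearranging the logarithms,
$$-\rho\ln\rho-(1-\rho)\ln(1-\rho)+\rho\ln\frac{\kappa-1}{\kappa}-(1-\rho)\ln\kappa\,=\,-(1-\rho)\ln\big(\kappa(1-\rho)\big)-\rho\ln\frac{\kappa\rho}{\kappa-1}\,,$$
which is the claimed value.

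The only delicate points are bookkeeping ones, and I do not expect any genuine obstacle — indeed the statement itself flags this as the simplest instance of Cram\'er's theorem. First, one must check that replacing the integer $b$ by $\rho\ell$ inside $b\ln b$ costs only $O(\ln\ell)$; this follows from the mean value theorem since $\frac{d}{dx}(x\ln x)=1+\ln x$ is $O(\ln\ell)$ on $[1,\ell]$, so $|b\ln b-\rho\ell\ln(\rho\ell)|=O(\ln\ell)$, and similarly for $(\ell-b)\ln(\ell-b)$. Second, the endpoints $\rho\in\{0,1\}$ are handled either by the convention $0\ln 0=0$ or directly: $\cB(0)=\kappa^{-\ell}$ gives the limit $-\ln\kappa$ and $\cB(\ell)=(1-1/\kappa)^{\ell}$ gives $\ln\frac{\kappa-1}{\kappa}$, both matching the formula. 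An alternative route would be to invoke Cram\'er's theorem for i.i.d.\ Bernoulli$(1-1/\kappa)$ variables and identify the rate function with the Legendre transform of the associated log-moment generating function; but since the binomial is available in closed form, the Stirling computation above is the shortest path, and the main thing to watch is that all error terms are $o(\ell)$ uniformly in $\rho\in[0,1]$.
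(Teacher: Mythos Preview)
Your proof is correct. The paper takes a slightly different route to the same destination: instead of invoking Stirling's formula, it writes the binomial coefficient as
\[
\ln\binom{\ell}{b}\,=\,\sum_{k=0}^{b-1}\ln\Big(1-\frac{k}{\ell}\Big)-\sum_{k=1}^{b}\ln\frac{k}{\ell}
\]
and recognizes these as Riemann sums for $\int_0^\rho\ln(1-x)\,dx$ and $\int_0^\rho\ln x\,dx$, then integrates explicitly. Your Stirling approach is more direct and treats the error control cleanly via the mean value theorem; the paper's Riemann-sum approach is a shade more self-contained in that it does not import Stirling as a black box, and it avoids the need to discuss the replacement of $b$ by $\rho\ell$ inside $b\ln b$. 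Both are standard and equally short; neither has any real advantage over the other here.
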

\begin{proof} 
We write
\begin{multline*}
\ln\cB(\lfloor \rho \ell\rfloor)\,=\,
\ln\frac{\ell\cdots(\ell-\lfloor \rho\ell\rfloor +1)}
{1\cdots
\lfloor \rho \ell\rfloor}
+
(\ell-\lfloor \rho \ell\rfloor)\ln\frac{1}{\kappa}
+\lfloor \rho \ell\rfloor\ln\big(1-\frac{1}{\kappa}\big)
\cr
\,=\,
\sum_{k=0}^{\lfloor \rho \ell\rfloor-1}
\ln\Big(1-\frac{k}{\ell}\Big)
-
\sum_{k=1}^{\lfloor \rho \ell\rfloor}
\ln\frac{k}{\ell}
+
(\ell-\lfloor \rho \ell\rfloor)\ln\frac{1}{\kappa}
+\lfloor \rho \ell\rfloor\ln\big(1-\frac{1}{\kappa}\big)\,.\hfil
\end{multline*}
We recognize Riemann sums for the functions $\ln(1-x)$ and $\ln x$, thus
$$\lim_{\ell\to\infty}
\,\frac{1}{\ell}\ln\cB(\lfloor \rho \ell\rfloor)\,=\,
\int_0^\rho\ln\frac{1-x}{x}\,dx
+(1-\rho)\ln\frac{1}{\kappa}
+\rho \ln\big(1-\frac{1}{\kappa}\big)\,.$$
We conclude by performing the integration.
\end{proof}

\noindent
The minimum of the rate function appearing in lemma~\ref{gdb}
is $\lk$.
The typical behavior of the Markov chain 
$(Y_n)_{n\geq 0}$ is the following.
Starting from $1$, it very quickly reaches a neighbor of
its stable equilibrium
$\lk$.
Then it starts exploring the surrounding space by performing
larger and larger excursions outside
$\lk$.
Starting from $\lk$,
the time needed to hit the point $c\in\zl$ is of order $\cB(c)^{-1}$.
Once the process is close to $\lk$,
it is unlikely to visit $0$ before time
$\cB(0)^{-1}\,=\,\kappa^\ell$.
This is why 
the expected value of the hitting time of $0$ starting from $1$ is of
order 
$\kappa^\ell$.
In the next sections, we derive quantitative bounds on the behavior of the
chain
$(Y_n)_{n\geq 0}$, starting from $1$ or from $\ell$.
We need only crude bounds, hence we use elementary techniques, namely, we compare
the process with a sum of i.i.d. random variables and we use 
the classical Chebyshev inequality, as well as the
exponential Chebyshev inequality.
The resulting proofs are somehow clumsy, and better estimates
could certainly be derived with more sophisticated tools.
\subsection{Falling to equilibrium from the left}
For $b\in\zl$, we define
the hitting time $\tau(b)$ of $\{\,b,\dots,\ell\,\}$ by
$$\tau(b)\,=\,\inf\,\big\{\,n\geq 0: Y_n\geq b\,\big\}\,.
\index{$\tau(b)$}
$$
Our first goal is to estimate, for $b$ smaller than 
$\lk=\ell(1-1/\kappa)$ and $n\geq 1$, the
probability
$$P\big(\tau(b)>n\,|\,Y_0=0\big)\,.$$

\noindent
{\bf Rough bound on the drift.}
Suppose that
$\tau(b)> n$. Then $Y_{n-1}<b$ and
$$Y_{n}\,\geq\,Y_{n-1}
-\sum_{k=1}^{b}1_{U_{n,k}<p/\kappa}
+\sum_{k=b+1}^\ell1_{U_{n,k}>1-p(1-1/\kappa)}\,.$$
Iterating this inequality, we see that, on the event
$\{\,\tau(b)> n\,\}$, we have
$Y_n\geq V_n$ where
$$V_n\,=\,
\sum_{t=1}^n\bigg(
-\sum_{k=1}^{b}1_{U_{t,k}<p/\kappa}
+\sum_{k=b+1}^\ell1_{U_{t,k}>1-p(1-1/\kappa)}\bigg)\,.$$
Therefore
$$P\big(\tau(b)>n\,|\,Y_0=0\big)\,\leq\,
P(V_n<b)\,.$$
We shall bound
$P(V_n<b)$ with the help of Chebyshev's inequality. Let us compute
the mean and the variance of $V_n$. Since $V_n$ is a sum of independent
Bernoulli random variables, we have
\begin{align*}
E(V_n)\,&=\,
n\big(-b\,\frac{p}{\kappa}+(\ell-b)\,p\,(1-\frac{1}{\kappa})\big)
\,=\,
np\,\big(\lK-b\big)\,,\cr
\var(V_n)\,&=\,
n\Big(b\,
\frac{p}{\kappa}
\big(1-\frac{p}{\kappa}\big)
+(\ell-b)\,
p\,\big(1-\frac{1}{\kappa}\big)
\big(1-p\,\big(1-\frac{1}{\kappa}\big)\big)
\Big)\cr
\,&\leq\,n\big(bp+(\ell-b)p\big)\,=\,n\ell p\,.
\end{align*}
We suppose that $n$ is large enough so that
$2b\,<\,E(V_n)$,
that is,
$$n\,>\,\frac{2b}{
p\,\big(\lK-b\big)}\,.$$
By Chebyshev's inequality, we have then
\begin{multline*}
P(V_n<b)\,=\,
P\big(V_n-E(V_n)<b-E(V_n)\big)\cr
\,\leq\,
P\big(\big|V_n-E(V_n)\big|>\frac{1}{2}E(V_n)\big)\cr
\,\leq\,
\frac{4\var(V_n)}{
\big(E(V_n)\big)^2}
\,\leq\,
\frac{4n\ell p}{
\Big(np\,\big(\lK-b\big)\Big)^2
}\,.
\end{multline*}
We have thus proved the following estimate.
\begin{lemma}\label{bdhit}
For $n$ such that
$$n\,>\,\frac{2b}{
p\,\big(\lK-b\big)}\,,$$
we have
$$P\big(\tau(b)>n\,|\,Y_0=0\big)\,\leq\,
\frac{4\ell }{
np\,\big(\lK-b\big)^2
}\,.$$
\end{lemma}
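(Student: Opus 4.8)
\noindent
The plan is to turn the bound on the hitting time $\tau(b)$ into a classical tail estimate for a sum of independent Bernoulli variables, by dominating the chain $(Y_n)_{n\geq 0}$ from below while it stays in $\{\,0,\dots,b-1\,\}$. On the event $\{\,\tau(b)>n\,\}$ we have $Y_{t-1}<b$ for every $t\in\{\,1,\dots,n\,\}$, so in the recursion
$$Y_t\,=\,Y_{t-1}-\sum_{k=1}^{Y_{t-1}}1_{U_{t,k}<p/\kappa}+\sum_{k=Y_{t-1}+1}^{\ell}1_{U_{t,k}>1-p(1-1/\kappa)}$$
I would bound the increment from below by the ``frozen'' increment obtained by replacing $Y_{t-1}$ by $b$: since $Y_{t-1}\leq b$, the subtracted sum only gets longer and the added sum only gets shorter under this replacement, so the increment is at least $\Delta_t:=-\sum_{k=1}^{b}1_{U_{t,k}<p/\kappa}+\sum_{k=b+1}^{\ell}1_{U_{t,k}>1-p(1-1/\kappa)}$. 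Iterating from $Y_0=0$ gives $Y_n\geq V_n:=\sum_{t=1}^{n}\Delta_t$ on $\{\,\tau(b)>n\,\}$, and combined with $Y_n<b$ on that event this yields $P(\tau(b)>n\,|\,Y_0=0)\leq P(V_n<b)$. Note that $V_n$ is a sum of independent signed Bernoulli variables, so its law no longer involves the Markov structure.

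Next I would record the first two moments of $V_n$. Each of the $n$ steps contributes $b$ Bernoulli$(p/\kappa)$ variables counted with a minus sign and $\ell-b$ Bernoulli$(p(1-1/\kappa))$ variables counted with a plus sign, hence
$$E(V_n)\,=\,n\Big(-b\,\frac{p}{\kappa}+(\ell-b)\,p\Big(1-\frac{1}{\kappa}\Big)\Big)\,=\,np\,(\lk-b)\,,$$
and since each of the $n\ell$ Bernoulli variables entering $V_n$ has variance at most $p$,
$$\var(V_n)\,\leq\,n\ell p\,.$$

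Finally, the hypothesis $n>2b/\big(p(\lk-b)\big)$ is exactly the statement $2b<E(V_n)$, so $b<\tfrac12 E(V_n)$ and therefore $\{\,V_n<b\,\}\subseteq\big\{\,|V_n-E(V_n)|>\tfrac12 E(V_n)\,\big\}$. Chebyshev's inequality then gives
$$P(V_n<b)\,\leq\,\frac{4\var(V_n)}{\big(E(V_n)\big)^2}\,\leq\,\frac{4n\ell p}{\big(np(\lk-b)\big)^2}\,=\,\frac{4\ell}{np(\lk-b)^2}\,,$$
which is the asserted estimate. There is no genuine obstacle here: the only point needing a little care is the monotone domination of the first paragraph — checking that replacing $Y_{t-1}$ by $b$ in both sums really produces a lower bound for the increment when $Y_{t-1}\leq b$ — after which everything reduces to an elementary moment computation and Chebyshev's inequality.
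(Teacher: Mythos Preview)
Your proof is correct and follows essentially the same approach as the paper: dominate the increment $Y_t-Y_{t-1}$ from below by the ``frozen'' increment $\Delta_t$ while $Y_{t-1}<b$, sum to obtain $Y_n\geq V_n$ on $\{\tau(b)>n\}$, compute $E(V_n)=np(\lk-b)$ and $\var(V_n)\leq n\ell p$, and apply Chebyshev's inequality using that the hypothesis on $n$ gives $2b<E(V_n)$. The argument and the intermediate bounds match the paper's proof line by line.
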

We derive next a crude lower bound on the 
descent from $0$ to $\lk$. This lower bound will be used to derive
the upper bound on the discovery time.
\begin{proposition}\label{lowerdesc}
We suppose that 
$\ell\to +\infty$, $q\to 0$
in such a way that
$${\ell q} \to a\in ]0,+\infty[\,.$$
For $\ell$ large enough and $q$ small enough, 
we have
$$P\big(\tau(\lk)\leq\ell^2\,|\,Y_0=0\big)\,\geq\,
\Big(1-\frac{5}{a(\ln\ell)^2}\Big)
\Big(\frac{p}{\kappa}\Big)^{\ln\ell}
e^{-2a}\,.
$$
\end{proposition}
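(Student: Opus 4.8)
The plan is to reach $\lk$ by a cheap long climb followed by one expensive step. Starting from $0$ the chain $(Y_n)$ has positive drift towards its equilibrium $\lk$, so within $\ell^2$ steps it should come within distance of order $\ln\ell$ of $\lk$ with probability close to $1$ — this is exactly what Lemma~\ref{bdhit} quantifies, and it is the source of the correction $5/(a(\ln\ell)^2)$. The remaining $\ln\ell$ units of the climb are then realised by a single, very unlikely step, whose probability produces the factor $(p/\kappa)^{\ln\ell}e^{-2a}$. Concretely, I would fix the intermediate level $b=\lceil\lk\rceil-\lfloor\ln\ell\rfloor$, so that for $\ell$ large one has $0\le b<\lk$, $\ \lk-b\ge\ln\ell-2$, and $\lceil\lk\rceil-b=\lfloor\ln\ell\rfloor$.

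First, I would apply Lemma~\ref{bdhit} with this $b$ and $n=\ell^2-1$. One checks the hypothesis $n>2b/(p(\lk-b))$: since $b\le\lk\sim\ell$, $\ \lk-b\ge\ln\ell-2$, and $\ell p\to a\kappa/(\kappa-1)$ stays bounded away from $0$, the right-hand side is of order $\ell^{2}/(a\ln\ell)$, which is $<\ell^{2}-1$ once $\ell$ is large. Lemma~\ref{bdhit} then gives
$$
P\big(\tau(b)>\ell^{2}-1\,\big|\,Y_0=0\big)\ \le\ \frac{4\ell}{(\ell^{2}-1)\,p\,(\lk-b)^{2}}\ ,
$$
and, using $\ell p\to a\kappa/(\kappa-1)>a$ together with $\lk-b\ge\ln\ell-2$, the right-hand side is at most $5/(a(\ln\ell)^{2})$ for $\ell$ large and $q$ small. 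Hence $P(\tau(b)\le\ell^{2}-1\mid Y_0=0)\ge 1-5/(a(\ln\ell)^{2})$.

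Then, I would use the (strong) Markov property at the stopping time $\tau(b)$. On $\{\tau(b)\le\ell^{2}-1\}$, either $Y_{\tau(b)}\ge\lceil\lk\rceil$ already, so that $\tau(\lk)\le\tau(b)\le\ell^{2}$, or $b\le Y_{\tau(b)}<\lceil\lk\rceil$, in which case it suffices that the next step reaches a value $\ge\lceil\lk\rceil$ (it then does so at time $\le\ell^{2}$). By Lemma~\ref{monmut} the map $\cMH$ is non-decreasing in its Hamming-class argument, so $P(Y_{1}\ge\lceil\lk\rceil\mid Y_0=y)\ge P(Y_{1}\ge\lceil\lk\rceil\mid Y_0=b)$ for every $y\ge b$, and it is enough to bound the latter. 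Starting from $b$, one reaches $\lceil\lk\rceil$ in one step by forcing a fixed set of $\lceil\lk\rceil-b=\lfloor\ln\ell\rfloor$ among the $\ell-b$ loci currently equal to $w^{*}$ to mutate away from $w^{*}$, while the remaining $\ell-\lceil\lk\rceil$ such loci and all $b$ loci already differing from $w^{*}$ stay unchanged; the probability of this explicit event is $q^{\lfloor\ln\ell\rfloor}(1-p/\kappa)^{b}(1-q)^{\ell-\lceil\lk\rceil}$. Since $q=(\kappa-1)p/\kappa\ge p/\kappa$ and $\lfloor\ln\ell\rfloor\le\ln\ell$, we get $q^{\lfloor\ln\ell\rfloor}\ge(p/\kappa)^{\ln\ell}$; and since $b\sim\lk$ gives $(p/\kappa)b\to a/\kappa$ while $\ell-\lceil\lk\rceil\sim\ell/\kappa$ gives $q(\ell-\lceil\lk\rceil)\to a/\kappa$, the last two factors tend to $e^{-2a/\kappa}\ge e^{-2a}$, hence are $\ge e^{-2a}$ for $\ell$ large. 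So $P(Y_{1}\ge\lceil\lk\rceil\mid Y_0=b)\ge(p/\kappa)^{\ln\ell}e^{-2a}$, and combining with the first step (and $\{Y_n\ge\lk\}=\{Y_n\ge\lceil\lk\rceil\}$) yields
$$
P\big(\tau(\lk)\le\ell^{2}\,\big|\,Y_0=0\big)\ \ge\ \Big(1-\frac{5}{a(\ln\ell)^{2}}\Big)\Big(\frac{p}{\kappa}\Big)^{\ln\ell}e^{-2a}\ .
$$

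The only genuinely delicate point I anticipate is the bookkeeping, not any substantial difficulty: the intermediate level $b$ must be chosen so that simultaneously $\lk-b$ is of order $\ln\ell$ (to feed the $(\ln\ell)^2$ in Lemma~\ref{bdhit}) and $\lceil\lk\rceil-b\le\ln\ell$ (so the forced one-step jump costs only $(p/\kappa)^{\ln\ell}$ rather than $(p/\kappa)^{\lceil\ln\ell\rceil+1}$), and then one must verify that the slack deliberately built into the statement — the constant $5$ rather than $4$, the exponent $e^{-2a}$ rather than $e^{-2a/\kappa}$, the use of $q$ in place of $p/\kappa$, and all the floors and ceilings — really does absorb every lower-order correction in the limit $\ell\to\infty$, $q\to0$, $\ell q\to a$. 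Everything else is a soft argument: the two uniform lower bounds, the monotonicity of $\cMH$, and one application of the Markov property.
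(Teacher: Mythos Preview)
Your proof is correct and follows essentially the same two-stage strategy as the paper: use Lemma~\ref{bdhit} with $b\approx\lk-\ln\ell$ and $n=\ell^{2}-1$ to reach within $\ln\ell$ of $\lk$ with probability at least $1-5/(a(\ln\ell)^{2})$, then bound from below the probability of covering the remaining gap in a single step by $(p/\kappa)^{\ln\ell}e^{-2a}$, combining the two via the Markov property and monotonicity. The only differences are cosmetic — you track integer parts more carefully and your one-step bound limits to $e^{-2a/\kappa}$ rather than the paper's $e^{-a}$, but both are comfortably $\ge e^{-2a}$.
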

\begin{proof}
We decompose
\begin{multline*}
P\big(\tau(\lk)\leq\ell^2\,|\,Y_0=0\big)\,\geq\,
P\big(\tau(\lk-\ln\ell)<\ell^2,\,
\tau(\lk)\leq\ell^2\,|\,Y_0=0\big)\cr
\,=\,
\sum_{t<\ell^2}\sum_{b\geq\lk-\ln\ell}
P\big(\tau(\lk-\ln\ell)=t,\,Y_t=b,\,
\tau(\lk)\leq\ell^2\,|\,Y_0=0\big)\cr
\,=\,
\sum_{t<\ell^2}\sum_{b\geq\lk-\ln\ell}
P\big(
\tau(\lk)\leq\ell^2\,|\,
\tau(\lk-\ln\ell)=t,\,
Y_t=b,\,
Y_0=0\big)
\cr
\hfill\times
P\big(\tau(\lk-\ln\ell)=t,\,Y_t=b \,|\,Y_0=0\big)\,.
\end{multline*}
By the Markov property and the monotonicity of the process
$(Y_n)_{n\geq 0}$,
we have,
for $t<\ell^2$ and $b\geq \lk-\ln\ell$,
\begin{multline*}
P\big(
\tau(\lk)\leq\ell^2\,|\,
\tau(\lk-\ln\ell)=t,\,
Y_t=b,\,
Y_0=0\big)
\cr
\,\geq\,
P\big(
\tau(\lk)\leq\ell^2-t\,|\,Y_0=b\big)
\,\geq\,
P\big(
\tau(\lk)\leq\ell^2-t\,|\,Y_0=\lk-\ln\ell\big)
\cr
\,\geq\,
P\big(
Y_1=\lk\,|\,Y_0=\lk-\ln\ell\big)
\,=\,M_H
(\lk-\ln\ell,\lk)\,.
\end{multline*}
Reporting this inequality in the previous sum, we get
$$
P\big(\tau(\lk)\leq\ell^2\,|\,Y_0=0\big)\,\geq\,
P\big(\tau(\lk-\ln\ell)<\ell^2
\,|\,Y_0=0\big)\,
M_H (\lk-\ln\ell,\lk)\,.
$$
By lemma~\ref{bdhit} applied with $b=\lk-\ln\ell$ and $n=\ell^2-1$,
we have 
for $\ell$ large enough and $q$ small enough 
$$P\big(\tau(\lk-\ln\ell)\geq\ell^2
\,|\,Y_0=0\big)\,\leq\,\frac{5}{a(\ln\ell)^2}\,.$$
Moreover,
for $\ell$ large enough and $q$ small enough, 
$$M_H (\lk-\ln\ell,\lk)\,\geq\,
\Big(\frac{p}{\kappa}\Big)^{\ln\ell}(1-q)^\ell\,\geq\,
\Big(\frac{p}{\kappa}\Big)^{\ln\ell}e^{-2a}\,.
$$
Putting the previous inequalities together, we obtain the
desired lower bound.
\end{proof}

\noindent
We will need more information in order to derive the lower bound on the discovery time.
We wish to control the time and speed at which
the Markov chain 
$(Y_n)_{n\geq 0}$, starting from $1$, reaches a neighborhood of its equilibrium $\lk$
without visiting $0$.
This will require a stronger inequality than the
one stated in lemma~\ref{bdhit}, this is the purpose of next lemma.
\begin{lemma}\label{tbdhit}
For $n\geq 1$, 
$b\in\zl$ and $\lambda>0$, we have
$$P\big(\tau(b)>n\,|\,Y_0=0\big)\,\leq\,
\exp\Big(\lambda b+nb\frac{p}{\kappa}(e^\lambda-1)
+n(\ell-b)p\frac{\kappa-1}{\kappa}(e^{-\lambda}-1)\Big)
\,.$$
\end{lemma}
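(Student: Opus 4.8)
The plan is to reuse the comparison with a sum of i.i.d.\ Bernoulli variables that already underlies Lemma~\ref{bdhit}, but this time feed it into the exponential Chebyshev (Cram\'er) inequality rather than the ordinary one. First I would argue exactly as in the ``rough bound on the drift'' paragraph: on the event $\{\tau(b)>n\}$ the chain stays below $b$, so at each step the number of loci that mutate into the master letter is controlled below and the number of correct loci that mutate away is controlled above; iterating, on $\{\tau(b)>n\}$ one has $Y_n\geq V_n$ where
$$V_n\,=\,\sum_{t=1}^n\bigg(-\sum_{k=1}^{b}1_{U_{t,k}<p/\kappa}+\sum_{k=b+1}^\ell1_{U_{t,k}>1-p(1-1/\kappa)}\bigg)\,.$$
Hence $P(\tau(b)>n\mid Y_0=0)\leq P(V_n<b)=P(-V_n>-b)$.

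Next I would apply the exponential Chebyshev inequality to $-V_n$: for any $\lambda>0$,
$$P(-V_n>-b)\,=\,P\big(e^{-\lambda V_n}>e^{-\lambda b}\big)\,\leq\,e^{\lambda b}\,E\big(e^{-\lambda V_n}\big)\,.$$
Since $V_n$ is a sum of $n$ independent blocks, each block being a sum of $b$ i.i.d.\ Bernoulli$(p/\kappa)$ variables entering with sign $-1$ and $\ell-b$ i.i.d.\ Bernoulli$(p(1-1/\kappa))$ variables entering with sign $+1$, the moment generating function factorizes. A single Bernoulli$(r)$ variable $B$ contributes $E(e^{\mp\lambda B})=1-r+re^{\mp\lambda}=1+r(e^{\mp\lambda}-1)$. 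Using the elementary bound $1+u\leq e^{u}$, the $-1$-signed variables contribute at most $\exp\big(b\frac{p}{\kappa}(e^{\lambda}-1)\big)$ per block and the $+1$-signed ones at most $\exp\big((\ell-b)p\frac{\kappa-1}{\kappa}(e^{-\lambda}-1)\big)$ per block; raising to the power $n$ and multiplying by $e^{\lambda b}$ gives precisely the claimed bound
$$P\big(\tau(b)>n\,|\,Y_0=0\big)\,\leq\,\exp\Big(\lambda b+nb\frac{p}{\kappa}(e^\lambda-1)+n(\ell-b)p\frac{\kappa-1}{\kappa}(e^{-\lambda}-1)\Big)\,.$$

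There is essentially no serious obstacle here; the statement is a clean Chernoff bound and the only thing to be careful about is bookkeeping of signs — $-V_n$ puts the $e^{+\lambda}$ exponential on the $p/\kappa$ Bernoulli's and the $e^{-\lambda}$ exponential on the $p(1-1/\kappa)$ Bernoulli's, which is exactly the asymmetry reflected in the two terms of the exponent. I would also note for later use (when optimizing over $\lambda$ in the subsequent discovery-time estimate) that the bound is valid for every $\lambda>0$ and that the right-hand side is convex in $\lambda$, so the eventual application will choose $\lambda$ of order $\ln\ell$ or so; but proving the present lemma requires only keeping $\lambda$ generic. The whole argument is a two-line repeat of the setup of Lemma~\ref{bdhit} followed by the standard exponential Markov inequality and the factorization of the m.g.f.\ of a sum of independent Bernoulli variables.
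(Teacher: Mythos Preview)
Your proposal is correct and follows essentially the same route as the paper: the paper also bounds $P(\tau(b)>n)\le P(V_n<b)$, applies the exponential Chebyshev inequality to $e^{-\lambda V_n}$, factorizes the moment generating function of the independent Bernoulli summands, and then uses $\ln(1+t)\le t$ (equivalently $1+u\le e^u$) to obtain the stated exponent. Your aside about the eventual choice of $\lambda$ being of order $\ln\ell$ is not quite how the paper uses the lemma (it takes $\lambda=\ln 2$ in one application and a small $\lambda$ in another), but that is extraneous to the proof of the lemma itself.
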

\begin{proof}
We obtain this inequality as a consequence of Tchebytcheff's exponential
inequality. Indeed, we have
\begin{multline*}
P\big(\tau(b)>n\,|\,Y_0=0\big)\,\leq\,
P(V_n<b)\cr
\,=\,
P(-\lambda V_n>-\lambda b)\,=\,
P\big(\exp(-\lambda V_n)>\exp(-\lambda b)\big)\cr
\,\leq\,
\exp(\lambda b)E\big(\exp(-\lambda V_n)\big)
\,=\,
\exp(\lambda b)\Big(E\big(\exp(-\lambda V_1)\big)\Big)^n
\,.
\end{multline*}
Yet
\begin{multline*}
E\big(\exp(-\lambda V_1)\big)\,=\,
E\Big(\exp\Big(
\lambda\sum_{k=1}^{b}1_{U_{1,k}<p/\kappa}
-\lambda\sum_{k=b+1}^\ell1_{U_{1,k}>1-p(1-1/\kappa)}\Big)\Big)
\cr
\,=\,
\Big(1+\frac{p}{\kappa}(e^\lambda-1)\Big)^b
\Big(1+p\frac{\kappa-1}{\kappa}(e^{-\lambda}-1)\Big)^{\ell-b}
\,.$$
\end{multline*}
Thus
\begin{multline*}
P\big(\tau(b)>n\,|\,Y_0=0\big)\,\leq\,
\cr\exp\Big(
\lambda b+
nb\ln\big(1+\frac{p}{\kappa}(e^\lambda-1)\big)
+n(\ell -b)\ln
\big(1+p\frac{\kappa-1}{\kappa}(e^{-\lambda}-1)\big)
\Big)
\,.
\end{multline*}
Using the inequality $\ln(1+t)\leq t$, we obtain the
desired result.
\end{proof}

\noindent
We derive next
two kinds of estimates: first for the start of the fall, and second
for the completion of the fall. 

\medskip

\noindent
{\bf Start of the fall.}
We show here that, after a time $\sqrt{\ell}$, the Markov chain 
$(Y_n)_{n\geq 0}$ is with high probability in the interval
$[\ln\ell, \ell]$.
\begin{proposition}\label{firstdesc}
We suppose that 
$\ell\to +\infty\,, q\to 0\,,
{\ell q} \to a\in ]0,+\infty[$.
For $\ell$ large enough and $q$ small enough, we have
$$
\forall t\geq\sqrt{\ell}\qquad
P\big(Y_{t}\geq\ln\ell
\,|\,Y_0=0\big)\,\geq\,
1-
\exp\Big(-\frac{1}{2}(\ln\ell)^2\Big)
\,.$$
\end{proposition}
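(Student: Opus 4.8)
The plan is to bypass any pathwise analysis and compute the law of $Y_t$ directly. Since $(Y_n)_{n\geq 0}$ has transition matrix $M_H$, and $M_H$ is the image of the mutation matrix $M$ under the lumping map $H$ (lemma~\ref{mhlump}), lumpability is preserved under taking powers, so the law of $Y_t$ started from $0$ is the image under $H$ of the $t$--step mutation law $M^t(w^*,\cdot)$. The matrix $M$ factorises over loci into identical single--locus mutation kernels, and $w^*$ has all its loci ``correct''; hence, exactly as in the one--step computation inside the proof of lemma~\ref{mhlump}, the law of $Y_t$ started from $0$ is the binomial law $\cB(\ell,r_t)$, where $r_t$ is the probability that a given locus is incorrect after $t$ mutation steps, starting from the correct letter. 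Writing $p=\frac{\kappa}{\kappa-1}q$, the associated two--state chain (correct $\to$ incorrect with probability $q$, incorrect $\to$ correct with probability $q/(\kappa-1)$) gives at once
$$r_t\,=\,\Big(1-\frac{1}{\kappa}\Big)\big(1-(1-p)^t\big)\,.$$

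Next I would use that $t\mapsto r_t$ is non--decreasing, so the laws $\cB(\ell,r_t)$ are stochastically increasing in $t$ and it suffices to treat $t=\lceil\sqrt{\ell}\,\rceil$. For such $t$, the assumption $\ell q\to a$ yields $\ell p\geq a$ for $\ell$ large, hence $tp\geq a/\sqrt{\ell}\to 0$, and therefore
$$r_t\,\geq\,\Big(1-\frac{1}{\kappa}\Big)\big(1-e^{-tp}\big)\,\geq\,\frac{1}{2}\cdot\frac{tp}{2}\,\geq\,\frac{a}{4\sqrt{\ell}}$$
for $\ell$ large, so that $Y_t$ has mean $\ell r_t\geq a\sqrt{\ell}/4$, which dwarfs $\ln\ell$. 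A standard Chernoff bound on the lower tail of the binomial law then gives, for $\ell$ large and $q$ small,
$$P\big(Y_t<\ln\ell\,\big|\,Y_0=0\big)\,\leq\,P\Big(Y_t\leq\tfrac{1}{2}\ell r_t\Big)\,\leq\,\exp\Big(-\frac{\ell r_t}{8}\Big)\,\leq\,\exp\Big(-\frac{a\sqrt{\ell}}{32}\Big)\,\leq\,\exp\Big(-\frac{1}{2}(\ln\ell)^2\Big)\,,$$
since $\sqrt{\ell}$ eventually exceeds $16\,(\ln\ell)^2/a$; this is the claim, after reducing back to an arbitrary $t\geq\sqrt{\ell}$ by the stochastic monotonicity in $t$.

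On this route there is essentially no genuine obstacle: the only points to pin down are the justification that the time--$t$ law of $Y_t$ from $0$ really is $\cB(\ell,r_t)$ (immediate from lumpability together with the product form of $M$, both already available in the excerpt) and the elementary inequalities with explicit constants valid for all large $\ell$. If one instead insisted on a purely pathwise argument in the spirit of the neighbouring lemmas --- comparing $(Y_n)$ with a sum of i.i.d.\ increments and using the exponential Chebyshev inequality of lemma~\ref{tbdhit} --- the work would be harder: one must still reduce to $t\simeq\sqrt{\ell}$ (using monotonicity of $(Y_n)$ in the starting point and in time), and, more annoyingly, control the event that the chain reaches $\ln\ell$ and then drops back below it before time $t$. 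The genuinely delicate step there is the tuning of the parameter $\lambda$ in lemma~\ref{tbdhit}, so that the boundary term $\lambda\ln\ell$ stays of order $(\ln\ell)^2$ while the dominant upward--drift term (coming from the $\simeq\ell$ correct loci, each flipping with probability $\simeq q$) still contributes a factor $\exp(-c\,a\sqrt{\ell})$ with $c>0$; this is what forces the final bound to be super--polynomially small in $\ell$.
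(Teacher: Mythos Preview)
Your proof is correct and considerably more direct than the paper's. The key observation --- that $Y_t$ started from $0$ has exactly the law $\cB(\ell,r_t)$ with $r_t=(1-1/\kappa)\bigl(1-(1-p)^t\bigr)$ --- exploits the product structure of $M$ fully and reduces the statement to a single Chernoff estimate, together with the stochastic monotonicity of $\cB(\ell,r_t)$ in $t$.

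The paper instead takes precisely the pathwise route you sketch at the end. It first applies the exponential Chebyshev bound of lemma~\ref{tbdhit} (with $\lambda=\ln 2$, $b=2\ln\ell$, $n=\sqrt{\ell}$) to show $\tau(2\ln\ell)\leq\sqrt{\ell}$ with probability at least $1-\exp(-a\sqrt{\ell}/4)$; then, via the Markov property and monotonicity in the starting point, it uses the invariant--measure bound of lemma~\ref{majinv} combined with lemma~\ref{exco} to show that, once at $2\ln\ell$, the chain stays above $\ln\ell$ with probability at least $1-\ln\ell\,\bigl((4\kappa\ln\ell)^2/\ell\bigr)^{\ln\ell}$. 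The two estimates are multiplied to conclude. Your approach bypasses this two--stage decomposition entirely; what the paper's argument buys is robustness, since it relies only on drift and invariant--measure information and would survive in models where the time--$t$ law is not explicit. For the present model your route is the shorter one.
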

\begin{proof}
We write, for $t\geq\sqrt{\ell}$,
\begin{multline*}
P\big(Y_{t}\geq\ln\ell
\,|\,Y_0=0\big)\,\geq\,
P\big(Y_{t}\geq\ln\ell,\,
\tau({2\ln\ell})\leq\sqrt{\ell}\,|\,Y_0=0\big)\cr
\,=\,
\sum_{n=1}^{\sqrt{\ell}}
\sum_{k=2\ln\ell}^{{\ell}}
P\big(Y_{t}\geq\ln\ell\,,
\tau({2\ln\ell})=n,\,Y_n=k\,|\,Y_0=0\big)\cr
\,=\,
\sum_{n=1}^{\sqrt{\ell}}
\sum_{k=2\ln\ell}^{{\ell}}
P\big(Y_{t}\geq\ln\ell\,|\,
\tau({2\ln\ell})=n,\,Y_n=k,\,Y_0=0\big)\times\cr
\hfill
P\big(
\tau({2\ln\ell})=n,\,Y_n=k\,|\,Y_0=0\big)\,.
\end{multline*}
Now, for $n\leq \sqrt{\ell}$ and $k\geq 2\ln\ell$,
by the Markov property, and thanks to the
monotonicity of 
the process $(Y_t)_{t\geq 0}$,
\begin{multline*}
P\big(Y_{t}\geq\ln\ell\,|\,
\tau({2\ln\ell})=n,\,
Y_n=k,\,
Y_0=0\big)
\cr
\,=\,
P\big(Y_{t}\geq\ln\ell\,|\,
Y_n=k
\big)
\,\geq\,
P\big(Y_{t}\geq\ln\ell\,|\,
Y_n={2\ln\ell}\big)\cr
\,=\,
P\big(Y_{t-n}\geq\ln\ell\,|\,
Y_0={2\ln\ell}\big)\,.
\end{multline*}
For $b<\ln\ell$, we have 
by lemmas~\ref{majinv} and~\ref{exco},
$$P\big(Y_{t-n}=b\,|\,
Y_0={2\ln\ell}\big)
\,\leq\,
\frac{\cB(b)}
{\cB(2\ln\ell)}
\,\leq\,
\frac{\cB(\ln\ell)}
{\cB(2\ln\ell)}
\,\leq\,
\left(
\frac{(4\kappa\ln\ell)^{2}}{\ell}
\right)^{\ln\ell}\,,$$
whence
$$P\big(Y_{t-n}\geq\ln\ell\,|\,
Y_0={2\ln\ell}\big)\,\geq\,
1-\ln\ell\left(
\frac{(4\kappa\ln\ell)^{2}}{\ell}
\right)^{\ln\ell}\,.$$
Reporting this inequality in the previous sum, we get
\begin{multline*}
P\big(Y_{t}\geq\ln\ell
\,|\,Y_0=0\big)\,\geq\,\cr
\bigg(1-\ln\ell\left(
\frac{(4\kappa\ln\ell)^{2}}{\ell}
\right)^{\ln\ell}\bigg)
P\big(
\tau({2\ln\ell})\leq\sqrt{\ell}\,|\,Y_0=0\big)\,.
\end{multline*}
By lemma~\ref{tbdhit} applied with $\lambda=\ln 2$, $n=\sqrt\ell$,
$b=2\ln\ell$, 
for $\ell$ large enough and $q$ small enough, 
$$P\big(
\tau({2\ln\ell})>\sqrt{\ell}\,|\,Y_0=0\big)
\,\leq\,
\exp-\frac{a\sqrt\ell}{4}
\,,
$$
whence
\begin{multline*}
P\big(Y_{t}\geq\ln\ell
\,|\,Y_0=0\big)\,\geq\,
\bigg(1-\ln\ell\left(
\frac{(4\kappa\ln\ell)^{2}}{\ell}
\right)^{\ln\ell}\bigg)
\Big(1-
\exp-\frac{a\sqrt\ell}{4}
\Big)\cr
\,\geq\,
1-\exp\Big(-\frac{1}{2}(\ln\ell)^2\Big)
\,,
\end{multline*}
where the last inequality holds for $\ell$ large enough.
\end{proof}
\medskip

\noindent
{\bf Completion of the fall.}
We show here that, for $\ve>0$, after a time 
$4{\ell}/{a\ve}$,
the Markov chain 
$(Y_n)_{n\geq 0}$ is with high probability in the interval
$[ \lk(1-\ve)
, \ell]$.
\begin{proposition}\label{secdesc}
We suppose that 
$\ell\to +\infty\,, q\to 0\,,
{\ell q} \to a\in ]0,+\infty[$.
Let $\ve\in]0,1[$.
There exists $c(\ve)>0$ such that,
for $\ell$ large enough and $q$ small enough, we have
$$
\forall t\geq\frac{4\ell}{a\ve}\qquad
P\big(Y_{t}\geq\lk(1-\ve)
\,|\,Y_0=0\big)\,\geq\,
1-
\exp(-c(\ve)\ell)
\,.$$
\end{proposition}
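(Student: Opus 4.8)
The plan is to split the event $\{Y_t<\lk(1-\ve)\}$ (where $\lk=\ell(1-1/\kappa)$ is the mean of $\cB$, $p=\kappa q/(\kappa-1)$) into two pieces: either the chain has not yet climbed to a neighbourhood of its equilibrium by an intermediate time, or it has reached that neighbourhood but subsequently fallen back. Throughout I would use freely the monotonicity of $(Y_n)_{n\geq 0}$ in its starting point and its reversibility with respect to $\cB$ (proposition~\ref{ivy}), together with the exponential drift bound of lemma~\ref{tbdhit}, the reversibility estimate of lemma~\ref{majinv}, and the large deviation asymptotics of lemma~\ref{gdb}.

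\emph{Reaching a neighbourhood of equilibrium.} Fix the intermediate level $b_1=\lceil\lk(1-\ve/2)\rceil$ and the intermediate time $n_1=\lfloor 2\ell/(a\ve)\rfloor$, so that $n_1\leq 4\ell/(a\ve)$ for $\ell$ large. Lemma~\ref{tbdhit}, applied with $b=b_1$, $n=n_1$ and a parameter $\lambda>0$ to be chosen, gives
$$P\big(\tau(b_1)>n_1\,|\,Y_0=0\big)\,\leq\,\exp\Big(\lambda b_1+n_1b_1\tfrac{p}{\kappa}(e^\lambda-1)+n_1(\ell-b_1)p\tfrac{\kappa-1}{\kappa}(e^{-\lambda}-1)\Big).$$
Since $b_1/\ell\to(1-\tfrac1\kappa)(1-\tfrac\ve2)=:\beta$ and $n_1p\to\tfrac{2\kappa}{(\kappa-1)\ve}$, the exponent divided by $\ell$ tends to a function $\Psi(\lambda)$ with $\Psi(0)=0$ and, after a short computation using $n_1p\cdot\tfrac{\kappa-1}{\kappa}\to 2/\ve$, derivative $\Psi'(0)=\beta-1<0$. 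Hence for a fixed small $\lambda=\lambda(\kappa,\ve)>0$ one has $\Psi(\lambda)<0$, and therefore, for $\ell$ large and $q$ small, $P(\tau(b_1)>n_1\,|\,Y_0=0)\leq e^{-c_1\ell}$ with $c_1=-\Psi(\lambda)/2>0$. (Alternatively one could route through proposition~\ref{firstdesc} first, but the direct application of lemma~\ref{tbdhit} from $Y_0=0$ already suffices.)

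\emph{Staying near equilibrium.} For $t\geq n_1$, decompose the second piece over the values $u=\tau(b_1)\leq n_1$ and $Y_u=c\geq b_1$. By the strong Markov property, $P(Y_t<\lk(1-\ve)\,|\,\tau(b_1)=u,\,Y_u=c)=P(Y_{t-u}<\lk(1-\ve)\,|\,Y_0=c)$, and since $\{Y<\lk(1-\ve)\}$ is a down-set and $c\geq b_1$, monotonicity bounds this by $P(Y_{t-u}<\lk(1-\ve)\,|\,Y_0=b_1)$. Now $\cB$ is unimodal with mode within $1$ of $\lk$, and $\lk(1-\ve)<\lk(1-\ve/2)<\lk$, so for $\ell$ large $\cB$ is non-decreasing on $\{0,\dots,b_1\}$; reversibility (lemma~\ref{majinv}) then yields
$$P\big(Y_{t-u}<\lk(1-\ve)\,|\,Y_0=b_1\big)\,\leq\,\sum_{b<\lk(1-\ve)}\frac{\cB(b)}{\cB(b_1)}\,\leq\,\ell\,\frac{\cB(\lfloor\lk(1-\ve)\rfloor)}{\cB(b_1)}.$$
By lemma~\ref{gdb}, $\tfrac1\ell\ln\cB(\lfloor\rho\ell\rfloor)\to-I(\rho)$ for the Cram\'er rate function $I$ of a Bernoulli$(1-1/\kappa)$ variable, which is strictly decreasing on $[0,1-1/\kappa]$; as $(1-\tfrac1\kappa)(1-\ve)<(1-\tfrac1\kappa)(1-\tfrac\ve2)$ we get $I((1-\tfrac1\kappa)(1-\ve))>I((1-\tfrac1\kappa)(1-\tfrac\ve2))$, so the right-hand side is $\leq e^{-c_2\ell}$ for some $c_2=c_2(\kappa,\ve)>0$ and $\ell$ large. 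This bound is uniform in $u,c,t$, hence summing back gives $P(Y_t<\lk(1-\ve),\,\tau(b_1)\leq n_1\,|\,Y_0=0)\leq e^{-c_2\ell}$.

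Combining the two estimates, for every $t\geq 4\ell/(a\ve)\geq n_1$ we get $P(Y_t<\lk(1-\ve)\,|\,Y_0=0)\leq e^{-c_1\ell}+e^{-c_2\ell}\leq e^{-c(\ve)\ell}$ with $c(\ve)=\tfrac12\min(c_1,c_2)>0$, which is the claim. The routine parts are the one-variable optimization over $\lambda$ and the explicit comparison of rate-function values; the step that needs genuine care is the second one, namely recognizing that $\tau(b_1)\leq t$ does \emph{not} by itself force $Y_t$ to be large (the chain is not monotone in time), so one must combine monotonicity in the starting point with the reversibility bound to confine the chain near equilibrium once it gets there.
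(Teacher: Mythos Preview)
Your proof is correct and follows essentially the same approach as the paper: split according to whether $\tau(\lk(1-\ve/2))$ occurs by an intermediate time, bound the first piece via lemma~\ref{tbdhit} with a small-$\lambda$ expansion, and bound the second piece by combining monotonicity in the starting point with lemma~\ref{majinv} and the large deviation estimate of lemma~\ref{gdb}. The only cosmetic differences are that you take $n_1=\lfloor 2\ell/(a\ve)\rfloor$ while the paper uses $4\ell/(a\ve)$ directly, and you compute the derivative $\Psi'(0)=\beta-1$ explicitly rather than leaving it implicit; both are equivalent and your version is arguably cleaner.
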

\begin{proof}
Let $t\geq{4\ell}/{(a\ve)}$.
We write
\begin{multline*}
P\big(Y_{t}\geq
\lk(1-\ve)
\,|\,Y_0=0\big)\cr
\,\geq\,
P\Big(Y_{t}\geq
\lk(1-\ve),\,
\tau( \lk(1-\ve/2)
)\leq 
\frac{4\ell}{a\ve}
\,|\,Y_0=0\Big)\cr
\,=\,
\sum_{n=1}^{
{4\ell}/({a\ve})
}
\sum_{k\geq 
\lk(1-\ve/2)}
\kern-5pt
P\big(Y_{t}\geq
\lk(1-\ve)
\,|\,
\tau( \lk(1-\ve/2)
)=n,\,Y_n=k,\,Y_0=0\big)
\cr
\times
P\big(
\tau( \lk(1-\ve/2)
)=n,\,Y_n=k\,|\,Y_0=0\big)\,.
\end{multline*}
Now, for $n\leq 
{4\ell}/({a\ve})$
and ${k\geq 
\lk(1-\ve/2)}$,
by the Markov property,  
\begin{multline*}
P\big(Y_{t}\geq
\lk(1-\ve)
\,|\,
\tau( \lk(1-\ve/2)
)=n,\,Y_n=k,\,Y_0=0\big)
\cr
\,=\,
P\big(Y_{t}\geq
\lk(1-\ve)
\,|\,
Y_n=k\big)
\,=\,
P\big(Y_{t-n}\geq
\lk(1-\ve)
\,|\,
Y_0=k\big)
\cr
\,\leq\,
P\big(Y_{t-n}\geq
\lk(1-\ve)
\,|\,
Y_0=
 \lk(1-\ve/2)
\big)\,.
\end{multline*}
We have used the monotonicity of the process $(Y_t)_{t\geq 0}$
with respect to the starting point to get the last inequality.
For $b<
\lk(1-\ve)
$, we have 
by lemmas~\ref{majinv} and~\ref{exco},
$$P\big(Y_{t-n}=b\,|\,
Y_0=
 \lk(1-\ve/2)
\big)
\,\leq\,
\frac{\cB(b)}
{\cB(
 \lk(1-\ve/2)
)}
\,\leq\,
\frac{\cB(
 \lk(1-\ve))}
{\cB(
 \lk(1-\ve/2)
)}
\,,$$
whence
$$P\big(Y_{t-n}\geq
\lk(1-\ve)
\,|\,
Y_0=
 \lk(1-\ve/2)
\big)
\,\geq\,
1-
 \lk(1-\ve)
\frac{\cB(
 \lk(1-\ve))}
{\cB(
 \lk(1-\ve/2)
)}\,.
$$
Thanks
to the large deviation estimates of lemma~\ref{gdb}, we have
$$\limsup_{\ell\to\infty}
\frac{1}{\ell}\ln\bigg(
 \lk(1-\ve)
\frac{\cB(
 \lk(1-\ve))}
{\cB(
 \lk(1-\ve/2)
)}\bigg)\,<\,0\,,$$
thus there exists $c(\ve)>0$ such that,
for $\ell$ large enough
$$P\big(Y_{t-n}\geq
\lk(1-\ve)
\,|\,
Y_0=
 \lk(1-\ve/2)
\big)
\,\geq\,
1-\exp(-c(\ve)\ell)
\,.$$
Reporting this inequality in the previous sum, we get
\begin{multline*}
P\big(Y_{t}\geq
\lk(1-\ve)
\,|\,Y_0=0\big)
\,\geq\,\cr
\Big(
1-\exp(-c(\ve)\ell)
\Big)
P\Big(
\tau( \lk(1-\ve/2)
)\leq 
\frac{4\ell}{a\ve}
\,|\,Y_0=0\Big)\,.
\end{multline*}
We apply
lemma~\ref{tbdhit} with
$b=\lk(1-\ve/2)$, $\lambda>0$ and $n=4\ell/(a\ve)$:
for $\ell$ large enough and $q$ small enough, 
$$\displaylines{
P\Big(
\tau( \lk(1-\frac{\ve}{2})
)>
\frac{4\ell}{a\ve}
\,|\,Y_0=0\Big)\,\leq\,\hfill\cr
\exp\Big(\lambda \lk(1-\frac{\ve}{2})+
\frac{4\ell}{a\ve}\Big(
\lk(1-\frac{\ve}{2})\frac{p}{\kappa}(e^\lambda-1)
+(\ell-\lk(1-\frac{\ve}{2}))
p\frac{\kappa-1}{\kappa}(e^{-\lambda}-1)\Big)\Big)
\,.
}$$
We send $\ell$ to $\infty$ and $q$ to $0$ in such a way that
$\ell q$ converges to $a>0$. We obtain
\begin{multline*}
\limsup_{
\genfrac{}{}{0pt}{1}{\ell\to\infty}
{q\to 0,\,
{\ell q} \to a}
}
\,\frac{1}{\ell}\ln
P\Big(
\tau( \lk(1-\frac{\ve}{2})
)>
\frac{4\ell}{a\ve}
\,|\,Y_0=0\Big)\,\leq\,\cr
\lambda \frac{\kappa-1}{\kappa}(1-\frac{\ve}{2})+
\frac{4}{\ve}\Big(
(1-\frac{\ve}{2})\frac{1}{\kappa}(e^\lambda-1)
+\big(\frac{1}{\kappa-1}+\frac{\ve}{2}\big)
\frac{\kappa-1}{\kappa}(e^{-\lambda}-1)\Big)
\,.
\end{multline*}
Expanding the last term as $\lambda$ goes to $0$, we see that it is
negative for $\lambda$ small enough, therefore there exists $c'(\ve)>0$
such that
for $\ell$ large enough and $q$ small enough, 
$$P\Big(
\tau( \lk(1-\frac{\ve}{2})
)>
\frac{4\ell}{a\ve}
\,|\,Y_0=0\Big)\,\leq\,
\exp(-c'(\ve)\ell)\,.$$
Reporting in the previous inequality on $Y_t$, 
we obtain that
$$P\big(Y_{t}\geq
\lk(1-\ve)
\,|\,Y_0=0\big)
\,\geq\,
\Big(
1-\exp(-c(\ve)\ell)
\Big)
\Big(
1-\exp(-c'(\ve)\ell)
\Big)
$$
and this yields the desired result.
\end{proof}
\subsection{Falling to equilibrium from the right}
For $b\in\zl$, we define
the hitting time $\theta(b)$ of $\{\,0,\dots,b\,\}$ by
$$\theta(b)\,=\,\inf\,\big\{\,n\geq 0: Y_n\leq b\,\big\}\,.
\index{$\theta(b)$}
$$
\begin{proposition}\label{lowerrdesc}
We suppose that 
$\ell\to +\infty$, $q\to 0$
in such a way that
$${\ell q} \to a\in ]0,+\infty[\,.$$
For $\ell$ large enough and $q$ small enough, we have
$$P\big(\theta(\lk)\leq\ell^2\,|\,Y_0=\ell\big)\,\geq\,
\Big(1-\frac{1}{a(\ln\ell)^2}\Big)
\Big(\frac{p}{\kappa}\Big)^{\ln\ell}
e^{-2a}\,.
$$
\end{proposition}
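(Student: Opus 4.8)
\medskip
\noindent
The plan is to mirror the proof of Proposition~\ref{lowerdesc}, trading the ascent towards equilibrium for a descent from~$\ell$. First I would introduce the auxiliary level $\lk+\ln\ell$, just above the equilibrium $\lk$, and split the event $\{\theta(\lk)\leq\ell^2\}$ according to the time $t<\ell^2$ and the value $b\leq\lk+\ln\ell$ at which the chain first enters $\{\,0,\dots,\lk+\ln\ell\,\}$. By the Markov property at that time and the monotonicity of $(Y_n)_{n\geq 0}$ (built in Section~\ref{mutdyn} from the non--decreasing map $\cMH$, cf.\ lemma~\ref{monmut}), from any such state $b$ one further step lands in $\{\,0,\dots,\lk\,\}$ with probability at least $M_H(\lk+\ln\ell,\lk)$, the least favourable case being $b=\lk+\ln\ell$. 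Summing over $t$ and $b$ this yields
$$
P\big(\theta(\lk)\leq\ell^2\,|\,Y_0=\ell\big)\,\geq\,
M_H(\lk+\ln\ell,\lk)\;P\big(\theta(\lk+\ln\ell)<\ell^2\,|\,Y_0=\ell\big)\,,
$$
so it remains to bound the two factors.

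For the mutation probability I would use the explicit formula of Lemma~\ref{mhlump}: keeping only the term in which exactly $\ln\ell$ erroneous loci mutate back to the master letter while no correct locus mutates, and using $p(1-1/\kappa)=q$ together with $1-p/\kappa\geq 1-q$, one gets
$$
M_H(\lk+\ln\ell,\lk)\,\geq\,\Big(\frac{p}{\kappa}\Big)^{\ln\ell}(1-q)^\ell\,\geq\,\Big(\frac{p}{\kappa}\Big)^{\ln\ell}e^{-2a}
$$
for $\ell$ large and $q$ small, exactly as in Proposition~\ref{lowerdesc}.

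The genuinely new ingredient is a downward companion of Lemma~\ref{bdhit}, controlling $\theta(\lk+\ln\ell)$. On $\{\theta(\lk+\ln\ell)>n\}$ one has $Y_t>\lk+\ln\ell$ for all $t\leq n-1$, and since raising the current value only adds down--moves and removes up--moves, writing $b=\lk+\ln\ell$,
$$
Y_{t+1}\,\leq\,Y_t-\sum_{k=1}^{b}1_{U_{t+1,k}<p/\kappa}+\sum_{k=b+1}^{\ell}1_{U_{t+1,k}>1-p(1-1/\kappa)}\,,
$$
so that, iterating from $Y_0=\ell$, one has $Y_n\leq\ell+\widetilde V_n$, where $\widetilde V_n$ is a sum of $n$ i.i.d.\ copies of the bracketed increment, with $E(\widetilde V_n)=np(\lk-b)=-np\ln\ell$ and $\var(\widetilde V_n)\leq n\ell p$. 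Chebyshev's inequality then gives, once $\ell-b<np\ln\ell$,
$$
P\big(\theta(b)>n\,|\,Y_0=\ell\big)\,\leq\,P\big(\widetilde V_n>b-\ell\big)\,\leq\,\frac{\var(\widetilde V_n)}{\big(np\ln\ell-(\ell-b)\big)^2}\,\leq\,\frac{n\ell p}{\big(np\ln\ell-(\ell-b)\big)^2}\,.
$$
Taking $n=\ell^2-1$ and recalling $\ell q\to a$ (so that $\ell p\to\frac{\kappa}{\kappa-1}a>a$, while $\ell-b=\ell/\kappa-\ln\ell$ is negligible against $np\ln\ell$), the right--hand side is asymptotic to $\frac{\kappa-1}{\kappa a(\ln\ell)^2}$, hence at most $\frac{1}{a(\ln\ell)^2}$ for $\ell$ large and $q$ small; substituting this and the bound on $M_H$ into the displayed inequality gives the proposition. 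The delicate point is exactly this last step: one must take the stochastic domination in the right direction ($Y_n$ dominated \emph{from above} by $\ell+\widetilde V_n$), and one must keep $np\ln\ell-(\ell-b)$ rather than the cruder $\tfrac12 np\ln\ell$, which would cost a constant factor and spoil the constant $1$ in $1-\frac{1}{a(\ln\ell)^2}$; everything else is a transcription of the ascent argument from Proposition~\ref{lowerdesc}.
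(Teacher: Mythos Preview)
Your proposal is correct and follows essentially the same route as the paper: the paper also reduces to the product of $P(\theta(\lk+\ln\ell)<\ell^2\,|\,Y_0=\ell)$ and $M_H(\lk+\ln\ell,\lk)$, bounds the latter by $(p/\kappa)^{\ln\ell}e^{-2a}$ exactly as you do, and controls the former by the same stochastic domination $Y_n\le \ell+\widetilde V_n$ together with Chebyshev, obtaining the bound $\frac{1}{a(\ln\ell)^2}$ from the asymptotic $\frac{1}{\ell p(\ln\ell)^2}$. Your remark about keeping the full deviation $np\ln\ell-(\ell-b)$ rather than halving it is well taken and is precisely what the paper does (in contrast to Lemma~\ref{bdhit}).
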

\begin{proof}
Our first goal is to estimate, for $b$ larger than 
$\lk=\ell(1-1/\kappa)$ and $n\geq 1$, the
probability
$$P\big(\theta(b)>n\,|\,Y_0=\ell\big)\,.$$
%
Suppose that
$\theta(b)> n$. Then $Y_{n-1}>b$ and
$$Y_{n}\,\leq\,Y_{n-1}
-\sum_{k=1}^{b}1_{U_{n,k}<p/\kappa}
+\sum_{k=b+1}^\ell1_{U_{n,k}>1-p(1-1/\kappa)}\,.$$
Iterating this inequality, we see that, if $Y_0=\ell$,
on the event
$\{\,\theta(b)> n\,\}$, we have
$Y_n\leq \ell + V_n$, where
$$V_n\,=\,
\sum_{t=1}^n\bigg(
-\sum_{k=1}^{b}1_{U_{t,k}<p/\kappa}
+\sum_{k=b+1}^\ell1_{U_{t,k}>1-p(1-1/\kappa)}\bigg)\,.$$
Therefore
$$P\big(\theta(b)>n\,|\,Y_0=\ell\big)\,\leq\,
P(\ell+V_n>b)\,.$$
We shall bound
$P(\ell+V_n>b)$ with the help of Chebyshev's inequality. Let us compute
the mean and the variance of $V_n$. Since $V_n$ is a sum of independent
Bernoulli random variables, we have
\begin{align*}
E(V_n)\,&=\,
n\big(-b\,\frac{p}{\kappa}+(\ell-b)\,p\,(1-\frac{1}{\kappa})\big)
\,=\,
np\,\big(\lK-b\big)\,,\cr
\var(V_n)\,&=\,
n\Big(b\,
\frac{p}{\kappa}
\big(1-\frac{p}{\kappa}\big)
+(\ell-b)\,
p\,\big(1-\frac{1}{\kappa}\big)
\big(1-p\,\big(1-\frac{1}{\kappa}\big)\big)
\Big)\cr
\,&\leq\,n\big(bp+(\ell-b)p\big)\,=\,n\ell p\,.
\end{align*}
We suppose that
$b-\ell>
np\,\big(\lK-b\big)$.
By Chebyshev's inequality, we have then
\begin{multline*}
P(\ell+V_n\geq b)\,=\,
P\Big(V_n-E(V_n)\geq 
b-\ell- np\,\big(\lK-b\big)
\Big)
\cr
\,\leq\,
\frac{\var(V_n)}{
\big(b-\ell- np\,\big(\lK-b\big)\big)^2}
\,.\hfil
\end{multline*}
We take $n=\ell^2$ and $b=\ln\ell+\lk$. Then,
for $\ell$ large enough, 
$$b-\ell-
np\,\big(\lK-b\big)\,=\,\ln\ell+\lk-\ell+\ell^2p\ln\ell
\,\sim\,
\ell^2 p \ln\ell
\,>\,0\,,
$$
whence, by the previous inequalities,
for $\ell$ large enough and $q$ small enough, 
$$P\big(\theta(
\ln\ell+\lk
)\geq \ell^2
\,|\,Y_0=\ell\big)\,\leq\,
\frac{1}
{ a(\ln\ell)^2}\,.
$$
We decompose next
\begin{multline*}
P\big(\theta(\lk)\leq\ell^2\,|\,Y_0=\ell\big)\,\geq\,
P\big(\theta(\lk+\ln\ell)<\ell^2,\,
\theta(\lk)\leq\ell^2\,|\,Y_0=\ell\big)\cr
\,=\,
\sum_{t<\ell^2}\sum_{b\leq\lk+\ln\ell}
P\big(\theta(\lk+\ln\ell)=t,\,Y_t=b,\,
\theta(\lk)\leq\ell^2\,|\,Y_0=\ell\big)\cr
\,=\,
\sum_{t<\ell^2}\sum_{b\leq\lk+\ln\ell}
P\big(
\theta(\lk)\leq\ell^2\,|\,
\theta(\lk+\ln\ell)=t,\,
Y_t=b,\,
Y_0=\ell\big)
\cr
\hfill\times
P\big(\theta(\lk+\ln\ell)=t,\,Y_t=b \,|\,Y_0=\ell\big)\,.
\end{multline*}
By the Markov property and the monotonicity of the process
$(Y_n)_{n\geq 0}$, we have,
for $t<\ell^2$ and $b\leq \lk+\ln\ell$,
\begin{multline*}
P\big(
\theta(\lk)\leq\ell^2\,|\,
\theta(\lk+\ln\ell)=t,\,
Y_t=b,\,
Y_0=\ell\big)
\cr
\,=\,
P\big(
\theta(\lk)\leq\ell^2-t\,|\,Y_0=b\big)
\,\geq\,
P\big(
\theta(\lk)\leq\ell^2-t\,|\,Y_0=\lk+\ln\ell\big)
\cr
\,\geq\,
P\big(
Y_1=\lk\,|\,Y_0=\lk+\ln\ell\big)
\,=\,M_H
(\lk+\ln\ell,\lk)\,.
\end{multline*}
Reporting this inequality in the previous sum, we get
$$
P\big(\theta(\lk)\leq\ell^2\,|\,Y_0=\ell\big)\,\geq\,
P\big(\theta(\lk+\ln\ell)<\ell^2
\,|\,Y_0=\ell\big)\,
	M_H (\lk+\ln\ell,\lk)\,.
$$
We have already proved that
$$P\big(\theta(
\ln\ell+\lk
)\geq \ell^2
\,|\,Y_0=\ell\big)\,\leq\,
\frac{1}
{ a(\ln\ell)^2}\,.
$$
Moreover,
for $\ell$ large enough and $q$ small enough, 
$$M_H (\lk+\ln\ell,\lk)\,\geq\,
\Big(\frac{p}{\kappa}\Big)^{\ln\ell}(1-q)^\ell\,\geq\,
\Big(\frac{p}{\kappa}\Big)^{\ln\ell}
e^{-2a}\,.
$$
Putting the previous inequalities together, we obtain the
desired lower bound.
\end{proof}

\noindent
We derive next a large deviation upper bound for the time needed
to go from $\ell$ to $0$. This will yield an upper bound on the discovery time.
We define
$$\tau_0\,=\,\inf\,\big\{\,n\geq 0: Y_{n}=0\,\big\}\,.$$
\begin{proposition}\label{visitz}
For any $a\in]0,+\infty[$,
$$
\limsup_{
\genfrac{}{}{0pt}{1}{\ell\to\infty,\,
q\to 0}
{{\ell q} \to a}
}
\,\frac{1}{\ell}\ln
E(\tau_0\,|\,Y_0=\ell)\,\leq\,\ln\kappa\,.
$$
\end{proposition}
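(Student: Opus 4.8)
The plan is to reduce the statement to a lower bound on the probability that the chain started at $\ell$ reaches $0$ within a polynomial number of steps. Fix $T=\ell^{5}$ and set $p_*=P(\tau_0\leq T\,|\,Y_0=\ell)$. By the monotonicity of $(Y_n)_{n\geq 0}$ established in Section~\ref{mutdyn}, starting higher only delays the hitting of $0$, so $P(\tau_0\leq T\,|\,Y_0=b)\geq p_*$ for every $b\in\zl$; applying the Markov property on the successive blocks $[nT,(n+1)T)$ then gives $P(\tau_0>nT\,|\,Y_0=\ell)\leq(1-p_*)^n$, hence
$$E(\tau_0\,|\,Y_0=\ell)\ \leq\ \sum_{n\geq 0}T\,(1-p_*)^n\ =\ \frac{T}{p_*}\,.$$
Therefore $\frac1\ell\ln E(\tau_0\,|\,Y_0=\ell)\leq\frac1\ell\ln T-\frac1\ell\ln p_*$, and since $\ln T=o(\ell)$ the statement follows once we prove $p_*\geq\kappa^{-\ell}e^{-o(\ell)}$.

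To bound $p_*$ I would split the trajectory at the first visit to $\lk$. Proposition~\ref{lowerrdesc} gives $P(\theta(\lk)\leq\ell^2\,|\,Y_0=\ell)\geq(1-(a(\ln\ell)^2)^{-1})(p/\kappa)^{\ln\ell}e^{-2a}$, and since $p\sim\kappa a/((\kappa-1)\ell)$ this is $e^{-o(\ell)}$ (of order $\exp(-(\ln\ell)^2)$). Applying the Markov property at time $\theta(\lk)$ and using monotonicity again (the chain then sits at a state $\leq\lk$), one gets $p_*\geq e^{-o(\ell)}\,P(\tau_0\leq\ell^4\,|\,Y_0=\lk)$, so it remains to show
$$P(\tau_0\leq\ell^4\,|\,Y_0=\lk)\ \geq\ \kappa^{-\ell}e^{-o(\ell)}\,.$$

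For this I would use the reversibility of $M_H$ with respect to $\cB$ (Proposition~\ref{ivy}). Reversing time along the paths realising $\{\tau_0=k\}$ yields, for every $k\geq 1$,
$$P(\tau_0=k\,|\,Y_0=\lk)\ =\ \frac{\cB(0)}{\cB(\lk)}\,P\big(Y_k=\lk,\ Y_j\neq 0\ \text{for all}\ 1\leq j\leq k\ \big|\ Y_0=0\big)\,,$$
so that, writing $\tau_0^+$ for the first return time to $0$,
$$P(\tau_0\leq\ell^4\,|\,Y_0=\lk)\ =\ \frac{\cB(0)}{\cB(\lk)}\sum_{k=1}^{\ell^4}P\big(Y_k=\lk,\ \tau_0^+>k\ \big|\ Y_0=0\big)\,.$$
Now $\cB(0)=\kappa^{-\ell}$ while $\cB(\lk)=e^{-o(\ell)}$ by Lemma~\ref{gdb} together with $\cB(\lk)\leq 1$, so $\cB(0)/\cB(\lk)\geq\kappa^{-\ell}$; thus it suffices to bound the above sum below by $e^{-o(\ell)}$ (in fact one can reach a positive power of $\ell$). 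This follows from the ascent estimates already proved: from $0$ the chain leaves $0$ at the first step with probability $1-(1-q)^\ell\to 1-\exa>0$; by Propositions~\ref{firstdesc} and~\ref{secdesc} it then lies above $\ln\ell$ after $\sqrt\ell$ steps and within distance $\ve\lk$ of $\lk$ after $O(\ell)$ steps, with probability tending to $1$; and it avoids $0$ throughout $[1,\ell^4]$ with probability bounded away from $0$, because $M_H(b,0)=(1-q)^{\ell-b}(p/\kappa)^b=O(\ell^{-2})$ for every $b\geq 2$, so $0$ is effectively reachable only from state $1$, which the ascending chain visits only $O(\sqrt\ell)$ times, while from any state within distance $\ve\lk$ of $\lk$ the probability of dropping below $\ln\ell$ in one step is exponentially small by Lemma~\ref{gdb}. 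Once the chain has settled near $\lk$, a standard estimate on the $M_H$-chain (which has $\cB$ for reversible measure) gives $P(Y_k=\lk)\sim\cB(\lk)\asymp\ell^{-1/2}$ for all subsequent $k$, so during the remaining part of the window the expected number of visits to $\lk$ is of polynomial order; combined with the $\Theta(1)$ probability of the good ascent event, the sum is $\geq 1$ for $\ell$ large.

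The main obstacle is this last point: controlling the excursion of $(Y_n)$ from $0$ up to $\lk$ \emph{conditioned to avoid $0$}. The reversibility identity does the essential work — it turns the exponentially small target into a sub-exponential one and exhibits the factor $\kappa^{-\ell}$ — and the unconditioned ascent is already understood; but since $(Y_n)$ can drop by an arbitrarily large amount in a single step, one must still verify that this does not create a non-negligible probability of returning to $0$, and it is precisely the bounds $M_H(b,0)=O(\ell^{-2})$ for $b\geq 2$ and the large-deviation estimate of Lemma~\ref{gdb} that exclude this. The remaining ingredients, namely the polynomial drift control of the descent $\ell\to\lk$ (already contained in the proof of Proposition~\ref{lowerrdesc}) and the geometric block argument above, are routine.
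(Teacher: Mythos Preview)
Your overall scheme is exactly the paper's: a geometric block argument reduces the expectation to a one–block hitting probability, you descend from $\ell$ to $\lk$ via Proposition~\ref{lowerrdesc}, and you time–reverse the rare ascent $\lk\to 0$ into a typical descent $0\to\lk$ using reversibility of $M_H$ with respect to $\cB$. Your identity
$P_{\lk}(\tau_0=k)=\frac{\cB(0)}{\cB(\lk)}\,P_0(Y_k=\lk,\ \tau_0^+>k)$
is correct and is precisely the ``Schonmann trick'' the paper invokes.

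Where you and the paper diverge is the last step. You try to show the whole sum $\sum_{k\le\ell^4}P_0(Y_k=\lk,\tau_0^+>k)$ is at least~$1$ by arguing that the chain mixes and then hits $\lk$ with frequency $\cB(\lk)\asymp\ell^{-1/2}$. That mixing statement is not available from the paper's toolbox, and it is more than you need: a \emph{single} term of size $e^{-o(\ell)}$ already suffices. The paper avoids the issue by not fixing the target state: from Proposition~\ref{lowerdesc} one has $P_0(\tau(\lk)\le\ell^2)\ge e^{-o(\ell)}$, and pigeonholing over $b\ge\lk$ produces some $b$ with $P_0(\exists t\le\ell^2:Y_t=b)\ge e^{-o(\ell)}/\ell$; the reversibility identity is then applied with this $b$, and since $\cB(b)\le 1$ the factor $\cB(0)/\cB(b)\ge\kappa^{-\ell}$ still appears. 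With this simplification the block length $2\ell^2$ (rather than $\ell^5$) is enough.

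One point where your write–up is actually more careful than the paper: the constraint $\tau_0^+>k$ in the reversed picture is genuine, and the paper's displayed equality with $P_0(\exists t\le\ell^2:Y_t=b)$ glosses over it. Your observation that $M_H(b,0)=(1-q)^{\ell-b}(p/\kappa)^b$ is $O(\ell^{-1})$ for $b=1$ and $O(\ell^{-2})$ for $b\ge 2$ is the right ingredient; combined with Lemma~\ref{majinv} (which gives $P_1(Y_t=0)\le\cB(0)/\cB(1)=O(1/\ell)$) and Proposition~\ref{firstdesc} (the chain is above $\ln\ell$ after $\sqrt\ell$ steps), a short two–phase union bound over the first $\sqrt\ell$ steps and then over states $\ge\ln\ell$ gives $P_0(\tau_0^+>\ell^2\mid Y_1\ge 1)\to 1$, which is all that is required to secure the single term.
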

\begin{proof}
We prove that, starting from $\ell$, the walker has probability of order
$1/\kappa^\ell$ to visit $0$ before time $\ell^2$.
To do this, we decompose the trajectory until time $\ell^2$
into two parts: the
descent to the
equilibrium $\lk$, which is very likely to occur, and
the ascent to $0$, which is very unlikely to occur.
We  estimate the probability of the ascent with the help of
a beautiful technique
developed by Schonmann \cite{NESC}
in a different context, namely the study of the metastability of
the Ising model. More precisely, we use the reversibility of the process
to relate the probability of an ascending path 
to 
the probability of a descending path.
It turns
out that the most likely way to go from $\lk$ to $0$
is obtained as the time--reverse of a typical path going
from $0$ to $\lk$.
Thanks to the monotonicity of the process, this estimate
yields a lower bound on the hitting time of $0$ which 
is uniform with respect to the starting point.
We bound then easily
$E(\tau_0\,|\,Y_0=\ell)$
by summing over intervals of length $\ell^2$
and using the Markov property.

We should normally work with $\lfloor\lk\rfloor$ instead of $\lk$.
To alleviate the notation, we do as
if $\lk$ was an integer.
We write
\begin{multline*}
P\big(\tau_0\leq 2\ell^2\,|\,Y_0=\ell\big)\,\geq\,
P\big(\theta(\lk)\leq\ell^2,\,
\tau_0\leq 2\ell^2\,|\,Y_0=\ell\big)\cr
\,=\,
\sum_{t\leq\ell^2}\sum_{b\leq\lk}
P\big(\theta(\lk)=t,\,Y_t=b,\,
\tau_0\leq 2\ell^2\,|\,Y_0=\ell\big)\cr
\,=\,
\sum_{t\leq\ell^2}\sum_{b\leq\lk}
P\big(
\tau_0\leq 2\ell^2\,|\,
\theta(\lk)=t,\,
Y_t=b,\,
Y_0=\ell\big)
\cr
\hfill\times
P\big(\theta(\lk)=t,\,Y_t=b \,|\,Y_0=\ell\big)\,.
\end{multline*}
By the Markov property and the monotonicity of the process
$(Y_n)_{n\geq 0}$,
we have,
for $t\leq\ell^2$ and $b\leq \lk$,
\begin{multline*}
P\big(
\tau_0\leq 2\ell^2\,|\,
\theta(\lk)=t,\,
Y_t=b,\,
Y_0=\ell\big)
\cr
\,\geq\,
P\big(
\exists n\in\{\,t,\dots,2\ell^2\,\}\quad Y_n=0\,|\,
\theta(\lk)=t,\,
Y_t=b,\,
Y_0=0\big)
\cr
\,=\,
P\big(
\tau_0\leq 2\ell^2-t\,|\,Y_0=b\big)
\,\geq\,
P\big(
\tau_0\leq \ell^2\,|\,Y_0=\lk\big)\,.
\end{multline*}
Reporting this inequality in the previous sum, we get
$$
P\big(\tau_0\leq 2\ell^2\,|\,Y_0=\ell\big)\,\geq\,
P\big(\theta(\lk)\leq\ell^2
\,|\,Y_0=\ell\big)\,
P\big(
\tau_0\leq \ell^2\,|\,Y_0=\lk\big)\,.
$$
We estimate next the probability of the ascending part, i.e., the last probability
in the above formula.
We start with the estimate of proposition~\ref{lowerdesc}:
$$P\big(\tau(\lk)\leq\ell^2\,|\,Y_0=0\big)\,\geq\,
\Big(1-\frac{5}{a(\ln\ell)^2}\Big)
\Big(\frac{p}{\kappa}\Big)^{\ln\ell}
e^{-2a}\,.
$$
Yet
\begin{multline*}
P\big(\tau(\lk)\leq\ell^2\,|\,Y_0=0\big)\,=\,
P\big(\exists t\leq\ell^2\quad Y_t\geq\lk\,|\,Y_0=0\big)\cr
\,\leq\,\sum_{b\geq\lk}
P\big(\exists t\leq\ell^2\quad Y_t=b\,|\,Y_0=0\big)\,.
\end{multline*}
From the last inequalities, we see that there exists $b\geq\lk$ such that
$$P\big(\exists t\leq\ell^2\quad Y_t=b\,|\,Y_0=0\big)\,\geq\,
\frac{1}{\ell}\Big(1-\frac{5}{a(\ln\ell)^2}\Big)
\Big(\frac{p}{\kappa}\Big)^{\ln\ell}
e^{-2a}\,.$$
Using the reversibility of $M_H$ with respect to $\cB$
(see proposition~\ref{ivy}), we have
\begin{multline*}
\cB(b)\,
P\big(\tau_0\leq\ell^2\,|\,Y_0=b\big)
\cr
\,=\,
\sum_{t\leq\ell^2}\sum_{y_1,\dots,y_{t-1}>0}
\cB(b)\,M_H(b,y_1)\,\dots
\,M_H(y_{t-1},0)
\cr
\,=\,
\sum_{t\leq\ell^2}\sum_{y_1,\dots,y_{t-1}>0}
\cB(0)
M_H(0,y_{t-1})
\,\dots\,
M_H(y_1,b)
\cr
\,=\,
\cB(0)\,
P\big(\exists t\leq \ell^2\quad Y_t=b\,|\,Y_0=0\big)\,.
\end{multline*}
Thus
\begin{multline*}
P\big(\tau_0\leq\ell^2\,|\,Y_0=b\big)\,=\,
\frac{\cB(0)}{\cB(b)}
P\big(\exists t\leq \ell^2\quad Y_t=b\,|\,Y_0=0\big)
\cr
\,\geq\,
\frac{\kappa^{-\ell}}{\ell}\Big(1-\frac{5}{a(\ln\ell)^2}\Big)
\Big(\frac{p}{\kappa}\Big)^{\ln\ell}
e^{-2a}\,.
\end{multline*}
By monotonicity of the process
$(Y_t)_{t\geq 0}$, since $b\geq\lk$, then
$$P\big(\tau_0\leq\ell^2\,|\,Y_0=\lk\big)\,\geq\,
P\big(\tau_0\leq\ell^2\,|\,Y_0=b\big)\,.$$
Using proposition~\ref{lowerrdesc} and the previous inequalities, we conclude that
$$P\big(\tau_0\leq2\ell^2\,|\,Y_0=\ell\big)\,\geq\,
\frac{\kappa^{-\ell}}{\ell}
\bigg(\
\Big(1-\frac{5}{a(\ln\ell)^2}\Big)
\Big(\frac{p}{\kappa}\Big)^{\ln\ell}
e^{-2a}
\bigg)^2\,.
$$
Let $\ve>0$. For $\ell$ large enough and $q$ small enough,
$$P\big(\tau_0\leq2\ell^2\,|\,Y_0=\ell\big)\,\geq\,
\frac{1}{\kappa^{\ell(1+\ve)}}\,.$$
Now, for $n\geq 0$,
\begin{multline*}
P\big(\tau_0> 2n\ell^2\,|\,Y_0=\ell\big)\cr
\,=\,
\sum_{b\geq 1}
P\big(\tau_0> 2n\ell^2,
Y_{2(n-1)\ell^2}=b,\,
\tau_0> 2(n-1)\ell^2
\,|\,Y_0=\ell\big)
\cr
\,=\,
\sum_{b\geq 1}
P\big(\tau_0> 2n\ell^2\,|\,
Y_{2(n-1)\ell^2}=b,\,
\tau_0> 2(n-1)\ell^2
,\,Y_0=\ell\big)
\cr
\times
P\big(
Y_{2(n-1)\ell^2}=b,\,
\tau_0> 2(n-1)\ell^2
\,|\,\,Y_0=\ell\big)\,.
\end{multline*}
By the Markov property and the monotonicity of the process, we have
\begin{multline*}
P\big(\tau_0> 2n\ell^2\,|\,
Y_{2(n-1)\ell^2}=b,\,
\tau_0> 2(n-1)\ell^2
,\,Y_0=\ell\big)\cr
\,=\,
P\big(\tau_0> 2n\ell^2\,|\,
Y_{2(n-1)\ell^2}=b
\big)\,=\,
P\big(\tau_0> 2\ell^2\,|\,
Y_{0}=b
\big)\cr
\,\leq\,
P\big(\tau_0> 2\ell^2\,|\,
Y_{0}=\ell
\big)\,\leq\,
1-\frac{1}{\kappa^{\ell(1+\ve)}}\,.$$
\end{multline*}
Reporting in the previous sum, we get
$$P\big(\tau_0> 2n\ell^2\,|\,Y_0=\ell\big)
\,\leq\,
\Big(1-\frac{1}{\kappa^{\ell(1+\ve)}}\Big)
P\big(\tau_0> 2(n-1)\ell^2\,|\,Y_0=\ell\big)\,.
$$
Iterating, we obtain
$$P\big(\tau_0> 2n\ell^2\,|\,Y_0=\ell\big)
\,\leq\,
\Big(1-\frac{1}{\kappa^{\ell(1+\ve)}}\Big)^n\,.$$
Thus
\begin{multline*}
E(\tau_0\,|\,Y_0=\ell)\,=\,
\sum_{n\geq 1}
P(\tau_0\geq n\,|\,Y_0=\ell)\cr
\,\leq\,
\sum_{n\geq 0}
\sum_{t=2n\ell^2+1}^{2(n+1)\ell^2}
P(\tau_0\geq t\,|\,Y_0=\ell)
\,\leq\,
\sum_{n\geq 0}
2\ell^2P(\tau_0> 2n\ell^2\,|\,Y_0=\ell)\cr
\,\leq\,
2\ell^2\sum_{n\geq 0}
\Big(1-\frac{1}{\kappa^{\ell(1+\ve)}}\Big)^n
\,=\,
2\ell^2 {\kappa^{\ell(1+\ve)}}
\,.$$
\end{multline*}
This bound is true for any $\ve>0$. Sending successively $\ell$ to
$\infty$ and $\ve$ to $0$, we obtain the desired upper bound.
\end{proof}
\subsection{Discovery time}
The dynamics of the processes 
$(O^\ell)_{t\geq 0}$,
$(O^1)_{t\geq 0}$ 
in $\cN$ are the same as the original process
$(O_t)_{t\geq 0}$, therefore we can use the original process to compute 
their corresponding
discovery times.
Letting
$$\displaylines{\tau^{*,\ell} \,=\,\inf\,\big\{\,t\geq 0: 
O^\ell_t\in\cW
\,\big\}\,,\qquad
\tau^{*,1} \,=\,\inf\,\big\{\,t\geq 0: 
O^1_t\in\cW
\,\big\}\,,\cr
\tau^{*} \,=\,\inf\,\big\{\,t\geq 0: 
O_t\in\cW
\,\big\}\,,
\index{$\tau^*$}
}
$$
we have indeed
$$\displaylines{
E\big(\tau^{*,\ell}
\,|\,
O^\ell_0=\ota
\big)\,=\,
E\big(\tau^*\,|\,
O_0= (0,0,0,\dots,m)
\big)\,,\cr
E\big(\tau^{*,1}
\,|\,
O^1_0=\otb
\big)\,=\,
E\big(\tau^*\,|\,
O_0= (0,m,0,\dots,0)
\big)\,.
}$$
In addition, the law of the discovery time $\tau^*$ is the same for the
distance process and the occupancy process.
With a slight abuse of notation, we let
$$\tau^{*} \,=\,\inf\,\big\{\,t\geq 0: 
D_t\in\cW
\,\big\}\,.$$
{\bf Notation.}
For $b\in\zl$, we denote by 
$(b)^m\index{$(b)^m$}$ 
the vector column whose components are all equal to $b$:
$$(b)^m\,=\,\left(
\begin{matrix}
b\\
\vdots\\
b
\end{matrix}
\right)
\,.$$

\noindent
We have
$$\displaylines{
E\big(\tau^*\,|\,
O_0= (0,0,\dots,0,m)
\big)\,=\,
E\big(\tau^*\,|\,
D_0= (\ell)^m
\big)\,,\cr
E\big(\tau^*\,|\,
O_0= (0,m,0,\dots,0)
\big)\,=\,
E\big(\tau^*\,|\,
D_0= (1)^m\big)
\,.
}$$
We will carry out the estimates of $\tau^*$ for the distance process
$(D_n)_{n\geq 0}$. 
Notice that the case $\alpha=+\infty$ is not covered by the result
of next proposition. This case will be handled separately, with the
help of the intermediate inequality of corollary~\ref{majext}.
\begin{proposition}\label{bdd}
Let $a\in ]0,+\infty[$ and
 $\alpha\in [0,+\infty[$.
For any $d\in\cN$, 
$$ \lim_{
\genfrac{}{}{0pt}{1}{\ell,m\to\infty,\,
q\to 0
}
{{\ell q} \to a,\,
\frac{\scriptstyle m}{\scriptstyle \ell}\to\alpha
}
}
\,\frac{1}{\ell}\ln
E\big(\tau^*\,|\,
D_0= d\big)\,=\,\ln\kappa\,.$$
\end{proposition}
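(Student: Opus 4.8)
The plan is to sandwich $\tau^*$ between hitting-time functionals of the lumped mutation chain $(Y_n)_{n\ge0}$ of Section~\ref{mutdyn}, for which Proposition~\ref{visitz} already identifies the rate $\ln\kappa$. Since $\sigma=1$ throughout the neutral phase, the distance process is monotone (Corollary~\ref{corneu}), and every $d\in\cN$ satisfies $(1)^m\le d\le(\ell)^m$ componentwise; the monotone coupling then gives $E(\tau^*\,|\,D_0=(1)^m)\le E(\tau^*\,|\,D_0=d)\le E(\tau^*\,|\,D_0=(\ell)^m)$, so it is enough to prove $\limsup\frac1\ell\ln E(\tau^*\,|\,D_0=(\ell)^m)\le\ln\kappa$ and $\liminf\frac1\ell\ln E(\tau^*\,|\,D_0=(1)^m)\ge\ln\kappa$. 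Throughout, $\tau_0=\inf\{k:Y_k=0\}$ and $N(n)\sim\mathrm{Binomial}(n,1/m)$ denotes the number of effective mutation steps undergone by a fixed chromosome up to time~$n$, as in Proposition~\ref{nemarg}.

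\emph{Upper bound.} I would use Proposition~\ref{neal}: looking back from time~$n$ along the ancestry of the first chromosome, if that ancestral line ever hits $w^*$ then a master was present, so $\{\tau^*>n\}$ forces the ancestral line of chromosome~$1$ at time~$n$ to avoid $w^*$; since this line has the law of $(W_{N(0)},\dots,W_{N(n)})$ with $H(W_\cdot)\stackrel{d}{=}(Y_\cdot)$ started from $\ell$, this yields $P(\tau^*\le n\,|\,D_0=(\ell)^m)\ge P(\tau_0\le N(n))$, with $\tau_0$ and $N(n)$ independent. Choosing $n=4m\,E(\tau_0\,|\,Y_0=\ell)$, Markov's inequality for $\tau_0$ and Chebyshev's inequality for $N(n)$ (which concentrates about $4\,E(\tau_0\,|\,Y_0=\ell)\to\infty$) give $P(\tau^*\le n\,|\,D_0=(\ell)^m)\ge1/4$ for $\ell$ large. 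Iterating over successive blocks of length~$n$ by the Markov property, and using monotonicity to bound the state at the end of each block below by $(\ell)^m$, one obtains $E(\tau^*\,|\,D_0=(\ell)^m)\le C\,m\,E(\tau_0\,|\,Y_0=\ell)$ for an absolute constant~$C$; since $(\ln m)/\ell\to0$, Proposition~\ref{visitz} gives the claim.

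\emph{Lower bound.} Here I would run a first-moment estimate. For $x\in\cN$ one has $P(X_t\notin\cN,\,X_{t-1}\in\cN\,|\,X_{t-1}=x)=\frac1m\sum_iM_H(H(x(i)),0)$, hence $P(\tau^*\le n)\le\sum_{t=1}^n\frac1m\sum_iE\big[\mathbf 1_{H(X_{t-1}(i))\ge1}M_H(H(X_{t-1}(i)),0)\big]$, and by Proposition~\ref{nemarg} each summand equals $E[\mathbf 1_{Y_{N(t-1)}\ge1}M_H(Y_{N(t-1)},0)]$ with $Y$ started from the corresponding initial distance. The decisive input is the reversibility of $M_H$ with respect to $\cB$ (Proposition~\ref{ivy}), which gives $\sum_{b\ge1}M_H(b,0)\cB(b)=(1-M_H(0,0))\kappa^{-\ell}\le\kappa^{-\ell}$; together with $P(Y_k=b\,|\,Y_0=c)\le\cB(b)/\cB(c)$ (Lemma~\ref{majinv}) this gives $E[\mathbf 1_{Y_k\ge1}M_H(Y_k,0)\,|\,Y_0=c]\le\kappa^{-\ell}/\cB(c)$ for every $k$. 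Taking $c=\lfloor\lk(1-\ve)\rfloor$ and the large-deviation lower bound $\cB(\lfloor\lk(1-\ve)\rfloor)\ge\exp(-\ell I(\ve)+o(\ell))$ from Lemma~\ref{gdb}, where $I(\ve)\downarrow0$ as $\ve\downarrow0$, one gets $P\big(\tau^*\le n\,|\,D_0=(\lfloor\lk(1-\ve)\rfloor)^m\big)\le n\,\kappa^{-\ell}\exp(\ell I(\ve)+o(\ell))$, which tends to~$0$ for $n=\kappa^{\ell(1-\ve')}$ as soon as $I(\ve)<\ve'\ln\kappa$. To reach the true initial condition $(1)^m$, I would first let the population relax: fix a polynomial time $T_{\mathrm{eq}}$ with $T_{\mathrm{eq}}/m$ of order $\ell/\ve$ and let $G$ be the event that no master sequence appears on $[0,T_{\mathrm{eq}}]$ and that every chromosome lies at distance $\ge\lfloor\lk(1-\ve)\rfloor$ at time~$T_{\mathrm{eq}}$ (the second part being typical, by Propositions~\ref{firstdesc}--\ref{secdesc} applied to each chromosome through Proposition~\ref{nemarg}). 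On $G$ the state at $T_{\mathrm{eq}}$ dominates $(\lfloor\lk(1-\ve)\rfloor)^m$ in the distance order, so by monotonicity and the strong Markov property $E(\tau^*\,|\,D_0=(1)^m)\ge n\,P(G)\,P\big(\tau^*>n\,|\,D_0=(\lfloor\lk(1-\ve)\rfloor)^m\big)$; provided $P(G)\ge e^{-o(\ell)}$, taking $n=\kappa^{\ell(1-\ve')}$ and letting $\ell\to\infty$, then $\ve',\ve\downarrow0$, yields the lower bound.

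\emph{Main obstacle.} The delicate step is the last one: showing $P(G)\ge e^{-o(\ell)}$, i.e.\ that the population survives the relaxation phase without producing a master with at least sub-exponential probability. This is not a mere union bound when $m/\ell\to\alpha>0$: each of the $m$ chromosomes starts at distance~$1$, where the one-step probability $M_H(1,0)$ of landing directly on $w^*$ is of order $1/\ell$, so the expected number of chromosomes attempting such an early jump is of order~$\alpha$, not $o(1)$. One must argue, via the near-independence of the chromosomes' early trajectories (coalescences being rare on the scale $m\ell$) or a Poisson-type comparison, that $P(G)$ is nonetheless bounded below --- equivalently, that the equilibrium bound $P(Y_k=b)\le\cB(b)/\cB(c)$ only becomes exploitable after a chromosome has climbed from distance~$1$ to a neighbourhood of the mutational equilibrium $\lk$. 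When $\alpha=0$ this difficulty evaporates, since the early discovery probability itself tends to~$0$.
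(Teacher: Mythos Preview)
Your upper bound and your ``from-equilibrium'' lower bound are both correct and are essentially what the paper does: Corollary~\ref{majext} gives $E(\tau^*\,|\,D_0=(\ell)^m)\le m\,E(\tau_0\,|\,Y_0=\ell)$ directly (no need to iterate in blocks), and Lemma~\ref{premhit} gives exactly your first-moment bound $P(\tau^*\le n\,|\,D_0=(b)^m)\le nm\,\cB(0)/\cB(b)$, which, for $b=\lfloor\lk(1-\ve')\rfloor$, yields $\tau^*\gtrsim\kappa^{\ell(1-\ve)}$ once $\ve'$ is small.

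The gap you yourself flag is real: you have not proved $P(G)\ge e^{-o(\ell)}$, and neither of the mechanisms you suggest (near-independence of lineages, Poisson comparison) is what the paper uses or easy to make rigorous here. The paper's resolution is a \emph{two-stage} relaxation together with a conditioning event and the FKG inequality. First, on the short window $[0,m\ell^{3/4}]$ one conditions on the explicit event
\[
\cE=\bigl\{\,\forall n\le m\ell^{3/4}\ \forall l\le\ln\ell\quad U_{n,l}>p/\kappa\,\bigr\},
\]
which deterministically forbids any mutation back to $0$ and costs only $P(\cE)=(1-p/\kappa)^{m\ell^{3/4}\ln\ell}=e^{-o(\ell)}$ since $m/\ell\to\alpha<\infty$. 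Because both $\cE$ and $\{D_{m\ell^{3/4}}\ge(\ln\ell)^m\}$ are increasing in the variables $U_{n,l}$, FKG decouples them, and a union bound over the $m$ chromosomes (each governed by Proposition~\ref{nemarg} and Proposition~\ref{firstdesc}) shows the latter event holds with probability $1-o(1)$. Thus after time $m\ell^{3/4}$ one is at $(\ln\ell)^m$ with probability $e^{-o(\ell)}$ and $\tau^*>m\ell^{3/4}$. From $(\ln\ell)^m$ your first-moment bound is already usable, since $\cB(0)/\cB(\ln\ell)$ is super-polynomially small; the paper runs a second window $[m\ell^{3/4},m\ell^2]$ to push the population to $(\lk(1-\ve'))^m$ (Proposition~\ref{secdesc}) and only then invokes Lemma~\ref{premhit} for the main phase.

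The point you were missing is that one does not try to \emph{prove} the process avoids $0$ during the dangerous early steps; one \emph{forces} it by conditioning on an event of the random inputs whose cost is sub-exponential in $\ell$, and FKG makes this conditioning harmless for the position estimate.
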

\begin{proof}
Since we are in the neutral case $\sigma=1$, 
then,
by corollary~\ref{corneu},
the distance process
$(D_n)_{n\geq 0}$  is monotone. Therefore, 
for any $d\in\cN$, we have
$$\displaylines{
E\big(\tau^*\,|\,
D_0= (1)^m
\big)
\,\leq\,
E\big(\tau^*\,|\,
D_0= d\big)\,\leq\,
E\big(\tau^*\,|\,
D_0= (\ell)^m
\big)\,.
}$$
As in the section~\ref{mutdyn}, 
we consider a Markov chain 
$(Y_n)_{n\geq 0}$ with state space $\zl$ and having for
transition matrix the lumped mutation matrix $M_H$.
We consider also
a sequence 
$(\ve_n)_{n\geq 1}$ 
of i.i.d. Bernoulli random variables 
with parameter
${1}/{m}$ and we set
$$
\forall n\geq 1\qquad N(n)\,=\,\ve_1+\cdots+\ve_n\,.
\index{$N(n)$}
$$
We suppose also that the processes
$(N(n))_{n\geq 1}$ and
$(Y_n)_{n\geq 0}$ are independent.
Let us look at the distance process at time $n$ starting from
$(\ell)^m$.
From proposition~\ref{nemarg},
we know that the law of the $i$--th chromosome in $D_n$ is the same as the
law of $Y_{N(n)}$ starting from~$\ell$.
The main difficulty is that, because of the replication events, the $m$
chromosomes present at time~$n$ are not independent, nor are their
genealogical lines.
However, this dependence does not improve significantly the efficiency
of the search mechanism, as long as the population is in the neutral
space $\cN$. To bound the discovery time $\tau^*$ from above,
we consider the time needed for a single chromosome to discover
the Master sequence $w^*$, that is
$$\wtau_0\,=\,\inf\,\big\{\,n\geq 0: Y_{N(n)}=0\,\big\}\,$$
and we observe that, if the master sequence has not been discovered until time~$n$
in the distance process, that is,
$$\forall t\leq n\quad\forall i\in\um\qquad D_t(i)\geq 1\,,$$
then certainly the ancestral line of any chromosome present at time~$n$
does not contain the master sequence. 
By
proposition~\ref{neal}, the ancestral line of any chromosome 
present at time~$n$
has the same law as
$$Y_{N(0)}, Y_{N(1)},\dots, Y_{N(n)} \,.$$
From the previous observations, we conclude that
$$\forall n\geq 0\qquad
P\big(\tau^*>n\,|\,
D_0= (\ell)^m
\big)\,\leq\,
P(\wtau_0>n\,|\,Y_0=\ell)
\,.$$
Summing this inequality over $n\geq 0$, we have
$$E\big(\tau^*\,|\,
D_0= 
(\ell)^m
\big)\,\leq\,
E(\wtau_0\,|\,Y_0=\ell)\,.
$$
For $n\geq 0$, let
$$T_n
\,=\,\inf\,\big\{\,t\geq 0: {N(t)}=n\,\big\}\,.$$
The variables $T_n-T_{n-1}$, $n\geq 1$, have the same law,
therefore
$$\forall n\geq 0\qquad E(T_n)\,=\,n E(T_1)\,=\,nm\,.$$
We will next express the upper bound on $\tau^*$
as a function of
$$\tau_0\,=\,\inf\,\big\{\,n\geq 0: Y_{n}=0\,\big\}\,.$$
We compute
\begin{multline*}
E(\wtau_0\,|\,Y_0=\ell)\,=\,
\sum_{t\geq 1}
P(\wtau_0\geq t\,|\,Y_0=\ell)
\cr
\,=\,
\sum_{t\geq 1}
\sum_{n\geq 1}
P(\wtau_0\geq t,\,\tau_0=n\,|\,Y_0=\ell)
\cr
\,=\,
\sum_{t\geq 1}
\sum_{n\geq 1}
P(T_n\geq t,\,\tau_0=n\,|\,Y_0=\ell)
\cr
\,=\,
\sum_{n\geq 1}
\sum_{t\geq 1}
P(T_n\geq t)\,P(\tau_0=n\,|\,Y_0=\ell)
\cr
\,=\,
\sum_{n\geq 1}
E(T_n)\,P(\tau_0=n\,|\,Y_0=\ell)
\cr
\,=\,
\sum_{n\geq 1}
nm\,P(\tau_0=n\,|\,Y_0=\ell)
\,=\,
mE(\tau_0\,|\,Y_0=\ell)\,.
\end{multline*}
With the help of
proposition~\ref{visitz}, we conclude that
$$ \limsup_{
\genfrac{}{}{0pt}{1}{\ell,m\to\infty,\,
q\to 0
}
{{\ell q} \to a,\,
\frac{\scriptstyle m}{\scriptstyle \ell}\to\alpha
}
}
\,\frac{1}{\ell}\ln
E\big(\tau^*\,|\,
D_0= d\big)\,\leq\,\ln\kappa\,.$$
In fact, we have derived the following upper bound on the discovery time.
\begin{corollary}\label{majext}
Let $\tau_0$ be the hitting time of $0$ for the process
$(Y_n)_{n\geq 0}$.
For any $d\in\cN$, any $m\geq 1$, we have
$$E\big(\tau^*\,|\,
D_0= d\big)\,\leq\,
m\,E(\tau_0\,|\,Y_0=\ell)\,.
$$
\end{corollary}
The harder part is 
to bound the discovery time $\tau^*$ from below.
The main difficulty to obtain the adequate lower bound on $\tau^*$
is that the process starts very close to the master sequence, hence 
the probability of creating quickly a master sequence is not very small.
Our strategy consists in exhibiting a scenario in which the whole
population
is driven into a neighborhood of the equilibrium $\lk$.
Once the whole population is close to $\lk$, the probability to create
a master sequence in a short time is of order $1/\kappa^\ell$, thus it
requires a time of order $\kappa^\ell$.
The key point is to design a scenario whose probability is much larger
than
$1/\kappa^\ell$. Indeed, the discovery time is bounded from below
by the probability of the scenario multiplied by $\kappa^\ell$.
We rely on the following scenario.
First we ensure that until time $m\ltq$, no mutation can recreate
the master sequence. This implies that
$\tau^*> m\ltq$.
Let us look at the population at time $m\ltq$.
Each chromosome present at this time has undergone an evolution
whose law is the same as the mutation dynamics studied
in section~\ref{mutdyn}. 
The initial drift of the mutation dynamics
is quite violent,
therefore at time $m\ltq$, it is very unlikely that a 
chromosome
is still in 
$\{\,0,\cdots,\ln\ell\,\}$.
The problem is that the chromosomes are not independent. 
We take care of this problem with the help of 
the FKG inequality and an exponential estimate.
Thus, at time $m\ltq$, in this scenario, all the chromosomes of the population
are at distance larger than $\ln\ell$ from the master sequence.
We wait next until time $m\ell^2$. 
Because of the mutation drift, 
a chromosome starting at $\ln\ell$ has a very low probability
of hitting $0$ before time $m\ell^2$.
Thus the process is 
very unlikely to discover the master sequence before time $m\ell ^2$.
Arguing again as before, we obtain that, for any $\ve>0$, 
at time $m\ell^2$, it is very unlikely that a particle
evolving with the mutation dynamics is still in 
$\{\,0,\cdots,\lk(1-\ve)\,\}$.
Thus, according to this scenario, we have $\tau^*>m\ell^2$
and
$$\forall i\in\um\qquad D_{m\ell^2}(i)\,\geq\,\lk(1-\ve)\,.$$
Let us precise next the scenario and the corresponding estimates.
We suppose that
the distance process starts from
$(1)^m$
and we will estimate the probability
of a specific scenario leading to a discovery time close 
to~${\kappa^{\ell}}$.
%
Let 
$\cE$ be the event
$$\cE\,=\,\big\{\,
\forall n\leq m\ltq\quad\forall l\leq\ln\ell\quad
U_{n,l}>p/\kappa\,\big\}\,.$$
If the event $\cE$ occurs, then, until time
$m\ltq$, none of the mutation events in the process
$(D_n)_{n\geq 0}$ can create a master sequence. Indeed,
on $\cE$,
\begin{multline*}
\forall b\in\ul\quad
\forall n\leq m\ltq
\qquad\cr
\cMH(b,U_{n,1},\dots,U_{n,\ell})\,\geq\,
\cMH(1,U_{n,1},\dots,U_{n,\ell})\cr
\,\geq\,
1+\sum_{l=2}^\ell1_{U_{n,l}>1-p(1-1/\kappa)}\,\geq\,1\,.
\end{multline*}
Thus, on the event $\cE$, we have $\tau^*\geq
m\ltq$.
The probability of $\cE$ is
$$P(\cE)\,=\,\Big(1-\frac{p}{\kappa}\Big)^{
m\ltq\ln\ell}\,.$$
Let $\ve>0$.
We suppose that the process starts from 
$(1)^m$
and we
estimate the probability
\begin{multline*}
P\big(\tau^*>
{\kappa^{\ell(1-\ve)}}\big)\,\geq\,
P\big(\tau^*>
{\kappa^{\ell(1-\ve)}},\,\cE\big)\cr
\,\geq\,
P\Big(\forall t\in\{\,m\ltq,\dots,
{\kappa^{\ell(1-\ve)}}\,\}\quad
D_t\in\cN,\,\cE\Big)\cr
\,=\,\sum_{d\in\cN}
P\Big(\forall t\in\{\,m\ltq,\dots,
{\kappa^{\ell(1-\ve)}}\,\}\quad
D_t\in\cN,\,
D_{m\ltq}=d,\,\cE\Big)\cr
\,\geq\,\sum_{d\geq(\ln \ell)^m}
P\Big(\forall t\in\{\,m\ltq,\dots,
{\kappa^{\ell(1-\ve)}}\,\}\quad
D_t\in\cN\,|\,
D_{m\ltq}=d,\,\cE\Big)\cr
\hfill
\times
P( D_{m\ltq}=d
,\,\cE)\,.
\end{multline*}
Using the Markov property, we have
\begin{multline*}
P\Big(\forall t\in\{\,m\ltq,\dots,
{\kappa^{\ell(1-\ve)}}\,\}\quad
D_t\in\cN\,|\,
D_{m\ltq}=d,\,\cE\Big)
\cr
\,=\,
P\Big(
\forall t\in\{\,0,\dots,
{\kappa^{\ell(1-\ve)}}
-m\ltq
\,\}\quad
D_t\in\cN\,|\,
D_{0}=d\Big)\cr
\,=\,
P\Big(
\tau^*>
{\kappa^{\ell(1-\ve)}}
-m\ltq
\,|\,
D_{0}=d\Big)
\,\geq\,
P\Big(
\tau^*>
{\kappa^{\ell(1-\ve)}}
\,|\,
D_{0}=d\Big)
\,.
\end{multline*}
In the neutral case, 
by corollary~\ref{corneu},
the distance process is monotone.
Therefore,
for $d\geq(\ln \ell)^m$,
$$
P\big(
\tau^*>
{\kappa^{\ell(1-\ve)}}
\,|\,
D_{0}=d\big)\,\geq\,
P\big(
\tau^*>
{\kappa^{\ell(1-\ve)}}
\,|\,
D_{0}=
(\ln \ell)^m
\big)\,.
$$
Reporting in the previous sum, we get
\begin{multline*}
P\big(\tau^*>
{\kappa^{\ell(1-\ve)}}\big)
\,\geq\,\cr
P\big(
\tau^*>
{\kappa^{\ell(1-\ve)}}
\,|\,
D_{0}=
(\ln \ell)^m
\big)
\,P\big( 
D_{m\ltq}\geq
(\ln \ell)^m
,\,\cE\big)\,.
\end{multline*}
We first study the last term in the above inequality.
The status of the process at time
$m\ltq$  is a function of the random vectors
$$R_n\,=\,\big(S_n, I_n, J_n,U_{n,1},\dots,U_{n,\ell}\big)\,,\qquad
1\leq n\leq  m\ltq
\,.$$
We make an intermediate conditioning with respect to 
$S_n, I_n, J_n$:
\begin{multline*}
\,P\big( 
D_{m\ltq}\geq
(\ln \ell)^m
,\,\cE\big)\cr\,=\,
E\Big(
\,P\big( 
D_{m\ltq}\geq
(\ln \ell)^m
,\,\cE
\,\big|\,
S_n, I_n, J_n,\,
1\leq n\leq m\ltq
\big)
\Big)
\,.
\end{multline*}
The variables $S_n, I_n, J_n,\,
1\leq n\leq  m\ltq$ being fixed, the state of the process
at time 
$m\ltq$  is a non--decreasing function of the variables
$$U_{n,1},\dots,U_{n,\ell},\,\qquad
1\leq n\leq  m\ltq\,.$$
Indeed, the mutation map $\cM_H(\cdot,u_1,\dots,u_\ell)$ is non--decreasing 
with respect to $u_1\dots,u_\ell$.
Thus the events $\cE$ and
$\{\,D_{m\ltq}\geq
(\ln \ell)^m\,\}$
are both non--decreasing with respect to these variables.
By the FKG inequality for a product measure (see the end of the appendix),
we have
\begin{multline*}
P\big( 
D_{m\ltq}\geq
(\ln \ell)^m
,\,\cE
\,\big|\,
S_n, I_n, J_n,\,
1\leq n\leq  m\ltq
\big)
\,\geq\,\cr
P\big( 
D_{m\ltq}\geq
(\ln \ell)^m
\,\big|\,
S_n, I_n, J_n,\,
1\leq n\leq  m\ltq
\big)
\cr\hfill
\times
P\big( 
\cE
\,\big|\,
S_n, I_n, J_n,\,
1\leq n\leq m\ltq
\big)\,.
\end{multline*}
Yet $\cE$ does not depend on the variables
$S_n, I_n, J_n$, therefore
$$P\big( 
\cE
\,\big|\,
S_n, I_n, J_n,\,
1\leq n\leq  m\ltq
\big)\,=\,P(\cE)\,.
$$
Reporting in the conditioning, we obtain
\begin{multline*}
\,P\big( 
D_{m\ltq}\geq
(\ln \ell)^m
,\,\cE\big)\,\geq\,
\cr
E\Big(
P\big( 
D_{m\ltq}\geq
(\ln \ell)^m
\,\big|\,
S_n, I_n, J_n,\,
1\leq n\leq m\ltq
\big)\,
P(\cE)
\Big)
\cr
\,=\,
P\big( 
D_{m\ltq}\geq
(\ln \ell)^m\big)\,
P\big(
\cE\big)\,.
\end{multline*}
Next,
\begin{multline*}
P\big( 
D_{m\ltq}\geq
(\ln \ell)^m\big)\,
\,=\,
1-
P\Big( 
\exists i\in\um\quad
D_{m\ltq}(i)<
\ln \ell\Big)
\cr
\,
\,\geq\,
1-
\sum_{1\leq i\leq m}
\kern-3pt
P\big( 
D_{m\ltq}(i)<\ln\ell\big)\,.
\end{multline*}
From proposition~\ref{nemarg},
$$\forall i\in\um\qquad
P\big(
D_{m\ltq}(i)<\ln\ell\big)
\,=\,
P\big(
Y_{N(m\ltq)}<\ln\ell\big)\,,
$$
therefore
$$
P\big( 
D_{m\ltq}\geq
(\ln \ell)^m\big)\,
\geq\,
1-m
P\big(
Y_{N(m\ltq)}<\ln\ell\big)\,,
$$
We estimate now the probability of the event
$\smash{\{\,Y_{N(m\ltq)}<\ln\ell\,\}}$, or rather its complement. 
The random variable 
$\smash{N(m\ltq)}$
follows the binomial law with parameter $m\ltq$
and $1/m$, therefore
\begin{multline*}
P\big(
{N(m\ltq)}<\sqrt\ell\big)\,\leq\,
P\Big(
\exp\big(-{N(m\ltq)}\big)>\exp\big(-\sqrt\ell\big)\Big)\cr
\,\leq\,
\exp\big(\sqrt\ell\big)\Big(
\frac{1}{em}+1-
\frac{1}{m}\Big)^{m\ltq}
\cr
\,\leq\,
\exp\Big(\sqrt\ell
+
{\ltq}\big(
\frac{1}{e}-
{1}\big)\Big)
\,\leq\,
\exp\Big(\sqrt\ell
-\frac{1}{2}
{\ltq}\Big)
\,.
\end{multline*}
Combining the previous estimates
and using proposition~\ref{firstdesc},
we have
\begin{multline*}
P\big(
Y_{N(m\ltq)}\geq\ln\ell\big)\,\geq\,
P\big(
Y_{N(m\ltq)}\geq\ln\ell,\,
N({m\ltq})\geq\sqrt\ell\big)
\cr
\,\geq\,
\sum_{t=\sqrt{\ell}}^{+\infty}
P\big(
Y_{N(m\ltq)}\geq\ln\ell,\,
N({m\ltq})=t\big)
\cr
\,=\,
\sum_{t=\sqrt{\ell}}^{+\infty}
P\big(
Y_{t}\geq\ln\ell\big)\,P\big(
N({m\ltq})=t\big)
\cr
\,\geq\,
\Big(
1-
\exp\Big(-\frac{1}{2}(\ln\ell)^2\Big)
\Big)
\Big(1-
\exp\Big(\sqrt\ell
-\frac{1}{2}
{\ltq}\Big)\Big)
\cr
\,\geq\,
1-
\exp\Big(-\frac{1}{3}(\ln\ell)^2\Big)
\,,
\end{multline*}
the last inequality being valid for $\ell$ large enough.
Putting the previous estimates together, we have
$$
\,P\big( 
D_{m\ltq}\geq
(\ln \ell)^m
,\,\cE\big)\,\geq\,
\Big(
1-
m\exp\Big(-\frac{1}{3}(\ln\ell)^2\Big)\Big)
\,\Big(1-\frac{p}{\kappa}\Big)^{
m\ltq\ln\ell}\,.$$
We study next 
$$
P\big(
\tau^*>
{\kappa^{\ell(1-\ve)}}
\,|\,
D_{0}=
(\ln \ell)^m
\big)\,.$$
We give first an estimate showing that a visit to $0$ becomes very unlikely 
if the starting point is far from~$0$.
\begin{lemma}\label{premhit}
For $b\in\ul$, we have
$$\forall n\geq 0\qquad P\big(
\tau^*\leq n
\,|\,
D_{0}=(b)^m
\big)
\,\leq\,
nm
\frac{\cB(0)}{\cB(b)}\,.
$$
\end{lemma}
\begin{proof}
Let $n\geq 0$ and $b\in\ul$.
We write
\begin{multline*}
P\big(
\tau^*\leq n
\,|\,
D_{0}=(b)^m
\big)
\,=\,\cr
P\big(
\exists\,t\leq n\quad\exists\,i\in\um\quad
D_t(i)=0
\,|\,
D_{0}=(b)^m
\big)
\cr
\,\leq\,
\sum_{ 1\leq t\leq n }
\sum_{ 1\leq i\leq m }
P\big(
D_t(i)=0
\,|\,
D_{0}=(b)^m
\big)\,.
\end{multline*}
By proposition~\ref{nemarg}, for any $t\geq 0$,
any
$i\in\um$,
$$P\big(
D_t(i)=0
\,|\,
D_{0}=(b)^m
\big)\,=\,
P\big(
Y_{N(t)}=0
\,|\,
Y_{0}=b
\big)\,.
$$
Using proposition~\ref{ivy} and
lemma~\ref{majinv},
we have
$$P\big(
Y_{N(t)}=0
\,|\,
Y_{0}=b
\big)\,\leq\,
\frac{\cB(0)}{\cB(b)}\,.
$$
Putting together the previous inequalities, we get
$$P\big(
\tau^*\leq n
\,|\,
D_{0}=(b)^m
\big)
\,\leq\,
nm
\frac{\cB(0)}{\cB(b)}\,
$$
as requested.
\end{proof}

\noindent
Let $\ve'>0$.
Now
\begin{multline*}
P\Big(
\tau^*>
{\kappa^{\ell(1-\ve)}}
\,\big|\,
D_{0}=
(\ln \ell)^m
\Big)
\cr
\,\geq\,
P\Big(
\tau^*>m\ell^2 ,\,
D_t\in\cN\text{ for }
m\ell^2\leq t\leq
{\kappa^{\ell(1-\ve)}}
\,\big|\,
D_{0}=
(\ln \ell)^m
\Big)\cr
\,=\,\sum_{d\in\cN}
P\Big(
\begin{matrix}
\tau^*>m\ell^2 ,\, D_{m\ell^2}=d\\
D_t\in\cN\text{ for }
m\ell^2\leq t\leq
{\kappa^{\ell(1-\ve)}}
\end{matrix}
\,\Big|\,
D_{0}=
(\ln \ell)^m
\Big)\cr
\,\geq\,
\sum_{d\geq (\lk(1-\ve'))^m}
P\Big(
D_t\in\cN\text{ for }
m\ell^2\leq t\leq
{\kappa^{\ell(1-\ve)}}
\,\big|\,
\tau^*>m\ell^2 
,\,
D_{m\ell^2}=d
\Big)
\cr
\hfill \times
P\Big(
\tau^*>m\ell^2, \,
D_{m\ell^2}=d\,
\big|\,
D_{0}=
(\ln \ell)^m
\Big)\,.
\end{multline*}
Using the Markov property
and the monotonicity of the process $(D_t)_{t\geq 0}$, we have
for
${d\geq (\lk(1-\ve'))^m}$,
\begin{multline*}
P\Big(
D_t\in\cN\text{ for }
m\ell^2\leq t\leq
{\kappa^{\ell(1-\ve)}}
\,\big|\,
\tau^*>m\ell^2 
,\,
D_{m\ell^2}=d
\Big)
\cr
\,=\,
P\Big(
\forall t\in\{\,0,\dots,
{\kappa^{\ell(1-\ve)}}
-m\ell^2
\,\}\quad
D_t\in\cN\,\big|\,
D_{0}=d\Big)\cr
\,=\,
P\Big(
\tau^*>
{\kappa^{\ell(1-\ve)}}
-m\ell^2
\,\big|\,
D_{0}=d\Big)
\,\geq\,P\Big(
\tau^*>
{\kappa^{\ell(1-\ve)}}
\,\big|\,
D_{0}=d\Big)
\cr
\,\geq\,
P\big(
\tau^*>
{\kappa^{\ell(1-\ve)}}
\,\big|\,
D_{0}=
(\lk(1-\ve'))^m
\big)\,.
\end{multline*}
Reporting in the previous sum, we get
\begin{multline*}
P\Big(\tau^*>
{\kappa^{\ell(1-\ve)}}
\,\big|\,
D_{0}=(\ln\ell)^m\Big)
\,\geq\,
P\big(
\tau^*>
{\kappa^{\ell(1-\ve)}}
\,\big|\,
D_{0}=
(\lk(1-\ve'))^m
\big)
\cr\hfill\times
P\Big(
\tau^*>m\ell^2 
,\,
D_{m\ell^2}\geq
(\lk(1-\ve'))^m
\,\big|\,
D_{0}=
(\ln \ell)^m
\Big)\,.
\end{multline*}
We first take care of the last probability. 
We write
\begin{multline*}
P\Big(
\tau^*>m\ell^2 
,\,
D_{m\ell^2}\geq
(\lk(1-\ve'))^m
\,\big|\,
D_{0}=
(\ln \ell)^m
\Big)\,\geq\,\cr
P\Big(
D_{m\ell^2}\geq
(\lk(1-\ve'))^m
\,\big|\,
D_{0}=
(\ln \ell)^m
\Big)-
P\Big(
\tau^*\leq m\ell^2 
\,\big|\,
D_{0}=
(\ln \ell)^m
\Big)\,.
\end{multline*}
To control the last term,
we use the inequality of lemma~\ref{premhit} with $n=m\ell ^2$
and $b=\ln\ell$:
$$P\big(
\tau^*\leq m\ell^2
\,|\,
D_{0}=(\ln\ell)^m
\big)
\,\leq\,
(m\ell) ^2 
\frac{\cB(0)}{\cB(\ln\ell)}\,.
$$
By lemma~\ref{exco}, we have
$$\frac{\cB(0)}{\cB(\ln\ell)}\,
\leq\,\Big(\frac{2\ln\ell}{\ell}\Big)^{\ln\ell}\,
$$
whence
$$P\big(
\tau^*\leq m\ell^2
\,|\,
D_{0}=(\ln l)^m
\big)\,
\leq\,
(m\ell) ^2 
\Big(\frac{2\ln\ell}{\ell}\Big)^{\ln\ell}\,.
$$
For the other term, we use
the monotonicity of the process $(D_t)_{t\geq 0}$
to get
\begin{multline*}
P\Big(
D_{m\ell^2}\geq
(\lk(1-\ve'))^m
\,\big|\,
D_{0}=
(\ln \ell)^m
\Big)
\cr
\,\geq\,
P\Big(
D_{m\ell^2}\geq
(\lk(1-\ve'))^m
\,\big|\,
D_{0}=
(0)^m
\Big)
\cr
\,=\,
1-
P\Big( 
\exists i\in\um\quad
D_{m\ell^2}(i)<
\lk(1-\ve')
\,\big|\,
D_{0}=
(0)^m
\Big)
\cr
\,
\,\geq\,
1-
\sum_{1\leq i\leq m}
\kern-3pt
P\big( 
D_{m\ell^2}(i)<
\lk(1-\ve')
\,\big|\,
D_{0}=
(0)^m
\Big)
\,.
\end{multline*}
From proposition~\ref{nemarg}, for any $i\in\um$,
$$
P\Big(
D_{m\ell^2}(i)<
\lk(1-\ve')
\,\big|\,
D_{0}=
(0)^m
\Big)
\,=\,
P\Big(
Y_{N(m\ell^2)}<\lk(1-\ve')
\,\big|\,Y_0=0
\Big)\,,
$$
therefore
\begin{multline*}
P\Big(
D_{m\ell^2}\geq
(\lk(1-\ve'))^m
\,\big|\,
D_{0}=
(\ln \ell)^m
\Big)
\cr
\,\geq\,
1-m
P\Big(
Y_{N(m\ell^2)}<\lk(1-\ve')
\,\big|\,Y_0=0
\Big)\,.
\end{multline*}
%
We estimate now the probability of the event
$\smash{\{\,Y_{N(m\ell^2)}<\lk(1-\ve')\,\}}$, or rather its complement. 
The random variable 
${N(m\lq)}$
follows the binomial law with parameter $m\lq$
and $1/m$, therefore
\begin{multline*}
P\Big(
{N(m\ell^2)}<
\frac{4\ell}{a\ve'}
\Big)\,\leq\,
P\Big(
\exp\big(-{N(m\ell^2)}\big)>\exp\big(-
\frac{4\ell}{a\ve'}
\big)\Big)\cr
\,\leq\,
\exp\big(
\frac{4\ell}{a\ve'}
\big)\Big(
\frac{1}{em}+1-
\frac{1}{m}\Big)^{m\ell^2}
\cr
\,\leq\,
\exp\Big(
\frac{4\ell}{a\ve'}
+
{\ell^2}\big(
\frac{1}{e}-
{1}\big)\Big)
\,\leq\,
\exp\Big(
\frac{4\ell}{a\ve'}
-\frac{1}{2}
{\ell^2}\Big)
\,.
\end{multline*}
Combining the previous estimates
and using proposition~\ref{secdesc},
we have
\begin{multline*}
P\Big(
Y_{N(m\lq)}\geq
\lkep
\,\big|\,Y_0=0
\Big)\cr
\,\geq\,
P\Big(
Y_{N(m\lq)}\geq
\lkep
,\,
N({m\lq})\geq
\frac{4\ell}{a\ve'}
\,\big|\,Y_0=0
\Big)
\cr
\,\geq\,
\sum_{t={4\ell/(a\ve')}}^{+\infty}
P\Big(
Y_{N(m\lq)}\geq\lkep,\,
N({m\lq})=t
\,\big|\,Y_0=0
\Big)
\cr
\,=\,
\sum_{t={4\ell/(a\ve')}}^{+\infty}
P\Big(
Y_{t}\geq\lkep
\,\big|\,Y_0=0
\Big)\,P\big(
N({m\lq})=t\big)
\cr
\,\geq\,
\Big(
1-
\exp{
\big(
-c(\ve')\ell\big)}
\Big)
\Big(1-
\exp\Big(
\frac{4\ell}{a\ve'}
-\frac{1}{2}
{\ell^2}\Big)
\Big)\cr
\,\geq\,
1-
\exp{
\big(
-\frac{1}{2}c(\ve')\ell\big)}
\,,
\end{multline*}
where $c(\ve')>0$ and
the last inequality is valid for $\ell$ large enough.
Putting together the previous estimates, we obtain
\begin{multline*}
P\Big(
\tau^*>m\ell^2 
,\,
D_{m\ell^2}\geq
(\lk(1-\ve'))^m
\,\big|\,
D_{0}=
(\ln \ell)^m
\Big)\,\geq\,\cr
1-m\exp{
\big(
-\frac{1}{2}c(\ve')\ell\big)}
\,-\,
(m\ell) ^2 
\Big(\frac{2\ln\ell}{\ell}\Big)^{\ln\ell}\,.
\end{multline*}
It remains to study
$P\big(
\tau^*>
{\kappa^{\ell(1-\ve)}}
\,|\,
D_{0}=
(\lk(1-\ve'))^m
\big)\,.$
We use the inequality of lemma~\ref{premhit} with 
$n={\kappa^{\ell(1-\ve)}}$
and
$b=\lk(1-\ve')$:
$$P\big(
\tau^*\leq
{\kappa^{\ell(1-\ve)}}
\,\big|\,
D_{0}=
(\lk(1-\ve'))^m
\big)\,\leq\,
{\kappa^{\ell(1-\ve)}}m
\frac{\cB(0)}{\cB(
\lk(1-\ve')
)}\,.
$$
For $\ve'$ small enough, using the large deviation estimates
of lemma~\ref{gdb}, we see that there exists $c(\ve,\ve')>0$ such that,
for $\ell$ large enough,
$$P\big(
\tau^*\leq
{\kappa^{\ell(1-\ve)}}
\,\big|\,
D_{0}=
(\lk(1-\ve'))^m
\big)\,\leq\,
\exp(-c(\ve,\ve')\ell)\,.$$
Collecting all the previous estimates,
we conclude that, for $\ell$ large enough,
$$\displaylines{
P\Big(\tau^*>
{\kappa^{\ell(1-\ve)}}
\,\big|\,
D_0=(1)^m 
\Big)
\,\geq\,
\Big(
1-
m\exp\Big(-\frac{1}{3}(\ln\ell)^2\Big)\Big)
\,\Big(1-\frac{p}{\kappa}\Big)^{
m\ltq\ln\ell}\cr
\hfill\times
\Big(1-\exp(-c(\ve,\ve')\ell)\Big)
\bigg(
1-m\exp{
\big(
-\frac{1}{2}c(\ve')\ell\big)}
\,-\,
(m\ell) ^2 
\Big(\frac{2\ln\ell}{\ell}\Big)^{\ln\ell}
\bigg)
\,.
}$$
Moreover, by Markov's inequality,
$$
E\Big(\tau^*\,|\,
D_0=(1)^m 
\Big)
\,\geq\,
{\kappa^{\ell(1-\ve)}}\,
P\Big(\tau^*\geq
{\kappa^{\ell(1-\ve)}}
\,\big|\,
D_0=(1)^m 
\Big)\,.$$
It follows that
$$\liminf_{
\genfrac{}{}{0pt}{1}{\ell,m\to\infty,\,
q\to 0
}
{{\ell q} \to a,\,
\frac{\scriptstyle m}{\scriptstyle \ell}\to\alpha
}
}
\,\frac{1}{\ell}\ln
E\Big(\tau^*\,|\,
D_0=(1)^m 
\Big)
\,\geq\,(1-\ve)\ln\kappa\,.$$
Letting $\ve$ go to $0$ yields the desired lower bound.
\end{proof}
\vfill\eject
\section{Synthesis}\label{secsyn}
As in theorem~\ref{mainth},
we suppose that 
$$\ell\to +\infty\,,\qquad m\to +\infty\,,\qquad q\to 0\,,$$
in such a way that
$${\ell q} \to a\in ]0,+\infty[\,,
\qquad\frac{m}{\ell}\to\alpha\in [0,+\infty]\,.$$
We put now together the estimates of 
sections~\ref{bide} and~\ref{disc} in order to
evaluate the formula for the invariant measure
obtained at the end of 
section~\ref{bounds}.
For $\theta=\ell,1$, we rewrite this formula as
\begin{multline*}
\int_{\textstyle\pml}
f\Big( \frac{ o(0) }{m} \Big)
\,d\mu_O^\theta(o)
\,=\,
\frac{
E\big({\tau_0}
\,\big|\,
Z^\theta_0= 1\big)}
{
\displaystyle
E\big(\tau^*\,|\,
O^\theta_0=\oto
\big)+
E\big({\tau_0}
\,\big|\,
Z^\theta_0= 1\big)
}
\cr
\times\,\Big(
\frac{1}
{ M_H(\theta,0) 
E\big({\tau_0}
\,\big|\,
Z^\theta_0= 1\big)
}
+1\Big)
\sum_{i=1}^m 
f\Big( \frac{ i }{m} \Big)
\nu^\theta(i)
\,.
\end{multline*}
By proposition~\ref{numest}, 
$$\lim_{
\genfrac{}{}{0pt}{1}{\ell,m\to\infty}
{q\to 0,\,
{\ell q} \to a}
}
\Big(
\frac{1}
{ M_H(\theta,0) 
E\big({\tau_0}
\,\big|\,
Z^\theta_0= 1\big)
}
+1\Big)
\sum_{i=1}^m 
f\Big( \frac{ i }{m} \Big)
\nu^\theta(i)
\,=\,
f\big(\rho^*(a)\big)\,.
$$
By corollary~\ref{exta},
$$\lim_{
\genfrac{}{}{0pt}{1}{\ell,m\to\infty}
{q\to 0,\,
{\ell q} \to a}
}
\,\frac{1}{m}\ln
E(\tau_0\,|\,Z^\theta_0=1)\,=\,
\int_0^{\rho^*(a)}\ln \phi(\exa,0,s)\,ds\,.$$
By proposition~\ref{bdd}, for $\alpha\in[0,+\infty[$,
$$\lim_{
\genfrac{}{}{0pt}{1}{\ell,m\to\infty,\,
q\to 0
}
{{\ell q} \to a,\,
\frac{\scriptstyle m}{\scriptstyle \ell}\to\alpha
}
}
\,\frac{1}{\ell}\ln
E\big(\tau^*\,|\,
O^\theta_0=\oto
\big)
\,=\,\ln\kappa\,.$$
For the case $\alpha=+\infty$, 
by corollary~\ref{majext} and
proposition~\ref{visitz},
$$\lim_{
\genfrac{}{}{0pt}{1}{\ell,m\to\infty,\,
q\to 0
}
{{\ell q} \to a,\,
\frac{\scriptstyle m}{\scriptstyle \ell}\to\infty
}
}
\,\frac{1}{\ell}\ln
\Big(\frac{1}{m}
E\big(\tau^*\,|\,
O^\theta_0=\oto
\big)\Big)
\,\leq\,
\ln\kappa\,.$$
These estimates allow to evaluate the 
ratio between the discovery time and the persistence time. 
We define a function 
$\phi:\,]0,+\infty[\to
[0,+\infty]$ by setting $\phi(a)=0$ if $a\geq\ln\sigma$ and
$$
\forall a<\ln\sigma\qquad
\phi(a)\,=\,
{\displaystyle 
\int_0^{\rho^*(a)}\ln \phi(\exa,0,s)\,ds}\,.$$
We have then, for $\alpha\in[0,+\infty[$ or $\alpha=+\infty$,
%
$$\lim_{
\genfrac{}{}{0pt}{1}{\ell,m\to\infty,\,
q\to 0
}
{{\ell q} \to a,\,
\frac{\scriptstyle m}{\scriptstyle \ell}\to\alpha
}
}
\frac{
E\big({\tau_0}
\,\big|\,
Z^\theta_0= 1\big)}
{
\displaystyle
E\big(\tau^*\,|\,
O^\theta_0=\oto
\big)
}\,=
\begin{cases}
\quad 0 &\text{if }\alpha\,\phi(a)<\ln\kappa\\
\,\,+\infty &\text{if }\alpha\,\phi(a)>\ln\kappa\\
\end{cases}
$$
Notice that the result is the same for $\theta=\ell$ and $\theta=1$.
%
Putting together the bounds on $\nu$ given in section~\ref{bounds} and
the previous considerations, we conclude that
$$\lim_{
\genfrac{}{}{0pt}{1}{\ell,m\to\infty,\,
q\to 0
}
{{\ell q} \to a,\,
\frac{\scriptstyle m}{\scriptstyle \ell}\to\alpha
}
}
\int_{[0,1]} f\,d\nu\,=\,
\begin{cases}
\quad 0 &\text{if }\alpha\,\phi(a)<\ln\kappa\\
f\big(\rho^*(a)\big)
&\text{if }\alpha\,\phi(a)>\ln\kappa\\
\end{cases}
$$
This is valid for any 
continuous non--decreasing function 
$f:[0,1]\to\R$ 
such that $f(0)=0$. 
To obtain the statement of theorem~\ref{stronger}, it remains
to
compute the integral.
For $a<\ln\sigma$,
\begin{multline*}
\phi(a)\,=\,\int_0^{\rho^*(a)}\ln \phi(\exa,0,s)\,ds
\cr
\,=\,
\int_0^{\rho^*(a)}\ln 
\frac{\displaystyle \sigma\exa (1-s)}
{\displaystyle \sigma(1-\exa)s+(1-s)}
\,ds
\cr
\,=\,
\frac
{ \displaystyle \sigma(1-e^{-a})
\ln\frac{\displaystyle\sigma(1-e^{-a})}{\displaystyle\sigma-1}
+\ln(\sigma e^{-a})}
{
\displaystyle (1-\sigma(1-e^{-a}))
 }
\end{multline*}
and we are done.
\appendix
\vfill\eject
\addtocontents{toc}{\protect\enlargethispage{\baselineskip}}
\section{Appendix on Markov chains}
In this appendix, we recall classical definitions and results from the
theory of Markov chains with finite state space.
The goal is to clarify the objects involved in the definition of the model, 
and to state the fundamental general results used in the proofs.
This material can be found in any reference book on Markov chains,
for instance 
\cite{Breiman}, 
\cite{Fe1},
\cite{SH}. 
The definitions and results on monotonicity, coupling and the FKG
inequality are exposed in
the books of
Liggett \cite{LIG}
and
Grimmett~\cite{GRI}.
\medskip

\noindent
{\bf Construction of continuous time Markov processes.}
The most convenient way to define a continuous time process is to give
its infinitesimal generator. 
The infinitesimal generator of a Markov process $(X_t)_{t\geq 0}$ with
values in a finite state space $\cE$ is the linear operator $L$ acting
on the functions from $\cE$ to
$\mathbb R$ defined as follows.
For any function $\phi:\cE\to{\mathbb R}$, any $x\in \cE$, 
$$L\phi(x)\,=\,
\lim_{t\to 0}\,\frac{1}{t}\Big(E\big(\phi(X_t)|X_0=x\big) -\phi(x)\Big)\,.
$$
It turns out that the law of the process $(X_t)_{t\geq 0}$
is entirely determined by the generator~$L$. Therefore all the probabilistic
results on the process
$(X_t)_{t\geq 0}$ can in principle be derived working only with
its infinitesimal generator.

In the case where the state space of the process is finite, the situation is quite
simple and it is possible to provide direct constructions of a process 
having a specific infinitesimal generator. These constructions are not unique,
but they provide more insight into the dynamics.
Suppose that the generator $L$ is given by
$$\forall x\in \cE\qquad L\phi(x)\,=\,
\sum_{y\in \cE}
c(x,y)
\big(\phi(y)-\phi(x)\big)\,.$$
The evolution of a process
$(X_t)_{t\geq 0}$ 
having $L$ as infinitesimal generator can loosely be described as follows.
Suppose that $X_t=x$. Let
$$c(x)\,=\,\sum_{y\neq x}c(x,y)\,.$$
Let $\tau$ be a random variable whose law is exponential with parameter $c(x)$:
$$\forall s\geq 0\qquad
P(\tau\geq s)\,=\, \exp(-c(x)s)\,.$$
The process waits at $x$ until time $t+\tau$.
At time
$t+\tau$, it jumps to a state $y\neq x$ chosen according to the following law:
$$P\big(X_{t+\tau}=y\big)\,=\,
\frac{c(x,y)}{c(x)}\,.$$
The same scheme is then applied starting from $y$. In this construction, the
waiting times $\tau$ and the jumps are all independent.
\medskip

\noindent
{\bf Construction of discrete time Markov chains.}
To build a discrete time Markov chain, we need only to define
its transition mechanism. When the state space $\cE$ is finite, this amounts
to giving its transition matrix
$$\big(p(x,y), \, x,y\in \cE\big)\,.$$
The only requirement on $p$ is that it is a stochastic matrix, i.e., it
satisfies
$$\displaylines{
\forall x,y\in \cE\qquad 0\leq p(x,y)\leq 1\,,\cr
\forall x\in \cE\qquad \sum_{y\in \cE}p(x,y)\,=\,1\,.
}$$
In the sequel, we consider
a discrete time Markov chain 
$(X_t)_{t\geq 0}$ 
with
values in a finite state space $\cE$ and
with transition matrix
$(p(x,y))_{x,y\in \cE}$.
\medskip

\noindent
{\bf Invariant probability measure.}
If the Markov chain is
irreducible and aperiodic,
then it admits a unique
invariant probability 
measure 
$\mu$, i.e., the set of equations
$$\mu(y)\,=\,\sum_{x\in\cE}\mu(x)\,p(x,y)\,,
\qquad
y\in\cE\,,$$
admits a unique solution.
The Markov chain 
$(X_t)_{t\geq 0}$ 
is said to be reversible with respect to
a probability measure $\nu$ if it satisfies the 
detailed balanced conditions:
$$\forall x,y\in\cE\qquad
\nu(x)\,p(x,y)\,=\,
\nu(y)\,p(y,x)\,.$$
If the Markov chain
$(X_t)_{t\geq 0}$ is reversible
with respect to a probability measure $\nu$, then $\nu$
is an invariant probability measure for
$(X_t)_{t\geq 0}$. In case
$(X_t)_{t\geq 0}$ is in addition irreducible and aperiodic,
then $\nu$ is the unique invariant probability measure of the chain. 
\begin{lemma}\label{majinv}
Suppose that $\mu$ is an invariant probability measure for the Markov
chain $(X_t)_{t\geq 0}$. We have then
$$\forall x,y\in\cE\quad
\forall t\geq 0\qquad
{\mu(x)}P\big(X_t=y\,|\,X_0=x\big)
\,\leq\,{\mu(y)}\,.$$
\end{lemma}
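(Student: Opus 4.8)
The plan is to exploit the invariance of $\mu$ together with the nonnegativity of transition probabilities. First I would recall that since $\mu$ is invariant for the one-step transition matrix $p$, it is also invariant for the $t$-step transition matrix $p^t$; this follows by a trivial induction on $t$, using $\mu p = \mu$ at each step (the base case $t=0$ being the identity). Concretely, writing $p^t(x,y) = P(X_t = y \mid X_0 = x)$, one has for every $y \in \cE$ and every $t \geq 0$ the identity
$$\mu(y) \,=\, \sum_{z \in \cE} \mu(z)\, p^t(z,y)\,.$$

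The second and final step is simply to observe that every summand on the right-hand side is nonnegative, since $\mu(z) \geq 0$ and $p^t(z,y) \geq 0$. Therefore the full sum dominates the single term corresponding to $z = x$, which gives
$$\mu(y) \,\geq\, \mu(x)\, p^t(x,y) \,=\, \mu(x)\, P\big(X_t = y \mid X_0 = x\big)\,,$$
as claimed. This holds for arbitrary $x,y \in \cE$ and $t \geq 0$, completing the argument.

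There is essentially no obstacle here: the only thing to be careful about is the induction establishing that $\mu$ is $p^t$-invariant for all $t$, and even that is routine because $\mu p^t = (\mu p) p^{t-1} = \mu p^{t-1}$. No irreducibility or aperiodicity is needed for this bound — only that $\mu$ is some invariant probability measure. (If one wanted, the $t=0$ case can be stated separately since it reduces to $\mu(x)\mathbf 1_{x=y} \leq \mu(y)$, which is immediate.) I would present the proof in just a few lines along exactly these steps.
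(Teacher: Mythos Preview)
Your proof is correct and essentially the same as the paper's. The paper does an induction directly on the inequality $\mu(x)P(X_t=y\,|\,X_0=x)\leq\mu(y)$, using the one-step invariance $\sum_z\mu(z)p(z,y)=\mu(y)$ in the inductive step; you instead push the induction into the preliminary fact $\mu p^t=\mu$ and then drop nonnegative terms once---same idea, slightly different packaging.
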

\begin{proof}
The proof is done by induction on $t$. For $t=0$, we have
$$P\big(X_0=y\,|\,X_0=x\big)\,=\,0\quad
\text{if}\quad y\neq x\,,$$
and the result holds. Suppose it has been proved 
until time $t\in\mathbb N$.
We have then, for $x,y\in\cE$,
\begin{multline*}
{\mu(x)}\,
P\big(X_{t+1}=y\,|\,X_0=x\big)\,=\,
\sum_{z\in\cE}
{\mu(x)}\,
P\big(X_{t+1}=y,\,X_t=z\,|\,X_0=x\big)
\cr
\,=\,
\sum_{z\in\cE}
{\mu(x)}\,
P(X_t=z\,|\,X_0=x\big)
\,
P\big(X_{t+1}=y\,|\,X_t=z)
\cr
\,\leq\,
\sum_{z\in\cE}
{\mu(z)}\,
p(z,y)
\,=\,\mu(y)\,
\end{multline*}
and the claim is proved at time $t+1$.
\end{proof}

\noindent
We state next the
ergodic theorem for Markov chains.
We consider only the case where the state space $\cE$ is finite.
\begin{theorem}\label{ergodic}
Suppose that the Markov chain 
$(X_t)_{t\geq 0}$ is irreducible aperiodic.
Let
$\mu$ be its invariant probability measure.
For any initial distribution $\mu_0$,
for any function $f:\cE\to\R$, we have, with probability one,
$$
\lim_{t\to\infty} \,
\frac{1}{t}
\int_0^tf(X_s)\,ds\,=\,\int_{\cE}f(x)\,d\mu(x)\,.
$$
\end{theorem}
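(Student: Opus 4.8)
By linearity of the two sides in $f$ it is enough to prove the identity for the indicator functions $f=1_{\{y\}}$, $y\in\cE$: since $\cE$ is finite, the exceptional event on which convergence fails is a finite union of null events, and on its complement the conclusion extends to every $f$ by summation. So the plan is to show that the occupation average $\frac1t\int_0^t 1_{X_s=y}\,ds$ (a normalized sum in the discrete time case) converges almost surely to $\mu(y)$. First I would fix an arbitrary reference state $e\in\cE$ and introduce the successive return times to $e$, namely $\tau_0=\inf\{\,s\geq 0:X_s=e\,\}$ and, for $k\geq 0$, $\tau_{k+1}$ the first time after $\tau_k$ at which the process visits $e$ after having left it. Since $\cE$ is finite and the chain irreducible, it is positive recurrent, so all the $\tau_k$ are almost surely finite; moreover $\tau_0<\infty$ for \emph{every} initial law $\mu_0$, and $\frac1t\int_0^{\tau_0}1_{X_s=y}\,ds\to 0$, so the initial distribution will play no role in the limit.

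Next I would exploit the regenerative structure. By the strong Markov property the trajectory after $\tau_k$ is independent of the past and has the law of the chain started at $e$; hence the excursions between consecutive returns to $e$ are independent and identically distributed (already for the excursion starting at $\tau_0$). In particular the inter-visit times $\tau_{k+1}-\tau_k$ are i.i.d.\ with mean $m_e:=E(\tau_1-\tau_0\,|\,X_0=e)\in\,]0,+\infty[\,$ (finite, by positive recurrence on a finite state space), and the per-excursion occupation times $N_k:=\int_{\tau_k}^{\tau_{k+1}}1_{X_s=y}\,ds$ are i.i.d.\ with mean $n_e\in[0,m_e]$. Writing $K(t)=\max\{\,k\geq0:\tau_k\leq t\,\}$, the strong law of large numbers gives $\tau_k/k\to m_e$, hence $K(t)\to\infty$ and then $K(t)/t\to 1/m_e$ almost surely; decomposing
\[
\int_0^t 1_{X_s=y}\,ds\,=\,\int_0^{\tau_0}1_{X_s=y}\,ds+\sum_{k=0}^{K(t)-1}N_k+\int_{\tau_{K(t)}}^{t}1_{X_s=y}\,ds\,,
\]
the first term is bounded, the last is at most $\tau_{K(t)+1}-\tau_{K(t)}=o(t)$ almost surely, and $\frac1{K(t)}\sum_{k<K(t)}N_k\to n_e$ by the law of large numbers. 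Multiplying the middle term by $K(t)/t$ yields $\frac1t\int_0^t 1_{X_s=y}\,ds\to n_e/m_e$ almost surely, a limit which, by the remark on $\tau_0$ above, does not depend on $\mu_0$.

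Finally I would identify $n_e/m_e$ with $\mu(y)$. Summing $n_e$ over $y\in\cE$ telescopes to $m_e$ (the occupation times of all states over one excursion sum to its length), so $\bar\mu(y):=n_e/m_e$ defines a probability measure on $\cE$. Taking expectations in the almost sure convergence of the previous step and using bounded convergence (the occupation averages lie in $[0,1]$), one obtains $\frac1t\int_0^t P(X_s=y)\,ds\to\bar\mu(y)$ for any initial law; choosing $\mu_0=\mu$ makes $P(X_s=y)=\mu(y)$ for all $s$, so $\bar\mu(y)=\mu(y)$, which is exactly what is needed. (Alternatively one checks directly, by the classical cycle computation of Kac, that $\bar\mu$ solves $\sum_{x\in\cE}\bar\mu(x)p(x,y)=\bar\mu(y)$, and then invokes the uniqueness of the invariant probability measure recalled earlier in this appendix.) The one genuinely substantive point is this identification step; everything preceding it is routine renewal bookkeeping, the only mild technicalities being the almost sure control of the boundary term $\tau_{K(t)+1}-\tau_{K(t)}$ and the interchange of limit and expectation, both of which are standard.
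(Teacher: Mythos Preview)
The paper does not prove this theorem: it is stated in the appendix as a classical result, with references to standard textbooks (Breiman, Feller, Shiryaev), so there is no ``paper's own proof'' to compare against. Your argument is a correct and standard renewal-theoretic proof of the ergodic theorem for finite Markov chains; the reduction to indicators, the regeneration at successive returns to a reference state, the law of large numbers for the i.i.d.\ excursion lengths and occupation times, and the identification of the limit via stationarity are all sound.

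It is worth noting that the very machinery you use --- the return times $\tau_k$, the counting process $K(t)=\max\{k:\tau_k\leq t\}$, the convergence $K(t)/t\to 1/E_e(\tau_1)$ via the strong law, and the decomposition of the time integral into a sum over complete excursions plus a boundary term --- is spelled out in detail in the paper's Proposition~\ref{renewal} (Section~9.3), where it is applied not to prove the ergodic theorem itself but to derive the excursion formula $\int f\,d\mu = E_e\big(\int_{\tau^*}^{\tau}f(X_s)\,ds\big)/E_e(\tau)$. That proposition in fact \emph{invokes} Theorem~\ref{ergodic} to identify the almost sure limit with the $\mu$-integral, whereas you take the complementary route and identify the limit by running the chain from its stationary law. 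So your proof and the paper's renewal argument are two halves of the same circle of ideas.
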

\medskip

\noindent
{\bf Lumping.} The basic lumping result
for Markov chains can be found in section~6.3 of 
the book of Kemeny and Snell \cite{KS}.
Let $(E_1,\dots,E_r)$ be a partition of $\cE$.
Let $f:\cE\to\{\,1,\dots,r\,\}$ be the function defined by
$$\forall x\in\cE\qquad
f(x)\,=\,
\begin{cases}
\quad 1 &\text{if}\quad x\in E_1\\
\quad\vdots  &\vdots\\
\quad r &\text{if }\quad x\in E_r\\
\end{cases}
\,.$$
The Markov chain
$(X_t)_{t\geq 0}$ is said to be lumpable with respect to the partition
$(E_1,\dots,E_r)$ if, for every initial distribution $\mu_0$ of $X_0$, the
process
$\smash{\big(f(X_t)\big)_{t\geq 0}}$ is a Markov chain on
$\{\,1,\dots,r\,\}$ whose transition probabilities do not depend on $\mu_0$.
\begin{theorem}[Lumping theorem]\label{lumpt}
A necessary and sufficient condition for the Markov chain
$(X_t)_{t\geq 0}$ to be lumpable with respect to the partition
$(E_1,\dots,E_r)$ is that, 
$$\forall i,j\in\{\,1,\dots,r\,\}\,\quad
\forall x,y\in E_i\,\qquad
\sum_{z\in E_j}p(x,z)\,=\,
\sum_{z\in E_j}p(y,z)\,.$$
Suppose that this condition holds. 
For $i,j\in\{\,1,\dots,r\,\}$, let us denote by $p_E(i,j)$ the common value
of the above sums. The process
$\smash{\big(f(X_t)\big)_{t\geq 0}}$ is then a Markov chain 
with transition matrix $(p_E(i,j))_{1\leq i,j\leq r}$.
\end{theorem}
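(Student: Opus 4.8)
The plan is to establish the two implications separately by elementary manipulations of finite-dimensional distributions; this is the classical Kemeny and Snell argument.

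For sufficiency, assume the stated identity holds, fix an arbitrary initial law $\mu_0$ for $X_0$, and set $Y_t=f(X_t)$. Given indices $i_0,\dots,i_{n+1}$ in $\{\,1,\dots,r\,\}$, I would expand
$$P\big(Y_0=i_0,\dots,Y_{n+1}=i_{n+1}\big)\,=\,
\sum_{x_0\in E_{i_0}}\cdots\sum_{x_{n+1}\in E_{i_{n+1}}}
\mu_0(x_0)\,p(x_0,x_1)\cdots p(x_n,x_{n+1})\,,$$
and carry out the innermost summation over $x_{n+1}\in E_{i_{n+1}}$ first. By hypothesis $\sum_{x_{n+1}\in E_{i_{n+1}}}p(x_n,x_{n+1})=p_E(i_n,i_{n+1})$ for every $x_n\in E_{i_n}$, so this quantity is constant over the remaining range of summation and factors out, yielding
$$P\big(Y_0=i_0,\dots,Y_{n+1}=i_{n+1}\big)\,=\,
p_E(i_n,i_{n+1})\,P\big(Y_0=i_0,\dots,Y_n=i_n\big)\,.$$
Dividing by $P(Y_0=i_0,\dots,Y_n=i_n)$ whenever this is positive shows that $P(Y_{n+1}=i_{n+1}\mid Y_0=i_0,\dots,Y_n=i_n)$ equals $p_E(i_n,i_{n+1})$, hence depends only on $i_n$ and $i_{n+1}$; summing the displayed identity over $i_0,\dots,i_{n-1}$ gives the same value for $P(Y_{n+1}=i_{n+1}\mid Y_n=i_n)$. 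Thus $(Y_t)_{t\geq 0}$ is a time-homogeneous Markov chain with transition matrix $p_E$, and this matrix manifestly does not depend on $\mu_0$.

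For necessity, I would test the lumpability assumption against point masses. Fix $i,j\in\{\,1,\dots,r\,\}$ and $x,y\in E_i$. Starting the chain from $\mu_0=\delta_x$ forces $Y_0=i$ almost surely, so the one-step transition probability of $\big(f(X_t)\big)_{t\geq 0}$ from $i$ to $j$ equals $P(Y_1=j\mid X_0=x)=\sum_{z\in E_j}p(x,z)$; starting instead from $\mu_0=\delta_y$ produces $\sum_{z\in E_j}p(y,z)$. Since lumpability by definition requires these transition probabilities to be the same for every initial distribution, the two sums coincide, which is precisely the asserted condition, and their common value is $p_E(i,j)$.

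The only delicate points are bookkeeping ones: in the sufficiency argument one should restrict attention to conditioning events of positive probability, histories of probability zero contributing nothing; and the classes $E_i$ should be assumed nonempty so that the symbols $1,\dots,r$ are genuinely attained. I do not anticipate any substantive obstacle — the whole argument is a routine computation with transition matrices.
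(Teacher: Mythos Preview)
Your proof is correct and is precisely the classical Kemeny--Snell argument. Note that the paper does not actually prove this theorem: it is stated in the appendix as a background result with a reference to section~6.3 of Kemeny and Snell, so there is no in-paper proof to compare against.
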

\medskip

\noindent
{\bf Monotonicity.}
We recall some standard definitions concerning monotonicity 
for stochastic processes.
A classical reference
is Liggett's book
\cite{LIG},
especially for applications to particle systems. 
We consider
a discrete time Markov chain 
$(X_t)_{t\geq 0}$ 
with
values in a space $\cE$.
We suppose that the state space $\cE$ is finite and
that it is equipped with a partial order $\leq$.
A function $f:\cE\to\R$ is non--decreasing if
$$\forall x,y\in\cE\qquad
x\leq y\quad\Rightarrow\quad f(x)\leq f(y)\,.$$
The Markov chain
$(X_t)_{t\geq 0}$ is said to be monotone if, 
for any non--decreasing function $f$, the function
$$x\in\cE\mapsto E\big(f(X_t)\,|\,X_0=x\big)$$
is non--decreasing.
\medskip

\noindent
{\bf Coupling.}
A natural way to prove monotonicity is to construct an adequate coupling.
A coupling is a family of processes
$(X_t^x)_{t\geq 0}$
indexed by 
$x\in\cE$, which are all defined on the same probability space, and such that, 
for $x\in\cE$, the process
$(X_t^x)_{t\geq 0}$ is the Markov chain starting from $X_0=x$.
The coupling is said to be monotone if
$$\forall x,y\in\cE\qquad
x\leq y\quad\Rightarrow\quad \forall t\geq 1\qquad X_t^x\leq X_t^y\,.$$
%
%
%
If there exists a monotone coupling, 
then the 
Markov chain
is monotone.
\medskip

\noindent
{\bf FKG inequality.}
We consider the product space
$[0,1]^n$ equipped with the product order. 
Let $\mu$ be a probability measure on $[0,1]$ and let
us denote by $\mu^{\otimes n}$ the product probability measure on
$[0,1]^n$ whose marginals are equal to~$\mu$.
The Harris inequality, or
the FKG inequality 
in this context, 
says that,
for any non--decreasing functions $f,g:[0,1]^n\to\R$, we have
$$\int_{[0,1]^n} fg\,d\mu^{\otimes n}\,\geq\,
\int_{[0,1]^n} f\,d\mu^{\otimes n}\,
\int_{[0,1]^n} g\,d\mu^{\otimes n}\,.$$
The case of Bernoulli product measures is exposed 
in section~$2.2$ of
Grimmett's book~\cite{GRI}.

\vfill\eject
\noindent

\bibliographystyle{plain}
\bibliography{spr}

\begin{thebibliography}{10}

\bibitem{AF}
Domingos Alves and Jose~Fernando Fontanari.
\newblock Error threshold in finite populations.
\newblock {\em Phys. Rev. E}, 57:7008--7013, 1998.

\bibitem{ADL}
Jon~P. Anderson, Richard Daifuku, and Lawrence~A. Loeb.
\newblock Viral error catastrophe by mutagenic nucleosides.
\newblock {\em Annual Review of Microbiology}, 58(1):183–205, 2004.

\bibitem{BG}
Ellen Baake and Wilfried Gabriel.
\newblock Biological evolution through mutation, selection, and drift: An
  introductory review.
\newblock {\em Ann. Rev. Comp. Phys. VII}, pages 203--264, 2000.

\bibitem{BAA1}
Michael Baake and Ellen Baake.
\newblock An exactly solved model for mutation, recombination and selection.
\newblock {\em Canad. J. Math.}, 55(1):3--41, 2003.

\bibitem{BI}
Nicholas~Hugh Bingham.
\newblock Fluctuation theory for the {E}hrenfest urn.
\newblock {\em Advances in Applied Probability}, 23(3):598--611, 1991.

\bibitem{Breiman}
Leo Breiman.
\newblock {\em Probability}, volume~7 of {\em Classics in Applied Mathematics}.
\newblock SIAM, Philadelphia, 1992.
\newblock Corrected reprint of the 1968 original.

\bibitem{CCA}
Shane Crotty, Craig~E. Cameron, and Raul Andino.
\newblock {RNA} virus error catastrophe: Direct molecular test by using
  ribavirin.
\newblock {\em Proceedings of the National Academy of Sciences},
  98(12):6895–6900, 2001.

\bibitem{Deme}
Lloyd Demetrius, Peter Schuster, and Karl Sigmund.
\newblock Polynucleotide evolution and branching processes.
\newblock {\em Bulletin of Mathematical Biology}, 47(2):239 -- 262, 1985.

\bibitem{DSV}
Narendra~M. Dixit, Piyush Srivastava, and Nisheeth~K. Vishnoi.
\newblock A finite population model of molecular evolution: Theory and
  computation.
\newblock {\em preprint, available on arXiv}, 2012.

\bibitem{ESTE}
Esteban Domingo.
\newblock Quasispecies theory in virology.
\newblock {\em Journal of Virology}, 76(1):463--465, 2002.

\bibitem{DBMJ}
Esteban Domingo, Christof Biebricher, Manfred Eigen, and John~J. Holland.
\newblock {\em Quasispecies and RNA virus evolution: principles and
  consequences}.
\newblock Landes Bioscience, Austin, Tex., 2001.

\bibitem{EI1}
Manfred Eigen.
\newblock Self-organization of matter and the evolution of biological
  macromolecules.
\newblock {\em Naturwissenschaften}, 58(10):465--523, 1971.

\bibitem{EI2}
Manfred Eigen.
\newblock Natural selection: a phase transition?
\newblock {\em Biophysical Chemistry}, 85(2--3):101--123, 2000.

\bibitem{ECS1}
Manfred Eigen, John McCaskill, and Peter Schuster.
\newblock The molecular quasi-species.
\newblock {\em Advances in Chemical Physics}, 75:149--263, 1989.

\bibitem{EWOL}
Santiago~F. Elena, Claus~O. Wilke, Charles Ofria, and Richard~E. Lenski.
\newblock Effects of population size and mutation rate on the evolution of
  mutational robustness.
\newblock {\em Evolution}, 61(3):666--74, 2007.

\bibitem{EW}
Warren~J. Ewens.
\newblock {\em Mathematical population genetics. {I}}, volume~27 of {\em
  Interdisciplinary Applied Mathematics}.
\newblock Springer-Verlag, 2004.

\bibitem{Fe1}
William Feller.
\newblock {\em An introduction to probability theory and its applications.
  {V}ol. {I} and {II}}.
\newblock John Wiley \& Sons Inc., New York, 1968 and 1971.

\bibitem{GI}
Daniel~T. Gillespie.
\newblock A general method for numerically simulating the stochastic time
  evolution of coupled chemical reactions.
\newblock {\em Journal of computational physics}, 22:403--434, 1976.

\bibitem{GRI}
Geoffrey Grimmett.
\newblock {\em Percolation}, volume 321 of {\em Grundlehren der Mathematischen
  Wissenschaften}.
\newblock Springer-Verlag, second edition, 1999.

\bibitem{Kac}
Mark Kac.
\newblock Random walk and the theory of brownian motion.
\newblock {\em American Mathematical Monthly}, 54(7):369--391, 1947.

\bibitem{SH}
Samuel Karlin and Howard~M. Taylor.
\newblock {\em A first course in stochastic processes}.
\newblock Academic Press, New York-London, second edition, 1975.

\bibitem{KS}
John~G. Kemeny and J.~Laurie Snell.
\newblock {\em Finite {M}arkov chains}.
\newblock Springer-Verlag, New York, 1976.
\newblock Undergraduate Texts in Mathematics.

\bibitem{Kimu}
Motoo Kimura.
\newblock {\em The Neutral Theory of Molecular Evolution}.
\newblock Cambridge University Press, 1985 (reprint).

\bibitem{LEU}
Ira Leuth{\"a}usser.
\newblock Statistical mechanics of {E}igen's evolution model.
\newblock {\em J. Statist. Phys.}, 48(1-2):343--360, 1987.

\bibitem{LIG}
Thomas~M. Liggett.
\newblock {\em Interacting particle systems}.
\newblock Classics in Mathematics. Springer-Verlag, 2005.
\newblock Reprint of the 1985 original.

\bibitem{Cas1}
John McCaskill.
\newblock A stochastic theory of macromolecular evolution.
\newblock {\em Biological Cybernetics}, 50:63--73, 1984.

\bibitem{MUS}
Fabio Musso.
\newblock {A stochastic version of the Eigen model.}
\newblock {\em Bull. Math. Biol.}, 73(1):151--180, 2011.

\bibitem{NCH}
Erik~Van Nimwegen, James~P. Crutchfield, and Martijn Huynen.
\newblock Neutral evolution of mutational robustness.
\newblock {\em Proc. Natl . Acad. Sci . USA}, 96:9716--–9720, 1999.

\bibitem{NO1}
Martin~A. Nowak.
\newblock {\em Evolutionary dynamics}.
\newblock The Belknap Press of Harvard University Press, Cambridge, MA, 2006.

\bibitem{NS}
Martin~A. Nowak and Peter Schuster.
\newblock Error thresholds of replication in finite populations. {M}utation
  frequencies and the onset of {M}uller's ratchet.
\newblock {\em Journal of theoretical Biology}, 137 (4):375--395, 1989.

\bibitem{PEM}
Jeong-Man Park, Enrique Mu\~noz, and Michael~W. Deem.
\newblock Quasispecies theory for finite populations.
\newblock {\em Phys. Rev. E}, 81:011902, 2010.

\bibitem{SAA1}
David~B. Saakian, Michael~W. Deem, and Chin-Kun Hu.
\newblock Finite population size effects in quasispecies models with
  single-peak fitness landscape.
\newblock {\em Europhysics Letters}, 98(1):18001, 2012.

\bibitem{NESC}
Roberto~H. Schonmann.
\newblock The pattern of escape from metastability of a stochastic {I}sing
  model.
\newblock {\em Comm. Math. Phys.}, 147(2):231--240, 1992.

\bibitem{SODE}
Ricard~V. Sol\'e and Thomas~S. Deisboeck.
\newblock An error catastrophe in cancer?
\newblock {\em Journal of Theoretical Biology}, 228:47--54, 2004.

\bibitem{SMP}
Sumedha, Olivier~C Martin, and Luca Peliti.
\newblock Population size effects in evolutionary dynamics on neutral networks
  and toy landscapes.
\newblock {\em Journal of Statistical Mechanics: Theory and Experiment},
  2007(05):P05011, 2007.

\bibitem{TBVD}
Kushal Tripathi, Rajesh Balagam, Nisheeth~K. Vishnoi, and Narendra~M. Dixit.
\newblock Stochastic simulations suggest that {HIV}--1 survives close to its
  error threshold.
\newblock {\em preprint}, 2012.

\bibitem{NC}
Erik van Nimwegen and James Crutchfield.
\newblock Metastable evolutionary dynamics: Crossing fitness barriers or
  escaping via neutral paths?
\newblock {\em Bulletin of Mathematical Biology}, 62:799--848, 2000.

\bibitem{WE}
Edward~D. Weinberger.
\newblock {\em A Stochastic Generalization of Eigen's Theory of Natural
  Selection}.
\newblock Ph.D. Dissertation. The Courant Institute of Mathematical Sciences,
  New York University, 1987.

\bibitem{Wilke}
Claus Wilke.
\newblock Quasispecies theory in the context of population genetics.
\newblock {\em BMC Evolutionary Biology}, 5:1--8, 2005.

\end{thebibliography}
\vfill\eject
\printindex
\vfill\eject
 \thispagestyle{empty}
\tableofcontents
\end{document}